\documentclass[final]{amsart}

\usepackage[T1]{fontenc}
\usepackage[utf8]{inputenc}
\usepackage{lmodern}
\usepackage{amsmath,amssymb}
\usepackage{mathrsfs}
\usepackage{amsfonts,amsthm,latexsym}
\usepackage[arrow, matrix, curve]{xy}
\usepackage{mathtools}
\usepackage{nicefrac}
\usepackage{tikz}
\usepackage{bracket-hack}
\usepackage{ming}
\usepackage[english]{babel}
\usepackage{graphicx}
\usepackage{color}
\usepackage{paralist}
\usepackage[final]{hyperref}
\usepackage[colorinlistoftodos,obeyDraft]{todonotes}
\usepackage[notcite,notref]{showkeys}
\usepackage{marginnote}
\usepackage{ifdraft}
\usepackage[a4paper,twoside]{geometry}

\mathtoolsset{showonlyrefs=true}
\mathtoolsset{centercolon}
\setdefaultenum{i)}{}{}{}
\usepackage{auto-eqlabel}

\theoremstyle{definition}
\newtheorem{defi}{Definition}[section]
\newtheorem{nota}[defi]{Notation}

\newtheorem{rema}[defi]{Remark}

\theoremstyle{plain}
\newtheorem{prop}[defi]{Proposition}
\newtheorem{theo}[defi]{Theorem}
\newtheorem{conj}[defi]{Conjecture}
\newtheorem{lemm}[defi]{Lemma}
\newtheorem{coro}[defi]{Corollary}

\newenvironment{subaligned}{\left\{\aligned}{\endaligned\right.}


\newcommand{\R}{{\mathcal R}}

\renewcommand{\L}{{\mathcal L}}
\newcommand{\D}{{\mathcal D}}
\newcommand{\E}{{\mathcal E}}
\newcommand{\K}{{\mathcal K}}

\renewcommand{\O}{{\mathcal O}}
\newcommand{\C}{{\mathcal C}}

\newcommand{\maths}[1]{{\ensuremath{\mathbb #1}}}
\newcommand{\RR}{\maths{R}}
\newcommand{\NN}{\maths{N}}


\renewcommand{\ggg}{{\ensuremath{\mathfrak g}}}


\newcommand{\ie}{i.\,e.\ }
\newcommand{\st}{such that }
\newcommand{\wrt}{with respect to}
\newcommand{\resp}{resp.\ }
\newcommand{\woutlog}{without loss of generality}
\newcommand{\iif}{if and only if}
\newcommand{\rhs}{right-hand side}
\newcommand{\lhs}{left-hand side}

\newcommand{\equ}{equation}
\newcommand{\equs}{equations}
\newcommand{\diffeo}{diffeomorphism}


\newcommand{\mf}{mani\-fold}

\newcommand{\riem}{Rie\-mann\-ian}

\newcommand{\liegr}{Lie group}

\newcommand{\liealg}{Lie al\-ge\-bra}
\newcommand{\liealgs}{Lie al\-ge\-bras}

\newcommand{\fundform}{second fun\-da\-men\-tal form}
\newcommand{\curv}{curvature}

\newcommand{\onorm}{or\-tho\-nor\-mal}
\newcommand{\ogon}{or\-tho\-go\-nal}

\newcommand{\levi}{Levi-Civita}

\newcommand{\mghd}{ma\-xi\-mal glo\-bally hy\-per\-bo\-lic de\-ve\-lop\-ment}
\newcommand{\desitter}{de~Sitter}


\newcommand{\eps}{\varepsilon}
\newcommand{\tr}{\operatorname{tr}}

\newcommand{\diag}{\operatorname{diag}}

\newcommand{\absval}[1]{\lvert #1 \rvert}

\newcommand{\scalprod}[2]{\langle #1,#2 \rangle}

\newcommand{\ad}{\operatorname{ad}}
\newcommand{\liebr}[2]{\left[ #1,#2 \right]}



\newcommand{\kasnerparabola}{\K}
\newcommand{\planewave}[1]{\L_{#1}}

\begin{document}

\title[SCC in Bianchi~B perfect fluids and vacuum]{Strong Cosmic Censorship in orthogonal Bianchi class~B perfect fluids and vacuum models}
\author[K. Radermacher]{Katharina Radermacher}
\address{Department of Mathematics, KTH Royal Institute of Technology, SE-10044 Stockholm, Sweden}
\curraddr{}
\email{kmra@kth.se}
\urladdr{}
\dedicatory{}
\date{\today}
\translator{}
\keywords{}

\newcommand{\slimit}{\operatorname s}
\newcommand{\rfactornew}{\hat{\operatorname r}}
\newcommand{\hateps}{\hat{\eps}}
\newcommand{\maxdecayleft}{\Pi}
\newcommand{\maxdecayright}{\overline{\Pi}}
\newcommand{\integraltwominusq}{\R}
\newcommand{\functionforscc}{\mathsf f}
\newcommand{\betaN}{\beta_{N_+}}
\newcommand{\betaD}{\beta_\Delta}
\newcommand{\kretschmann}{R_{\alpha\beta\gamma\delta}R^{\alpha\beta\gamma\delta}}
\newcommand{\submfsplanewave}{L'}
\newcommand{\submfsplanewavezero}{\submfsplanewave'}
\newcommand{\submfsplanewavetheorem}{L}
\newcommand{\LHS}{LHS}
\newcommand{\RHS}{RHS}
\newcommand{\centreunstablespecialkasner}{M_{\kasnerparabola}}
\newcommand{\centreunstableplanewaveleft}{M_{\planewave {\bparamk},\operatorname{left}}}
\newcommand{\centreunstableplanewaveright}{M_{\planewave {\bparamk},\operatorname{right}}}
\newcommand{\centreunstableplanewavespecial}{M_{\planewave {\bparamk},\operatorname{special}}}
\newcommand{\centreunstableplanewavezero}{M_{\planewave {\bparamk},0}}
\newcommand{\taubone}{\operatorname{T1}}
\newcommand{\taubtwo}{\operatorname{T2}}
\newcommand{\bparamk}{\kappa}
\newcommand{\binvparam}{\eta}
\newcommand{\idmetric}{h}
\newcommand{\idfundform}{k}
\newcommand{\functiondecaylemma}{\zeta}
\newcommand{\functionprooftaubtwo}{\zeta}
\newcommand{\setconvplanewave}{\C}
\newcommand{\basic}{{\operatorname{basic}}}
\newcommand{\initialdata}{{\operatorname{i.d.}}}
\newcommand{\difftildesigma}{F_{\tilde\sigma}}
\newcommand{\difftilden}{F_{\tilde n}}
\newcommand{\diffdelta}{F_\delta}
\newcommand{\diffconstraint}{F_{\operatorname{constraint}}}
\newcommand{\lrs}{\operatorname{LRS}}
\newcommand{\flrw}{\operatorname{FLRW}}

\begin{abstract}
	The Strong Cosmic Censorship conjecture states that for generic initial data to Einstein's field equations, the \mghd\ is inextendible. We prove this conjecture in the class of \ogon\ Bianchi class~B perfect fluids and vacuum spacetimes, by showing that unboundedness of certain curvature invariants such as the Kretschmann scalar is a generic property. The only spacetimes where this scalar remains bounded exhibit local rotational symmetry or are of plane wave equilibrium type.

	We further investigate the qualitative behaviour of solutions towards the initial singularity. To this end, we work in the expansion-normalised variables introduced by Hewitt--Wainwright and show that a set of full measure, which is also a countable intersection of open and dense sets in the state space, yields convergence to a specific subarc of the Kasner parabola. We further give an explicit construction enabling the translation between these variables and geometric initial data to Einstein's equations.
\end{abstract}

\maketitle

\section{Introduction}

In her ground-breaking work~\cite{fouresbruhat_thmexistencecertainespde}, Choquet-Bruhat showed that Einstein's field equations can be formulated as an initial value problem, where the initial data is given on a spacelike Cauchy hypersurface. To given initial data, there is a \mghd\ which is unique up to isometry, as was shown by Choquet-Bruhat and Geroch in~\cite{choquetbruhatgeroch_globalaspectsgr}.
As understood in physics, this implies that this spacetime is uniquely determined by the initial data. The question arises whether one can find a larger development than this when dropping the requirement of global hyperbolicity, and whether this larger development is still unique in a meaningful sense. If two inequivalent developments existed, this would imply that determinism breaks, as the data on the initial Cauchy hypersurface does no longer suffice for determining which of the two developments the universe chooses.
The Strong Cosmic Censorship conjecture states that this does not happen, at least not generically, as it conjectures that there is no development larger than the \mghd.
\begin{conj}[Strong Cosmic Censorship]
\label{conj_scc}
	For generic initial data to Einstein's \equs, the \mghd\ (MGHD) is inextendible.
\end{conj}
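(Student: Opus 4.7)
The plan is to restrict attention to the class of orthogonal Bianchi class~B perfect fluid and vacuum spacetimes and show that inextendibility of the MGHD holds on a subset of initial data that is both of full measure and a countable intersection of open dense sets. Following the standard route, I would first observe that unboundedness of a curvature invariant such as the Kretschmann scalar $\kretschmann$ along a causal curve approaching the initial singularity obstructs any $C^2$ Lorentzian extension, so it suffices to establish generic blow-up of this invariant towards the singularity.

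To analyse the asymptotic behaviour, I would work in the expansion-normalised Hewitt--Wainwright variables, in which Einstein's equations become an autonomous polynomial ODE on a bounded subset of a Euclidean state space, with the initial singularity corresponding to the limit $\tau\to -\infty$ of the dimensionless time. Using monotone Lyapunov-type functions along the reversed flow, I would identify the $\alpha$-limit set of a generic orbit and argue that it reduces to a single equilibrium on a specific subarc of the Kasner parabola, away from the Taub points, the flat Kasner point, and the plane wave equilibrium submanifolds. The explicit construction translating between expansion-normalised variables and geometric initial data promised in the abstract is then used to pull this genericity back to the space of initial data.

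Given convergence to a non-exceptional Kasner point, the anisotropic contraction encoded in distinct Kasner exponents forces the Weyl curvature, and hence $\kretschmann$, to diverge along the singular time direction, giving inextendibility. The remaining task is to identify the exceptional set of orbits whose $\alpha$-limit lies at a Taub point or on the plane wave / Bianchi~I equilibrium manifolds, and to show that this set is contained in a countable union of proper invariant submanifolds corresponding exactly to solutions with local rotational symmetry or plane wave equilibrium type, as asserted in the abstract.

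The hardest step, I expect, is the analysis near the non-hyperbolic equilibria: the plane wave arc and the special Kasner points have non-trivial centre manifolds, so the linearised flow gives no information on whether generic nearby orbits escape or converge. Carrying out a centre manifold reduction in the Bianchi~B setting is more delicate than in Bianchi~A, because the group parameter $\eta$ produces an additional family of equilibria and degeneracies, and constructing monotone functions sufficient to isolate generic orbits from the exceptional invariant submanifolds is where most of the real work must go.
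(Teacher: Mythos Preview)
Your outline is essentially the paper's strategy: the statement here is a conjecture, not a theorem, and the paper proves it only in the restricted orthogonal Bianchi class~B setting you describe, via exactly the route you sketch---curvature blow-up in the Hewitt--Wainwright variables, identification of the $\alpha$-limit set with the Kasner parabola and plane wave arc, and exclusion of the exceptional LRS and plane wave solutions. One point of emphasis worth adjusting: the paper's heaviest technical work is not the centre manifold reduction at the non-hyperbolic equilibria (which is used, but mainly to bound dimensions of exceptional sets in Sections~8 and~10), but rather the direct nonlinear decay analysis in Section~6, where the constraint equation $\tilde\Sigma N_+^2 - 3\Delta^2 = (3\Sigma_+^2 + \bparamk\tilde\Sigma)\tilde A$ is exploited to force the relation $\tilde A(3\slimit^2 + \bparamk(1-\slimit^2)) = 0$ and to pin down which orbits can reach the Taub points---this is what actually isolates the LRS solutions as the only ones converging to Taub~2.
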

As the currently preferred spacetime models exhibit singularities at early times, this conjecture is of additional interest: It claims that the occurrence of such a singularity is not a consequence of a particularly bad choice of initial data hypersurface, \ie of our point of view, but inherent to the four-dimensional spacetime.

In the form given in Conjecture~\ref{conj_scc}, the statement will not hold, one has to add some form of boundary conditions such as spatial compactness or asymptotic flatness, or homogeneity.
To this date, a full answer in the general case has not been given, and it is therefore of interest to consider subsets of initial data with additional properties.
The natural starting point for this discussion are the maximally symmetric \desitter, Anti-\desitter\ and Minkowski spacetimes, which in a first step generalise to the spatially homogeneous and isotropic Friedman-Lemaître-Robertson-Walker spacetimes.
These can be further generalised to the spatially homogeneous Bianchi spacetimes which admit a three-dimensional symmetry group. Some of the Bianchi spacetimes in turn appear as special cases of~$G_2$ cosmologies, where the symmetry group is only of dimension~two. In this paper, we discuss the case of Bianchi spacetimes. The expectation is that the results obtained here will in part translate to the more general~$G_2$ cosmologies and pave the way for the fully general setting.

The terms generic and inextendible in the conjecture have to be made precise in order to obtain a meaningful statement.
What we mean by  genericity will become clear below.
There are first results on inextendibility in the $C^0$-sense (for Schwarzschild,~\cite{sbierski_C0inextschwarzschild}, and for certain spherically symmetric spacetimes,~\cite{christodoulou_instabnakedsinggravcollapsescalfield} and~\cite{gallowayling_remarksC0inextend}), but here we consider inextendibility in the~$C^2$-sense, as we can then replace Conjecture~\ref{conj_scc} with the following, stronger conjecture (see~\cite[Conj.~17.2]{ringstrom_cauchyproblem}).
\begin{conj}[Curvature blow up]
\label{conj_blowup}
	For generic initial data to Einstein's \equs, the Kretsch\-mann scalar $R_{\alpha\beta\gamma\delta}R{}^{\alpha\beta\gamma\delta}$ is unbounded in the incomplete directions of causal geodesics in the \mghd\ (MGHD).
\end{conj}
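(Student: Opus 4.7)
The plan is to work in the expansion-normalised variables of Hewitt and Wainwright, in which the Einstein equations for \ogon\ Bianchi class~B models with a perfect fluid (or vacuum) become a polynomial dynamical system on a bounded semi-algebraic state space, with dimensionless time~$\tau$ chosen so that $\tau\to-\infty$ corresponds to the approach to the initial singularity. A short but essential preliminary step is to make the translation between a point of this state space and \emph{geometric} initial data on a Cauchy hypersurface completely explicit, so that notions of genericity (Baire and Lebesgue measure) in the state space can be transported to the formulation of Conjecture~\ref{conj_blowup}.

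In these variables, the Kretschmann scalar $\kretschmann$ factorises as $H^{4}\cdot\Phi$, where $H$ is the Hubble scalar and~$\Phi$ is a polynomial in the normalised shear, spatial curvature and fluid variables. Blow-up thus reduces to two tasks: proving that $H\to\infty$ as $\tau\to-\infty$ along a generic orbit, and controlling~$\Phi$ on the asymptotic state. The first is essentially the statement that the initial singularity is a true singularity in proper time; the second requires identifying which equilibrium sets attract generic orbits.

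The core dynamical-systems step is to determine the $\alpha$-limit sets of orbits in the state space. The relevant invariant subsets in orthogonal Bianchi~B are the Kasner parabola~$\kasnerparabola$, the plane wave equilibria~$\planewave{\bparamk}$, and the \lrs\ invariant subspaces. Using Hewitt--Wainwright-type monotone functions together with centre-manifold analysis at~$\kasnerparabola$ and at the plane wave equilibria, I would show that every orbit either lies in the centre-stable set of a plane wave or of an \lrs\ equilibrium, or converges to a specific subarc of~$\kasnerparabola$ on which~$\Phi$ is uniformly bounded away from zero. A direct computation of~$\Phi$ along~$\kasnerparabola$ identifies the exceptional Kasner points as the Taub (flat-Kasner) points, which lie precisely in the \lrs\ subspace; on the plane wave equilibria~$\Phi$ vanishes because the associated spacetime is self-similar and its past is geodesically complete.

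The main obstacle will be the genericity statement itself: one must prove that the union of the centre-stable manifolds of the plane wave equilibria with the \lrs\ invariant subspaces has measure zero and is meagre in the full Bianchi~B state space, and then transport this to the initial-data picture. The delicate point is the spectral analysis at the plane wave equilibria, whose linearisation carries non-trivial centre directions, so that a naive dimension count does not immediately yield positive codimension of the attracted set; one has to appeal to centre-unstable manifold theorems and analyse the induced flow on them to extract the necessary transversality. Once these codimension estimates are secured, both Baire and Lebesgue genericity in the state space follow, and pull back through the explicit initial-data construction to yield Conjecture~\ref{conj_blowup} in this class.
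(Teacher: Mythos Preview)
Your proposal outlines essentially the same strategy the paper carries out for the \ogon\ Bianchi class~B setting: pass to Hewitt--Wainwright variables, identify the $\alpha$-limit sets as $\kasnerparabola\cup\planewave{\bparamk}$, compute that the Weyl invariant vanishes only at the Taub points and on~$\planewave{\bparamk}$, and then argue that the orbits converging to these exceptional loci are non-generic.

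A few points where the paper's execution differs from your sketch are worth flagging. First, the matter case $\mu_0>0$, $\gamma>0$ is much cheaper than you suggest: the contraction $R_{\alpha\beta}R^{\alpha\beta}=(1+3(\gamma-1)^2)\mu^2$ blows up for \emph{every} orbit, with no asymptotic analysis required (Lemma~\ref{lemm_curvblowupricci}). Second, for the Taub~2 point the paper does \emph{not} use centre-manifold analysis; two of the linearised eigenvalues vanish there, and the argument instead proceeds via sharp exponential decay rates for the individual variables (Section~\ref{section_asymptoticdecaykasner}) followed by a monotonicity argument for the function $\functionforscc=(3\Sigma_+^2-\tilde\Sigma)^2+2(\Sigma_+N_+-\Delta)^2$, whose zero set is exactly the LRS locus. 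Third, at the plane wave equilibria the paper explicitly \emph{fails} to obtain transversality with the constraint surface: the centre-unstable manifold is tangent to it, and the conclusion rests only on the raw dimension bound $\dim\C^u\le 2$ against a four-dimensional ambient state space (see the paragraph before Thm~\ref{theo_centreunstablegloballyplanewave}). Your anticipated ``transversality'' step is therefore neither available nor needed. Finally, your physical explanation for $\Phi=0$ on~$\planewave{\bparamk}$ (past geodesic completeness) is not quite the mechanism; the paper simply verifies the vanishing by direct computation (Lemma~\ref{lemm_expressionforweyltensorvanishes}).
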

\begin{rema}
\label{rema_blowupwithmatter}
	Whenever the spacetime is not vacuum, \ie the Ricci tensor does not vanish, it may be enough to determine whether the contraction of the Ricci \curv~$R_{\alpha\beta}R^{\alpha\beta}$ is unbounded in the incomplete directions of causal geodesics in the MGHD. Just like the Kretschmann scalar, this is a four-dimensional quantity invariant under isometries, and its unboundedness contradicts the existence of a~$C^2$ extension.
\end{rema}

\subsection{Bianchi perfect fluid spacetimes}
We restrict our discussion to spacetimes~$(M,g)$ satisfying Einstein's field equations
\begin{equation}
\label{eqn_einsteineqn}
	R_{\alpha\beta}-\frac12Sg_{\alpha\beta}=T_{\alpha\beta},
\end{equation}
with~$R_{\alpha\beta}$ and~$S$ the Ricci and scalar curvature of the spacetime
and~$T_{\alpha\beta}$ the stress-energy tensor of a perfect fluid
\begin{equation}
\label{eqn_stressenergyperfectfluid}
	T_{\alpha\beta}=\mu u_\alpha u_\beta+p(g_{\alpha\beta}+u_\alpha u_\beta),
\end{equation}
where~$u$ is a unit timelike vector field. We assume that the pressure~$p$ and the energy density~$\mu$ satisfy a linear equation of state
\begin{equation}
\label{eqn_lineareqnofstate}
	p=(\gamma-1)\mu
\end{equation}
for some constant~$\gamma$.
The case of vacuum is included as~$\mu=0$.
We note at this point that Greek indices $\alpha,\beta,\ldots$ range from~$0$ to~$3$, while lower case Latin ones $i,j,\ldots$ range from~$1$ to~$3$.

We consider
Bianchi perfect fluids, which are perfect fluid spacetimes such that additionally
there is a three-dimensional \liegr~$G$ acting on the spacetime~$(M,g)$
\begin{equation}
	G\times M\rightarrow M.
\end{equation}
The metric~$g$ is invariant under this action, and the unit timelike vector field~$u$ appearing in the stress-energy tensor~\eqref{eqn_stressenergyperfectfluid} of the perfect fluid is \ogon\ to the orbits of this action. This defines non-tilted perfect fluids.
The name Bianchi spacetimes stems from the classification of three-dimensional \liegr s by Bianchi from~1903, see~\cite{krasinskibehrschueckingestabrookwahlquistellisjantzenkundt_bianchiclass} for a historical overview:
Unimodular \liegr s are called class~A, non-unimodular ones class~B, and depending on the form of the structure constants of the corresponding \liealgs, they can be further separated into the following types:
\begin{itemize}
	\item Bianchi class~A: types~I, II, VI$_0$, VII$_0$, VIII, IX;
	\item Bianchi class~B: types~IV, V, VI$_\binvparam$, VII$_\binvparam$, where~$\binvparam\in\RR$ is a parameter.
\end{itemize}
In this paper, we focus on Bianchi class~B models, but types~I and~II appear as boundary cases.
The models of class~A have been the subject of more detailed study in the past, and a number of results have been achieved,
including an affirmative answer to the Strong Cosmic Censorship conjecture. We refer to~\cite{ringstrom_cauchyproblem} for a detailed exposition, in particular Prop.~22.23 therein, and are going to relate our own findings to a number of other results further down.

In terms of initial data, we arrive at the following setting: The initial data \mf\ is a three-dimensional \liegr~$G$ with a metric and \fundform\ which are left-invariant, meaning that they are invariant under the action of~$G$ on itself via multiplication from the left. Further, a given constant represents the initial matter configuration.
\begin{defi}
\label{defi_generalinitialdatabianchib}
	Bianchi perfect fluid initial data consists of a \liegr~$G$, a left-invariant \riem\ metric~${\idmetric}$ on~$G$, a left-invariant symmetric covariant two-tensor~${\idfundform}$ on~$G$, and a constant $\mu_0\ge0$, satisfying the constraint equations
	\begin{align}
		\overline R-{\idfundform}_{ij}{\idfundform}^{ij}+(\tr_{\idmetric}{\idfundform})^2={}&2\mu_0,\\
		\overline\nabla_i\tr_{\idmetric}{\idfundform}-\overline\nabla^j{\idfundform}_{ij}={}&0.
	\end{align}
	In these equations, $\overline\nabla$ is the \levi\ connection of~${\idmetric}$, and~$\overline R$ the corresponding scalar curvature. Indices are lowered and raised by~${\idmetric}$.
\end{defi}
Here and in the following, we use the Einstein summation convention and sum over indices which occur twice, both as a sub- and as a superindex.

The symmetry of the metric and \fundform\ is preserved under Einstein's equations, and the resulting \mghd\ is isometric to the spatially homogeneous four-dimensional spacetime
\begin{equation}
\label{eqn_mghdform}
	(I\times G\,,\,g={-}dt^2+{}^t\overline g),
\end{equation}
with~$I$ an open interval and $\{{}^t\overline g\}_{t\in I}$ a family of left-invariant Riemannian metrics on~$G\cong\{t\}\times G$. We give an explicit construction of the \mghd\ in Section~\ref{section_equivalenceinitialdataexpansionnormalised}.

\subsection{Results for orthogonal perfect fluid initial data}
Let us now state the precise setting we discuss in this paper and give the main results we obtain regarding the Strong Cosmic Censorship conjecture.

We focus our attention on Bianchi perfect fluid initial data with a \liegr\ of Bianchi class~B. Type~I and~II appear as boundary cases. In particular, we exclude \liegr s of type~VIII and~IX. The \liealg~$\ggg$ associated to the initial data \liegr~$G$ then admits an Abelian subalgebra, see Lemma~\ref{lemm_abeliansubalgstructconstants}. In case of a \liegr\ of Bianchi class~B, this subalgebra can be characterised geometrically as the kernel of a certain one-form, see Lemma~\ref{lemm_abeliansubalginvardefi}. Using this geometric construction, the subalgebra is unique and denoted by~$\ggg_2$.

We introduce an \onorm\ basis $e_1,e_2,e_3$ of~$\ggg$ such that~$e_2,e_3$ span the Abelian subalgebra~$\ggg_2$ and~$e_1$ is \ogon\ to it \wrt\ the initial metric. By the previous argument, this can be done uniquely up to rotation in the~$e_2e_3$-plane and a choice of sign in~$e_1$, provided the \liegr\ is of class~B.

For Bianchi perfect fluid initial data with a \liegr\ of class~B, one realises that the momentum constraint gives an algebraic relation for the components of the initial symmetric two-tensor~$\idfundform$ with respect to this basis.
For all initial data apart from certain cases where the \liegr\ is of type~VI$_{{-}1/9}$, this relation implies that the off-diagonal components~$\idfundform_{12}$ and~$\idfundform_{13}$ of this tensor vanish, see Lemma~\ref{lemm_exceptionalbianchiasinitialdata}. In the remaining special cases, the sets of initial data admit an additional degree of freedom and their \mghd\ is a so-called `exceptional` Bianchi cosmological spacetime, denoted Bbii in~\cite{ellismaccallum_classofhomogcosmmodels}.
This term should be understood as 'having exceptional behaviour'.
We refer to Remark~\ref{rema_exceptionalityexplained} for more details on this terminology.

In this paper, we exclude initial data with a \liegr\ of Bianchi type~VI$_{{-}1/9}$ and thereby ensure that the resulting spacetime is a 'non-exceptional' or '\ogon ' Bianchi spacetime.
\begin{defi}
\label{defi_initialdatabianchib}
	Orthogonal perfect fluid Bianchi class~B initial data consists of a \liegr~$G$ of class~B other than type~VI$_{{-}1/9}$, a left-invariant \riem\ metric~${\idmetric}$ on~$G$, a left-invariant symmetric covariant two-tensor~${\idfundform}$ on~$G$, and a constant $\mu_0\ge0$, satisfying the constraint equations
	\begin{align}
		\overline R-{\idfundform}_{ij}{\idfundform}^{ij}+(\tr_{\idmetric}{\idfundform})^2={}&2\mu_0,\\
		\overline\nabla_i\tr_{\idmetric}{\idfundform}-\overline\nabla^j{\idfundform}_{ij}={}&0.
	\end{align}

	Orthogonal perfect fluid Bianchi type~I and~II initial data for Einstein's orthogonal perfect fluid \equ s is defined similarly, allowing any type~I or~II \liegr ~$G$.
\end{defi}
The reason for this restriction is the technique we apply in this paper. We wish to transform initial data sets into a specific set of variables and prove statements in this setting before translating them back. For \liegr s of type~VI$_{{-}1/9}$ we encounter difficulties as this transformation, described in detail in Section~\ref{section_equivalenceinitialdataexpansionnormalised}, can only be carried out for spacetimes which are `non-exceptional`.

Certain initial data sets with higher symmetry will be of importance in the discussion of the Strong Cosmic Censorship conjecture. They are defined using a suitably adapted basis of the \liealg~$\ggg$:
\begin{defi}[Locally rotationally symmetric initial data]
\label{defi_lrsinitialdata}
	Consider initial data $(G,{\idmetric},{\idfundform},\mu_0)$ as in Def.~\ref{defi_initialdatabianchib}, and denote
	by~$\ggg$ the corresponding \liealg\ with two-dimensional Abelian subalgebra~$\ggg_2$. Let
	$e_1,e_2,e_3$ an \onorm\ basis of~$\ggg$ such that $e_2,e_3$ span~$\ggg_2$.
	The initial data is said to be locally rotationally symmetric (LRS)
	if the basis can be chosen such that
	\begin{itemize}
		\item $e_2$ commutes with~$e_1$ and~$e_3$
		\begin{equation}
			\liebr{e_2}{e_1}=0=\liebr{e_2}{e_3},
		\end{equation}
		\item the commutator~$\liebr{e_1}{e_3}$ is a multiple of~$e_2$,
		\item the two-tensor~${\idfundform}_{ij}$ is diagonal, with ${\idfundform}_{11}={\idfundform}_{33}$.
	\end{itemize}
\end{defi}
\begin{rema}
	The notion locally rotationally symmetric in the previous definition stems from the fact that a rotation in the~$e_1e_3$-plane is a \liegr\ isomorphism and an isometry of the initial data.

	Considering a three-dimensional \liegr~$G$ of class~B with corresponding \liealg~$\ggg$, we find that the following holds: In case~$\liebr\ggg\ggg$ is two-dimensional, there is no locally rotationally symmetric initial data on this \liegr. If instead~$\liebr\ggg\ggg$ is one-dimensional, then the vector spanning this set defines a rotation axis contained in~$\ggg_2$ and leaving the \liealg\ invariant. Initial data on the given \liegr\ is locally rotationally symmetric \iif\ the two-tensor~$\idfundform$ is invariant under this rotation. For more details, we refer to Subsection~\ref{constr_highersymmetrysolutions}.
\end{rema}
\begin{defi}[Plane wave equilibrium initial data]
\label{defi_planewaveinitialdata}
	Consider initial data $(G,{\idmetric},{\idfundform},\mu_0)$ as in Def.~\ref{defi_initialdatabianchib}, and denote
	by~$\ggg$ the corresponding \liealg\ with two-dimensional Abelian subalgebra~$\ggg_2$. Let
	$e_1,e_2,e_3$ be an \onorm\ basis of~$\ggg$ such that $e_2,e_3$ span~$\ggg_2$, and denote by~$\gamma_{ij}^k$ the structure constants, \ie
	\begin{equation}
		\liebr{e_i}{e_j}=\gamma_{ij}^ke_k.
	\end{equation}
	The initial data is said to be of plane wave equilibrium type
	if the basis can be chosen such that
	\begin{equation}
		\gamma_{1A}^B+\gamma_{1B}^A={-}2{\idfundform}_{AB}.
	\end{equation}
\end{defi}
We note here that upper case Latin indices $A,B,\ldots$ range from~$2$ to~$3$, where we assume the frame elements~$e_2,e_3$ to span~$\ggg_2$.
\begin{defi}
\label{defi_lrsplanewavemghd}
	A spacetime which is the \mghd\ of locally rotationally symmetric initial data is called a locally rotationally symmetric spacetime.
	A spacetime which is the \mghd\ of initial data of plane wave equilibrium type is called a plane wave equilibrium spacetime.
\end{defi}
We show in Section~\ref{section_equivalenceinitialdataexpansionnormalised} how to construct the \mghd~\eqref{eqn_mghdform} to given initial data as in Def.~\ref{defi_initialdatabianchib}.

To answer the Strong Cosmic Censorship conjecture, we determine whether geometric quantities invariant under isometries of this spacetime remain bounded in the incomplete directions of causal geodesics.
For vacuum models we find that a bounded Kretschmann scalar~$\kretschmann$ corresponds to a spacetime with local rotational symmetry or of plane wave equilibrium type.
Whenever matter is present and $\gamma\not=0$, we do not have to compute the full Kretschmann scalar, but it is enough to determine whether the contraction of the Ricci \curv~$R_{\alpha\beta}R{}^{\alpha\beta}$ is unbounded. We show that this is the case for all causal geodesics, there are no exceptions.
The arguments for the matter case work similarly as in the Bianchi~A case, which was discussed in~\cite{ringstrom_asymptbianchiaspacetimes}.
The full statement about curvature blow-up in the incomplete directions of causal geodesics is collected in the following two theorems.
\begin{theo}
\label{theo_curvblowupinitialdatamatter}
	Consider orthogonal perfect fluid Bianchi class~B initial data~$(G,{\idmetric},{\idfundform},\mu_0)$ for matter~$\mu_0>0$. Let $(M,g,\mu)$ be the \mghd\ of the data, solving Einstein's equations for a perfect fluid~\eqref{eqn_stressenergyperfectfluid} with linear equation of state~\eqref{eqn_lineareqnofstate}, where~$\gamma>0$.
	Then the contraction of the Ricci tensor with itself $R_{\alpha\beta}R^{\alpha\beta}$
	is unbounded in the incomplete directions of causal geodesics.
\end{theo}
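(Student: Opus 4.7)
The plan is to reduce the statement to showing that the energy density~$\mu$ blows up along any past-incomplete causal geodesic, and then to combine the conservation law for~$\mu$ with the divergence of the expansion at the initial singularity.

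First I would derive a purely algebraic expression for $R_{\alpha\beta}R^{\alpha\beta}$ in terms of~$\mu$. Taking the trace of Einstein's equations~\eqref{eqn_einsteineqn} together with~\eqref{eqn_stressenergyperfectfluid} and~\eqref{eqn_lineareqnofstate} gives $S = (4-3\gamma)\mu$, and substituting back produces $R_{\alpha\beta} = \gamma\mu\, u_\alpha u_\beta + \tfrac{2-\gamma}{2}\mu\, g_{\alpha\beta}$. Contracting and using $u^\alpha u_\alpha = -1$ and $g^{\alpha\beta}g_{\alpha\beta} = 4$ yields $R_{\alpha\beta}R^{\alpha\beta} = [3(\gamma-1)^2+1]\mu^2$, whose prefactor is bounded below by~$1$ for every real~$\gamma$. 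It therefore suffices to show that~$\mu$ is unbounded along every incomplete causal geodesic.

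Since the MGHD takes the form~\eqref{eqn_mghdform} with $u = \partial_t$, and~$\mu$ is spatially homogeneous, $\mu$ depends only on~$t$. The conservation law $\nabla^\alpha T_{\alpha\beta}=0$ combined with the linear equation of state yields $\dot\mu = -\gamma\theta\mu$, where $\theta$ denotes the trace of the second fundamental form of the constant-$t$ slices. Hence $\mu(t) = \mu_0\exp(-\gamma\int_{t_0}^t\theta\, ds)$ is strictly positive throughout~$I$. I would then invoke the divergence of the expansion~$\theta$ as~$t$ approaches the endpoint of~$I$ corresponding to the initial singularity; this is extracted from the evolution equations of the expansion-normalised Hewitt--Wainwright variables set up in Section~\ref{section_equivalenceinitialdataexpansionnormalised} together with the Hamiltonian constraint coupling $\theta^2$, the shear, the spatial curvature and~$\mu$. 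With $\gamma>0$ and $\mu_0>0$ this forces $\int\theta\, ds$ to diverge and hence $\mu(t)\to+\infty$ at that endpoint.

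Finally, a causal geodesic in a spatially homogeneous spacetime of the form~\eqref{eqn_mghdform} projects monotonically onto the $t$-factor, and $|dt/d\tau|$ can be bounded uniformly below in terms of the conserved quantities associated with the spatial Killing fields. Past-incompleteness therefore forces $t(\tau)$ to reach the past endpoint of~$I$ in finite affine parameter, and along any such geodesic $R_{\alpha\beta}R^{\alpha\beta}\to\infty$ by the identity above. The main delicate point is the divergence of~$\theta$ for Bianchi class~B; in the class~A setting the analogous statement is given in~\cite{ringstrom_asymptbianchiaspacetimes}, and the argument here parallels that approach via the Hewitt--Wainwright flow, which is precisely why the theorem can only be addressed once the expansion-normalised framework of Section~\ref{section_equivalenceinitialdataexpansionnormalised} has been set up.
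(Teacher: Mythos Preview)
Your proposal is correct and follows essentially the same route as the paper: derive the algebraic identity $R_{\alpha\beta}R^{\alpha\beta}=(1+3(\gamma-1)^2)\mu^2$, then show $\mu\to\infty$ toward the past using the expansion-normalised framework of Section~\ref{section_equivalenceinitialdataexpansionnormalised} and Prop.~\ref{prop_mghdincompletedirections} to identify the incomplete direction with $\tau\to-\infty$. The only cosmetic difference is that the paper (Lemma~\ref{lemm_curvblowupricci}) computes the blow-up more directly via $9\mu^2=\Omega^2\theta^4=\Omega(\tau_0)^2\theta(\tau_0)^4\exp(-6\gamma(\tau-\tau_0))$ from the evolution equations for~$\theta$ and~$\Omega$, which sidesteps your separate appeal to the divergence of $\int\theta\,dt$ (note that divergence of~$\theta$ alone would not suffice; what you actually need is $\int\theta\,dt=3\int d\tau=\infty$ via $dt/d\tau=3/\theta$).
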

\begin{theo}
\label{theo_curvblowupinitialdatavacuum}
	Consider orthogonal perfect fluid Bianchi class~B initial data~$(G,{\idmetric},{\idfundform},\mu_0)$ for vacuum~$\mu_0=0$, which is neither locally rotationally symmetric of Bianchi type~VI$_{{-}1}$ nor of plane wave equilibrium type.
	Let $(M,g)$ be the \mghd\ of the data, solving Einstein's equations for vacuum~$T_{\alpha\beta}=0$.
	Then the Kretschmann scalar $R_{\alpha\beta\gamma\delta}R^{\alpha\beta\gamma\delta}$
	is unbounded in the incomplete directions of causal geodesics.
\end{theo}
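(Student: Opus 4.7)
The plan is to pass to the Hewitt--Wainwright expansion-normalised variables via the equivalence constructed in Section~\ref{section_equivalenceinitialdataexpansionnormalised}. In these variables the Kretschmann scalar factorises as
\[
	\kretschmann = \theta^{4}\,\functionforscc,
\]
where $\theta$ is the expansion scalar and $\functionforscc$ is a polynomial in the dimensionless variables $(\Sigma_+,\Sigma_-,\tilde\Sigma,\tilde N,\Delta,N_+)$ determined by the Weyl tensor alone, since the Ricci tensor vanishes in vacuum. The whole argument then reduces to a claim about the orbit on the compactified state space.

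First I would establish that $\theta\to\infty$ along every incomplete causal geodesic direction. For spatially homogeneous vacuum this follows from the Raychaudhuri equation: causal incompleteness towards the initial singularity is equivalent to the finite-proper-time blow-up of $\theta$, while bounded $\theta$ yields causal geodesic completeness. Next I would determine the $\alpha$-limit set of the orbit under the normalised flow. Using the monotone functions and invariant subset structure for orthogonal Bianchi class~B described earlier in the paper (together with the exclusion of type~VI$_{-1/9}$ built into Definition~\ref{defi_initialdatabianchib}), the $\alpha$-limit of every orbit falls inside the closure of the union of the Kasner parabola $\kasnerparabola$ and the plane wave equilibria $\planewave{\bparamk}$.

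The crux is the identification of where $\functionforscc$ can degenerate on this limit set. A plane wave equilibrium is a fixed point at which $\functionforscc$ vanishes, so an orbit sitting at such a point has bounded Kretschmann scalar; these are precisely the configurations excluded by the hypothesis. A direct computation on $\kasnerparabola$ expresses $\functionforscc$ as a polynomial in the Kasner exponents which vanishes only at the flat Taub points, and those are approached only by the LRS Bianchi type~VI$_{-1}$ orbit (again excluded) or by orbits which are themselves plane wave equilibria. For every orbit admitted by the hypothesis, $\functionforscc$ is therefore uniformly bounded below along the approach to the singularity, and combined with $\theta\to\infty$ this yields $\kretschmann\to\infty$.

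The main obstacle is the explicit verification that the zero locus of $\functionforscc$ inside the $\alpha$-limit set meets only the two excluded families; in particular one must track how the off-diagonal variables $\Delta$, $\tilde\Sigma$ and $\tilde N$ enter the Weyl invariant, to ensure that no further LRS family, and no exceptional heteroclinic running into a Taub point, produces a degenerate normalised curvature. A secondary technical point is the uniformity of the lower bound for $\functionforscc$ along orbits that perform non-trivial motion on $\kasnerparabola$ in the limit; here the compactness of the Kasner arc away from the Taub points, together with continuity of $\functionforscc$, should deliver the required estimate and close the argument.
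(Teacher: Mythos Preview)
Your strategy matches the paper's: reduce via Section~\ref{section_equivalenceinitialdataexpansionnormalised} to Theorem~\ref{theo_curvblowupexpansionnormalisedvacuum}, write $\kretschmann$ as $\theta^4$ times a polynomial in the normalised variables, use $\theta\to\infty$ as $\tau\to-\infty$, and show the polynomial is bounded away from zero outside the excluded families. But you misplace the difficulty. Computing the zero locus of the normalised Weyl invariant on $\kasnerparabola\cup\planewave{\bparamk}$ is easy (Lemmata~\ref{lemm_curvblowupvacuum} and~\ref{lemm_expressionforweyltensorvanishes}): on $\kasnerparabola$ it vanishes exactly at Taub~1 and Taub~2. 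The hard part---which you assert rather than argue---is determining which orbits can approach these points. For Taub~1 this is Proposition~\ref{prop_taubone}; for Taub~2 it is Theorem~\ref{theo_taubtwocharacterisationoforbits}, and both rest on the detailed decay analysis of Section~\ref{section_asymptoticdecaykasner}. (Incidentally, the paper's $\functionforscc=(3\Sigma_+^2-\tilde\Sigma)^2+2(\Sigma_+N_+-\Delta)^2$ is the Lyapunov-type function used to characterise convergence to Taub~2, not your normalised curvature polynomial.) You also omit a step for $\planewave{\bparamk}$: you need that in vacuum no \emph{non-constant} orbit has an $\alpha$-limit point there, which is part of Proposition~\ref{prop_alphalimitsets_vacuum_inflat} and uses monotonicity of $Z=(1+\Sigma_+)^2-\tilde A$.

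Your ``secondary technical point'' about uniformity is handled in the paper by first proving convergence to a single Kasner point (Proposition~\ref{prop_convergencetokasner}), not merely $\alpha$-limit containment in $\kasnerparabola$. Compactness away from the Taub points would indeed deliver the lower bound once you know the $\alpha$-limit set avoids both Taub points---but establishing that avoidance is precisely the content of Propositions~\ref{prop_taubone} and Theorem~\ref{theo_taubtwocharacterisationoforbits}, so the argument is circular until those are in hand. One minor point: the class~B expansion-normalised variables are $(\Sigma_+,\tilde\Sigma,\Delta,\tilde A,N_+)$; there is no $\Sigma_-$ in this formulation.
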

\begin{rema}
	We recall that orthogonal perfect fluid Bianchi class~B initial data excludes \liegr s of Bianchi type~VI$_{{-}1/9}$, see Definition~\ref{defi_initialdatabianchib}. In case of such initial data, more precisely for such initial data which is 'exceptional', see Remark~\ref{rema_exceptionalityexplained}, our statements do not apply. In fact, their behaviour is expected to differ significantly: close to the initial singularity, the corresponding spacetimes are expected to show chaotic oscillatory behaviour, much the same as Bianchi type~VIII and~IX spacetime in class~A.
	
	Due to the additional complications one meets in these models, they have not been studied extensively to this date. An approach similar to the one we choose here is made in~\cite{hewitthorwoodwainwright_asymptdynamexceptbianchicosm}. In future works, one can hope that their results and conjectures can be used to find an answer to Strong Cosmic Censorship which also applies to the remaining Bianchi type we do not treat here.
\end{rema}

\begin{rema}
	Initial data such that both the Kretschmann scalar $R_{\alpha\beta\gamma\delta}R^{\alpha\beta\gamma\delta}$ and the contraction of the Ricci tensor with itself $R_{\alpha\beta}R^{\alpha\beta}$ remain bounded in the incomplete directions of causal geodesics in the \mghd\ are
	locally rotationally symmetric or of plane wave equilibrium type.
	In particular, such initial data has additional symmetry and can therefore be considered non-generic. As a consequence, Strong Cosmic Censorship holds in the class of orthogonal perfect fluid Bianchi class~B initial data.
	
	It is interesting to note here that we only in the vacuum case have to exclude certain non-generic initial data sets. In the matter case, each initial data set has a corresponding development which is inextendible.
	The presence of matter appears to simplify things, at least in terms of the Strong Cosmic Censorship conjecture.
\end{rema}
\begin{rema}
\label{rema_curvblowupcosmconst}
	In case~$\mu>0$ and~$\gamma=0$, the stress-energy tensor~\eqref{eqn_stressenergyperfectfluid} with linear equation of state~\eqref{eqn_lineareqnofstate} takes the form
	\begin{equation}
		p={-}\mu
	\end{equation}
	and can be interpreted as the stress-energy tensor of vacuum with a positive cosmological constant. In this case, we find a statement similar to the previous one: We can show that the \mghd~$(M,g)$ is of the form~\eqref{eqn_mghdform}, with $I=(t_-,t_+)$ an interval,
	and every timeslice~$\{t\}\times G$ has positive mean \curv\ \wrt\ the normal vector~$\partial_t$. Then the Kretschmann scalar $R_{\alpha\beta\gamma\delta}R^{\alpha\beta\gamma\delta}$ or the contraction of the Ricci tensor with itself $R_{\alpha\beta}R^{\alpha\beta}$
	is unbounded in the incomplete directions of causal geodesics, with the following possible exceptions:
	\begin{itemize}
		\item local rotationally symmetric initial data of Bianchi type~VI$_{{-}1}$
		\item initial data of Bianchi class~B, and in scale free variables the metric and \fundform\ of every slice $\{t\}\times G$ converge to initial data of plane wave equilibrium type,
		as $t\rightarrow t_-$.
	\end{itemize}
	Note that we did not make precise in what sense the initial data converges, and that for initial data on a \liegr\ of type other than Bianchi~VI$_{{-}1}$, we have not clearly stated any property which can be interpreted as non-generic apart from this convergence. Further, we have used the existence of a specific foliation in order to formulate the statements.
	Precise results are given and proven in scale free variables using a notion of convergence which we introduce in the following. The results, both for the two previous theorems and the special case of positive cosmological constant in vacuum, then follow directly from translating back into the current formulation.
\end{rema}

\subsection{Expansion-normalised variables and previous results}
Most of the work in this paper is carried out in a setting different from the one above. Instead of starting with initial data~$(G,{\idmetric},{\idfundform},\mu_0)$ to Einstein's equations and proving properties of the resulting \mghd, we adopt a different view-point. The information given by initial data $(G,{\idmetric},{\idfundform},\mu_0)$ can be translated into the form of expansion-normalised and dimensionless variables
\begin{equation}
	(\Sigma_+\,,\,\tilde\Sigma\,,\,\Delta\,,\,\tilde A\,,\,N_+),
\end{equation}
which have been introduced in~\cite{hewittwainwright_dynamicalsystemsapproachbianchiorthogonalB}. Einstein's field equations translate into an ordinary differential equation in these variables, defined on a compact subset of~$\RR^5$ and with a changed time coordinate. It is in these variables that we work and obtain the results on curvature blow-up. Every initial data set corresponds to a point in the compact subset of~$\RR^5$, and once one has determined the solution curve in expansion-normalised variables, the \mghd~\eqref{eqn_mghdform} which solves Einstein's field equations with correct initial data can be determined. The detailed construction is carried out in Section~\ref{section_equivalenceinitialdataexpansionnormalised}. Via this construction, our results in expansion-normalised variables can be carried over to the setting of geometric initial data.

The evolution of these expansion-normalised variables is defined in the following way: To start with, one fixes two parameters, ${\bparamk}\in\RR$ and~$\gamma\in[0,2]$. The evolution equations for the variables~$(\Sigma_+,\tilde\Sigma,\Delta,\tilde A,N_+)$ are
\begin{equation}
\label{eqns_evolutionbianchib}
\begin{subaligned}
	\Sigma_+'={}&(q-2)\Sigma_+-2\tilde N,\\
	\tilde\Sigma'={}&2(q-2)\tilde\Sigma-4\Sigma_+\tilde A-4\Delta N_+,\\
	\Delta'={}&2(q+\Sigma_+-1)\Delta+2(\tilde\Sigma-\tilde N)N_+,\\
	\tilde A'={}&2(q+2\Sigma_+)\tilde A,\\
	N_+'={}&(q+2\Sigma_+)N_++6\Delta,
\end{subaligned}
\end{equation}
where~$'$ denotes differentiation ${d}/{d\tau}$ with respect to the dimensionless time~$\tau$.
The deceleration parameter~$q$ appearing in the equations satisfies
\begin{equation}
\label{eqn_definitionqgeneral}
	q=\frac32(2-\gamma)(\Sigma_+^2+\tilde\Sigma)+\frac12(3\gamma-2)(1-\tilde A-\tilde N),
\end{equation}
and one has set
\begin{equation}
\label{eqn_definitiontilden}
	\tilde N=\frac13(N_+^2-{\bparamk}\tilde A).
\end{equation}
The evolution is constrained to the set of points in~$\RR^5$ satisfying
\begin{equation}
\label{eqn_constraintgeneralone}
	\tilde\Sigma\tilde N-\Delta^2-\Sigma_+^2\tilde A=0,
\end{equation}
and
\begin{equation}
\label{eqn_constraintgeneraltwo}
	\tilde\Sigma\ge0, \qquad \tilde A\ge0, \qquad\tilde N\ge0, \qquad \Sigma_+^2+\tilde\Sigma+\tilde A+\tilde N\le 1.
\end{equation}
One further introduces the density parameter
\begin{equation}
\label{eqn_omegageneral}
	\Omega=1-\Sigma_+^2-\tilde\Sigma-\tilde A-\tilde N,
\end{equation}
which satisfies~$\Omega\ge0$ and, as a consequence of the differential equations~\eqref{eqns_evolutionbianchib}, evolves according to
\begin{equation}
\label{eqn_evolutionomega}
	\Omega'=(2q-(3\gamma-2))\Omega.
\end{equation}
Note that the relations~\eqref{eqn_constraintgeneralone} and~\eqref{eqn_constraintgeneraltwo} are preserved by the evolution equations~\eqref{eqns_evolutionbianchib}, see Remark~\ref{rema_expnormvarconstraintspreserved}.

We now give a rough explanation as to how the expansion-normalised variables are related to the \mghd\ to given initial data~$(G,{\idmetric},{\idfundform},\mu_0)$ as in Def.~\ref{defi_initialdatabianchib}. For this, assume that we are given a spacetime as in~\eqref{eqn_mghdform}, together with a function~$\mu$, which solve Einstein's field equations~\eqref{eqn_einsteineqn} for a perfect fluid~\eqref{eqn_stressenergyperfectfluid} with linear equation of state~\eqref{eqn_lineareqnofstate}. Assume that~$\partial_t$, the vector field in the~$I$ direction which is unit timelike and \ogon\ to every~$\{t\}\times G$, coincides with the vector field~$u$ from the stress-energy tensor. Assume further that the induced metric and second fundamental form on~$\{0\}\times G$ coincide with~${\idmetric}$ and~${\idfundform}$ from the initial data, \ie $(I\times G,g,\mu)$ is a development of the data.
One obtains consistency between the evolution in expansion-normalised variables and the initial data approach if one chooses the parameter~$\gamma$ from the expansion-normalised variables to coincide with the same-named constant from the linear equation of state~\eqref{eqn_lineareqnofstate}. Further, if the \liegr~$G$ has a group parameter~${\binvparam}$, then the parameter~${\bparamk}$ from the expansion-normalised variables is set to satisfy~${\bparamk}=1/{\binvparam}$. If the \liegr\ has no group parameter, then~${\bparamk}=0$.

For every dimensionless time~$\tau$, the point $(\Sigma_+,\tilde\Sigma,\Delta,\tilde A,N_+)(\tau)$ in expansion-normalised variables describes the geometric and dynamical properties of the timeslice $\{t(\tau)\}\times G$.
The variables~$\Delta,\tilde A,N_+$ carry information on the three-dimensional metric of this timeslice, while~$\Sigma_+,\tilde\Sigma$ describe the shear of the vector field~$\partial_t$.
A perfect fluid spacetime is described by a curve in~$\RR^5$ solving the constrained evolution equations~\eqref{eqns_evolutionbianchib}--\eqref{eqn_constraintgeneraltwo}, and we sometimes call such a solution an orbit.
Transforming the momentum constraint from the point of view of geometric initial data into the expansion-normalised variables yields the constraint equation~\eqref{eqn_constraintgeneralone}. Similarly, the Hamiltonian constraint justifies the definition of~$\Omega$ in~\eqref{eqn_omegageneral}, which is the expansion-normalised version of the energy density~$\mu$.

Statements about geometric initial data can be transformed into statements about specific points in the variables~$(\Sigma_+,\tilde\Sigma,\Delta,\tilde A,N_+)$. Conversely, we show in Section~\ref{section_equivalenceinitialdataexpansionnormalised} how one can, given the solution in expansion-normalised variables through this point, construct the corresponding \mghd.
Hence, statements about spacetimes solving Einstein's perfect fluid equation on the one hand and orbits solving the evolution equations in expansion-normalised variables~\eqref{eqns_evolutionbianchib}--\eqref{eqn_evolutionomega} on the other hand carry equivalent information.
It is for this reason that we can carry out our analysis wholly in the setting of expansion-normalised variables, and only at the end translate the results back to the point of view of geometric initial data. The setting of expansion-normalised variables is a rather approachable one, as we discuss a polynomial ordinary differential equation on a compact subset of~$\RR^5$. For instance, this allows the use of techniques from the theory of dynamical systems.

\smallskip

The system of differential equations~\eqref{eqns_evolutionbianchib}-\eqref{eqn_definitiontilden} with constraints~\eqref{eqn_constraintgeneralone}-\eqref{eqn_constraintgeneraltwo} was first introduced and studied in~\cite{hewittwainwright_dynamicalsystemsapproachbianchiorthogonalB}. In this reference, Hewitt and Wainwright identify invariant subsets representing the different Bianchi types, and we recall these subsets in Tables~\ref{table_bianchibsubsets} and~\ref{table_bianchiasubsets} below, together with several other invariant subsets representing models with additional symmetry, Table~\ref{table_bianchihighersymmetry}. Additionally, \cite{hewittwainwright_dynamicalsystemsapproachbianchiorthogonalB} discuss several sets of equilibrium points of the evolution equations. Of importance to our work are the Kasner parabola with the two Taub points and the plane wave equilibrium points, as convergence to points in these sets corresponds to the incomplete direction of causal geodesics in the \mghd, which is what we need to investigate in light of the Strong Cosmic Censorship conjecture.
\begin{defi}
\label{defi_kasnerparabola}
	The Kasner parabola is the subset~$\kasnerparabola$ defined by
	\begin{equation}
		\Sigma_+^2+\tilde\Sigma=1, \qquad \Delta=\tilde A=N_+=0.
	\end{equation}
\end{defi}
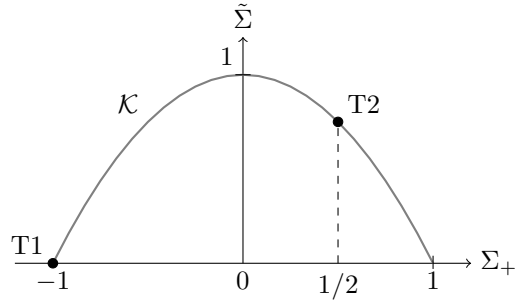
\begin{figure}[ht]
	\begin{tikzpicture}[xscale=2.5,yscale=2.5]
	\draw [->] (-1.2,0) -- (1.2,0);
	\draw [->] (0,0) -- (0,1.2);
	\draw[gray, thick, domain=-1:1] plot (\x, {(\x+1)*(1-\x)});
	\draw[fill] (-1,0) circle [radius=0.025];
	\node [below] at (0,0) {$0$};
	\draw (1,-0.03) -- (1,0.04);
	\node [below] at (1,0) {$1$};
	\node [below] at (-1,0) {${-}1$};
	\draw (-0.04,1) -- (0.04,1);
	\node [above left] at (0,1) {$1$};
	\node [right] at (1.2,0) {$\Sigma_+$};
	\node [above] at (0,1.2) {$\tilde\Sigma$};
	\node [above left] at (-1,0) {$\taubone$};
	\draw [dashed] (0.5,0) -- (0.5,0.75);
	\node [below] at (0.5,0) {$1/2$};
	\draw[fill] (0.5,0.75) circle [radius=0.025];
	\node [above right] at (0.5,0.75) {$\taubtwo$};
	\node [above left] at (-0.5,0.75) {$\kasnerparabola$};
	\end{tikzpicture}
	\caption{The Kasner parabola~$\kasnerparabola$, projected to the $\Sigma_+\tilde\Sigma$-plane. The special points Taub~1 and~2 satisfy~$\Sigma_+={-}1$ and $\Sigma_+=1/2$.}
	\label{figure_kasnerparabola}
\end{figure}
\begin{defi}
\label{defi_taubpoints}
	On the Kasner parabola~$\kasnerparabola$, the points Taub~1 and 2 are defined by
	\begin{itemize}
		\item Taub~1: $\taubone\coloneqq\left\{(\Sigma_+,\tilde\Sigma,\Delta,\tilde A,N_+)=({-}1,0,0,0,0)\right\}$,
		\item Taub~2: $\taubtwo\coloneqq\left\{(\Sigma_+,\tilde\Sigma,\Delta,\tilde A,N_+)=\left(1/2,3/4,0,0,0\right)\right\}$.
	\end{itemize}
\end{defi}
\begin{defi}
\label{defi_planewaveexpansionnorm}
	The plane wave equilibrium points are the elements of the set~$\planewave {\bparamk}$ defined by
	\begin{equation}
	\label{eqn_defiplanewave}
	\begin{split}
		&\Sigma_+>-1,\qquad
		\tilde\Sigma={-}\Sigma_+(1+\Sigma_+),\qquad
		\Delta=0,\qquad
		\tilde A=(1+\Sigma_+)^2,\\
		&N_+^2=(1+\Sigma_+)({\bparamk}(1+\Sigma_+)-3\Sigma_+).
	\end{split}
	\end{equation}
\end{defi}
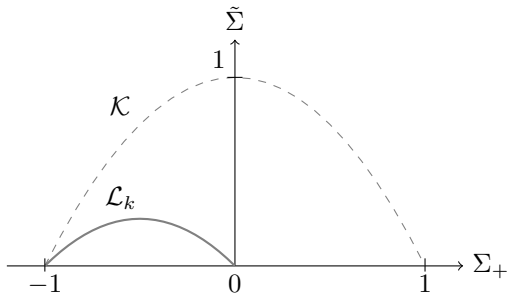
\begin{figure}[ht]
	\begin{tikzpicture}[xscale=2.5,yscale=2.5]
	\draw [->] (-1.2,0) -- (1.2,0);
	\draw [->] (0,0) -- (0,1.2);
	\draw[gray, dashed, domain=-1:1] plot (\x, {(\x+1)*(1-\x)});
	\draw[gray, thick, domain=-1:0] plot (\x, {-\x*(1+\x)});
	\node [below] at (0,0) {$0$};
	\draw (1,-0.03) -- (1,0.04);
	\node [below] at (1,0) {$1$};
	\draw (-1,-0.03) -- (-1,0.04);
	\node [below] at (-1,0) {${-}1$};
	\draw (-0.04,1) -- (0.04,1);
	\node [above left] at (0,1) {$1$};
	\node [right] at (1.2,0) {$\Sigma_+$};
	\node [above] at (0,1.2) {$\tilde\Sigma$};
	\node [above] at (-0.6,0.25) {$\planewave k$};
	\node [above left] at (-0.5,0.75) {$\kasnerparabola$};
	\end{tikzpicture}
	\caption{The plane wave equilibrium points~$\planewave {\bparamk}$, projected to the $\Sigma_+\tilde\Sigma$-plane. For reference, the Kasner parabola~$\kasnerparabola$ is plotted as a dashed line.}
	\label{figure_planewave}
\end{figure}

Both the Kasner parabola~$\kasnerparabola$ and the plane wave equilibrium points~$\planewave k$ satisfy~$\Omega=0$, are contained in the set defined by the constraint equations~\eqref{eqn_constraintgeneralone}--\eqref{eqn_constraintgeneraltwo} and consist of equilibrium points, \ie the \rhs\ of the evolution equations~\eqref{eqns_evolutionbianchib} is zero.

Information about the local stability can be drawn from the linearised evolution equations in the extended five-dimensional space, by which one means the linear approximation of the evolution \equs ~\eqref{eqns_evolutionbianchib}--\eqref{eqn_definitiontilden}, without restricting to the constraint equations~\eqref{eqn_constraintgeneralone}--\eqref{eqn_constraintgeneraltwo}. We give the explicit form of this vector field for points on the Kasner parabola~$\kasnerparabola$ in Appendix~\ref{subsect_appendixlinearisedevolutionkasner}.
The eigenvalues of this vector field are
\begin{equation}
\label{eqn_eigenvalueskasner}
	0\qquad
	2(1+\Sigma_+\pm\sqrt{3(1-\Sigma_+^2)})\qquad
	4(1+\Sigma_+)\qquad
	3(2-\gamma).
\end{equation}
We note at this point that there appear to be typos in the eigenvalues in both~\cite[Sect.~4.4]{hewittwainwright_dynamicalsystemsapproachbianchiorthogonalB} and~\cite[Sect.~7.2.3]{wainwrightellis_dynamsystemsincosm}. We give the corrected eigenvalues and state the corresponding eigenvectors in Appendix~\ref{subsect_appendixlinearisedevolutionkasner}.
The signs of the individual eigenvalues in these two references however are given correctly, and with this information Hewitt--Wainwright are able to identify the points on the Kasner parabola to the right of Taub~2 (with $1/2<\Sigma_+\le1$) as local sources, and the points to the left of Taub~2 (with ${-}1<\Sigma_+<1/2$) as saddles. For the two Taub points, the linearisation of the evolution equations alone does not determine the local stability, as two, or even three in case of the point Taub~1, of the eigenvalues vanish in these two points.

Similarly, one considers the linearised evolution equations in the extended five-dimensional state space for plane wave equilibrium points. The eigenvalues of this vector field are
\begin{equation}
\label{eqn_eigenvaluesplanewave}
	0\qquad
	{-}4(1+\Sigma_+)\qquad
	{-}4\Sigma_+-(3\gamma-2)\qquad
	{-}2(1+\Sigma_+)\pm4iN_+,
\end{equation}
with~$N_+$ as in \equ ~\eqref{eqn_defiplanewave}. See Appendix~\ref{subsect_appendixlinearisedevolutionplanewave} for more details. Again, the number of positive, negative, and zero eigenvalues can be used to determine the local stability, and~\cite{hewittwainwright_dynamicalsystemsapproachbianchiorthogonalB} identify points with~$\Sigma_+>{-}(3\gamma-2)/4$ as local sources, and points with~$\Sigma_+<{-}(3\gamma-2)/4$ as saddles. In the point~$\Sigma_+={-}(3\gamma-2)/4$ two eigenvalues vanish, and the local stability cannot be determined this way.

Using dynamical systems method, Hewitt--Wainwright show that the Kasner parabola~$\kasnerparabola$ is of central importance for the asymptotic behaviour of solutions to the evolution equation as~$\tau\rightarrow{-}\infty$, as it contains the~$\alpha$-limit set of (non-constant, generic) solutions. We state and prove a refined version of~\cite[Prop.~5.1, Prop.~5.2]{hewittwainwright_dynamicalsystemsapproachbianchiorthogonalB} in Prop.~\ref{prop_alphalimitsets_vacuum_inflat}.

\subsection{New results: convergence behaviour in expansion-normalised variables, curvature blow-up}

In the present paper, we obtain more detailed results about the behaviour of solutions to the evolution equations~\eqref{eqns_evolutionbianchib}--\eqref{eqn_evolutionomega} as~$\tau\rightarrow{-}\infty$.
We refine a statement on the $\alpha$-limit sets of solutions by~\cite{hewittwainwright_dynamicalsystemsapproachbianchiorthogonalB}: all~$\alpha$-limit points are contained in the union of the Kasner parabola~$\kasnerparabola$, the plane wave equilibrium points~$\planewave {\bparamk}$ and the point $\{\Sigma_+=\tilde\Sigma=\Delta=\tilde A=N_+=0\}$, see Prop.~\ref{prop_alphalimitsets_vacuum_inflat}.

We further prove that every solution has one unique~$\alpha$-limit point, \ie solutions to the evolution equations converge, to a limit point contained the plane wave equilibrium points~$\planewave k$ or the Kasner parabola~$\kasnerparabola$, see Prop.~\ref{prop_convergencetoplanewave} and Prop.~\ref{prop_convergencetokasner}. Only constant solutions converge to the point $\{\Sigma_+=\tilde\Sigma=\Delta=\tilde A=N_+=0\}$, see Prop.~\ref{prop_alphalimitsets_vacuum_inflat}.

Different subsets of this set of limit points have very different qualitative properties:
While solutions converging to the plane wave equilibrium points~$\planewave {\bparamk}$, to the point~$\{\Sigma_+=\tilde\Sigma=\Delta=\tilde A=N_+=0\}$ and to the subarc of the Kasner parabola~$\kasnerparabola$ with~${-}1\le\Sigma_+\le1/2$ are contained in a `small` set, namely a countable union of $C^1$ sub\mf s of positive codimension (Prop.~\ref{prop_alphalimitsets_vacuum_inflat}, Prop.~\ref{prop_taubone}, Thm~\ref{theo_taubtwocharacterisationoforbits}, Thm~\ref{theo_leftoftaubtworesult}, Thm~\ref{theo_centreunstablegloballyplanewave}),
the remaining arc of the Kasner parabola contains the limit points of all remaining solutions.
Considering the set of all possible limit points as a whole, this yields the following statement:
\begin{theo}
\label{theo_fullmeasurelimitset}
	Assume either vacuum or inflationary matter, \ie either $\Omega=0$ or $\Omega>0$, $\gamma\in\left[0,2/3\right)$. Then the following holds for solutions to the evolution \equs ~\eqref{eqns_evolutionbianchib}--\eqref{eqn_evolutionomega}:
	\begin{itemize}
		\item Consider the sets describing Bianchi type~VI$_{\binvparam}$ or~VII$_{\binvparam}$, \ie the sets
		\begin{align}
			B(VI_{\binvparam}) ={}&\{\eqref{eqn_constraintgeneralone}-\eqref{eqn_constraintgeneraltwo}\text{ hold},\,{\bparamk}=\nicefrac1{\binvparam}<0,\, \tilde A>0\}, \\
			B^\pm(VII_{\binvparam}) ={}&\{\eqref{eqn_constraintgeneralone}-\eqref{eqn_constraintgeneraltwo}\text{ hold},\, {\bparamk}=\nicefrac1{\binvparam}>0,\,\tilde A>0,\,N_+>0\textit{ or }N_+<0\}.
		\end{align}
		Then the subset of points such that the corresponding solution converges to a point in~$\kasnerparabola\cap\{\Sigma_+>1/2\}$, as~$\tau\rightarrow{-}\infty$, is
		of full measure and a countable intersection of open and dense sets
		in B(VI$_{\binvparam}$) or B$^\pm$(VII$_{\binvparam}$), respectively.
		\item Consider the set describing Bianchi type~IV, \ie the set
		\begin{equation}
			B^\pm(IV) =\{\eqref{eqn_constraintgeneralone}-\eqref{eqn_constraintgeneraltwo}\text{ hold},\,{\bparamk}=0,\,\tilde A>0,\,N_+>0\textit{ or }N_+<0\}.
		\end{equation}
		Then the subset of points such that the corresponding solution converges to a point in~$\kasnerparabola\cap\{\Sigma_+>1/2\}$ or to the point~$\kasnerparabola\cap\{\Sigma_+=0\}$, as~$\tau\rightarrow{-}\infty$, is
		of full measure and a countable intersection of open and dense sets
		in B$^\pm$(IV).
		\item Consider the set describing Bianchi type~V, \ie the set
		\begin{equation}
			B(V)=\{\eqref{eqn_constraintgeneralone}-\eqref{eqn_constraintgeneraltwo}\text{ hold},\,{\bparamk}=0,\,\tilde A>0,\,\Sigma_+=\Delta=N_+=0\}.
		\end{equation}
		Then every non-constant solutions converges to the point~$\kasnerparabola\cap\{\Sigma_+=0\}$, as~$\tau\rightarrow{-}\infty$.
	\end{itemize}
\end{theo}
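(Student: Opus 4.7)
The plan is to leverage two results already established in the paper and pass to complements. First, by Prop.~\ref{prop_convergencetoplanewave} and Prop.~\ref{prop_convergencetokasner} every solution of the evolution equations possesses a unique $\alpha$-limit point, and by Prop.~\ref{prop_alphalimitsets_vacuum_inflat} this limit point must lie in $\kasnerparabola\cup\planewave{\bparamk}\cup\{(0,0,0,0,0)\}$. Inside each of the invariant subsets $B(\mathrm{VI}_{\binvparam})$, $B^\pm(\mathrm{VII}_{\binvparam})$ and $B^\pm(\mathrm{IV})$ I would therefore partition initial data according to which piece of this union contains the limit: Taub~1, Taub~2, the left subarc $\{{-}1<\Sigma_+<1/2\}\subset\kasnerparabola$, the right subarc $\{\Sigma_+>1/2\}\subset\kasnerparabola$, the special point $\kasnerparabola\cap\{\Sigma_+=0\}$ (only relevant in type~IV), the plane wave set $\planewave{\bparamk}$, and the origin.

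The main work is then collecting the existing structural results for each \emph{bad} preimage. The origin is reached only by constant solutions (Prop.~\ref{prop_alphalimitsets_vacuum_inflat}); convergence to Taub~1, Taub~2 and the left subarc is described by Prop.~\ref{prop_taubone}, Thm~\ref{theo_taubtwocharacterisationoforbits} and Thm~\ref{theo_leftoftaubtworesult} respectively; and convergence to $\planewave{\bparamk}$ is captured by Thm~\ref{theo_centreunstablegloballyplanewave}. In each case the preimage is a countable union of $C^1$ submanifolds of positive codimension inside the relevant invariant set, the codimension being controlled by the number of non-negative eigenvalues of the linearisation at the limit point (cf.~\eqref{eqn_eigenvalueskasner}--\eqref{eqn_eigenvaluesplanewave}) and the stable/centre--unstable manifold theorem.

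The final step is then essentially formal: a $C^1$ submanifold of positive codimension has Lebesgue measure zero and, being locally closed in the ambient space, is nowhere dense; a countable union of such submanifolds is therefore meager and of measure zero, so its complement is of full measure and contains a countable intersection of open dense sets. The subtlety requiring care is the Baire side, since the individual submanifolds need not be closed: I would take closures within the compact constraint set and verify that these closures still have measure zero (hence empty interior and are nowhere dense) by writing each submanifold as a countable union of $C^1$ images of compact sets. This is the main obstacle in the argument, but the compactness of the state space makes it manageable.

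The type~V case is separate and much more direct: the defining constraints $\Sigma_+=\Delta=N_+=0$ reduce the dynamics to a two-dimensional autonomous system in $(\tilde A,\tilde\Sigma)$; the component $\tilde A$ satisfies $\tilde A'=2q\tilde A$ with $q$ bounded away from zero near any limit point on $\kasnerparabola$, so every non-constant orbit satisfies $\tilde A\to 0$ as $\tau\to{-}\infty$. Together with the constraint $\Sigma_+^2+\tilde\Sigma+\tilde A+\tilde N\le 1$ and the already-proven location of $\alpha$-limit points on $\kasnerparabola$, this forces convergence to $\kasnerparabola\cap\{\Sigma_+=0\}$ without any further genericity discussion.
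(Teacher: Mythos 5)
Your overall strategy is the same as the paper's: unique convergence via Prop.~\ref{prop_alphalimitsets_vacuum_inflat}, Prop.~\ref{prop_convergencetoplanewave} and Prop.~\ref{prop_convergencetokasner}, enumeration of the ``bad'' preimages, the structural results Prop.~\ref{prop_taubone}, Thm~\ref{theo_taubtwocharacterisationoforbits}, Thm~\ref{theo_leftoftaubtworesult}, Thm~\ref{theo_centreunstablegloballyplanewave}, and then the measure-zero/meager conclusion. Your observation about taking closures for the Baire side is a legitimate technical point the paper glosses over, and it is handled correctly by noting the submanifolds are countable unions of $C^1$ images of compacts. One descriptive comment on the middle stage: for Taub~2 the paper does \emph{not} obtain the codimension from eigenvalue counting and centre--unstable manifold theory (that point has multiple vanishing eigenvalues and cannot be treated by local stability); the preimage is instead an explicit global invariant set given by the LRS characterisation in Thm~\ref{theo_taubtwocharacterisationoforbits}, namely $3\Sigma_+^2=\tilde\Sigma$, $\Sigma_+N_+=\Delta$, $({\bparamk}+1)\tilde A=0$. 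Similarly, for type~IV the constraint ${\bparamk}=0$ rules out Thm~\ref{theo_leftoftaubtworesult}, and the exclusion of left-of-Taub-2 limits other than $\slimit=0$ comes from Prop.~\ref{prop_specialrelationbparamkslimit}, not from the codimension count. These do not change the conclusion, but you should cite the right results.

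Your Bianchi~V argument, however, has a genuine gap. You conclude from $\tilde A'=2q\tilde A$, with ``$q$ bounded away from zero near any limit point on $\kasnerparabola$,'' that every non-constant orbit satisfies $\tilde A\to 0$ as $\tau\to{-}\infty$. This presupposes the $\alpha$-limit lies on $\kasnerparabola$, which is exactly what you must establish; for inflationary matter Prop.~\ref{prop_alphalimitsets_vacuum_inflat} also allows plane wave limit points, and indeed the Bianchi~V FLRW orbits ($\tilde\Sigma\equiv 0$) converge to the plane wave point with $\slimit=0$, along which $q\to 0$ and $\tilde A\to 1$, not $0$. So the claim ``every non-constant orbit satisfies $\tilde A\to 0$'' is false. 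To make the argument work you need the monotonicity $\tilde\Sigma'=2(q-2)\tilde\Sigma\le 0$, which shows $\tilde\Sigma$ is non-decreasing as $\tau\to{-}\infty$; if $\tilde\Sigma>0$ anywhere on the orbit, the $\alpha$-limit cannot lie on $\planewave{\bparamk}$ or at the origin (both of which have $\tilde\Sigma=0$), hence lies on $\kasnerparabola$, and then your $\tilde A\to 0$ argument closes the loop and gives convergence to $\slimit=0$. The remaining FLRW subset $\tilde\Sigma\equiv 0$ must be identified separately and recognised as the exceptional family of positive codimension, in line with Remark~\ref{rema_planewaveconvergencebianchiseparated}; this is exactly the step your proposal omits.
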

For Kasner points situated to the right of Taub~2, we further find that they govern the behaviour of all solutions starting close to them.
In fact, we show in Prop.~\ref{prop_rightoftaub2source} that for every element of the subarc~$\kasnerparabola\cap\{\Sigma_+>1/2\}$, points contained in a sufficiently small neighborhood also converge to this subarc as~$\tau\rightarrow{-}\infty$, to a limit point close to the original one.

For the subarc of the Kasner parabola~$\kasnerparabola$ to the left of the point Taub~2, \ie satisfying~${-}1<\Sigma_+<1/2$, we additionally find the following restriction, compare Prop.~\ref{prop_main_generalasymptoticproperties}.
\begin{prop}
\label{prop_specialrelationbparamkslimit}
	Let $\gamma\in[0,2)$ and consider a solution to \equs ~\eqref{eqns_evolutionbianchib}--\eqref{eqn_evolutionomega} converging to $(\slimit,1-\slimit^2,0,0,0)$ as~$\tau\rightarrow{-}\infty$. If $\slimit\in\left[{-}1,1/2\right]$, then
	\begin{equation}
		\tilde A(3\slimit^2+{\bparamk}(1-\slimit^2))=0
	\end{equation}
	along the whole orbit.
\end{prop}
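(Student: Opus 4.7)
The strategy is to combine a decay-rate analysis near the Kasner limit with the constraint \eqref{eqn_constraintgeneralone}, which via \eqref{eqn_definitiontilden} can be rewritten in the equivalent form
\[
\tilde A \bigl( 3\Sigma_+^2 + \bparamk\tilde\Sigma \bigr) = \tilde\Sigma N_+^2 - 3\Delta^2.
\]
The evolution equation $\tilde A' = 2(q+2\Sigma_+)\tilde A$ is linear and homogeneous in $\tilde A$, so either $\tilde A\equiv 0$ along the orbit, in which case the claim is trivial, or $\tilde A>0$ everywhere. I assume the latter and must show that $c := 3\slimit^2 + \bparamk(1-\slimit^2)$ vanishes.

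For the decay rate of $\tilde A$: from $\tilde A'/\tilde A = 2(q+2\Sigma_+) \to 4(1+\slimit)$ one obtains $\log\tilde A(\tau)/\tau \to 4(1+\slimit)$ as $\tau\to-\infty$, so if $c \neq 0$ the left-hand side of the rewritten constraint decays at precisely this exponential rate. For the right-hand side, if $N_+\equiv \Delta\equiv 0$ on the orbit then the rewritten constraint directly forces $3\Sigma_+^2 + \bparamk\tilde\Sigma \equiv 0$, yielding $c=0$. Otherwise, the $(N_+,\Delta)$-block of the linearisation at the Kasner point has eigenvalues $\lambda_\pm = 2(1+\slimit) \pm 2\sqrt{3(1-\slimit^2)}$, cf.~\eqref{eqn_eigenvalueskasner}, and convergence to the Kasner point as $\tau\to-\infty$ forces the $\lambda_-$-eigencomponent of $(N_+, \Delta)$ to vanish. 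Indeed, for $\slimit \in (-1, 1/2)$ one has $\lambda_- < 0$, so a non-zero $\lambda_-$-component would cause $|N_+|$ or $|\Delta|$ to blow up as $\tau \to -\infty$, while for $\slimit = 1/2$ one has $\lambda_- = 0$, so a non-zero component would cause $N_+$ or $\Delta$ to converge to a non-zero constant (since the factor $\sqrt{\tilde A}\sim e^{2(1+\slimit)\tau}$ exactly cancels the $e^{-\mu\tau}$ growth when $\mu := 2\sqrt{3(1-\slimit^2)} = 2(1+\slimit)$ at $\slimit=1/2$). By standard asymptotic linearisation theory, $|N_+|, |\Delta|$ then decay at rate $\lambda_+$, making the right-hand side decay at rate at least $2\lambda_+$.

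For $\slimit\in(-1,1/2]$ one checks $2\lambda_+ - 4(1+\slimit) = 4\sqrt{3(1-\slimit^2)} > 0$, so the right-hand side of the rewritten constraint decays strictly faster than its left-hand side under the assumption $c\ne 0$, yielding a contradiction. Hence $c=0$ for every $\slimit\in(-1,1/2]$.

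The main obstacle is the degenerate endpoint $\slimit=-1$ (Taub~1), where $\lambda_\pm$ and $4(1+\slimit)$ all vanish, simultaneously invalidating the decay-rate comparison above. Since $c = 3 \neq 0$ at $\slimit = -1$, the claim reduces in this case to showing that orbits converging to Taub~1 must satisfy $\tilde A \equiv 0$, and I plan to invoke Proposition~\ref{prop_taubone}, which characterises the (small) set of orbits converging to Taub~1 and together with $c\ne 0$ forces $\tilde A\equiv 0$ in the relevant cases.
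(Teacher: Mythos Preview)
Your overall strategy---comparing decay rates of the two sides of the constraint in the form $\tilde A(3\Sigma_+^2+\bparamk\tilde\Sigma)=\tilde\Sigma N_+^2-3\Delta^2$---is exactly the paper's approach (see Lemma~\ref{lemm_deltanplussamesign} and the proof of Proposition~\ref{prop_main_generalasymptoticproperties}). However, the step where you invoke ``standard asymptotic linearisation theory'' to conclude that $|N_+|,|\Delta|$ decay at rate $\lambda_+$ is where the real work lies, and your proposal glosses over it. The $(\Delta,N_+)$ system is genuinely nonlinear: the coefficients involve $q,\Sigma_+,\tilde\Sigma,\tilde N$, which are themselves only known to converge. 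There is no off-the-shelf theorem that turns convergence of a nonlinear orbit toward a non-hyperbolic equilibrium (the Kasner parabola has a zero eigenvalue) into a statement that the orbit decays along the $\lambda_+$-eigendirection. The paper instead argues directly: it shows (Lemma~\ref{lemm_deltanplussigns}) that the sign of $\Delta N_+$ is eventually constant, then proves (Lemma~\ref{lemm_deltanplusoppositesign}) that $\Delta N_+<0$ forces $\slimit>1/2$, and finally (Lemma~\ref{lemm_decaydeltanplus}) obtains the $\lambda_+$-rate for $\Delta+\hat r N_+$ when $\Delta N_+>0$ by a hands-on estimate that absorbs the nonlinear remainder.

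The boundary case $\slimit=1/2$ is where your argument breaks down most visibly. Here $\lambda_-=0$, so linearisation alone predicts a \emph{neutral} mode; you cannot conclude from linear theory that a nonzero $\lambda_-$-component contradicts $N_+,\Delta\to0$, because nonlinear terms could drive it to zero slowly. Your parenthetical about a factor $\sqrt{\tilde A}$ cancelling growth does not correspond to anything in the $(\Delta,N_+)$ equations and appears to be a confusion. The paper's exclusion of $\Delta N_+<0$ at $\slimit=1/2$ (the second half of the proof of Lemma~\ref{lemm_deltanplusoppositesign}) is a delicate page-long argument using the constraint, a Taylor expansion, and a monotonicity trick; it is not a consequence of linearisation. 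Finally, a minor point: for $\slimit=-1$ you need both Proposition~\ref{prop_taubone} and Proposition~\ref{prop_tauboneothermatter} to cover the full range $\gamma\in[0,2)$, not just the former.
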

As a consequence of this result, for a given parameter~$\bparamk$, a solution converging to a point on the Kasner parabola~$\kasnerparabola$ situated to the left of Taub~2 has to be either a Bianchi class~A ($\tilde A=0$) solution, or it is a Bianchi class~B ($\tilde A>0$) solution but can only converge to one of two specific points on~$\kasnerparabola$, namely one with~$\Sigma_+=\pm\sqrt{\bparamk/(\bparamk-3)}$. For this to hold, the parameter~$\bparamk$ has to be non-positive. In particular, no Bianchi type~VII$_\binvparam$ solution can converge to a Kasner point situated to the left of Taub~2, as in this Bianchi type one has~$\bparamk=1/\binvparam>0$.

This statement is very different in spirit than the property of a subarc being a local source or saddle and goes far beyond what can be obtained by considering only the local stability. Instead, we discuss the asymptotic convergence behaviour of the individual expansion-normalised variables one by one, taking into account the full non-linear properties of the evolution equations~\eqref{eqns_evolutionbianchib}--\eqref{eqn_definitiontilden} in combination with the exact form of the constraint equation~\eqref{eqn_constraintgeneralone}.

Using the same methods, we also obtain a detailed statement about the Taub points, which due to the multiple zero eigenvalues of the linearised evolution equations in these points cannot be treated by local stability at all. We prove that only the constant orbit converges to the point Taub~1, see Prop.~\ref{prop_taubone} and Prop.~\ref{prop_tauboneothermatter}. Solutions converging to the point Taub~2 have to be locally rotationally symmetric, see Definition~\ref{defi_lrsexpansionnorm} and Thm~\ref{theo_taubtwocharacterisationoforbits}, and these locally rotationally symmetric solutions appear as possible exceptions to an unbounded Kretschmann scalar below in Thm~\ref{theo_curvblowupexpansionnormalisedvacuum}.

For the plane wave equilibrium points with more than one vanishing eigenvalue, we make use of techniques from the theory of dynamical systems and show that solutions are contained in a countable union of sub\mf s of dimension at most two and of positive codimension, see Prop.~\ref{theo_centreunstablegloballyplanewave} and Remark~\ref{rema_planewaveconvergencebianchiseparated}.

In addition to the statements about asymptotic behaviour in the limit~$\tau\rightarrow{-}\infty$, we also obtain the following result on the late time behaviour.
\begin{rema}
	In case of inflationary matter, \ie $\Omega>0$, $\gamma\in\left[0,2/3\right)$, all solutions converge to the point $(0,0,0,0,0)$ as~$\tau\rightarrow{+}\infty$, see Prop.~\ref{prop_omegalimitsetinflationary}. Transforming this statement back to the setting of geometric initial data and its \mghd, this implies isotropisation at late times.
\end{rema}

The main objective of the present paper is to prove the Strong Cosmic Censorship conjecture in the setting of \ogon\ Bianchi class~B perfect fluid and vacuum spacetimes. This is done by showing that boundedness of the Kretschmann scalar~$R_{\alpha\beta\gamma\delta}R^{\alpha\beta\gamma\delta}$ or the contraction of the Ricci \curv\ with itself~$R_{\alpha\beta}R^{\alpha\beta}$ in the incomplete direction of causal geodesics is a non-generic property.
Making use of the transformation between geometric initial data with corresponding \mghd~\eqref{eqn_mghdform} on the one hand, and points in the state space of expansion-normalised variables with corresponding solutions to the evolution equations~\eqref{eqns_evolutionbianchib}--\eqref{eqn_constraintgeneraltwo} on the other hand, we can express these two curvature expressions in terms of the expansion-normalised variables. Both quantities then depend only on the initial mean \curv\ and energy density, the parameters~$\gamma$ and~$\bparamk$, the time~$\tau$ and the point in expansion-normalised variables, as we explain in the beginning of Section~\ref{section_proofmainthms}. The incomplete direction of causal geodesics corresponds to the limit~$\tau\rightarrow{-}\infty$, see Prop.~\ref{prop_mghdincompletedirections}.
The statements about curvature blow-up in expansion-normalised variables are as follows.
\begin{theo}
\label{theo_curvblowupexpansionnormalisedmatter}
	Consider a solution to \equs ~\eqref{eqns_evolutionbianchib}--\eqref{eqn_evolutionomega} with~$\Omega>0$ and~$\gamma>0$. Then the contraction of the Ricci tensor with itself $R_{\alpha\beta}R^{\alpha\beta}$ is unbounded as $\tau\rightarrow{-}\infty$.
\end{theo}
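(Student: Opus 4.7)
The plan is to reduce the blow-up of $R_{\alpha\beta}R^{\alpha\beta}$ to the divergence of the energy density $\mu$ along the orbit, using two essentially independent ingredients: a pointwise algebraic identity relating the Ricci contraction to $\mu$, and a conservation law that controls the evolution of $\mu$ in the dimensionless time~$\tau$.

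For the first step, I rewrite Einstein's equations~\eqref{eqn_einsteineqn} in the form $R_{\alpha\beta} = T_{\alpha\beta} + \tfrac{1}{2}(4-3\gamma)\mu\, g_{\alpha\beta}$, using that the trace of the perfect fluid stress-energy tensor~\eqref{eqn_stressenergyperfectfluid} with linear equation of state~\eqref{eqn_lineareqnofstate} equals $(3\gamma-4)\mu$. A short calculation exploiting $u^\alpha u_\alpha=-1$ shows that the cross terms cancel, so that $R_{\alpha\beta}R^{\alpha\beta} = T_{\alpha\beta}T^{\alpha\beta} = \mu^2+3p^2 = [1+3(\gamma-1)^2]\,\mu^2$. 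The crucial observation is that the coefficient $1+3(\gamma-1)^2$ is strictly positive for every $\gamma\in\RR$, so the theorem reduces to showing that $\mu(\tau)\rightarrow\infty$ as $\tau\rightarrow{-}\infty$.

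For the second step, I use the identity $\mu=3H^2\Omega$ underlying the definition of the density parameter in Section~\ref{section_equivalenceinitialdataexpansionnormalised}, together with the evolution equation~\eqref{eqn_evolutionomega} for~$\Omega$ and the decoupled expansion-normalised law $H'=-(1+q)H$ for the Hubble scalar, which is built into the definition of the deceleration parameter. Differentiating the product $3H^2\Omega$ and substituting the two evolution laws makes the $q$-terms cancel cleanly and yields
\begin{equation}
	\mu' = -3\gamma\,\mu.
\end{equation}
This is nothing but energy conservation $\nabla^\alpha T_{\alpha\beta}=0$ re-expressed in dimensionless time, and integration gives the explicit formula $\mu(\tau)=\mu(\tau_0)\,e^{-3\gamma(\tau-\tau_0)}$. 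Since $\Omega(\tau_0)>0$ by hypothesis and $H(\tau_0)>0$, we have $\mu(\tau_0)>0$, and for $\gamma>0$ the exponential diverges as $\tau\rightarrow{-}\infty$. Combining with the first step gives $R_{\alpha\beta}R^{\alpha\beta}(\tau) = [1+3(\gamma-1)^2]\,\mu(\tau_0)^2\,e^{-6\gamma(\tau-\tau_0)}\rightarrow\infty$.

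There is no serious obstacle: the argument is entirely algebraic once the dictionary between expansion-normalised variables and geometric quantities from Section~\ref{section_equivalenceinitialdataexpansionnormalised} is in place, and it is remarkable that no information about the asymptotic dynamics, the limit set, or the various invariant subsets enters the proof. The only point requiring a line of justification is that $H$ remains strictly positive along the whole orbit so that $\mu=3H^2\Omega$ preserves the sign of $\mu$, but this is guaranteed by the foliation construction in that section.
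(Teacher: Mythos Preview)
Your proof is correct and matches the paper's approach essentially line by line: the paper (via Lemma~\ref{lemm_curvblowupricci}) also reduces to the identity $R_{\alpha\beta}R^{\alpha\beta}=(1+3(\gamma-1)^2)\mu^2$ and then integrates the evolution of $\Omega\theta^2$ (equivalently your $\mu'=-3\gamma\mu$, which is just~\eqref{eqn_basic_evolutionmu} in $\tau$-time) to obtain $\mu^2(\tau)=\mu^2(\tau_0)e^{-6\gamma(\tau-\tau_0)}$. Your observation that $R_{\alpha\beta}R^{\alpha\beta}=T_{\alpha\beta}T^{\alpha\beta}$ in four dimensions is a nice way to phrase the first step, and your closing remark that no asymptotic information is needed is exactly the point the paper stresses in the remark following the theorem.
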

\begin{rema}
	We remark at this point that the statement in the matter case is obtained without knowledge on the detailed asymptotic behaviour of the individual variables. In fact, only the evolution equation~\eqref{eqn_evolutionomega} for~$\Omega$ is considered to prove this statement, combined with details on how the expansion-normalised variables are obtained from geometric initial data sets. 
	In contrast, the vacuum case treated in the next theorem is rather intricate, and necessitates a detailed discussion of all variables.
	
	It is interesting to see that in the vacuum case, there are certain exceptions to unboundedness of the curvature, while there are none in matter. In a way, one could therefore consider the matter case as the easy case, while it is the vacuum case where interesting---and ultimately more difficult---behaviour becomes visible.
	In Bianchi class~A models, a similar observation has been made, see~\cite{ringstrom_asymptbianchiaspacetimes} and~\cite{ringstrom_cauchyproblem}.
\end{rema}
\begin{theo}
\label{theo_curvblowupexpansionnormalisedvacuum}
	Consider a solution to \equs ~\eqref{eqns_evolutionbianchib}--\eqref{eqn_evolutionomega} with~$\Omega=0$ which is neither
	the constant solution in the point Taub~1, nor
	a locally rotationally symmetric Bianchi type~I, II or~VI$_{{-}1}$ solution, nor a plane wave equilibrium solution.
	Then the Kretschmann scalar $R_{\alpha\beta\gamma\delta}R^{\alpha\beta\gamma\delta}$
	is unbounded as $\tau\rightarrow{-}\infty$.
\end{theo}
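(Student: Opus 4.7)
The plan is to reduce the theorem to an asymptotic analysis in the expansion-normalised variables via an identity of the form
\begin{equation*}
	R_{\alpha\beta\gamma\delta}R^{\alpha\beta\gamma\delta} = H^4\,\mathcal{K}(\Sigma_+,\tilde\Sigma,\Delta,\tilde A,N_+),
\end{equation*}
where $H$ denotes the Hubble scalar of the homogeneous foliation and $\mathcal{K}$ is the polynomial obtained by expressing the electric and magnetic parts $\hat E,\hat B$ of the Weyl tensor in the left-invariant frame and normalising by $H^2$; in vacuum, $\mathcal{K}$ is essentially $8(\hat E_{ab}\hat E^{ab} - \hat B_{ab}\hat B^{ab})$. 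By Prop.~\ref{prop_mghdincompletedirections} the incomplete causal directions correspond to $\tau\to-\infty$, and the evolution $H' = -(1+q)H$ with $1+q>0$ gives $H\to+\infty$ exponentially there. Thus blow-up reduces to showing that $\mathcal{K}$ cannot decay fast enough to compensate the growth of $H^4$.

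Inspecting $\mathcal{K}$ on the equilibrium sets reveals that $\mathcal{K}>0$ throughout $\kasnerparabola\setminus\{\taubone\}$ (Kasner points are Petrov type~I with nontrivial Weyl invariant, vanishing only at the Minkowski-like point $\taubone$) and that $\mathcal{K}\equiv 0$ on $\planewave\bparamk$ (plane waves are VSI, Petrov type~N). By Prop.~\ref{prop_convergencetokasner}, Prop.~\ref{prop_convergencetoplanewave} and Prop.~\ref{prop_alphalimitsets_vacuum_inflat}, every vacuum orbit admits a unique $\alpha$-limit point in $\kasnerparabola\cup\planewave\bparamk\cup\{(0,0,0,0,0)\}$, and the origin is attained only by the constant trivial solution. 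I would then case-split. If the $\alpha$-limit lies in $\kasnerparabola\setminus\{\taubone,\taubtwo\}$, then $\mathcal{K}$ is bounded below by a positive constant along the orbit for $\tau$ sufficiently negative, and $H^4\mathcal{K}\to\infty$ is immediate. If the $\alpha$-limit equals $\taubone$, Prop.~\ref{prop_taubone} forces the orbit to be the constant Taub~1 solution, excluded by hypothesis. If the $\alpha$-limit equals $\taubtwo$, Thm~\ref{theo_taubtwocharacterisationoforbits} shows the orbit is locally rotationally symmetric, and inspection of the LRS invariant sets shows that only orbits of Bianchi type~I, II or~VI$_{-1}$ can have $\alpha$-limit at $\taubtwo$ — precisely the cases excluded in the statement.

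The remaining and most delicate case is when the $\alpha$-limit is a plane wave point $P\in\planewave\bparamk$ with the orbit itself not constant at $P$. Here $\mathcal K(x(\tau))\to 0$, and a quantitative rate is needed. The plan is to expand $\mathcal{K}$ to leading order around $P$ and linearise the evolution using \eqref{eqn_eigenvaluesplanewave}: a non-constant orbit entering $P$ in backward time does so along a non-trivial eigendirection of the linearisation, contributing to $|\mathcal K(x(\tau))|$ a term of order $e^{\mu\tau}$ for some rate $\mu$. Simultaneously $H^4\sim e^{-4(1-2\Sigma_+^P)\tau}$ as $\tau\to-\infty$ with $q_P = -2\Sigma_+^P$, and a direct comparison of these rates—using that $\Sigma_+^P\in(-1,0]$ for plane wave equilibrium points—shows that $H^4\mathcal K$ diverges. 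The main obstacle will be to rule out the degenerate possibility that the linear part of $\mathcal K$ vanishes along the particular eigendirection into which the orbit enters, forcing a higher-order analysis. This requires an explicit computation of the first (and if necessary second) variations of $\mathcal K$ along the admissible eigenvectors at $P$, combined with Thm~\ref{theo_centreunstablegloballyplanewave} and Remark~\ref{rema_planewaveconvergencebianchiseparated} to classify the low-dimensional invariant sub\mfs on which such degeneracies could live and to verify that any orbit lying on them must already be a plane wave equilibrium solution. Once this is handled, combining the four cases yields Theorem~\ref{theo_curvblowupexpansionnormalisedvacuum}.
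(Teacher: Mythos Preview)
Your overall architecture matches the paper's, but you have overlooked the crucial simplification that makes the vacuum case short. Re-read Prop.~\ref{prop_alphalimitsets_vacuum_inflat}: in vacuum $\Omega=0$ it states that \emph{the only solutions with $\alpha(\Gamma)\cap\planewave_\bparamk\neq\emptyset$ are constant solutions}. Hence your entire ``remaining and most delicate case''---a non-constant vacuum orbit converging to a plane wave equilibrium point---simply does not occur. The paper's proof disposes of $\planewave_\bparamk$ in one sentence using this, and the rest is exactly your Kasner case-split via Lemma~\ref{lemm_curvblowupvacuum}, Prop.~\ref{prop_taubone}, and Thm~\ref{theo_taubtwocharacterisationoforbits}. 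Your proposed linearised rate-comparison around $\planewave_\bparamk$ is therefore unnecessary, and as you yourself note it is also incomplete (the ``main obstacle'' of possible vanishing of the linear part of $\mathcal K$ along the entering eigendirection would require a genuine calculation you have not carried out).

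Two minor slips: you assert $\mathcal K>0$ on $\kasnerparabola\setminus\{\taubone\}$, but in fact the Weyl invariant also vanishes at $\taubtwo$; the paper computes on $\kasnerparabola$ that the relevant expression equals $\tfrac{4}{27}(2\Sigma_+-1)^2(\Sigma_++1)$. Your case-split already excludes $\taubtwo$ separately, so this is only a wording error, but it is exactly why the LRS exceptions arise. Also, the origin $(0,0,0,0,0)$ is an $\alpha$-limit possibility only in the inflationary matter case, not in vacuum, so citing Prop.~\ref{prop_convergencetoplanewave} and the origin here is extraneous.
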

\begin{rema}
	In the previous theorem, locally rotationally symmetric solutions are defined by being contained in the set
	\begin{equation}
				3\Sigma_+^2=\tilde\Sigma, \qquad
		\Sigma_+N_+=\Delta, \qquad
		({\bparamk}+1)\tilde A=0,
	\end{equation}
	see Definition~\ref{defi_lrsexpansionnorm}, and consult Tables~\ref{table_bianchibsubsets}, \ref{table_bianchiasubsets}, and~\ref{table_bianchihighersymmetry} for more details on the separation into the different Bianchi types.

	In vacuum~$\Omega=0$,
	a bounded Kretschmann scalar is possible only for solutions converging to one of the Taub points or to the plane wave equilibrium points~$\planewave k$. Only in case of Taub~2 can such a solution be non-constant, and it necessarily has to be locally rotationally symmetric.
\end{rema}

\begin{theo}
\label{theo_curvblowupexpansionnormalisedcosmconst}
	Consider a solution to \equs ~\eqref{eqns_evolutionbianchib}--\eqref{eqn_evolutionomega} with~$\Omega>0$ and~$\gamma=0$ which is neither	the constant solution
	in the point $\Sigma_+=\tilde\Sigma=\Delta=\tilde A=N_+=0$,
	nor a locally rotationally symmetric Bianchi type~I, II or~VI$_{{-}1}$ solution,
	nor a solution contained in the set
	\begin{align}
		\setconvplanewave\coloneqq
			&\{\text{solutions to equations~\eqref{eqns_evolutionbianchib}--\eqref{eqn_evolutionomega} with }\Omega>0,\,\gamma=0,\\
			&\qquad\text{ and converging to an element of }\planewave k\text{ as } \tau\rightarrow{-}\infty\}.
	\end{align}
	Then the Kretschmann scalar $R_{\alpha\beta\gamma\delta}R^{\alpha\beta\gamma\delta}$ is unbounded as $\tau\rightarrow{-}\infty$.

	Further, there is a countable family of~$C^1$ sub\mf s~$\{\submfsplanewavetheorem_m\}_{m\in\NN}$ of dimension at most~two such that
	\begin{equation}
		\setconvplanewave\subset\bigcup_m \submfsplanewavetheorem_m.
	\end{equation}
	For certain Bianchi types, or equivalently certain values of the parameter~$\bparamk$, the following additional restrictions hold:
	\begin{itemize}
		\item Bianchi type~VI$_{\binvparam}$, which implies~${\bparamk}=1/{\binvparam}<0$: Every solution in~$\setconvplanewave$ converges to an element of the plane wave equilibrium points~$\planewave {\bparamk}$ with~$\Sigma_+=\slimit$ satisfying
		\begin{equation}
			\slimit\le\frac{{\bparamk}}{3-{\bparamk}}<0.
		\end{equation}
		\item Bianchi type~$V$, which implies~$\bparamk=0$: Every solution in~$\setconvplanewave$ is contained in~$\Sigma_+=\tilde\Sigma=\Delta=N_+=0$, and~$\tilde A$ decreases monotonically from~$1$ to~$0$.
	\end{itemize}
\end{theo}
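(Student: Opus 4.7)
The plan is to closely follow the argument used to establish Theorem~\ref{theo_curvblowupexpansionnormalisedvacuum}, adapted to the cosmological-constant regime $\gamma=0$, $\Omega>0$. Via the transformation of Section~\ref{section_equivalenceinitialdataexpansionnormalised}, one expresses the Kretschmann scalar along an orbit as a product of $H_0^4$ (where $H_0$ is the initial mean curvature), an explicit polynomial in $(\Sigma_+,\tilde\Sigma,\Delta,\tilde A,N_+,\Omega)$, and a time-dependent rescaling factor governed by~$q$. Because $\gamma=0$ makes the matter behave as a cosmological constant, the Ricci-squared piece stays bounded and all potential blow-up must come from the Weyl piece---exactly the situation one faces in the vacuum case. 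By Prop.~\ref{prop_alphalimitsets_vacuum_inflat} together with Propositions~\ref{prop_convergencetoplanewave} and~\ref{prop_convergencetokasner}, every orbit has a unique $\alpha$-limit lying in $\kasnerparabola\cup\planewave{\bparamk}\cup\{0\}$, so I would proceed by case analysis on this limit.

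If the limit is the origin then Prop.~\ref{prop_alphalimitsets_vacuum_inflat} restricts to the constant orbit (excluded); if it lies in $\planewave{\bparamk}$ then by definition the orbit belongs to $\setconvplanewave$ (excluded); if it is $\taubone$ then Prop.~\ref{prop_tauboneothermatter} forces the orbit to be constant in $\taubone$, which is incompatible with $\Omega>0$; if it is $\taubtwo$ then Theorem~\ref{theo_taubtwocharacterisationoforbits} forces the orbit to be locally rotationally symmetric, and Tables~\ref{table_bianchibsubsets}--\ref{table_bianchihighersymmetry} identify the Bianchi type of such LRS orbits as~I, II or~VI$_{{-}1}$, all excluded. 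For any remaining $\alpha$-limit point on the Kasner parabola, namely $\Sigma_+\in({-}1,1/2)\cup(1/2,1]$, the Weyl contribution to the Kretschmann scalar together with the explicit blow-up of the rescaling factor yields $\kretschmann\to\infty$ as $\tau\to{-}\infty$. This computation is identical to the one in the vacuum proof because the Weyl tensor depends on $(\Sigma_+,\tilde\Sigma,\Delta,\tilde A,N_+)$ in the same polynomial way and the extra variable $\Omega$ stays bounded.

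For the submanifold statement about $\setconvplanewave$, I would invoke centre-unstable manifold theory at each equilibrium point $p\in\planewave{\bparamk}$, following the template of Prop.~\ref{theo_centreunstablegloballyplanewave}. Using the eigenvalues~\eqref{eqn_eigenvaluesplanewave} with $\gamma=0$, that is, $0$, ${-}4(1+\Sigma_+)$, $2{-}4\Sigma_+$ and ${-}2(1+\Sigma_+)\pm4iN_+$, for $\Sigma_+<1/2$ there is a single positive eigenvalue, three with negative real part, and the zero eigenvalue tangent to $\planewave{\bparamk}$; the centre-unstable manifold through $p$ is therefore two-dimensional, and a countable cover of $\planewave{\bparamk}$ produces the desired family $\{\submfsplanewavetheorem_m\}$. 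The Bianchi-specific restrictions follow algebraically from~\eqref{eqn_defiplanewave}: $N_+^2\ge0$ together with $\Sigma_+>{-}1$ forces $\bparamk(1+\Sigma_+)\ge3\Sigma_+$, which for $\bparamk<0$ gives $\Sigma_+\le\bparamk/(3-\bparamk)<0$, and for $\bparamk=0$ combined with the description of $B(V)$ in Tables~\ref{table_bianchibsubsets}--\ref{table_bianchihighersymmetry} reduces the evolution on $\{\Sigma_+=\tilde\Sigma=\Delta=N_+=0\}$ to $\tilde A'=2(\tilde A-1)\tilde A$, whose backwards asymptote towards the unique plane wave limit $\tilde A=1$ is monotone. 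The main obstacle is the centre-manifold analysis, since the zero eigenvalue tangent to $\planewave{\bparamk}$ requires a careful construction of the countable cover so that no orbit in $\setconvplanewave$ slips out between charts; the remainder transfers with only minor bookkeeping ($\Omega$ is bounded but nonzero) from the vacuum proof.
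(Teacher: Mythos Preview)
Your approach is essentially the same as the paper's: both argue by case analysis on the $\alpha$-limit point using Prop.~\ref{prop_alphalimitsets_vacuum_inflat}, Prop.~\ref{prop_convergencetoplanewave} and Prop.~\ref{prop_convergencetokasner}, then invoke Lemma~\ref{lemm_curvblowupvacuum} for generic Kasner limits, Thm~\ref{theo_taubtwocharacterisationoforbits} for Taub~2, and the centre-unstable manifold result Thm~\ref{theo_centreunstablegloballyplanewave} together with Remark~\ref{rema_planewaveconvergencebianchiseparated} for the plane-wave part.

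Two small corrections. First, for the Taub~1 case you cite Prop.~\ref{prop_tauboneothermatter}, but that proposition assumes $\gamma\in[2/3,2)$; since $\gamma=0$ you are in the inflationary regime and should instead use Prop.~\ref{prop_taubone}. Second, in the Bianchi~V discussion the invariant set $B(V)$ from Table~\ref{table_bianchibsubsets} only imposes $\Sigma_+=\Delta=N_+=0$, not $\tilde\Sigma=0$; you still need the monotonicity argument from Remark~\ref{rema_planewaveconvergencebianchiseparated} (namely $\tilde\Sigma'=2(q-2)\tilde\Sigma\le0$ together with the limit $\tilde\Sigma\to0$) to conclude $\tilde\Sigma\equiv0$ before writing down the scalar equation for $\tilde A$.
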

\begin{rema}
	In the state space in expansion-normalised variables, which is the subset of~$\RR^5$ given by the constraint equations~\ref{eqn_constraintgeneralone}--\eqref{eqn_constraintgeneraltwo}, the different Bianchi types are represented by different invariant subsets, see Table~\ref{table_bianchibsubsets}. In case of Bianchi type~VI$_\binvparam$, VII$_\binvparam$, or~IV, this is a subset of dimension four. Consequently, the sets~$\submfsplanewavetheorem_m$ defined in the previous theorem are of positive codimension.
	The invariant subset describing Bianchi type~V solutions is of dimension two, and by the additional restriction stated in the theorem the set~$\setconvplanewave$ is contained in a set of dimension one, which thus is also of positive codimension.
\end{rema}
Compared to the statements in terms of initial data to Einstein's equations, Thm~\ref{theo_curvblowupinitialdatamatter} and Thm~\ref{theo_curvblowupinitialdatavacuum}, there is a larger number of exceptions to unboundedness of either geometric scalar in Thm~\ref{theo_curvblowupexpansionnormalisedmatter}, Thm~\ref{theo_curvblowupexpansionnormalisedvacuum} and Thm~\ref{theo_curvblowupexpansionnormalisedcosmconst}. The reason is that the expansion-normalised formulation allows for every initial data to the evolution equations~\eqref{eqns_evolutionbianchib} which satisfies the constraint equations~\eqref{eqn_constraintgeneralone} and~\eqref{eqn_constraintgeneraltwo}. This in particular includes certain points in~$\RR^5$ which correspond to Bianchi class~A initial data. In Thm~\ref{theo_curvblowupinitialdatamatter} and Thm~\ref{theo_curvblowupinitialdatavacuum}, these are excluded by the assumption on the initial data set.

\subsection*{Acknowledgements}

The author is indebted to Hans Ringström who suggested the topic, for stimulating discussions, ongoing supervision, and his numerous comments on the article.
Thanks is owed to Christopher Nerz for proof-reading parts of an earlier version of this paper.
Further thanks goes to Jaap Eldering for his hints on existing literature via MathOverflow that lead to App~\ref{section_appendixdynamsystheo}, as well as Ingemar Bengtsson for his insightful comments concerning previous work on the \desitter\ spacetime.

The author would like to acknowledge the support of the Göran Gustafsson Foundation for Research in Natural Sciences and Medicine.

Part of this material is based upon work supported by the National Science Foundation under Grant No.~0932078~000, while the author was in residence at the Mathematical Sciences Research Institute in Berkeley, California, during
the winter semester of~2013.

\section{Structure of the paper}

As the scope of this article is rather large, touching a number of different areas and techniques, we wish to give the reader a guide on how to read it, depending on the interests and prerequisites. We also shortly present the main ideas of the proofs.

We have given, in \equs ~\eqref{eqns_evolutionbianchib}--\eqref{eqn_evolutionomega}, the evolution \equ s which correspond to Einstein's \equ s for \ogon\ perfect fluid Bianchi~B initial data, and stated our main theorems in the introduction.
In Sections~\ref{section_basicpropertiesexpnormevolution}-\ref{section_asymptoticsplanewave}, we discuss various properties of this evolution problem. We explain their content in more detail a bit further down.

In Section~\ref{section_equivalenceinitialdataexpansionnormalised}, we move our focus away from a detailed discussion of the evolution equations and towards the relation of these equations to spacetimes solving Einstein's equations. It is here that we
prove equivalence between the evolution problem in expansion-normalised variables and the problem of finding, to given geometric initial data, the \mghd.
We recall the classification of three-dimensional \liegr s and how it relates to \liegr s as initial data sets.
This leads, in and after Lemma~\ref{lemm_exceptionalbianchiasinitialdata}, to an explanation of why we excluded Bianchi~VI$_{{-}1/9}$ \liegr s in the definition of initial data.
We recall the construction of expansion-normalised variables for \ogon\ Bianchi class~B cosmological models, proposed by~\cite{hewittwainwright_dynamicalsystemsapproachbianchiorthogonalB}, and show how this can be used to construct a spacetime which is shown to by the maximal globally hyperbolic development.
To the author's knowledge, these constructions and proofs have not been given before.

In the final section, we give the proofs of the main theorems, which are stated in the introduction. Those which are formulated in expansion-normalised variables, Thm~\ref{theo_fullmeasurelimitset} Thm~\ref{theo_curvblowupexpansionnormalisedmatter}, Thm~\ref{theo_curvblowupexpansionnormalisedvacuum}, and Thm~\ref{theo_curvblowupexpansionnormalisedcosmconst} are treated first.
Apart from computations of a number of \curv\ quantities in expansion-normalised variables which we carry out there, most of the work has been done in the sections before, and the proofs of these theorems merely collect the necessary results from Sections~\ref{section_basicpropertiesexpnormevolution}-\ref{section_asymptoticsplanewave}.
In order to then prove the remaining main theorems, Thm~\ref{theo_curvblowupinitialdatamatter} and Thm~\ref{theo_curvblowupinitialdatavacuum}, we use the equivalence between the setting of geometric initial data to Einstein's equation and that of the evolution equations in expansion-normalised variables. With this information at hand, proving the remaining main theorem is equivalent to translating the results from Thm~\ref{theo_curvblowupexpansionnormalisedmatter} and Thm~\ref{theo_curvblowupexpansionnormalisedvacuum} back to the setting of geometric initial data.

\smallskip

Let us now give a more detailed description of the content of Sections~\ref{section_basicpropertiesexpnormevolution}-\ref{section_equivalenceinitialdataexpansionnormalised}, which is where the main part of the work is carried out. These sections all discuss more and more detailed properties of the evolution in expansion-normalised variables.
In the end, we wish to make a statement about Strong Cosmic Censorship in the $C^2$-sense, and one way of doing so is to determine for which initial data both the Kretschmann scalar $R_{\alpha\beta\gamma\delta}R^{\alpha\beta\gamma\delta}$ and the contraction of the Ricci tensor with itself $R_{\alpha\beta}R^{\alpha\beta}$ remain bounded in the incomplete direction of causal geodesics in the \mghd, see Conjecture~\ref{conj_scc}, Conjecture~\ref{conj_blowup} and Remark~\ref{rema_blowupwithmatter}.
For spacetimes corresponding to solutions to \equs ~\eqref{eqns_evolutionbianchib}--\eqref{eqn_evolutionomega}, this direction corresponds to $\tau\rightarrow{-}\infty$. Consequently, we have to understand the asymptotic behaviour of solutions as $\tau\rightarrow{-}\infty$, and determine boundedness of the Kretschmann scalar and the contraction of the Ricci tensor with itself along such solutions. If we can show that the set of solutions along which both quantities are bounded at negative times is suitably small and can be considered non-generic, we have proven Strong Cosmic Censorship for this class of initial data.

Section~\ref{section_basicpropertiesexpnormevolution} starts with a general discussion of basic properties of \equ s~\eqref{eqns_evolutionbianchib}--\eqref{eqn_evolutionomega}, such as compactness of the state space, the smoothness of the constraint surface and its division into different invariant subsets corresponding to the different Bianchi types. In Section~\ref{section_kasnerparabolaplanewave}, we examine the so-called $\alpha$-limit set, which is the set of points where solutions accumulate as $\tau\rightarrow{-}\infty$.
In the case of vacuum and inflationary matter, the $\alpha$-limit set is contained in the Kasner parabola~$\kasnerparabola$ and the plane wave equilibrium points~$\planewave {\bparamk}$, together with one additional point in the case of inflationary matter, see Prop.~\ref{prop_alphalimitsets_vacuum_inflat}.
This result about accumulation points is subsequently strengthened to convergence, as $\tau\rightarrow{-}\infty$, in Prop.~\ref{prop_convergencetoplanewave} and Prop.~\ref{prop_convergencetokasner}, where the latter requires additional assumptions.
In particular, one needs to argue that the only solution with the point Taub~1 as an $\alpha$-limit point is the constant orbit, which is the main result we achieve in Section~\ref{section_taubone}.
We additionally prove a statement about isotropisation at late times: In case of inflationary matter, all solutions converge to the same point as~$\tau\rightarrow{+}\infty$, see Prop.~\ref{prop_omegalimitsetinflationary}.

\smallskip

Before we continue with a summary of the results of Sections~\ref{section_asymptoticdecaykasner}-\ref{section_equivalenceinitialdataexpansionnormalised}, let us briefly describe the main techniques.
In the end, we wish to understand exactly which solutions or sets of solutions converge to specific points, for example to those points which are of interest for the Strong Cosmic Censorship conjecture. To do so, we determine exponential decay or convergence properties of the individual variables using the evolution \equ s with constraints~\eqref{eqns_evolutionbianchib}--\eqref{eqn_evolutionomega}, for example the variable~$\tilde A$:
For a solution converging to a point $(\slimit,1-\slimit^2,0,0,0)$, $\slimit\in[{-}1,1]$, on the Kasner parabola~$\kasnerparabola$, as $\tau\rightarrow{-}\infty$,
the evolution \equ
\begin{equation}
	\tilde A'=2(q+2\Sigma_+)\tilde A
\end{equation}
immediately implies that either $\tilde A\equiv0$ or $\tilde A=\O(e^{(4+4\slimit-\eps)\tau})$.
This is a consequence of Lemma~\ref{lemm_decaylemmageneral}, and in this particular case stated as Lemma~\ref{lemm_decaytildea}.
Further in our discussion, we can improve this exponential decay estimate to not only include the slowest, but also the second-slowest exponential term, see also Prop.~\ref{prop_leftoftaub2additionaldecay}.

Similar convergence properties are obtained for the remaining variables upon convergence to the Kasner parabola.
We find that the five variables split into two groups, with identical decay rates in each group.
The variables $\Sigma_+$, $\tilde\Sigma$ and $\tilde A$, converge to their respective limit value exponentially to order~$4+4\slimit-\eps$, while~$\Delta$ and~$N_+$ decay to zero exponentially to order~$2+2\slimit\pm2\sqrt{3(1-\slimit^2)}-\eps$, where the sign depends on the sign of $\Delta N_+$ at sufficiently negative times. In some cases, this sign is determined by where the limit point on the Kasner parabola is located \wrt\ the Taub point~2.
This splitting into two different decay rates leads to a tension which we exploit on several occasion: The constraint \equ~\eqref{eqn_constraintgeneralone} written in the form
\begin{equation}
	\tilde\Sigma N_+^2-3\Delta^2=(3\Sigma_+^2+{\bparamk}\tilde\Sigma)\tilde A
\end{equation}
separates the variables according to their decay rate. The fast decay of the \lhs\ can only be achieved if the other side vanishes altogether, which is what we use in Lemma~\ref{lemm_deltanplussamesign} to show the relation stated in Prop.~\ref{prop_specialrelationbparamkslimit} between the parameter~$\bparamk$ and the limit value~$\slimit$ in Bianchi~B solutions. We further use this splitting to obtain lower bounds on the decay rates or even exclude certain subsets of the state space altogether upon convergence to specific subarcs of the Kasner parabola, see the proofs of Lemmata~\ref{lemm_deltanplussamesign},~\ref{lemm_deltanplusoppositesign}, and~\ref{lemm_deltanplusrightoftaub2oppositesign}.
Once these decay and convergence rates are found, they are used to determine the subsets containing all solutions converging to the point Taub~2, or the subarcs of the Kasner parabola to the left and right of it, respectively.

\smallskip

The most important of the convergence statements achieved in Section~\ref{section_asymptoticdecaykasner} are the following: For all variables, the lowest order exponential terms upon convergence to a point on the Kasner parabola $(\slimit,1-\slimit^2,0,0,0)$ with $\slimit\in[{-}1,1/2]$ are obtained in Prop.~\ref{prop_main_generalasymptoticproperties}.
For specific cases, this statement is refined in Prop.~\ref{prop_leftoftaub2additionaldecay}, where we determine the lowest order exponential term more precisely and find the second-lowest, again for all variables. Convergence to a point on the Kasner parabola which is situated to the left of the point Taub~2 requires that $\Delta N_+\ge0$, and we have to treat the case of convergence to the other subarc of the Kasner parabola while $\Delta N_+<0$ separately. In Lemma~\ref{lemm_deltanplusrightoftaub2oppositesign}, we determine the lowest order exponential terms for all variables for this situation. This section is the foundation of most of the results obtained later.

In Section~\ref{section_taubtwo}, we determine exactly which solutions converge to the point Taub~2. We rely on the results from the previous section and find that all such solutions have a local rotational symmetry.
This result is directly linked to Strong Cosmic Censorship: The only non-constant solutions converging to a point on the Kasner parabola for which both the Kretschmann scalar $R_{\alpha\beta\gamma\delta}R^{\alpha\beta\gamma\delta}$ and the contraction of the Ricci tensor with itself $R_{\alpha\beta}R^{\alpha\beta}$ possibly remain bounded as $\tau\rightarrow{-}\infty$ are those which converge to the point Taub~2. As all such solutions are locally rotationally symmetric, convergence to the point Taub~2 and thereby boundedness of these two curvature invariants can be considered non-generic.

Sections~\ref{section_leftoftaubtwo} and~\ref{section_asymptoticsplanewave} treat the equivalent question for the arc of the Kasner parabola~$\kasnerparabola$ to the left of Taub~2 and the plane wave equilibrium points~$\planewave {\bparamk}$. We give qualitative results on the set of solutions with this convergence behaviour, using a theorem from dynamical systems theory which we recall in Appendix~\ref{theo_centremftheory}. The statements we prove using dynamical systems theory are Thm~\ref{theo_leftoftaubtworesult} and Thm~\ref{theo_centreunstablegloballyplanewave}, which build upon where the eigenvalues to the linearised evolution equations are located in the complex plane.

Convergence towards one of the remaining points on the Kasner parabola~$\kasnerparabola$, located to the right of the point Taub~2, is treated in Section~\ref{section_rightoftaubtwo}. In~\cite{hewittwainwright_dynamicalsystemsapproachbianchiorthogonalB}, these points have been identified is local sources, and we make this statement more precise. For such perfect fluids where the convergence behaviour at early times is sufficiently well understood, we show that given a solution converging to a Kasner point to the right of Taub~2, all solutions intersecting a sufficiently small neighborhood of its limit point converge to a Kasner point close to this limit point.

\smallskip

In Appendix~\ref{section_appendixbianchivariables}, we discuss the linear approximation of the evolution equations in the extended state space, which we investigate both on the Kasner parabola~$\kasnerparabola$ and on the plane wave equilibrium points~$\planewave {\bparamk}$.

In the second appendix, Appendix~\ref{section_appendixdynamsystheo}, we present a result from the theory of dynamical systems.
This statement is an important prerequisite for Sections~\ref{section_leftoftaubtwo} and~\ref{section_asymptoticsplanewave}, where we discuss the properties of orbits converging to the Kasner parabola~$\kasnerparabola$ to the left of the point Taub~2 and to the plane wave equilibrium points~$\planewave {\bparamk}$.

\section{Basic properties of the expansion-normalised evolution}
\label{section_basicpropertiesexpnormevolution}

In this section, we start discussing properties of solutions to the evolution \equ s with constraints in expansion-normalised variables, \equ s~\eqref{eqns_evolutionbianchib}--\eqref{eqn_evolutionomega}.
We discuss the range of the individual variables and describe several subsets of the state space which remain invariant under the evolution. These correspond to specific Bianchi types, matter models, or families of models with additional symmetry. For a classification of Bianchi \liegr s in terms of their structure constants, we refer to Subsection~\ref{subsect_appendixbianchiclassification}.

One sees from the evolution \equ\ of~$\Omega$, \equ ~\eqref{eqn_evolutionomega}, that the sets $\Omega=0$ and $\Omega>0$ are invariant, and in the latter case the behaviour of~$\Omega$ depends on the value of the constant~$\gamma$. One distinguishes between the following cases:
\begin{defi}
	A solution to \equs ~\eqref{eqns_evolutionbianchib}--\eqref{eqn_evolutionomega} is called
	\begin{itemize}
		\item a vacuum solution if $\Omega\equiv0$ along the orbit,
		\item an inflationary matter solution if $\Omega>0$ along the orbit and $\gamma\in\left[0,2/3\right)$,
		\item a stiff fluid solution if $\Omega>0$ along the orbit and $\gamma=2$.
	\end{itemize}
\end{defi}
In this paper we also cover the case $\gamma\in[2/3,2)$. Of particular interest are the values $\gamma=1$ and $\gamma=4/3$, which correspond to dust and radiation respectively.
Due to the definition of the variable~$\Omega$ as the expansion-normalised version of the energy density~$\mu$, see \equ ~\eqref{eqn_definitionomega} below, the previous definition coincides with the distinction between vacuum and matter via $\mu=0$ and $\mu>0$ \resp

In vacuum, one sees that the expression for~$q$, \equ ~\eqref{eqn_definitionqgeneral}, simplifies to
\begin{equation}
\label{eqn_definitionqvacuum}
	q=2(\Sigma_+^2+\tilde\Sigma)=2(1-\tilde A-\tilde N),
\end{equation}
and in the general case the following expressions are of use
\begin{equation}
\label{eqn_definitionqmatter}
\begin{split}
	q={}&\frac32(2-\gamma)(\Sigma_+^2+\tilde\Sigma)+\frac12(3\gamma-2)(1-\tilde A-\tilde N),\\
		={}&2(1-\tilde A-\tilde N)-\frac32(2-\gamma)\Omega,\\
		={}&2(\Sigma_+^2+\tilde\Sigma)+\frac12(3\gamma-2)\Omega.
\end{split}
\end{equation}
We note that the differential \equs\ with constraints~\eqref{eqns_evolutionbianchib}--\eqref{eqn_evolutionomega} as well as the additional expressions for~$q$, \equ s~\eqref{eqn_definitionqvacuum}--\eqref{eqn_definitionqmatter} are invariant under the symmetry
\begin{equation}
	(\Delta,N_+) \mapsto {-}(\Delta,N_+).
\end{equation}
\begin{rema}
\label{rema_expnormvarconstraintspreserved}
	The constraint equations~\eqref{eqn_constraintgeneralone}--\eqref{eqn_constraintgeneraltwo} are invariant under the evolution equations~\eqref{eqns_evolutionbianchib} with~$q$ and~$\tilde N$ as in~\eqref{eqn_definitionqgeneral} and~\eqref{eqn_definitiontilden}. For the first constraint, equation~\eqref{eqn_constraintgeneralone}, this follows from
	\begin{equation}
		(\tilde\Sigma\tilde N-\Delta^2-\Sigma_+^2\tilde A)'=4(q+\Sigma_+-1)(\tilde\Sigma\tilde N-\Delta^2-\Sigma_+^2\tilde A).
	\end{equation}
	The second and fourth relation in~\eqref{eqn_constraintgeneraltwo} are an immediate consquence of the evolution equations of~$\tilde A$ in~\eqref{eqns_evolutionbianchib} and of~$\Omega$ in~\eqref{eqn_evolutionomega}, which imply that~$\tilde A=0$ and~$\Omega=0$ are invariant sets.

	For the remaining two inequalities~$\tilde\Sigma\ge0$ and~$\tilde N\ge0$, notice that equality in both cases at some time implies~$\Delta=0=\Sigma_+\tilde A$ at that time by the constraint equation~\eqref{eqn_constraintgeneralone}.
	The evolution equations for~$\tilde\Sigma$ and~$\Delta$ in~\eqref{eqns_evolutionbianchib} together with those for~$\tilde N$ and~$\Sigma_+\tilde A$,
	\begin{align}
		\tilde N'={}&2(q+2\Sigma_+)\tilde N+4\Delta N_+,\\
		(\Sigma_+\tilde A)'={}&(3q+4\Sigma_+-2)\Sigma_+\tilde A-2\tilde A\tilde N,
	\end{align}
	then reveal that~$\tilde\Sigma\equiv0\equiv\tilde N$ (as well as~$\Delta\equiv0\equiv\Sigma_+\tilde A$) at all times. Further, $\tilde\Sigma$ and~$\tilde N$ cannot change sign at different times, as~$\tilde\Sigma\tilde N<0$ is excluded by the constraint equation~\eqref{eqn_constraintgeneralone}. Hence, the inequalities~$\tilde\Sigma\ge0$ and~$\tilde N\ge 0$ are preserved.
\end{rema}

\begin{rema}
\label{rema_statespacecompact}
	The choice of expansion-normalised variables $(\Sigma_+,\tilde\Sigma,\Delta,\tilde A,N_+)$ has two advantages. Firstly, normalisation yields a compact state space~\eqref{eqn_constraintgeneralone}--\eqref{eqn_omegageneral}, which facilitates the discussion of dynamical properties. In fact, \equs ~\eqref{eqn_constraintgeneraltwo},~\eqref{eqn_omegageneral} and~\eqref{eqn_definitionqmatter} in combination with
	$\gamma\in\left[0,2\right]$ yield
	\begin{equation}
	\label{eqn_constraintgeneralthree}
		\Sigma_+\in[{-}1,1],\quad
		\tilde\Sigma\in[0,1],\quad
		\tilde A\in[0,1],\quad
		\tilde N\in[0,1],\quad
		\Omega\in[0,1],\quad
		q\in[{-}1,2].
	\end{equation}
	The second advantage is that the different Bianchi~B types are represented by invariant subsets of the same state space and can thus be discussed simultaneously.
	\begin{table}[htp]
		\centering
		\caption{Bianchi~B invariant sets.}
	\label{table_bianchibsubsets}
		\begin{tabular}{ll}
			Notation & Restrictions \\
			\hline
			\noalign{\vskip 1mm}
			B(VI$_{\binvparam}$) & ${\bparamk}=\nicefrac1{\binvparam}<0$, $\tilde A>0$ \\
			B$^\pm$(VII$_{\binvparam}$) & ${\bparamk}=\nicefrac1{\binvparam}>0$, $\tilde A>0$, $N_+>0$ or $N_+<0$ \\
			B$^\pm$(IV) & ${\bparamk}=0$, $\tilde A>0$, $N_+>0$ or $N_+<0$ \\
			B(V) & ${\bparamk}=0$, $\tilde A>0$, $\Sigma_+=\Delta=N_+=0$
		\end{tabular}
	\end{table}
Table~\ref{table_bianchibsubsets} lists the subsets and their names according to the Bianchi classification of Lie groups. We recall this classification in Subsection~\ref{subsect_appendixbianchiclassification}. This table was already given by~\cite{hewittwainwright_dynamicalsystemsapproachbianchiorthogonalB}, as were Tables~\ref{table_bianchiasubsets} and~\ref{table_bianchihighersymmetry}\footnote{Note that only one of the three LRS subsets in Table~\ref{table_bianchihighersymmetry} was mentioned in~\cite{hewittwainwright_dynamicalsystemsapproachbianchiorthogonalB}, namely LRS Bianchi~II. LRS Bianchi~VI$_{{-}1}$ first appeared in~\cite{wainwrightellis_dynamsystemsincosm}, it is also called LRS Bianchi~III in the literature.

	In the literature, the subset LRS Bianchi~II in Table~\ref{table_bianchihighersymmetry} is defined with $\Delta^2=\Sigma_+^2N_+^2$.
	Considering the evolution equations of~$3\Sigma_+^2-\tilde\Sigma$ and~$\Sigma_+N_+\pm\Delta$, we realise however that it is the case with negative sign~$\Sigma_+N_+-\Delta=0$ which is preserved, not the one with positive sign, see also equations~\eqref{eqn_derivativefunctionforscc}.
	}.

	In Table~\ref{table_bianchiasubsets}, two types of Bianchi~A models are given which appear as boundary sets of Bianchi~B invariant sets.
	\begin{table}[htp]
		\centering
		\caption{Bianchi~A invariant sets.}
	\label{table_bianchiasubsets}
		\begin{tabular}{ll}
			Notation & Restrictions \\
			\hline
			\noalign{\vskip 1mm}
			B$^\pm$(II) & $\tilde A=0$, $N_+>0$ or $N_+<0$ \\
			B(I) & $\tilde A=\Delta=N_+=0$
		\end{tabular}
	\end{table}
\end{rema}
Table~\ref{table_bianchihighersymmetry} lists several invariant subsets which describe Bianchi models with higher symmetry.
\begin{table}[htp]
	\centering
	\caption{Bianchi invariant sets with higher symmetry. (Abbreviations: FLRW= Friedman-Lemaître-Robertson-Walker, LRS= locally rotationally symmetric)}
\label{table_bianchihighersymmetry}
	\begin{tabular}{lll}
		Notation & Class of models & Restrictions \\
		\hline
		\noalign{\vskip 1mm}
		$S_{\lrs}$(I) & LRS Bianchi~I & $\Delta=\tilde A=N_+=0$, $3\Sigma_+^2=\tilde\Sigma$ \\
		$S_{\lrs}^\pm$(II) & LRS Bianchi~II & $\tilde A=0$, $3\Sigma_+^2=\tilde\Sigma$, $\Delta=\Sigma_+N_+$ \\
		$S_{\lrs}^\pm$(VI$_{{-}1}$)$=S_{\lrs}^\pm$(III) & LRS Bianchi~VI$_{{-}1}$ & ${\bparamk}={-}1$, $\tilde A>0$, $3\Sigma_+^2=\tilde\Sigma$, $\Delta=\Sigma_+N_+$ \\
		$S_0$(VI$_{\binvparam}$) & Bianchi~VI$_{\binvparam}$, $\tr n=0$ & $\Delta=N_+=0$, $3\Sigma_+^2+{\bparamk}\tilde\Sigma=0$, $\tilde A>0$ \\
		$S_{\flrw}$(V) & Bianchi~V FLRW & ${\bparamk}=0$, $\Sigma_+=\tilde\Sigma=\Delta=N_+=0$\\
		$S_{\flrw}^\pm$(VII$_{\binvparam}$) & Bianchi~VII$_{\binvparam}$ FLRW & $\Sigma_+=\tilde\Sigma=\Delta=0$, ${\bparamk}\tilde A=N_+^2>0$
	\end{tabular}
\end{table}

We want to give particular attention to the locally rotationally symmetric (LRS) models. The property of local rotational symmetry in expansion-normalised variables is defined as follows:
\begin{defi}[Local rotational symmetry (expansion-normalised)]
\label{defi_lrsexpansionnorm}
	A solution to \equs ~\eqref{eqns_evolutionbianchib}--\eqref{eqn_constraintgeneraltwo} is called locally rotationally symmetric (LRS) if
	\begin{equation}
	\label{eqn_lrsexpansionnorm}
		3\Sigma_+^2=\tilde\Sigma, \qquad
		\Sigma_+N_+=\Delta, \qquad
		({\bparamk}+1)\tilde A=0.
	\end{equation}
\end{defi}
Comparison with the three LRS subsets given in Table~\ref{table_bianchihighersymmetry} shows that the union of these subsets equals the union of all locally rotationally symmetric solutions.
We have now defined the notion of local rotational symmetry twice, in terms of expansion-normalised variables in Def.~\ref{defi_lrsexpansionnorm} and in terms of initial data to Einstein's \ogon\ perfect fluid equations in Def.~\ref{defi_lrsinitialdata}. In Subsection~\ref{constr_highersymmetrysolutions}, we clarify why these two definitions are equivalent under suitable transformation between the initial data setting and the expansion-normalised variables.

Considering the derivatives
\begin{align}
\label{eqn_derivativefunctionforscc}
	(3\Sigma_+^2-\tilde\Sigma)'={}&2(q-2)(3\Sigma_+^2-\tilde\Sigma)-4N_+(\Sigma_+N_+-\Delta)+4\tilde A\Sigma_+(1+{\bparamk}),\\
	(\Sigma_+N_+-\Delta)'={}&2(q-2-2\Sigma_++1)(\Sigma_+N_+-\Delta)+2N_+(3\Sigma_+^2-\tilde\Sigma),
\end{align}
one sees that the three LRS subsets are invariant under the evolution \equ s~\eqref{eqns_evolutionbianchib}. For the remaining sets given in Tables~\ref{table_bianchibsubsets}, \ref{table_bianchiasubsets} and~\ref{table_bianchihighersymmetry}, invariance follows by direct computation, see also Remark~\ref{rema_expnormvarconstraintspreserved} for Bianchi VII$_\binvparam$ FLRW.

\begin{rema}
\label{rema_singularconstraintequ}
	The evolution \equ s~\eqref{eqns_evolutionbianchib}--\eqref{eqn_definitiontilden}
	define a dynamical system in~$\RR^5$, and we will on several occasions apply dynamical systems methods to this. Whenever we restrict our attention to only these evolution \equ s, \ie without assuming the constraint \equ s~\eqref{eqn_constraintgeneralone}--\eqref{eqn_constraintgeneraltwo}, we call this the \textit{evolution \equ s in the extended state space}.

	The evolution in the physical, \ie constrained, state space, is obtained via restriction to the set defined by the constraint \equ s~\eqref{eqn_constraintgeneralone}--\eqref{eqn_constraintgeneraltwo}.
	The first equation is invariant under the evolution and describes a sub\mf\ as long as the gradient of its \lhs
	\begin{equation}
	\label{eqn_gradientconstraintequ}
		({-}2\Sigma_+\tilde A\,,\,\frac13(N_+^2-{\bparamk}\tilde A)\,,\,{-}2\Delta\,,\,{-}(\Sigma_+^2+\frac13{\bparamk}\tilde\Sigma)\,,\,\frac23\tilde\Sigma N_+)
	\end{equation}
	does not vanish, implying that the hypersurface is non-singular. The only exceptions to a non-vanishing gradient are:
	\begin{itemize}
		\item If ${\bparamk}>0$: $\Sigma_+=\tilde\Sigma=\Delta=0$, $N_+^2={\bparamk}\tilde A$. This defines an invariant set of dimension one.
		\item If ${\bparamk}=0$: $\Sigma_+=N_+=\Delta=0$. This defines an invariant set of dimension two.
		\item If ${\bparamk}<0$: $\Delta=\tilde A=N_+=0$, ${\bparamk}\tilde\Sigma+3\Sigma_+^2=0$. This defines an invariant set of dimension one.
	\end{itemize}
\end{rema}
\begin{rema}
	There is a similar set of expansion-normalised coordinates which is used to describe Bianchi class~A models, but does not apply to class~B models. It was introduced by Wainwright and Hsu in~\cite{wainwrighthsu_dynamicalsystemsapproachbianchiorthogonalA} and motivated the definition of the present coordinates. In the cases of Bianchi~I and II, models can be decribed in both sets of variables.

	Certain Bianchi perfect fluid spacetimes with a \liegr\ of type VI$_{{-}1/9}$ cannot be described by the evolution equations~\eqref{eqns_evolutionbianchib}--\eqref{eqn_evolutionomega}. This is the case for the so-called 'exceptional' Bianchi~B perfect fluids, a notion which we explain in Remark~\ref{rema_exceptionalityexplained}. Initial data sets in these spacetimes admit an additional degree of freedom compared to the 'non-exceptional' ones given in Definition~\ref{defi_initialdatabianchib}. In these 'exceptional' cases as well, it is possible to introduce a set of expansion-normalised variables, this has been done in~\cite{hewitthorwoodwainwright_asymptdynamexceptbianchicosm}. Due to the additional freedom, these spacetimes are described by evolution equations in six dimensions instead of five.
\end{rema}

\section{The Kasner parabola and the plane wave equilibrium points}
\label{section_kasnerparabolaplanewave}

The dynamical system for the expansion-normalised variables described by \equ s~\eqref{eqns_evolutionbianchib}--\eqref{eqn_evolutionomega} possesses a number of equilibrium points and sets, \ie points where the \rhs\ of the evolution equations~\eqref{eqns_evolutionbianchib} becomes zero. These equilibrium sets have been studied in~\cite{hewittwainwright_dynamicalsystemsapproachbianchiorthogonalB}, to whom we refer for more details. For our present discussion, the Kasner parabola~$\kasnerparabola$ and the plane wave equilibrium points~$\planewave k$ are of importance, see Definition~\ref{defi_kasnerparabola} and Definition~\ref{defi_planewaveexpansionnorm}

On the Kasner parabola~$\kasnerparabola$, one finds $\Omega=0$. Furthermore, in case $\gamma<2$, the set~$\kasnerparabola$ is characterised by $q=2$. A closer look at the evolution equations reveals that the Kasner parabola is a curve in the $\Sigma_+\tilde\Sigma$-plane consisting of individual equilibrium points.
Information about the local stability can be drawn from the linearised evolution equations in the extended five-dimensional space.
We give the explicit form of this vector field for points on the Kasner parabola~$\kasnerparabola$ in Appendix~\ref{subsect_appendixlinearisedevolutionkasner}.
The eigenvalues of this vector field are given in~\eqref{eqn_eigenvalueskasner}.
The number of positive, negative and zero eigenvalues corresponds to the quali\-tative behaviour of orbits close to the Kasner parabola. This is a result from dynamical systems theory, which we state in Appendix~\ref{section_appendixdynamsystheo}, but do not make use of in this section. The eigenvalues to the linearised evolution equations will appear further down as the exponential decay rates of certain linear combination of the variables.

We notice that for two special points on the Kasner parabola~$\kasnerparabola$ the number of zero eigenvalues is greater than one, namely the points Taub~1 and Taub~2, see Definition~\ref{defi_taubpoints}.
Especially the latter point will play a dominant role in our discussion, as it is the limit point of locally rotationally symmetric solutions, which constitute exceptions to the Strong Cosmic Censorship conjecture. To the right of the point Taub~2 ($1/2<\Sigma_+\le1$) all non-vanishing eigenvalues are positive, while to the left of Taub~2 ($-1<\Sigma_+<1/2$) exactly one of the four non-vanishing eigenvalues is negative.
This difference reflects a difference in qualitative behaviour which we explore in more detail in this paper.

The plane wave equilibrium points~$\planewave {\bparamk}$ form a curve consisting of individual equilibrium points with $\Omega=0$, as was the case for the Kasner parabola.
Using the function
\begin{equation}
\label{eqn_definitionfunctionz}
	Z\coloneqq(1+\Sigma_+)^2-\tilde A,
\end{equation}
which is discussed in~\cite{hewittwainwright_dynamicalsystemsapproachbianchiorthogonalB}, the set~$\planewave {\bparamk}$ can be characterised as follows: Direct computation shows that the constraint equation~\eqref{eqn_constraintgeneralone} is equivalent to
\begin{equation}
\label{eqn_functionzextended}
	(\tilde\Sigma-\tilde N)^2+4\Delta^2+2(\tilde\Sigma+\tilde N)\Omega+(\Omega-Z)^2=4Z(\Sigma_+^2+\frac12(\tilde\Sigma+\tilde N)),
\end{equation}
which due to the non-negativity of~$\tilde\Sigma$, $\tilde N$ and~$\Omega$, see~\eqref{eqn_constraintgeneraltwo}, means that the function~$Z$ is non-negative, and vanishes if and only if
\begin{equation}
\label{eqn_functionzzeros}
	\Omega=0,\qquad \Delta=0,\qquad -\Sigma_+(1+\Sigma_+)=\tilde\Sigma=\tilde N.
\end{equation}
Note that the identity $-\Sigma_+(1+\Sigma_+)=\tilde\Sigma$ is a reformulation of $\Omega=Z$, using the identities~\eqref{eqn_functionzzeros}.
Consequently, $Z=0$ characterises the plane wave equilibrium points together with the Kasner point Taub~1.
In case $\tilde A=0$, $Z=0$ characterises exactly the point Taub~1.
The derivative of the function~$Z$ is
\begin{equation}
	Z'={-}2(2-q)Z+3(2-\gamma)(1+\Sigma_+)\Omega
\end{equation}
due to the evolution \equs ~\eqref{eqns_evolutionbianchib}.
\begin{rema}
	Both the Kasner parabola~$\kasnerparabola$ and the plane wave equilibrium points~$\planewave {\bparamk}$ can be interpreted in terms of initial data to Einstein's \ogon\ perfect fluid equations. This is done in detail in Subsection~\ref{constr_highersymmetrysolutions}, where we explain how to establish a relation between initial data and solutions to Einstein's equations on the one hand and initial data and solutions to the evolution \equ s~\eqref{eqns_evolutionbianchib}--\eqref{eqn_evolutionomega} on the other. We find that
	\begin{itemize}
		\item The Kasner parabola~$\kasnerparabola$ corresponds to vacuum Bianchi type~I initial data.
		\item The set of plane wave equilibrium points~$\planewave {\bparamk}$ together with the point Taub~1 correspond to plane wave equilibrium initial data as in Def.~\ref{defi_planewaveinitialdata}.
		\item The point Taub~1 corresponds to initial data of Bianchi type~I which is of plane wave equilibrium type.
		\item The point Taub~2 corresponds to initial data of Bianchi type~I which is locally rotationally symmetric and the symmetric two-tensor~$\idfundform$ additionally satisfies~${\idfundform}_{11}>{\idfundform}_{22}$.
	\end{itemize}
\end{rema}

The Kasner parabola and the plane wave equilibrium points are of central importance when investigating the asymptotic behaviour of orbits as $\tau\rightarrow{-}\infty$. In~\cite{hewittwainwright_dynamicalsystemsapproachbianchiorthogonalB}, the $\alpha$-limit set of (non-constant, generic) orbits in vacuum and in inflationary matter ($\gamma\in\left[0,2/3\right)$) are determined. We state and prove a refined version of their Prop.~5.1 and Prop.~5.2.
\begin{prop}[Alpha-limit sets in vacuum and inflationary matter]
\label{prop_alphalimitsets_vacuum_inflat}
	Assume vacuum, \ie $\Omega=0$. If $\Gamma(\tau)=(\Sigma_+,\tilde\Sigma,\Delta,\tilde A,N_+)(\tau)$, $\tau\in\RR$, is a solution to \equs ~\eqref{eqns_evolutionbianchib}--\eqref{eqn_evolutionomega}, then the $\alpha$-limit set of~$\Gamma$ satisfies
	\begin{equation}
		\alpha(\Gamma)\subset\kasnerparabola\cup\planewave {\bparamk},
	\end{equation}
	which means that for every sequence $\tau_n\rightarrow{-}\infty$ all accumulation points of $\Gamma(\tau_n)$ are contained in $\kasnerparabola\cup\planewave {\bparamk}$. The only solutions with
	\begin{equation}
		\alpha(\Gamma)\cap\planewave {\bparamk}\not=\emptyset
	\end{equation}
	are constant solutions.

	Assume inflationary matter, \ie $\Omega>0$, $\gamma\in\left[0,2/3\right)$. If $\Gamma(\tau)=(\Sigma_+,\tilde\Sigma,\Delta,\tilde A,N_+)(\tau)$, $\tau\in\RR$, is a solution to \equs ~\eqref{eqns_evolutionbianchib}--\eqref{eqn_evolutionomega}, then the $\alpha$-limit set of~$\Gamma$ satisfies
	\begin{equation}
		\alpha(\Gamma)\subset\kasnerparabola\cup\planewave {\bparamk}\cup\{\Sigma_+=\tilde\Sigma=\Delta=\tilde A=N_+=0\}.
	\end{equation}
	There are no solutions whose $\alpha$-limit set intersects both~$\kasnerparabola\setminus\taubone$ and~$\planewave {\bparamk}$.
	The only solution with
	\begin{equation}
		\{\Sigma_+=\tilde\Sigma=\Delta=\tilde A=N_+=0\}\subset\alpha(\Gamma)
	\end{equation}
	is the constant solutions.
\end{prop}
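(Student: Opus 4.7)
My strategy is the LaSalle invariance principle, applied with the function $Z=(1+\Sigma_+)^2-\tilde A$. The ingredients are already in the excerpt: $Z\ge0$ on the state space by~\eqref{eqn_functionzextended}; its zero set is $\planewave {\bparamk}\cup\{\taubone\}$ by~\eqref{eqn_functionzzeros}; its evolution is
\begin{equation*}
	Z'={-}2(2-q)Z+3(2-\gamma)(1+\Sigma_+)\Omega;
\end{equation*}
and $q\le 2$ throughout the state space, see~\eqref{eqn_constraintgeneralthree}. In vacuum this immediately yields $Z'\le0$, so by compactness $Z\circ\Gamma$ admits a backward limit $Z_{-\infty}$, and continuity forces $Z\equiv Z_{-\infty}$ on $\alpha(\Gamma)$.

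In the vacuum case, invariance of $\alpha(\Gamma)$ implies $Z'\equiv 0$ there, so every point satisfies $Z=0$ or $q=2$. The first possibility is $\planewave {\bparamk}\cup\{\taubone\}$, while $q=2$ in vacuum forces $\Sigma_+^2+\tilde\Sigma=1$, and the constraints~\eqref{eqn_constraintgeneralone}--\eqref{eqn_constraintgeneraltwo} together with $\Omega=0$ then give $\tilde A=\tilde N=\Delta=N_+=0$, i.e.\ the Kasner parabola. For the refinement, if $\alpha(\Gamma)\cap\planewave {\bparamk}\ne\emptyset$ then $Z_{-\infty}=0$; since $Z$ is non-increasing and non-negative, this backward limit is a supremum, so $Z\equiv0$ along all of $\Gamma$. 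The level set $\{Z=0\}\cap\{\Omega=0\}=\planewave {\bparamk}\cup\{\taubone\}$ consists of equilibrium points (directly verified from the evolution equations), so $\Gamma$ must be constant.

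For the inflationary case, my first move is to apply LaSalle with $\Omega$ instead of $Z$. Rewriting $q$ via~\eqref{eqn_definitionqmatter} gives
\begin{equation*}
	2q-(3\gamma-2)=3(2-\gamma)(\Sigma_+^2+\tilde\Sigma)+(2-3\gamma)(\tilde A+\tilde N),
\end{equation*}
and both coefficients are strictly positive for $\gamma\in[0,2/3)$. Hence $\Omega'\ge0$ by~\eqref{eqn_evolutionomega}, and equality (with $\Omega>0$) requires $\Sigma_+=\tilde\Sigma=\tilde A=\tilde N=0$, which pins one down at the origin (where $\Omega=1$). Monotonicity and boundedness of $\Omega$ produce a backward limit $\Omega_{-\infty}$; invariance then forces $\Omega'\equiv 0$ on $\alpha(\Gamma)$, giving $\alpha(\Gamma)\subset\{\Omega=0\}\cup\{\text{origin}\}$. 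Since these pieces sit at distinct values of $\Omega$ and $\alpha(\Gamma)$ is connected, one of the two inclusions holds. If $\alpha(\Gamma)=\{\text{origin}\}$ then $\Omega_{-\infty}=1$, and $\Omega\le 1$ combined with monotonicity yields $\Omega\equiv 1$, so $\Gamma$ is the constant orbit at the origin; this handles the uniqueness statement.

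The remaining case $\alpha(\Gamma)\subset\{\Omega=0\}$ is the delicate one, because $Z$ is not monotone along $\Gamma$ when $\Omega>0$. The key is that the $\alpha$-limit, being compact and disjoint from the origin, is bounded away from it, so $\Sigma_+^2+\tilde\Sigma+\tilde A+\tilde N\ge\epsilon>0$ on $\alpha(\Gamma)$ and hence also along $\Gamma$ for sufficiently negative $\tau$. This yields a uniform lower bound $\Omega'/\Omega\ge c>0$, which gives exponential backward decay and integrability of $\Omega$ on $(-\infty,\tau_0]$. Consequently
\begin{equation*}
	\hat Z(\tau):=Z(\Gamma(\tau))+\int_\tau^{\tau_0}3(2-\gamma)(1+\Sigma_+)\Omega\,ds
\end{equation*}
is non-increasing and bounded, so has a backward limit, and therefore so does $Z\circ\Gamma$. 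The vacuum argument then applies verbatim on the invariant set $\alpha(\Gamma)\subset\{Z=Z_{-\infty}\}\cap\{\Omega=0\}$, giving $\alpha(\Gamma)\subset\planewave {\bparamk}\cup\kasnerparabola$. The dichotomy---$Z_{-\infty}=0$ puts $\alpha(\Gamma)$ inside $\planewave {\bparamk}\cup\{\taubone\}$, while $Z_{-\infty}>0$ puts it inside $\kasnerparabola$---rules out the simultaneous intersection in the proposition. I expect the main technical hurdle to be justifying the uniform lower bound on $\Omega'/\Omega$ rigorously; once that exponential decay is secured, the rest is the vacuum monotonicity argument.
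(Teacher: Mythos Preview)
Your proof is correct. The vacuum case is essentially the paper's argument (LaSalle with the monotone function $Z$), phrased in standard dynamical-systems language. In the inflationary case the approaches diverge after establishing monotonicity of $\Omega$: the paper computes $(Z/\Omega)'$ directly, observes that $e^{(6-3\gamma)\tau}Z/\Omega$ is monotone with backward limit $B_1\ge 0$, and splits on whether $B_1>0$ (which forces integrability of $q-2$, hence $q\to 2$ and $\alpha(\Gamma)\subset\kasnerparabola$) or $B_1=0$ (which gives $Z/\Omega\le 2$, hence $Z\to 0$ and $\alpha(\Gamma)\subset\planewave{\bparamk}\cup\taubone$). Your route instead first secures exponential decay of $\Omega$ from a lower bound on $\Omega'/\Omega$, then uses the perturbed Lyapunov function $\hat Z$ to recover a monotone quantity and hence a backward limit for $Z$. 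Your stated concern about justifying the lower bound is not a genuine hurdle: once $\Omega_{-\infty}=0$, one has $\Sigma_+^2+\tilde\Sigma+\tilde A+\tilde N=1-\Omega\to 1$ along the orbit, so $\Omega'/\Omega\ge\tfrac12\min\bigl(3(2-\gamma),\,2-3\gamma\bigr)>0$ for all sufficiently negative $\tau$ directly from your displayed formula. Your argument is the cleaner ``integrable perturbation of a Lyapunov function'' idiom; the paper's ratio trick is slightly slicker and produces the auxiliary constant $B_1$, which is reused downstream in the proof of Proposition~\ref{prop_convergencetoplanewave}.
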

Alongside with the proof of this proposition, we find a statement about asymptotic behaviour in the future time direction in case of inflationary matter: for~$\tau\rightarrow{+}\infty$, all non-constant solutions converge to one point. Note that this is the only time that we consider this time direction, all other statements treat the case~$\tau\rightarrow{-}\infty$.
\begin{prop}
\label{prop_omegalimitsetinflationary}
	Assume inflationary matter, \ie $\Omega>0$, $\gamma\in\left[0,2/3\right)$, and consider a solution to \equs ~\eqref{eqns_evolutionbianchib}--\eqref{eqn_evolutionomega}. Then
	\begin{equation}
		\lim_{\tau\rightarrow{+}\infty}(\Sigma_+,\tilde\Sigma,\Delta,\tilde A,N_+)(\tau)=(0,0,0,0,0).
	\end{equation}
\end{prop}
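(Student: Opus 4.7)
The plan is to exploit the monotonicity of $\Omega$ in the inflationary regime and close the argument with a LaSalle-type invariance principle on the compact state space (Remark~\ref{rema_statespacecompact}). First I would combine the two expressions for $q$ in~\eqref{eqn_definitionqmatter} to derive the identity
\begin{equation}
	2q-(3\gamma-2)=3(2-\gamma)(\Sigma_+^2+\tilde\Sigma)-(3\gamma-2)(\tilde A+\tilde N).
\end{equation}
Under the inflationary assumption $\gamma\in[0,2/3)$ one has $2-\gamma>0$ and $3\gamma-2<0$, so both summands on the \rhs\ are non-negative. Together with~\eqref{eqn_evolutionomega} this yields $\Omega'\ge 0$, with equality precisely on the set $\{\Sigma_+=\tilde\Sigma=\tilde A=\tilde N=0\}$. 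Since $\Omega$ is confined to $[0,1]$ and $\Omega>0$ along the orbit by assumption, $\Omega$ is monotone and bounded, hence converges to some $\Omega_\infty\in(0,1]$ as $\tau\rightarrow+\infty$.

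Next I would apply the LaSalle invariance principle to the function $\Omega$: the $\omega$-limit set of the orbit is non-empty, compact, connected and invariant, and it is contained in the level set $\{\Omega=\Omega_\infty\}$. On any trajectory inside this $\omega$-limit set one therefore has $\Omega'\equiv 0$, and together with $\Omega_\infty>0$ this forces $2q-(3\gamma-2)\equiv 0$ along the trajectory. By the identity above this is equivalent to $\Sigma_+\equiv\tilde\Sigma\equiv\tilde A\equiv\tilde N\equiv 0$. The constraint equation~\eqref{eqn_constraintgeneralone} then gives $\Delta^2=\tilde\Sigma\tilde N-\Sigma_+^2\tilde A=0$, while the defining relation~\eqref{eqn_definitiontilden} together with $\tilde A=0=\tilde N$ yields $N_+^2=3\tilde N+{\bparamk}\tilde A=0$. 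Consequently the $\omega$-limit set is contained in the single equilibrium point $(0,0,0,0,0)$, and the constraint $\Sigma_+^2+\tilde\Sigma+\tilde A+\tilde N+\Omega=1$ forces $\Omega_\infty=1$.

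Finally, since the $\omega$-limit set of a precompact orbit is non-empty and, by the argument above, reduces to this unique point, standard compactness reasoning implies the orbit itself converges to $(0,0,0,0,0)$ in~$\RR^5$, which is the desired conclusion. There is no real analytic obstacle in this proof; the only point requiring care is the invariance step, i.e.\ checking that the only trajectories trapped inside $\{\Sigma_+=\tilde\Sigma=\tilde A=\tilde N=0\}$ are forced by~\eqref{eqn_constraintgeneralone} and~\eqref{eqn_definitiontilden} to reduce to the single equilibrium $(0,0,0,0,0)$. All the heavy lifting is done by the monotone Lyapunov-type function $\Omega$, which replaces any detailed asymptotic analysis of the individual variables.
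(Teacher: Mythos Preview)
Your proof is correct and follows essentially the same approach as the paper. The paper uses the equivalent reformulation $\Omega'=(4(\Sigma_+^2+\tilde\Sigma)-(3\gamma-2)(1-\Omega))\Omega$ and argues directly that $\Omega\to 1$ (since otherwise the bracket stays bounded below by $(2-3\gamma)(1-\Omega_\infty)>0$, forcing $\Omega$ to diverge), whereas you phrase the same monotone-Lyapunov idea via LaSalle's invariance principle; both routes land on $\Sigma_+^2+\tilde\Sigma+\tilde A+\tilde N\to 0$ and then read off $\Delta\to 0$ and $N_+\to 0$ from~\eqref{eqn_constraintgeneralone} and~\eqref{eqn_definitiontilden}.
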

\begin{proof}[Proof of Prop.~\ref{prop_alphalimitsets_vacuum_inflat}]
	We start with the case of a vacuum solution $\Omega=0$. If~$q<2$ and $Z>0$, then the derivative of~$Z$ defined in~\eqref{eqn_definitionfunctionz} is negative, which means that the function~$Z$ is strictly monotone decreasing.
	The $\alpha$-limit set is therefore contained in the union of $\{q=2\}$ and $\{Z=0\}$. The first set is the Kasner parabola~$\kasnerparabola$, while the second describes the plane wave equilibrium points $\planewave {\bparamk}$ together with the point Taub~1.
	Suppose there is a solution with an $\alpha$-limit point in~$\planewave {\bparamk}\cup\taubone$. Then there is a sequence of times $\tau_k\rightarrow{-}\infty$ \st
	\begin{equation}
		Z(\tau_k)\le\frac1k.
	\end{equation}
	In combination with $Z\ge0$ and monotonicity, this implies that~$Z$ is vanishing identically along the whole orbit. The orbit is therefore contained in the zero set of~$Z$, which are the plane wave equilibrium points together with the point Taub~1. Hence the orbit is the constant orbit.

	For inflationary matter solutions $\Omega>0$, $\gamma\in[0,2/3)$, one reformulates the evolution \equ\ for~$\Omega$ into the form
	\begin{equation}
		\Omega'=(4(\Sigma_+^2+\tilde\Sigma)-(3\gamma-2)(1-\Omega))\Omega
	\end{equation}
	using~\eqref{eqn_definitionqmatter}. Due to the restrictions on the individual variables~\eqref{eqn_constraintgeneralthree}, this shows that~$\Omega$ is monotone increasing and implies that the $\alpha$-limit set is contained in the union of $\{\Omega=0\}$ and $\{\Omega=1\}$, as the latter is equivalent to the bracket vanishing. It follows immediately from monotonicity that if there is an~$\alpha$-limit point in $\{\Omega=1\}$, then the whole solution is contained in this set.
	The condition $\Omega=1$ characterises the point $\Sigma_+=\tilde\Sigma=\Delta=\tilde A=N_+=0$, which proves the last statement.

	For $0<\Omega<1$, one computes
	\begin{equation}
		(\frac{Z}{\Omega})'={-}(6-3\gamma)\frac{Z}{\Omega}+(6-3\gamma)(1+\Sigma_+),
	\end{equation}
	and concludes from
	\begin{equation}
		(e^{(6-3\gamma)\tau}\frac{Z}{\Omega})'=(6-3\gamma)(1+\Sigma_+)e^{(6-3\gamma)\tau}\ge0
	\end{equation}
	and the fact that both~$\Omega$ and~$Z$ are non-negative
	that the quantity $e^{(6-3\gamma)\tau}{Z}/{\Omega}$ decreases monotonically to some non-negative constant~$B_1$ as $\tau\rightarrow{-}\infty$. 
	This has to be understood in the sense that the expression decreases when going backwards in time, here and on similar occurences of this formulation further down.
	Monotone convergence to~$B_1$ implies
	\begin{equation}
		\frac{Z}{\Omega}=B_1e^{-(6-3\gamma)\tau}+\int_{{-}\infty}^\tau(6-3\gamma)(1+\Sigma_+)e^{(6-3\gamma)(s-\tau)}ds,
	\end{equation}
	where the integral is well-defined and bounded by~$2$ due to the bound on~$\Sigma_+$ from~\eqref{eqn_constraintgeneralthree}.

	There are two cases to consider. If~$B_1>0$, one can reformulate the previous equality into
	\begin{equation}
		e^{-(6-3\gamma)\tau}\Omega=\frac1{B_1}(Z-\Omega\int_{{-}\infty}^\tau(6-3\gamma)(1+\Sigma_+)e^{(6-3\gamma)(s-\tau)}ds),
	\end{equation}
	and sees that the \rhs\ is bounded. 
	The evolution \equ\ for the \lhs\ is
	\begin{equation}
		(e^{-(6-3\gamma)\tau}\Omega)'=2(q-2)e^{-(6-3\gamma)\tau}\Omega,
	\end{equation}
	which is of the form~$f'=g\cdot f$. For~$C^1$ functions~$f$, $g$ and~$f$ bounded as~$\tau\rightarrow{-}\infty$, integration of~$f'/f$ implies that~$g$ is integrable on~$({-}\infty,0)$. 
	In our case, one concludes that~$q-2$ is integrable on the interval $({-}\infty,0)$.
	One further knows from the evolution equations that the derivative of~$q-2$ is a polynomial in the expansion-normalised variables and consequently bounded, and thus can conclude that~$q\rightarrow2$ as $\tau\rightarrow{-}\infty$. The $\alpha$-limit set is therefore contained in $\{q=2\}$, which characterises the Kasner parabola.

	If $B_1=0$, then we have already argued above that
	\begin{equation}
		\frac Z\Omega\le2.
	\end{equation}
	As $\Omega\rightarrow0$ for $\tau\rightarrow{-}\infty$ due to monotonicity and the fact that the $\alpha$-limit set is contained in $\{\Omega=0\}$, this implies $Z\rightarrow0$. One concludes that the $\alpha$-limit set is contained in the set $\{Z=0\}$, which characterises the plane wave equilibrium points~$\planewave {\bparamk}$ together with the point Taub~1. This proves the first statement for inflationary matter models.

	The two cases for~$B_1$ are mutually exclusive, and the only point satisfying both $q=2$ and $Z=0$ is the point Taub~1, which is contained in the Kasner parabola~$\kasnerparabola$.
	This concludes the proof.
\end{proof}
\begin{proof}[Proof of Prop.~\ref{prop_omegalimitsetinflationary}]
	We have found in the previous proof that~$\Omega$ is monotone increasing with
	\begin{equation}
		\Omega'=(4(\Sigma_+^2+\tilde\Sigma)-(3\gamma-2)(1-\Omega))\Omega.
	\end{equation}
	Due to the restrictions on the individual variables~\eqref{eqn_constraintgeneralthree}, this shows that all solutions with~$\Omega>0$ have to satisfy $\Omega\rightarrow1$ as $\tau\rightarrow{+}\infty$, because the condition~$\Omega=1$ is equivalent to the bracket vanishing. Due to the definition of~$\Omega$ in equation~\eqref{eqn_omegageneral} and the restriction of the individual variables in~\eqref{eqn_constraintgeneralthree}, this concludes the proof.
\end{proof}

In the next proposition, we strengthen the result about the $\alpha$-limit set: In inflationary matter models, there cannot be more than one $\alpha$-limit point in the plane wave equilibrium points~$\planewave {\bparamk}$, \ie we find convergence.
\begin{prop}[Convergence to plane wave equilibrium points]
\label{prop_convergencetoplanewave}
	Assume inflationary matter, \ie $\Omega>0$, $\gamma\in\left[0,2/3\right)$, and consider a solution to \equs ~\eqref{eqns_evolutionbianchib}--\eqref{eqn_evolutionomega} whose $\alpha$-limit set is contained in $\planewave {\bparamk}\cup\taubone$. Then there is an $\slimit\in\left[{-}1,0\right]$ such that
	\begin{equation}
		\lim_{\tau\rightarrow{-}\infty}(\Sigma_+,\tilde\Sigma,\Delta,\tilde A,N_+)(\tau)=(\slimit,-\slimit(1+\slimit),0,(1+\slimit)^2,\nu_+(\slimit)),
	\end{equation}
	where $\nu_+^2(\slimit)=(1+\slimit)({\bparamk}(1+\slimit)-3\slimit)$.
\end{prop}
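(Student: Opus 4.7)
The plan is to promote the set-level inclusion from Prop.~\ref{prop_alphalimitsets_vacuum_inflat} into pointwise convergence. The orbit is bounded, so $\alpha(\Gamma)$ is nonempty, compact, connected and flow-invariant, and by hypothesis sits in $\planewave{\bparamk}\cup\taubone$. Formula~\eqref{eqn_defiplanewave} together with the non-negativity of $N_+^2$ realises $\planewave{\bparamk}\cup\taubone$ as the injective continuous image of $[-1,0]$ under
\begin{equation}
\slimit\mapsto P_\slimit:=(\slimit,-\slimit(1+\slimit),0,(1+\slimit)^2,\nu_+(\slimit)),\quad \nu_+^2(\slimit)=(1+\slimit)(\bparamk(1+\slimit)-3\slimit).
\end{equation}
A compact connected subset of this arc is a closed subarc, so $\alpha(\Gamma)=\{P_\slimit:\slimit\in[\slimit_1,\slimit_2]\}$ for some $\slimit_1\le\slimit_2$ in $[-1,0]$, and the task is to prove $\slimit_1=\slimit_2$.

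The main tool is the auxiliary quantity $g:=Z/\Omega$ from the proof of Prop.~\ref{prop_alphalimitsets_vacuum_inflat}. Under the present hypothesis we are in the case $B_1=0$, yielding both the Duhamel representation
\begin{equation}
g(\tau)=\int_{-\infty}^\tau(6-3\gamma)\,(1+\Sigma_+(s))\,e^{(6-3\gamma)(s-\tau)}\,ds
\end{equation}
and the linear ODE $g'=(6-3\gamma)[(1+\Sigma_+)-g]$. Hence $g$ is a bounded exponential moving average of $1+\Sigma_+$, forcing $1+\slimit_1\le\liminf_\tau g\le\limsup_\tau g\le 1+\slimit_2$; writing $h:=g-(1+\Sigma_+)$ one further obtains the driven linear ODE $h'=-(6-3\gamma)h-\Sigma_+'$, which relates the deviation of $g$ from $1+\Sigma_+$ to the rate of change of $\Sigma_+$ itself.

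The decisive step is a centre-manifold argument at an interior point of a hypothetical non-trivial subarc. Assume for contradiction $\slimit_1<\slimit_2$ and fix $\slimit_\ast\in(\slimit_1,\slimit_2)\cap(-1,0)$. By~\eqref{eqn_eigenvaluesplanewave} the linearisation at $P_{\slimit_\ast}$ has exactly one vanishing eigenvalue, whose eigenvector is tangent to $\planewave{\bparamk}$, and four eigenvalues of non-zero real part. The centre-manifold theorem (cf.\ App.~\ref{section_appendixdynamsystheo}) yields a local centre manifold coinciding with $\planewave{\bparamk}$, and since every point of this arc is an equilibrium the centre dynamics is trivial. Any orbit with $P_{\slimit_\ast}\in\alpha(\Gamma)$ enters every neighbourhood of $P_{\slimit_\ast}$ in backward time and therefore must lie on the local centre-unstable manifold; its centre projection is time-independent, pinning its backward limit to the single equilibrium $P_{\slimit_\ast}$. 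This contradicts the presence of distinct $P_{\slimit_1}\neq P_{\slimit_2}$ in $\alpha(\Gamma)$, so $\slimit_1=\slimit_2$.

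The main obstacle is the Taub~1 endpoint. If the subarc were to straddle or terminate at $\slimit=-1$, the linearisation~\eqref{eqn_eigenvalueskasner} becomes highly degenerate (four vanishing eigenvalues) and standard centre-manifold machinery no longer suffices; here the dedicated analysis of Section~\ref{section_taubone} (Prop.~\ref{prop_taubone} and Prop.~\ref{prop_tauboneothermatter}) is required, showing that only the constant orbit has Taub~1 in its $\alpha$-limit. Combining the interior and endpoint analyses, $\alpha(\Gamma)$ reduces to a single point $P_\slimit$ with $\slimit\in[-1,0]$, and the explicit coordinates in the statement are read off from Definition~\ref{defi_planewaveexpansionnorm}. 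Beyond this, care is needed to ensure that the local centre-unstable manifold at $P_{\slimit_\ast}$ genuinely captures every orbit that accumulates there, a uniformity point along the arc that must be addressed for the contradiction to be valid.
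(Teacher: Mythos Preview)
The decisive step has a genuine gap. You claim that $P_{\slimit_*}\in\alpha(\Gamma)$ forces the orbit onto the local centre-unstable manifold of $P_{\slimit_*}$, but this does not follow: being an $\alpha$-limit point means the backward orbit visits every neighbourhood of $P_{\slimit_*}$ infinitely often, not that it remains there. Theorem~\ref{theo_centremftheory} only places the maximal negatively invariant set $A^{-}(U)$ inside $\C^u$, and an orbit that repeatedly enters and leaves $U$ is not in $A^{-}(U)$. Your final sentence acknowledges precisely this gap without closing it. A repair would require applying the centre-manifold theorem to the entire $\alpha$-limit arc (so the orbit is eventually trapped near it) together with an invariant strong-unstable foliation of the resulting $\C^u$; this goes beyond what Theorem~\ref{theo_centremftheory} provides, and normal hyperbolicity still fails at the Taub~1 endpoint. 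Your appeal to Prop.~\ref{prop_taubone} for that endpoint is circular: the inflationary-matter case of Prop.~\ref{prop_taubone} with $\tilde A>0$ explicitly invokes Prop.~\ref{prop_convergencetoplanewave} to dispose of the branch where the $\alpha$-limit set lies in $\planewave{\bparamk}\cup\taubone$.

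The paper's proof avoids centre-manifold machinery entirely and is direct. From the hypothesis one has $Z\to 0$; on $\{Z=0\}$ the relation $q+2\Sigma_+=0$ holds with $\Sigma_+\in[-1,0]$, so the factor $2q-(3\gamma-2)$ in $\Omega'=(2q-(3\gamma-2))\Omega$ is eventually bounded below by a positive constant and $\Omega$ decays exponentially as $\tau\to-\infty$. The bound $Z\le 2\Omega$ from the $B_1=0$ case of Prop.~\ref{prop_alphalimitsets_vacuum_inflat} then gives exponential decay of $Z$; identity~\eqref{eqn_functionzextended} propagates this to $\Delta$ and to $\tilde\Sigma+\Sigma_+(1+\Sigma_+)$, hence to $q+2\Sigma_+$, and hence to $\Sigma_+'$ via the evolution equation. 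Integrability of $\Sigma_+'$ yields convergence of $\Sigma_+$, and the remaining limits follow from Definition~\ref{defi_planewaveexpansionnorm}. Your paragraph on $g=Z/\Omega$ actually contains the key ingredient (it is the source of $Z\le 2\Omega$), but you never deploy it; the moving-average and $h$-ODE computations lead nowhere in your argument.
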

\begin{proof}
	If the point Taub~1 is the only $\alpha$-limit point, there is nothing to show. We assume therefore \woutlog\ that there is at least one $\alpha$-limit point contained in~$\planewave {\bparamk}$.
 	Due to the assumption on the $\alpha$-limit set, the solution satisfies $Z\rightarrow0$ as $\tau\rightarrow{-}\infty$. The zero set of~$Z$ is characterised by expressions~\eqref{eqn_functionzzeros}, 
 	and inserting these into the third expression for~$q$ in~\eqref{eqn_definitionqmatter}, we conclude that the solution satisfies
	\begin{equation}
	\label{eqn_convergenceqinflationary}
		q+2\Sigma_+\rightarrow0
	\end{equation}
	as $\tau\rightarrow{-}\infty$.
	All plane wave equilibrium points~$\planewave {\bparamk}$ satisfy ${-}1<\Sigma_+\le0$, which means that~$q$ converges to a non-negative value. In combination with the fact that~$3\gamma-2$ is a strictly negative constant, this implies that the factor $2q-(3\gamma-2)$ in the evolution \equ ~\eqref{eqn_evolutionomega} is strictly positive for $\tau\le\tau_0$ sufficiently negative. Consequently, $\Omega$ decays to zero exponentially as $\tau\rightarrow{-}\infty$.

	We have seen in the proof of Prop.~\ref{prop_alphalimitsets_vacuum_inflat}, that the solutions under consideration satisfy $B_1=0$, and consequently $0\le Z\le2\Omega$.
	Therefore, $Z$ decays to zero exponentially as well.
	Boundedness of the state space, see Remark~\ref{rema_statespacecompact}, therefore reveals that the \rhs\ of equation~\eqref{eqn_functionzextended} decays to zero exponentially, and therefore the same has to hold for all terms appearing on the \lhs. Using the identities~\eqref{eqn_functionzzeros} to rewrite~$\Omega-Z\rightarrow0$, we find 
	exponential decay for
	\begin{equation}
		\Sigma_+(1+\Sigma_+)+\tilde\Sigma\rightarrow0,\qquad
		\Delta\rightarrow0,
	\end{equation}
	and as a consequence also for $q+2\Sigma_+$, applying the third expression in equation~\eqref{eqn_definitionqmatter}. Inserting this into the evolution \equ ~\eqref{eqns_evolutionbianchib} shows the same decay for $\Sigma_+'$, therefore $\Sigma_+$ converges to some $\slimit\in\left[{-}1,0\right]$. The limiting values for the remaining variables follows from the definition of the plane wave equilibrium points~$\planewave {\bparamk}$, Def.~\ref{defi_planewaveexpansionnorm}.
\end{proof}
Consider now a non-constant solution to~\eqref{eqns_evolutionbianchib}--\eqref{eqn_evolutionomega} in vacuum or inflationary matter whose $\alpha$-limit set has a non-empty intersection with~$\kasnerparabola\setminus\taubone$ instead.
Due to Prop.~\ref{prop_alphalimitsets_vacuum_inflat}, this implies that the whole $\alpha$-limit set is contained in the Kasner parabola~$\kasnerparabola$.
Then
\begin{equation}
\label{eqn_convergenceratestepone}
	\Sigma_+^2+\tilde\Sigma\rightarrow1,\quad\Delta\rightarrow0,\quad \tilde A\rightarrow0,\quad N_+\rightarrow0,\quad\tilde N\rightarrow0,\quad \Omega\rightarrow0,\quad q\rightarrow 2,
\end{equation}
as $\tau\rightarrow{-}\infty$, as otherwise compactness of the state space, see Remark~\ref{rema_statespacecompact}, would yield an $\alpha$-limit point which does not lie on the Kasner parabola.
A convergence result similar to the previous statement is achieved further down in Prop.~\ref{prop_convergencetokasner}. Its proof needs some additional work.

For the discussion of more detailed properties of solutions close to the $\alpha$-limit points, we frequently make use of the following lemma which appears with slightly different notation as~\cite[Lemma~8]{ringstrom_curvblowupbianchiviiiandixvacuumspacetimes}.
\begin{lemm}
\label{lemm_decaylemmageneral}
	Consider a positive function $M:\RR\rightarrow(0,\infty)$ satisfying $M'=\functiondecaylemma M$, where $\functiondecaylemma:\RR\rightarrow\RR$ and $\functiondecaylemma(\tau)\rightarrow\iota$ as $\tau\rightarrow{-}\infty$. Then for all $\eps>0$ there is a $\tau_\eps>{-}\infty$ such that $\tau\le \tau_\eps$ implies
	\begin{equation}
		e^{(\iota+\eps)\tau}\le M(\tau)\le e^{(\iota-\eps)\tau}.
	\end{equation}
\end{lemm}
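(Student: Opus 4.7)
The statement is a standard decay estimate for ODEs of the form $M' = \functiondecaylemma M$ when the coefficient $\functiondecaylemma$ has a limit at $-\infty$. The plan is to solve the ODE explicitly and then carefully control the integral of $\functiondecaylemma$ using its convergence.

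First, since $M$ is strictly positive, the equation $M' = \functiondecaylemma M$ yields
\begin{equation}
    \ln M(\tau) - \ln M(\tau_0) = \int_{\tau_0}^{\tau}\functiondecaylemma(s)\,ds
\end{equation}
for any reference point $\tau_0\in\RR$. The plan is to use the hypothesis $\functiondecaylemma(s)\to\iota$ as $s\to-\infty$ to pin down the asymptotic slope of $\ln M$. Fix $\eps>0$. By the convergence hypothesis, choose $\tau_0$ so that $|\functiondecaylemma(s)-\iota|<\eps/2$ for all $s\le\tau_0$. Then for any $\tau\le\tau_0$,
\begin{equation}
    (\iota+\tfrac{\eps}{2})(\tau-\tau_0)\;\le\;\int_{\tau_0}^{\tau}\functiondecaylemma(s)\,ds\;\le\;(\iota-\tfrac{\eps}{2})(\tau-\tau_0),
\end{equation}
the inequalities flipping compared to the naive reading because $\tau-\tau_0\le0$.

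Substituting back into the expression for $\ln M(\tau)$ gives
\begin{equation}
    (\iota+\tfrac{\eps}{2})\tau + C_+\;\le\;\ln M(\tau)\;\le\;(\iota-\tfrac{\eps}{2})\tau + C_-
\end{equation}
for constants $C_\pm$ depending only on $\tau_0$, $\iota$, $\eps$ and $M(\tau_0)$. The remaining step is to absorb these constants into the remaining $\eps/2$ slack: we need $(\iota+\eps)\tau \le (\iota+\eps/2)\tau + C_+$ and $(\iota-\eps/2)\tau + C_- \le (\iota-\eps)\tau$, which in both cases reduce to $-\tfrac{\eps}{2}\tau \ge |C_\pm|$. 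This holds for all $\tau$ less than some $\tau_\eps\le\tau_0$, and choosing this $\tau_\eps$ completes the proof.

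No serious obstacle is expected: the only mildly delicate point is keeping track of the sign of $\tau-\tau_0$ when reversing the integral inequality, and the fact that the desired estimate must hold with $\eps$ (not $\eps/2$), which is handled by shrinking $\tau_\eps$ further to swallow the additive constants into the extra exponential slack.
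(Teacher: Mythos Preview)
Your proof is correct and is exactly the standard argument for this type of statement. The paper does not actually supply a proof of this lemma; it only cites \cite[Lemma~8]{ringstrom_curvblowupbianchiviiiandixvacuumspacetimes} and moves on, so there is nothing to compare against beyond noting that your argument is the expected one.
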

This lemma can be applied to the density parameter~$\Omega$ without any further assumptions: In vacuum $\Omega=0$ holds, while inflationary matter models satisfy $\Omega>0$ with $\gamma\in\left[0,2/3\right)$. If the $\alpha$-limit set is contained in the Kasner parabola~$\kasnerparabola$, one uses~\eqref{eqn_convergenceratestepone} to find convergence to zero at rate
\begin{equation}
\label{eqn_decayomega}
	e^{(6-3\gamma+\eps)\tau}\le \Omega(\tau) \le e^{(6-3\gamma-\eps)\tau},
\end{equation}
for $\tau\le\tau_\eps$ sufficiently negative.
\begin{nota}
	When we are only interested in the upper bound as $\tau\rightarrow{-}\infty$, we also make use of the big~$\O$ notation, \ie we write
	\begin{equation}
		\Omega=\O(e^{(6-3\gamma)\tau}),
	\end{equation}
	as $\tau\rightarrow{-}\infty$, if we want to say that there is a $\tau_0>{-}\infty$ and a~$C_\Omega>0$ such that $\tau\le \tau_0$ implies
	$
		\Omega(\tau) \le C_\Omega e^{(6-3\gamma)\tau}
	$.
	As we are in the present paper not interested in any other limit than $\tau\rightarrow{-}\infty$, we frequently 
	omit the range of~$\tau$.
\end{nota}
If the function~$\functiondecaylemma$ in the previous lemma converges exponentially, the statement can be improved as follows.
\begin{lemm}
\label{lemm_decaylemmageneralimproved}
	Consider a postitive function $M:\RR\rightarrow(0,\infty)$ satisfying $M'=\functiondecaylemma M$, where $\functiondecaylemma:\RR\rightarrow\RR$ and $\functiondecaylemma(\tau)=\iota+\O(e^{\xi\tau})$ as $\tau\rightarrow{-}\infty$, for some constants $\xi>0$, $\iota$. Then there are constants~$c_M,C_M>0$ and a~$\tau_0$ such that $\tau\le \tau_0$ implies
	\begin{equation}
		c_M e^{\iota\tau}\le M(\tau)\le C_M e^{\iota\tau}.
	\end{equation}
\end{lemm}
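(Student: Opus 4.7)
The plan is to reduce the multiplicative problem to an additive one by taking a logarithm, and then exploit the integrability on $({-}\infty,\tau_1]$ of the error term $\functiondecaylemma-\iota$, which is the essential new ingredient compared with Lemma~\ref{lemm_decaylemmageneral}. Since $M>0$, the function $f(\tau)\coloneqq\log M(\tau)-\iota\tau$ is well-defined and $C^1$, and from $M'=\functiondecaylemma M$ we immediately obtain
\begin{equation}
	f'(\tau)=\frac{M'(\tau)}{M(\tau)}-\iota=\functiondecaylemma(\tau)-\iota.
\end{equation}
By the hypothesis $\functiondecaylemma(\tau)=\iota+\O(e^{\xi\tau})$ there exist constants $C>0$ and $\tau_1\in\RR$ such that $\absval{f'(\tau)}\le C e^{\xi\tau}$ for all $\tau\le\tau_1$.

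Since $\xi>0$, the function $\tau\mapsto e^{\xi\tau}$ is integrable on $({-}\infty,\tau_1]$, so for every $\tau\le\tau_1$ the fundamental theorem of calculus yields
\begin{equation}
	\absval{f(\tau_1)-f(\tau)}=\biggl|\int_\tau^{\tau_1}(\functiondecaylemma(s)-\iota)\,ds\biggr|\le C\int_{-\infty}^{\tau_1}e^{\xi s}\,ds=\frac{C}{\xi}e^{\xi\tau_1}.
\end{equation}
In particular, $f$ is bounded on $({-}\infty,\tau_1]$; setting $\tau_0\coloneqq\tau_1$ and
\begin{equation}
	c_M\coloneqq\exp\bigl(f(\tau_1)-\tfrac{C}{\xi}e^{\xi\tau_1}\bigr),\qquad
	C_M\coloneqq\exp\bigl(f(\tau_1)+\tfrac{C}{\xi}e^{\xi\tau_1}\bigr),
\end{equation}
exponentiation of the bound $f(\tau_1)-\tfrac{C}{\xi}e^{\xi\tau_1}\le f(\tau)\le f(\tau_1)+\tfrac{C}{\xi}e^{\xi\tau_1}$ gives the required two-sided estimate $c_Me^{\iota\tau}\le M(\tau)\le C_Me^{\iota\tau}$ for $\tau\le\tau_0$.

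There is no genuine obstacle in the argument; the only point of caution is to notice that one needs $\xi>0$ (not merely convergence of $\functiondecaylemma$ to $\iota$) for the error $\functiondecaylemma-\iota$ to be integrable near ${-}\infty$, and that this is precisely the improvement over Lemma~\ref{lemm_decaylemmageneral}, where without the rate $\xi$ one only obtains an $\eps$-loss in the exponent. If desired, the Cauchy criterion applied to $\int^{\tau_1}(\functiondecaylemma-\iota)\,ds$ even shows that $f(\tau)$ converges to a finite limit $L$ as $\tau\rightarrow{-}\infty$, so that $M(\tau)e^{-\iota\tau}\rightarrow e^L>0$, a slightly sharper statement than the two-sided bound actually requested.
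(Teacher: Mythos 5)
Your proof is correct and follows essentially the same route as the paper's: both take the logarithm, integrate $(\ln M)'-\iota=\functiondecaylemma-\iota$, and use $\xi>0$ to bound the resulting integral uniformly for $\tau$ near $-\infty$, then exponentiate. Your write-up is a bit more careful in making the boundedness of $\int_\tau^{\tau_1}(\functiondecaylemma-\iota)\,ds$ explicit (and in noting the convergence of $M(\tau)e^{-\iota\tau}$), but there is no difference in the underlying idea.
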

\begin{proof}
	Integration of
	\begin{equation}
		(\ln M)'=\frac{M'}{M}=\functiondecaylemma=\iota+\O(e^{\xi\tau})
	\end{equation}
	yields
	\begin{equation}
		M(\tau)=e^{\iota\tau}M(\tau_0)e^{-\iota\tau_0+\O(e^{\xi\tau_0})}.
	\end{equation}
	The last two factors are contained in some interval $\left[c_M,C_M\right]\subset(0,\infty)$ for~$\tau$ smaller than a fixed number~$\tau_0$.
\end{proof}

\section{Convergence to Taub~1}
\label{section_taubone}

In this section, we show that in vacuum and inflationary matter the only orbit with Taub~1 as an $\alpha$-limit point is the constant orbit. As a consequence, this special Kasner point can be neglected when we determine in more detail the asymptotic behaviour close to the Kasner parabola.
\begin{prop}
\label{prop_taubone}
	Assume either vacuum or inflationary matter, \ie either $\Omega=0$ or $\Omega>0$, $\gamma\in\left[0,2/3\right)$, and consider a solution to \equs ~\eqref{eqns_evolutionbianchib}--\eqref{eqn_evolutionomega} such that the point Taub~1
	\begin{equation}
		(\Sigma_+,\tilde\Sigma,\Delta,\tilde A,N_+)=({-}1,0,0,0,0)
	\end{equation}
	is contained in the $\alpha$-limit set. Then the solution is the constant orbit.
\end{prop}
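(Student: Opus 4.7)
The natural Lyapunov-type quantity is $Z = (1+\Sigma_+)^2 - \tilde A$ introduced in Section~\ref{section_kasnerparabolaplanewave}. From~\eqref{eqn_functionzextended} we already know $Z\ge 0$, with equality characterizing exactly the plane wave equilibrium points together with Taub~1; in vacuum the formula $2-q = 2(\tilde A + \tilde N)$ reduces the evolution to
\begin{equation}
    Z' = {-}2(2-q)Z = {-}4(\tilde A+\tilde N)Z \le 0,
\end{equation}
so $Z$ is monotone non-increasing along orbits. If Taub~1 lies in $\alpha(\Gamma)$ then some sequence $\tau_n\to{-}\infty$ gives $Z(\tau_n)\to 0$; combined with monotonicity and $Z\ge 0$ this forces $Z\equiv 0$ backward in time, and by invariance $Z\equiv 0$ along the full orbit. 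Since the zero set $\{Z=0\}$ consists entirely of equilibrium points, the orbit is constant, and the only equilibrium compatible with Taub~1 being an accumulation point is Taub~1 itself. This is essentially the same argument that Prop.~\ref{prop_alphalimitsets_vacuum_inflat} already used and requires nothing new.

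\textbf{Inflationary matter case.} Here $Z$ is no longer monotone because of the extra term $+3(2-\gamma)(1+\Sigma_+)\Omega$ in its evolution. My plan is a two-step strategy. First, in inflationary matter $\Omega'= (2q - (3\gamma-2))\Omega \ge (2-3\gamma)\Omega >0$, so $\Omega$ is strictly monotone increasing and $\Omega(\text{Taub~1})=0$ forces $\Omega \to 0$ with exponential rate at least $2-3\gamma$ as $\tau\to -\infty$. Second, integrating the $Z$-equation from $\tau_n$ to $\tau$ and dropping the non-positive $-2(2-q)Z$ contribution yields
\begin{equation}
    Z(\tau)\le Z(\tau_n)+3(2-\gamma)\int_{\tau_n}^{\tau}(1+\Sigma_+)\Omega\,ds,
\end{equation}
and letting $n\to\infty$ with the exponential bound on $\Omega$ yields $Z(\tau)=\O(e^{(2-3\gamma)\tau})$ as $\tau\to{-}\infty$. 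Hence the $\alpha$-limit set lies in $\{Z=0\}\cap(\kasnerparabola\cup\planewave{\bparamk}\cup\{0\})=\planewave{\bparamk}\cup\{\text{Taub~1}\}$ (the origin being excluded by Prop.~\ref{prop_alphalimitsets_vacuum_inflat} since the orbit is not the constant origin orbit), and Prop.~\ref{prop_convergencetoplanewave} then upgrades this to genuine convergence to a single limit point; that point must be Taub~1 because Taub~1 is by assumption in $\alpha(\Gamma)$.

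\textbf{Main obstacle.} The remaining step is to exclude non-constant orbits that actually converge to Taub~1 as $\tau\to -\infty$. The linearization at Taub~1 has four vanishing eigenvalues (only the $\Omega$-direction carries the positive eigenvalue $3(2-\gamma)$), so linear stability theory gives no information and this is the technically hardest part. I would derive a contradiction from competing decay rates: Lemma~\ref{lemm_decaylemmageneral} applied to $\Omega'/\Omega \to 6-3\gamma$ yields $\Omega$ decaying at exact rate $6-3\gamma$, while the constraint~\eqref{eqn_constraintgeneralone} combined with $\Omega >0$ forces $1+\Sigma_+>0$ along the entire orbit (otherwise $\Sigma_+^2 + \tilde\Sigma + \tilde A + \tilde N + \Omega = 1$ with $\Sigma_+=-1$ collapses to Taub~1). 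Rewriting the $\Sigma_+$-equation as
\begin{equation}
    (1+\Sigma_+)' = 2\tilde A(1-\epsilon) - 2\tilde N\epsilon + \tfrac{1}{2}(6-3\gamma)\Omega(1-\epsilon),
    \qquad \epsilon := 1+\Sigma_+,
\end{equation}
the positive driving term $\tfrac{1}{2}(6-3\gamma)\Omega$ feeds $(1+\Sigma_+)$ from below at rate $e^{(6-3\gamma)\tau}$, while the constraint forces $\tilde A \le (1+\Sigma_+)^2$. A careful bootstrap on these competing rates, together with the refined $Z/\Omega$ analysis from the proof of Prop.~\ref{prop_alphalimitsets_vacuum_inflat}, should produce an inconsistency and thereby show that convergence to Taub~1 with $\Omega>0$ is impossible, completing the argument. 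The companion result for remaining matter regimes ($\gamma\in[2/3,2]$) appears to require a separate treatment and is presumably what is deferred to Prop.~\ref{prop_tauboneothermatter}.
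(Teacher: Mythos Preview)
Your vacuum argument and your reduction to convergence in the inflationary case are sound, and your direct integration of the $Z$-equation (dropping the non-positive term $-2(2-q)Z$) is in fact a cleaner route to $Z\to 0$ than the paper's integrating-factor argument. One minor correction: the bound $2q-(3\gamma-2)\ge 2-3\gamma$ would require $q\ge 0$, which is not automatic; from $q=2(\Sigma_+^2+\tilde\Sigma)+\tfrac12(3\gamma-2)\Omega$ you only get $2q-(3\gamma-2)\ge(2-3\gamma)(1-\Omega)$, so the exponential rate for $\Omega$ is $(2-3\gamma)/2$ once $\Omega<1/2$. This is harmless for your purposes.

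The genuine gap is the final step, which you correctly flag as the obstacle but do not resolve. Your proposed bootstrap on $(1+\Sigma_+)$ together with $\tilde A\le(1+\Sigma_+)^2$ does not by itself produce a contradiction: there is no tension between $\Omega$ feeding $1+\Sigma_+$ at an exponential rate and $1+\Sigma_+\to 0$, since both can decay compatibly. The paper's mechanism is entirely different and is not visible from your sketch. First, the cases $\tilde A=0$ and $\tilde A>0$ must be separated; when $\tilde A=0$ one checks $\Sigma_+'\le 0$ directly from the evolution, forcing $\Sigma_+\equiv -1$ and hence $\Omega\equiv 0$. When $\tilde A>0$, the decisive observation is that $q+2\Sigma_+\to 0$ at Taub~1, so $\tilde A'=2(q+2\Sigma_+)\tilde A$ forces $\tilde A$ to decay \emph{slower than any exponential}: $e^{\eps\tau}\le\tilde A\le e^{-\eps\tau}$ for every $\eps>0$. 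From the $Z$-decay and~\eqref{eqn_functionzextended} one extracts $\Delta^2=\O(e^{(6-3\gamma-\eps)\tau})$, whence $(\tilde N/\tilde A)'=4\Delta N_+/\tilde A$ is integrable and $\tilde N/\tilde A\to B_3$ for some constant~$B_3$. The contradiction then comes from the algebraic constraint~\eqref{eqn_constraintgeneralone}: writing $\Delta^2=\tilde\Sigma\tilde N-\Sigma_+^2\tilde A=(B_3\tilde\Sigma-\Sigma_+^2)\tilde A+\O(e^{(3-\frac32\gamma-3\eps)\tau})$, the left side decays fast while on the right the bracket tends to $-1$ and $\tilde A\ge e^{\eps\tau}$ --- impossible. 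Your sketch never invokes the constraint equation in this way, and that is where the argument actually lives.
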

\begin{prop}
\label{prop_tauboneothermatter}
	Assume $\Omega>0$ and $\gamma\in\left[2/3,2\right)$, and consider a solution to \equs ~\eqref{eqns_evolutionbianchib}--\eqref{eqn_evolutionomega} converging to the point Taub~1
	\begin{equation}
		(\Sigma_+,\tilde\Sigma,\Delta,\tilde A,N_+)=({-}1,0,0,0,0)
	\end{equation}
	as $\tau\rightarrow{-}\infty$. Then the solution is the constant orbit.
\end{prop}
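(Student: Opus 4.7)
The plan is to derive a contradiction with the hypothesis $\Omega>0$: the constant orbit at Taub~1 has $\Omega=0$, so showing the solution must be this constant orbit is equivalent to showing that no non-constant solution with $\Omega>0$ can converge to Taub~1. The approach mirrors the inflationary matter portion of the proof of Prop.~\ref{prop_alphalimitsets_vacuum_inflat}, suitably adapted to the non-inflationary regime $\gamma\in\left[2/3,2\right)$.

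First I would establish exponential decay of $\Omega$. Convergence to Taub~1 forces $q\to 2$, so the coefficient $2q-(3\gamma-2)$ in the evolution $\Omega'=(2q-(3\gamma-2))\Omega$ tends to $6-3\gamma>0$, and Lemma~\ref{lemm_decaylemmageneral} yields the bounds $e^{(6-3\gamma+\eps)\tau}\le\Omega(\tau)\le e^{(6-3\gamma-\eps)\tau}$ for sufficiently negative~$\tau$. Next I would reuse the monotone quantity $W=e^{(6-3\gamma)\tau}Z/\Omega$ with $Z=(1+\Sigma_+)^2-\tilde A$ introduced in the proof of Prop.~\ref{prop_alphalimitsets_vacuum_inflat}: its derivative $W'=3(2-\gamma)(1+\Sigma_+)e^{(6-3\gamma)\tau}$ is non-negative regardless of the sign of $3\gamma-2$, so monotonicity provides a limit $B_1:=\lim_{\tau\to-\infty}W\in[0,\infty)$, and the same dichotomy $B_1>0$ versus $B_1=0$ as in the inflationary case applies.

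To close the argument I would exploit the relative decay rate $(\log(\tilde A/\Omega))'=4\Sigma_++(3\gamma-2)\to 3\gamma-6<0$, which by Lemma~\ref{lemm_decaylemmageneral} forces $\tilde A/\Omega\to\infty$ as $\tau\to-\infty$ whenever $\tilde A>0$. Combined with the Hamiltonian constraint $\Omega+\tilde\Sigma+\tilde A+\tilde N=1-\Sigma_+^2$ and integration of $(\log\tilde A)'=2(q+2\Sigma_+)$, this should exclude the case $\tilde A>0$ by incompatibility of the resulting rates with $\tilde A\to 0$. The remaining case $\tilde A\equiv 0$ lies in the Bianchi class~A invariant boundary and would require a separate, direct treatment of the reduced ODE system, using the evolution of $\tilde\Sigma$, $\Delta$ and $N_+$ together with $\Omega>0$ to rule out convergence to Taub~1. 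The main obstacle is producing a sharp enough quantitative comparison of decay rates in the non-inflationary regime, where $\Omega$ is not monotone along the whole orbit (it is only eventually monotone near the limit); consequently the global monotonicity arguments of the inflationary proof must be replaced by more delicate asymptotic estimates via Lemma~\ref{lemm_decaylemmageneral} applied to each of the auxiliary quantities $\tilde A$, $\tilde N$, $\tilde\Sigma$ and $1+\Sigma_+$.
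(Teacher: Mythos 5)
Your setup is in the right direction (exponential decay of $\Omega$ via Lemma~\ref{lemm_decaylemmageneral}, the monotone quantity involving $Z$ to force $Z\to 0$), but the closing argument has a genuine gap and misses the mechanism the paper actually uses. The paper's proof is essentially one observation: the matter-case argument from the proof of Prop.~\ref{prop_taubone} (which was written for $\gamma\in[0,2/3)$) transfers verbatim once convergence to Taub~1 is assumed, because the only place the range of $\gamma$ enters is the inequality $3-\frac32\gamma-3\eps>0$, which still holds for $\gamma\in[2/3,2)$ with $\eps$ small. The convergence step that required inflationary matter is simply skipped since it is here a hypothesis.

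The gap in your proposal: ``$\tilde A/\Omega\to\infty$ combined with the Hamiltonian constraint'' does not produce a contradiction. Both $\tilde A$ and $\Omega$ tend to zero, and their ratio diverging only encodes that $\tilde A$ decays more slowly ($\tilde A\ge e^{\eps\tau}$ by Lemma~\ref{lemm_decaylemmageneral} since $2(q+2\Sigma_+)\to 0$, while $\Omega\sim e^{(6-3\gamma)\tau}$); the Hamiltonian constraint $\Omega+\tilde\Sigma+\tilde A+\tilde N=1-\Sigma_+^2$ is perfectly consistent with this, since the right-hand side also tends to zero and no rate for $1-\Sigma_+^2$ has been established. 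The actual contradiction in Prop.~\ref{prop_taubone} comes from the \emph{momentum} constraint, not the Hamiltonian one: after obtaining $Z=\O(e^{(6-3\gamma-2\eps)\tau})$, equation~\eqref{eqn_functionzextended} forces $\Delta^2=\O(e^{(6-3\gamma-2\eps)\tau})$; one then shows $(\tilde N/\tilde A)'=4\Delta N_+/\tilde A=\O(e^{(3-\frac32\gamma-2\eps)\tau})$ is integrable so $\tilde N/\tilde A\to B_3$; and the constraint $\Delta^2=\tilde\Sigma\tilde N-\Sigma_+^2\tilde A$ then forces $(B_3\tilde\Sigma-\Sigma_+^2)\tilde A=\O(e^{(3-\frac32\gamma-3\eps)\tau})$, whose bracket tends to $-1$, forcing $\tilde A$ to decay at a positive exponential rate and contradicting $\tilde A\ge e^{\eps\tau}$. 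It is precisely here that $3-\frac32\gamma-3\eps>0$ — i.e.\ $\gamma<2$ — is needed, which is why the proposition stops short of $\gamma=2$. None of your listed ingredients produce this chain. Finally, the $\tilde A\equiv 0$ case which you leave ``for separate treatment'' is already covered in the proof of Prop.~\ref{prop_taubone}: with $\tilde A=0$ one has $\Sigma_+'=-\frac23(\Sigma_++1)N_+^2-\frac32(2-\gamma)\Omega\le 0$, so $\Sigma_+$ is monotone and Taub~1 being an $\alpha$-limit point forces $\Sigma_+\equiv-1$, hence $Z\equiv 0$, hence $\Omega=0$ by~\eqref{eqn_functionzzeros} — contradiction. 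Your concern about $\Omega$ not being globally monotone for $\gamma\ge 2/3$ is immaterial, since neither that monotonicity nor Prop.~\ref{prop_alphalimitsets_vacuum_inflat} is used once convergence is assumed.
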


\begin{rema}
\label{rema_tauboneexcludedinnonvacuum}
	As the point Taub~1 is contained in the vacuum set~$\Omega=0$, we conclude from these propositions that if~$\Omega>0$, then the point Taub~1 is not allowed as an $\alpha$-limit point.
\end{rema}
\begin{proof}[Proof of Prop.~\ref{prop_taubone}]
	The proof revolves around the function~$Z$ defined by equation~\eqref{eqn_definitionfunctionz}, and we start with the case of a vacuum solution $\Omega=0$. As the point Taub~1 is an $\alpha$-limit point but does not lie in~$\planewave {\bparamk}$, Prop.~\ref{prop_alphalimitsets_vacuum_inflat} yields that the $\alpha$-limit set is contained in the Kasner parabola~$\kasnerparabola$, which is characterised by $q=2$.
	In the point Taub~1, the function~$Z$ vanishes. This point being an~$\alpha$-limit point therefore implies that $Z\rightarrow0$ along a time sequence~$\tau_k\rightarrow{-}\infty$.
	In the proof of Prop.~\ref{prop_alphalimitsets_vacuum_inflat}, we have used the existence of such a time sequence together with monotonicity of the function~$Z$ to conclude that~$Z$ is vanishing constantly along the whole orbit. This argument applies to the present case, and we conclude that the orbit is contained in the zero set of~$Z$. The set~$Z=0$ equals the union of~$\planewave k$ and the point Taub~1 and consists solely of equilibrium points. As a consequence, the solution has to be the constant one in  the point Taub~1, as this is the only point which satisfies both~$Z=0$ and~$q=2$.

	In the case of matter $\Omega>0$, one first realises that due to its evolution \equ ~\eqref{eqns_evolutionbianchib}, if~$\tilde A$ vanishes at one time, then it vanishes along the whole orbit. One then reformulates the evolution of~$\Sigma_+$ using the definition of~$\tilde N$ from~\eqref{eqn_definitiontilden} together with \equ ~\eqref{eqn_definitionqmatter} for~$q$ to find
	\begin{equation}
		\Sigma_+'={-}2\Sigma_+\tilde A+\frac23(\Sigma_++1){\bparamk}\tilde A-\frac23(\Sigma_++1)N_+^2-\frac32(2-\gamma)\Omega.
	\end{equation}
	If $\tilde A=0$, then $\Sigma_+'<0$, as~$\Sigma_+$ is contained in the interval $[{-}1,1]$ by Remark~\ref{rema_statespacecompact}. Hence, $\Sigma_+$ is monotone decreasing, and an argument similar to the one for~$Z$ in the vacuum case applies:
	The assumption on the $\alpha$-limit set gives a sequence of times $\tau_k\rightarrow{-}\infty$ \st
	\begin{equation}
		\Sigma_+(\tau_k)\le{-}1+\frac1k.
	\end{equation}
	In combination with $\Sigma_+\ge{-}1$ and monotonicity, this shows that $\Sigma_+={-}1$ along the whole orbit.
	This in turn implies that~$Z$ vanishes along the orbit and concludes the proof for $\tilde A=0$, as then $\Omega=0$ due to~\eqref{eqn_functionzzeros}, a contradiction.

	Assume therefore $\tilde A>0$ and $\Omega>0$.
	We first prove convergence, \ie that Taub~1 is the unique $\alpha$-limit point, then show that the only orbit converging to this point is the constant one. From Prop.~\ref{prop_alphalimitsets_vacuum_inflat} we know that the $\alpha$-limit set is either contained in~$\kasnerparabola$ or in~$\planewave {\bparamk}\cup\taubone$, and together with Prop.~\ref{prop_convergencetoplanewave} this implies that we only have to show convergence for solutions where the $\alpha$-limit set is contained in the Kasner parabola~$\kasnerparabola$.

	As shown in~\eqref{eqn_decayomega}, $\Omega$ then decays as
	\begin{equation}
		e^{(6-3\gamma+\eps)\tau}\le \Omega(\tau) \le e^{(6-3\gamma-\eps)\tau},
	\end{equation}
	for $\tau\le\tau_\eps$ sufficiently small. Setting
	$\tilde\integraltwominusq\coloneqq\int_\tau^0(q-2)ds$, one finds that $\tilde\integraltwominusq$ is non-positive for~$\tau\le0$ due to Remark~\ref{rema_statespacecompact}. The derivative of this function is $\tilde\integraltwominusq'=2-q$. Hence
	\begin{equation}
		\frac{d}{d\tau}(e^{2\tilde\integraltwominusq}Z)=e^{2\tilde\integraltwominusq}3(2-\gamma)(1+\Sigma_+)\Omega=\O(e^{(6-3\gamma-\eps)\tau})
	\end{equation}
	for $\tau\le\tau_\eps$,
	since all factors apart from~$\Omega$ are at least bounded.
	Therefore, there exists a non-negative constant~$B_2$ such that
	\begin{equation}
		e^{2\tilde\integraltwominusq}Z=B_2+\O(e^{(6-3\gamma-\eps)\tau}).
	\end{equation}
	In case $B_2>0$, the function~$Z$ is bounded away from zero, which excludes Taub~1 as an $\alpha$-limit point.
	Consequently~$B_2=0$, which means
	\begin{equation}
		e^{2\tilde\integraltwominusq}Z=\O(e^{(6-3\gamma-\eps)\tau}).
	\end{equation}
	The convergence of~$q$ to~$2$, see~\eqref{eqn_convergenceratestepone}, implies $\absval{q-2}<\eps$ for sufficiently negative times, hence
	\begin{equation}
		Z=\O(e^{(6-3\gamma-2\eps)\tau}).
	\end{equation}
	As $Z\rightarrow0$, and additionally~$\tilde A\rightarrow0$ from the assumption that the $\alpha$-limit set is contained in the Kasner parabola~$\kasnerparabola$, one concludes that~$\Sigma_+\rightarrow{-}1$. The only point in~$\kasnerparabola$ with this property is the point Taub~1, which implies convergence to this point.

	It remains to exclude non-constant solutions with $\tilde A>0$ and $\Omega>0$ which converge to the point Taub~1 as $\tau\rightarrow{-}\infty$.
	Knowing that $\Sigma_+\rightarrow{-}1$ and $q\rightarrow2$, one can apply Lemma~\ref{lemm_decaylemmageneral} to the evolution \equ\ of~$\tilde A$, \equ~\eqref{eqns_evolutionbianchib}, to obtain
	\begin{equation}
	\label{eqn_tauboneslowdecaytildea}
		e^{\eps\tau}\le \tilde A(\tau) \le e^{-\eps\tau}
	\end{equation}
	for $\tau\le\tau_\eps$.
	Using the decay of the function~$Z$ together with boundedness of the state space, Remark~\ref{rema_statespacecompact}, in equation~\eqref{eqn_functionzextended} yields $\Delta^2=\O(e^{(6-3\gamma-\eps)\tau})$.
	With this, one computes
	\begin{equation}
		(\frac{\tilde N}{\tilde A})'=\frac{4\Delta N_+}{\tilde A}=\O(e^{(3-\frac32\gamma-2\eps)\tau}),
	\end{equation}
	as~$N_+$ is at least bounded, and finds
	\begin{equation}
		\frac{\tilde N}{\tilde A}=B_3+\O(e^{(3-\frac32\gamma-2\eps)\tau})
	\end{equation}
	for some constant~$B_3$. The constraint \equ ~\eqref{eqn_constraintgeneralone} then reads
	\begin{equation}
		\O(e^{(6-3\gamma-2\eps)\tau})=\Delta^2=\tilde\Sigma\tilde N-\Sigma_+^2\tilde A=(B_3\tilde\Sigma-\Sigma_+^2)\tilde A+\O(e^{(3-\frac32\gamma-3\eps)\tau}),
	\end{equation}
	and therefore
	\begin{equation}
		(B_3\tilde\Sigma-\Sigma_+^2)\tilde A=\O(e^{(3-\frac32\gamma-3\eps)\tau}).
	\end{equation}
	However, due to~\eqref{eqn_tauboneslowdecaytildea}~$\tilde A$ decays at most as $e^{\eps\tau}$, and the bracket on the \lhs\ converges to~$-1$, a contradiction. Thus $\tilde A>0$ is not possible, which concludes the proof.
\end{proof}
\begin{proof}[Proof of Prop.~\ref{prop_tauboneothermatter}]
	The proof is similar to the one of Prop.~\ref{prop_taubone} for inflationary matter, \ie $\Omega>0$ with~$\gamma\in[0,2/3)$. In that setting, we first had to show that the point Taub~1 is the unique~$\alpha$-limit point, but for the current statement, this holds by assumption.
	Showing that there are no non-constant solution converging to the point Taub~1 then hinged on the fact that
	\begin{equation}
		3-\frac32\gamma-3\eps>0.
	\end{equation}
	In the case~$\gamma\in[2/3,2)$, it is still possible to choose~$\eps>0$ sufficiently small that this holds. The argument at the end of the proof of~Prop.\ref{prop_taubone} excluding non-constant solutions then applies without any change.
\end{proof}

\section{Convergence properties and asymptotic decay towards the Kasner parabola}
\label{section_asymptoticdecaykasner}

This section is the longest and most technical in our discussion of the evolution equations~\eqref{eqns_evolutionbianchib}--\eqref{eqn_evolutionomega}, and it is here that we prove the main statements we build upon in the following.

We focus our attention on solutions whose $\alpha$-limit set is contained in the Kasner parabola~$\kasnerparabola$. For vacuum models, all non-constant solutions satisfy this property, while in the case of inflationary matter we have to additionally assume that the $\alpha$-limit set does not intersect the plane wave equilibrium points~$\planewave {\bparamk}$, see Prop.~\ref{prop_alphalimitsets_vacuum_inflat}.

In a first step we show that these non-constant solutions with $\alpha$-limit set in~$\kasnerparabola$ converge, \ie every such solution has a unique accumulation point. This is done in Prop.~\ref{prop_convergencetokasner}. An equivalent convergence result for inflationary matter solutions with an~$\alpha$-limit point in~$\planewave {\bparamk}$ has been obtained in Prop.~\ref{prop_convergencetoplanewave}, and we therefore find convergence for all inflationary matter and vacuum solutions.

The main aim of this section is to now obtain decay and convergence rates of the individual variables under the assumption of convergence to a limit point on the Kasner parabola~$\kasnerparabola$.
The behaviour we discover is exponential decay or convergence, and the different exponents coincide with specific eigenvalues to the linearised evolution equations in the extended state space, see~\eqref{eqn_eigenvalueskasner}.
We further find that the rates of convergence depend on where the limit point is situated relative to the point Taub~2. For solutions converging to the point Taub~2 or to a limit point to the left of this point, the lowest order exponential terms are determined in Prop.~\ref{prop_main_generalasymptoticproperties}. In certain situations, we can refine this statement to even include the second-lowest term, see Prop.~\ref{prop_leftoftaub2additionaldecay}. Solutions which converge to such a point on the Kasner parabola necessarily have to satisfy that~$\Delta N_+>0$ for sufficiently negative times or $\Delta\equiv0\equiv N_+$, as we conclude from Lemma~\ref{lemm_deltanplussigns} and Lemma~\ref{lemm_deltanplusoppositesign}. In case~$\Delta N_+<0$ for sufficiently negative times, we determine the exponential convergence rates in Lemma~\ref{lemm_deltanplusrightoftaub2oppositesign}.

We remark that even though we prove convergence only in the case of vacuum and inflationary matter, we then drop this restriction on the matter in all the following statements and only assume convergence to a limit point on the Kasner parabola.
The results we show hold
for all matter models apart from, for some statements, the stiff fluid case $\gamma=2$.
\begin{prop}[Convergence to the Kasner parabola]
\label{prop_convergencetokasner}
	Assume either vacuum or inflationary matter, \ie either $\Omega=0$ or $\Omega>0$, $\gamma\in\left[0,2/3\right)$, and consider a non-constant solution to \equs ~\eqref{eqns_evolutionbianchib}--\eqref{eqn_evolutionomega}. In the inflationary matter case, assume additionally that the $\alpha$-limit does not intersect the plane wave equilibrium points~$\planewave {\bparamk}$.
	Then there is an $\slimit\in\left({-}1,1\right]$ such that
	\begin{equation}
		\lim_{\tau\rightarrow{-}\infty}(\Sigma_+,\tilde\Sigma,\Delta,\tilde A,N_+)(\tau)=(\slimit,1-\slimit^2,0,0,0).
	\end{equation}
\end{prop}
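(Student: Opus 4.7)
The plan is to reduce convergence of the five-variable orbit to convergence of $\Sigma_+$ alone, and then exploit the fact that the Kasner parabola is parameterised by $\Sigma_+$. First, I combine Propositions~\ref{prop_alphalimitsets_vacuum_inflat}, \ref{prop_convergencetoplanewave} and~\ref{prop_taubone} with the hypothesis $\alpha(\Gamma)\cap\planewave{\bparamk}=\emptyset$ (vacuous in vacuum, assumed in the matter case) and with the non-constancy of~$\Gamma$: these together force the $\alpha$-limit set to be a non-empty compact subset of $\kasnerparabola\setminus\taubone$. Consequently $\tilde A$, $\tilde N$, $\Delta$, $N_+$ and $\Omega$ tend to zero, $\Sigma_+^2+\tilde\Sigma\to 1$ and $q\to 2$, and there is a $\delta>0$ such that $\Sigma_+\ge-1+\delta$ for all sufficiently negative~$\tau$.

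I then study the function $Z=(1+\Sigma_+)^2-\tilde A$, which satisfies $Z'=-2(2-q)Z+3(2-\gamma)(1+\Sigma_+)\Omega$. Once $Z$ is shown to converge, the fact that $\tilde A\to 0$ together with $1+\Sigma_+\ge 0$ yields convergence of $\Sigma_+$ to some $\slimit\in(-1,1]$, and then $\tilde\Sigma=(\Sigma_+^2+\tilde\Sigma)-\Sigma_+^2\to 1-\slimit^2$ completes the statement.

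In the vacuum case $\Omega\equiv 0$, the identity $Z'=-2(2-q)Z\le 0$ (since $q\le 2$ by~\eqref{eqn_constraintgeneralthree}), together with $0\le Z\le 4$, immediately yields convergence of $Z$ by monotonicity. In the inflationary matter case, I plan to use the integrating factor $e^{2\tilde R}$ with $\tilde R(\tau)=\int_\tau^0(q-2)\,ds\le 0$, so that $\tilde R'=2-q\ge 0$ and
\begin{equation*}
(e^{2\tilde R}Z)'=3(2-\gamma)e^{2\tilde R}(1+\Sigma_+)\Omega\ge 0.
\end{equation*}
Integrating~\eqref{eqn_evolutionomega} gives $\Omega(\tau)=\Omega(0)e^{(6-3\gamma)\tau-2\tilde R(\tau)}$, so this right-hand side equals $3(2-\gamma)(1+\Sigma_+)\Omega(0)e^{(6-3\gamma)\tau}$, which is integrable on~$(-\infty,0)$. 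Hence $e^{2\tilde R}Z$ converges to some $L\ge 0$; the identity $Z/\Omega=B_1 e^{-(6-3\gamma)\tau}+\O(1)$ obtained in the proof of Proposition~\ref{prop_alphalimitsets_vacuum_inflat} then identifies $L=B_1\Omega(0)$, where $B_1>0$ in our setting since $B_1=0$ would place $\alpha(\Gamma)$ in the zero set of $Z$, contradicting $\alpha(\Gamma)\subset\kasnerparabola\setminus\taubone$.

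The most delicate step, which I expect to be the main obstacle, is ruling out the degenerate possibility $\tilde R\to-\infty$: in that case $e^{2\tilde R}\to 0$ while $Ze^{2\tilde R}\to L>0$ would force $Z\to\infty$, contradicting $Z\le 4$. Hence $\tilde R$ remains bounded below, $e^{2\tilde R}$ is monotone and bounded and therefore converges to some $\ell>0$, and finally $Z=(e^{2\tilde R}Z)/e^{2\tilde R}\to L/\ell\in(0,\infty)$, which concludes the argument.
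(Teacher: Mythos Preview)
Your argument is correct and takes a genuinely different route from the paper's proof. The paper proceeds by a case analysis on the sign of~$\Sigma_+$ and on whether~$\tilde A$ vanishes: for $\Sigma_+>0$ or $\tilde A=0$ it uses the monotonicity Lemma~\ref{lemm_monotonicitysigmaplus} (and the computation~\eqref{eqn_proofdeltanplussignsevolutionsigmaplus}), while in the remaining case $\Sigma_+\le 0$, $\tilde A>0$ it first obtains exponential decay of~$\tilde A$, then invokes Lemma~\ref{lemm_deltanplussigns} to control the sign of~$\Delta N_+$ and studies the monotone quantity $N_+^2/\tilde A$; depending on whether this ratio diverges or stays bounded one gets either eventual monotonicity of~$\Sigma_+$ or integrability of~$\Sigma_+'$.

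Your approach bypasses all of this by working exclusively with the Lyapunov-type function~$Z$, exploiting the same integrating-factor computation that already appears in the proof of Proposition~\ref{prop_alphalimitsets_vacuum_inflat}. The exact identity $\Omega\,e^{2\tilde\integraltwominusq}=\Omega(0)e^{(6-3\gamma)\tau}$ is the key observation that makes the matter case work so cleanly, and the identification $L=B_1\Omega(0)>0$ together with the boundedness of~$Z$ to force $\tilde\integraltwominusq$ bounded is a neat closing step. The upshot is a shorter, case-free argument that does not require Lemmata~\ref{lemm_monotonicitysigmaplus} or~\ref{lemm_deltanplussigns}. The paper's route, on the other hand, develops those lemmata anyway for later use and keeps the analysis closer to the evolution of~$\Sigma_+$ itself, which is perhaps more transparent about \emph{why} $\Sigma_+$ settles down. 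One minor redundancy: you invoke Proposition~\ref{prop_convergencetoplanewave}, but it is not actually needed for the reduction to $\alpha(\Gamma)\subset\kasnerparabola\setminus\taubone$.
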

We prove this statement below.
\begin{prop}
\label{prop_main_generalasymptoticproperties}
	Let $\gamma\in[0,2)$ and consider a solution to \equs ~\eqref{eqns_evolutionbianchib}--\eqref{eqn_evolutionomega} converging to $(\slimit,1-\slimit^2,0,0,0)$. If $\slimit\in\left[{-}1,1/2\right]$, then
	\begin{equation}
		\tilde A(3\slimit^2+{\bparamk}(1-\slimit^2))=0
	\end{equation}
	along the whole orbit, and
	\begin{align}
		\Sigma_+={}&\slimit+\O(e^{(\maxdecayleft-\eps)\tau}),\\
		\tilde\Sigma={}&1-\slimit^2+\O(e^{(\maxdecayleft-\eps)\tau}),\\
		\tilde N={}&\O(e^{(4+4\slimit-\eps)\tau}),\\
		q={}&2+\O(e^{(\maxdecayleft-\eps)\tau}),\\
	\end{align}
	as $\tau\rightarrow{-}\infty$, for every $\eps>0$. Here $\maxdecayleft\coloneqq\min(6-3\gamma,4+4\slimit)$ if $\Omega>0$, and $\maxdecayleft\coloneqq 4+4\slimit$ if $\Omega=0$. Furthermore, the following properties hold:
	\begin{itemize}
		\item If $\Delta$ and $N_+$ do not both vanish identically, then $\Delta N_+>0$ along the whole orbit and for all $\eps>0$ there is a $\tau_\eps>{-}\infty$ such that $\tau\le \tau_\eps$ implies
		\begin{align}
			e^{(2+2\slimit+2\sqrt{3(1-\slimit^2)}+\eps)\tau}\le {}&\absval{\Delta}\le e^{(2+2\slimit+2\sqrt{3(1-\slimit^2)}-\eps)\tau},\\
			e^{(2+2\slimit+2\sqrt{3(1-\slimit^2)}+\eps)\tau}\le {}&\absval{N_+}\le e^{(2+2\slimit+2\sqrt{3(1-\slimit^2)}-\eps)\tau}.
		\end{align}
		\item Either $\tilde A=0$ along the whole orbit, or for all $\eps>0$ there is a $\tau_\eps>{-}\infty$ such that $\tau\le \tau_\eps$ implies
		\begin{equation}
			e^{(4+4\slimit+\eps)\tau}\le \tilde A\le e^{(4+4\slimit-\eps)\tau}.
		\end{equation}
		\item Either $\Omega=0$ along the whole orbit (vacuum), or for all $\eps>0$ there is a $\tau_\eps>{-}\infty$ such that $\tau\le \tau_\eps$ implies
		\begin{equation}
			e^{(6-3\gamma+\eps)\tau}\le\Omega\le e^{(6-3\gamma-\eps)\tau}.
		\end{equation}
	\end{itemize}
\end{prop}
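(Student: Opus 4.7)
The plan is to apply Lemma~\ref{lemm_decaylemmageneral} to each variable's evolution equation, exploiting that the coefficients have definite limits since $(\Sigma_+,\tilde\Sigma)\to(\slimit,1-\slimit^2)$ and the remaining variables go to zero. First I would dispose of the Taub~1 endpoint $\slimit=-1$: Propositions~\ref{prop_taubone} and~\ref{prop_tauboneothermatter} together cover all $\gamma\in[0,2)$ and show that only the constant orbit converges to this point, for which every claim is trivial. For $\slimit\in(-1,1/2]$, the scalar cases are immediate: $\tilde A'=2(q+2\Sigma_+)\tilde A$ has coefficient converging to $4+4\slimit$, so the lemma yields the dichotomy $\tilde A\equiv 0$ or the claimed two-sided bound, and analogously $\Omega'=(2q-(3\gamma-2))\Omega$ produces the $\Omega$-bound at rate $6-3\gamma$.

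The heart of the proof is the coupled pair $(\Delta,N_+)$, whose linearisation at the Kasner point has eigenvalues $\lambda_\pm=2(1+\slimit)\pm 2\sqrt{3(1-\slimit^2)}$ appearing in~\eqref{eqn_eigenvalueskasner}. For $\slimit\in(-1,1/2)$ one has $\lambda_-<0<\lambda_+$, so I would track the combinations $u_\pm\coloneqq\sqrt{3}\,\Delta\pm\sqrt{1-\slimit^2}\,N_+$, for which $u_\pm'=\lambda_\pm u_\pm+(\text{lower-order quadratic terms})$. Since $e^{\lambda_-\tau}$ grows backwards in time, convergence to zero forces $u_-$ to decay strictly faster than $u_+$, and applying Lemma~\ref{lemm_decaylemmageneral} to $u_+$ delivers the rate $\lambda_+$ for $\absval{\Delta}$ and $\absval{N_+}$; the eigenvector alignment simultaneously forces the sign $\Delta N_+>0$ asymptotically, which is then promoted to the whole orbit by invoking that the only invariant set on which $\Delta N_+$ vanishes identically is $\{\Delta\equiv 0\equiv N_+\}$.

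The algebraic identity $\tilde A(3\slimit^2+\bparamk(1-\slimit^2))=0$ follows from the rewritten constraint $\tilde\Sigma N_+^2-3\Delta^2=(3\Sigma_+^2+\bparamk\tilde\Sigma)\tilde A$: the left-hand side equals a positive multiple of $u_+u_-$ at leading order and thus decays strictly faster than $2\lambda_+>4+4\slimit$, whereas the right-hand side decays at exactly $4+4\slimit$ whenever both $\tilde A\not\equiv 0$ and $3\slimit^2+\bparamk(1-\slimit^2)\ne 0$; this contradiction forces one of the two factors to vanish. The remaining rates follow by substitution: $\tilde N=\tfrac13(N_+^2-\bparamk\tilde A)=O(e^{(4+4\slimit-\eps)\tau})$ since $N_+^2$ decays at rate $2\lambda_+>4+4\slimit$, hence also $q-2=-2\tilde A-2\tilde N-\tfrac32(2-\gamma)\Omega=O(e^{(\maxdecayleft-\eps)\tau})$, and integrating $\Sigma_+'=(q-2)\Sigma_+-2\tilde N$ and $\tilde\Sigma'=2(q-2)\tilde\Sigma-4\Sigma_+\tilde A-4\Delta N_+$ from $-\infty$ against these bounds yields the claimed convergence rates.

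The main obstacle I anticipate is the degenerate endpoint $\slimit=1/2$, where $\lambda_-=0$ and the backward-time decay argument for $u_-$ degenerates; here I expect one has to analyse the nonlinear coupling in $u_-'=\lambda_- u_-+\ldots$ directly, using either the constraint~\eqref{eqn_constraintgeneralone} to pin down $u_-$ in terms of $\tilde A$ or a centre-manifold-type argument. The global sign-propagation of $\Delta N_+$ requires similar care, since $\Delta$ or $N_+$ could a priori cross zero at an intermediate time; a closer look at the evolution equations combined with the constraint at such a crossing will be needed to exclude this possibility.
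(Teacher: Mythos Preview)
Your plan is essentially the paper's own argument: track the eigencombinations $\Delta\pm\rfactornew N_+$ (your $u_\pm$ are a constant multiple of these, with $\rfactornew=\sqrt{(1-\slimit^2)/3}$), apply Lemma~\ref{lemm_decaylemmageneral}, and compare decay rates in the constraint~\eqref{eqn_constraintforproofs} to force the algebraic identity. The hard endpoint $\slimit=1/2$ you flag is exactly what Lemma~\ref{lemm_deltanplusoppositesign} handles by a delicate nonlinear argument.

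There is, however, a genuine circularity in your central step. Writing $u_+'=(\lambda_++F_1)u_++F_2u_-$ with $F_1,F_2\to0$, Lemma~\ref{lemm_decaylemmageneral} applies only after you write this as $u_+'=\zeta u_+$ with $\zeta$ converging, i.e.\ once $F_2u_-/u_+\to0$. You want to deduce this from ``$\lambda_-<0$ forces $u_-$ to decay faster'', but that heuristic is itself what needs proving, and is not obvious because $u_-$ is driven by the coupling $G_1u_+$ at rate $\approx\lambda_+$, not faster. The paper breaks the loop in the opposite order: it first proves directly (Lemma~\ref{lemm_deltanplussigns}, via the sign of $(\Delta N_+)'$ at its zeros) that $\Delta N_+$ has eventually constant sign. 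If $\Delta N_+>0$, then $|\Delta|,|\rfactornew N_+|\le|\Delta+\rfactornew N_+|$ trivially, the quotient is bounded, and Lemma~\ref{lemm_decaylemmageneral} applies (Lemma~\ref{lemm_decaydeltanplus}); if $\Delta N_+<0$, the same mechanism applied to $\Delta-\rfactornew N_+$ gives a lower bound $e^{(\lambda_-+\eps)\tau}\to\infty$ for $\slimit<1/2$, a contradiction. So establishing the sign \emph{before} the rates is not merely organisational --- it is what makes the scalar decay lemma applicable. Two smaller points: the lower bound on $|u_+|$ does not by itself give individual lower bounds on $|\Delta|$ and $|N_+|$ (one summand could be much smaller); the paper extracts these from the constraint at the end of Lemma~\ref{lemm_deltanplussamesign}. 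And in the constraint comparison you only need the left side $=\O(e^{(2\lambda_+-\eps)\tau})$ against a right side bounded below by $e^{(4+4\slimit+\eps)\tau}$; since $2\lambda_+>4+4\slimit$ this suffices, and your stronger claim ``strictly faster than $2\lambda_+$'' is neither needed nor established.
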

The proof is divided into several steps which will have additional individual use later. The arguments revolve around the constraint \equ ~\eqref{eqn_constraintgeneralone} written in the form
\begin{equation}
\label{eqn_constraintforproofs}
	\tilde\Sigma N_+^2-3\Delta^2=(3\Sigma_+^2+{\bparamk}\tilde\Sigma)\tilde A
\end{equation}
which is then used to determine the asymptotic decay properties of the individual variables.

\begin{lemm}
\label{lemm_monotonicitysigmaplus}
	Consider a solution to \equs ~\eqref{eqns_evolutionbianchib}--\eqref{eqn_evolutionomega}. If $\Sigma_+\in[0,1]$, then $\Sigma_+'\le0,$ \ie $\Sigma_+$ is monotonically decreasing.
\end{lemm}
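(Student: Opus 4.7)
The plan is a one-line computation once the right form of $q$ is used. Start from the evolution equation
\begin{equation}
	\Sigma_+' = (q-2)\Sigma_+ - 2\tilde N,
\end{equation}
and rewrite $q-2$ using the second identity in~\eqref{eqn_definitionqmatter}, namely $q = 2(1-\tilde A-\tilde N) - \tfrac{3}{2}(2-\gamma)\Omega$, which gives
\begin{equation}
	q-2 = -2(\tilde A+\tilde N) - \tfrac{3}{2}(2-\gamma)\Omega.
\end{equation}
Substituting this into the evolution equation yields
\begin{equation}
	\Sigma_+' = -\bigl[2(\tilde A+\tilde N) + \tfrac{3}{2}(2-\gamma)\Omega\bigr]\Sigma_+ - 2\tilde N.
\end{equation}

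Now invoke the sign constraints from~\eqref{eqn_constraintgeneraltwo} together with $\Omega \ge 0$ and the standing assumption $\gamma\in[0,2]$: the bracket multiplying $\Sigma_+$ is non-negative, and $\tilde N \ge 0$. Under the hypothesis $\Sigma_+\in[0,1]$, the first summand is non-positive and the second is non-positive, so $\Sigma_+'\le 0$.

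There is no real obstacle here; the only subtlety is choosing the correct reformulation of $q$ so that the constraint~\eqref{eqn_constraintgeneralone}--\eqref{eqn_constraintgeneraltwo} appears implicitly (via $\Omega\ge 0$ and the identity $\Omega = 1-\Sigma_+^2-\tilde\Sigma-\tilde A-\tilde N$ built into~\eqref{eqn_definitionqmatter}), rather than working with the raw form~\eqref{eqn_definitionqgeneral} which would force one to add and subtract terms by hand.
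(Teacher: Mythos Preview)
Your proof is correct and is essentially the same as the paper's, just written out explicitly: the paper appeals to the range $q\in[-1,2]$ from Remark~\ref{rema_statespacecompact} (so $q-2\le0$) together with $\tilde N\ge0$ from~\eqref{eqn_constraintgeneraltwo}, which is exactly what your expansion of $q-2$ via~\eqref{eqn_definitionqmatter} makes visible.
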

\begin{proof}
	This follows from inspection of the evolution \equ ~\eqref{eqns_evolutionbianchib} for~$\Sigma_+'$, using the range of the variables given by the constraints~\eqref{eqn_constraintgeneraltwo} and Remark~\ref{rema_statespacecompact}.
\end{proof}
\begin{lemm}
\label{lemm_deltanplussigns}
	Consider a solution to \equs ~\eqref{eqns_evolutionbianchib}--\eqref{eqn_evolutionomega} whose $\alpha$-limit set is contained in~$\kasnerparabola$ and such that $\tilde\Sigma(\tau)>\delta>0$ for $\tau\le\tau_0$. Then one of the following statements holds:
	\begin{enumerate}
		\item $\Delta=N_+=0$ along the whole orbit,
		\item There is $\tau_1\in\RR$ \st $\Delta N_+(\tau)>0$ for all $\tau\le\tau_1$,
		\item There is $\tau_1\in\RR$ \st $\Delta N_+(\tau)<0$ for all $\tau\le\tau_1$.
	\end{enumerate}
\end{lemm}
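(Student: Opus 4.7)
\medskip

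\textbf{Proof proposal.} The plan is to study the evolution of the product $f=\Delta N_+$ and show that at negative times it cannot oscillate. A direct computation using the evolution equations for $\Delta$ and $N_+$ in \eqref{eqns_evolutionbianchib} yields
\begin{equation}
(\Delta N_+)' = (3q+4\Sigma_+-2)\Delta N_+ + 2(\tilde\Sigma-\tilde N)N_+^2 + 6\Delta^2.
\end{equation}
At any time where $\Delta N_+=0$ the first term drops out and what remains is $2(\tilde\Sigma-\tilde N)N_+^2 + 6\Delta^2$, which is non-negative, and strictly positive unless $\Delta=N_+=0$.

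Since the $\alpha$-limit set is contained in $\kasnerparabola$ where $\tilde N=0$, compactness of the state space (Remark~\ref{rema_statespacecompact}) implies $\tilde N(\tau)\to 0$ as $\tau\to{-}\infty$: otherwise a sequence $\tau_k\to{-}\infty$ with $\tilde N(\tau_k)\ge\eps$ would admit, by compactness, a subsequential limit point in the $\alpha$-limit set violating $\tilde N=0$. Combined with the hypothesis $\tilde\Sigma>\delta$ on $(-\infty,\tau_0]$, one can therefore fix $\tau_\delta\le\tau_0$ such that $\tilde\Sigma-\tilde N>\delta/2$ on $(-\infty,\tau_\delta]$. On this interval, $(\Delta N_+)'|_{\Delta N_+=0}>0$ strictly, except at the origin of the $(\Delta,N_+)$-plane.

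The dichotomy is then whether $(\Delta,N_+)$ ever reaches the origin. If $(\Delta,N_+)(\tau_\ast)=(0,0)$ for some $\tau_\ast$, then the set $\{\Delta=N_+=0\}$ is invariant under the flow (inspection of the right-hand sides of the $\Delta$- and $N_+$-equations shows that both derivatives vanish on this set, and the reduced system on the remaining variables is self-consistent), so uniqueness of ODE solutions forces $\Delta\equiv N_+\equiv 0$ along the whole orbit, which is case (i). Otherwise, every zero of $f=\Delta N_+$ in $(-\infty,\tau_\delta]$ satisfies $f'>0$, so each zero is isolated and marks a transition from negative to positive values of $f$ as $\tau$ increases.

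The hard part, though it is really more of an observation than an obstacle, is ruling out an infinite sequence of such zeros accumulating at $-\infty$. Suppose $\tau_1<\tau_2$ are consecutive zeros in $(-\infty,\tau_\delta]$; between them $f$ has constant sign by the intermediate value theorem, but the sign of $f$ immediately to the right of $\tau_1$ is positive (from $f'(\tau_1)>0$) while immediately to the left of $\tau_2$ it is negative (from $f'(\tau_2)>0$), a contradiction. Hence there is at most one zero of $f$ in $(-\infty,\tau_\delta]$, and choosing $\tau_1\le\tau_\delta$ strictly smaller than any such zero gives constant sign of $\Delta N_+$ on $(-\infty,\tau_1]$, that is, case (ii) or (iii) depending on that sign.
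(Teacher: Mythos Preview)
Your proof is correct and follows essentially the same approach as the paper: compute $(\Delta N_+)'$, use the $\alpha$-limit assumption together with $\tilde\Sigma>\delta$ to make the coefficient of $N_+^2$ eventually positive, note that $\{\Delta=N_+=0\}$ is invariant, and conclude that at any remaining zero of $\Delta N_+$ the derivative is strictly positive, forcing at most one sign change. Your version is slightly more explicit in justifying $\tilde N\to 0$ and in the two-consecutive-zeros argument, but the substance is identical.
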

Note that this statement does not require any assumption on the matter model but holds for all values of~$\gamma\in[0,2]$.
\begin{proof}
	The set $\Delta=0=N_+$ is invariant under the evolution \equs ~\eqref{eqns_evolutionbianchib}, which means that every orbit with $\Delta=0=N_+$ at one time will satisfy this property at all times.
	One can therefore assume that $\Delta^2+N_+^2>0$.
	The evolution \equs ~\eqref{eqns_evolutionbianchib} for~$\Delta$ and~$N_+$ yield
	\begin{equation}
		(\Delta N_+)'=6\Delta^2+(3q+4\Sigma_+-2)\Delta N_++2(\tilde\Sigma-\tilde N)N_+^2,
	\end{equation}
	and due to the convergence relation~\eqref{eqn_convergenceratestepone} and the assumption on~$\tilde\Sigma$, the coefficient to~$N_+^2$ is strictly positive for $\tau\le\tau_2$ sufficiently negative. Therefore, for every time $\tau_1<\tau_2$ where the product $\Delta N_+(\tau_1)$ becomes zero, the derivative $(\Delta N_+)'(\tau_1)$ is strictly positive, which means that $\Delta N_+$ changes sign from negative to positive. Consequently, the product $\Delta N_+$ can become zero at most once, which concludes the proof.
\end{proof}
\begin{proof}[Proof of Prop.~\ref{prop_convergencetokasner}]
	Using Prop.~\ref{prop_alphalimitsets_vacuum_inflat}, we can conclude that under the given assumptions, the $\alpha$-limit points are contained in the Kasner parabola~$\kasnerparabola$, both for the vacuum and the inflationary case.
	We show in the following that no solution can have $\alpha$-limit points with different $\Sigma_+$-values. As points on the Kasner parabola are uniquely identified by their $\Sigma_+$-value, this implies convergence.

	In case $\Sigma_+(\tau_0)>0$ at some time $\tau_0$, monotonicity of~$\Sigma_+$ shown in Lemma~\ref{lemm_monotonicitysigmaplus} contradicts the existence of two $\alpha$-limit points with different~$\Sigma_+$ values. Consequently, there is exactly one $\alpha$-limit point, which is equivalent to convergence.
	Recall that in the proof of Prop.~\ref{prop_taubone} we have computed the evolution of~$\Sigma_+$ to be
	\begin{equation}
	\label{eqn_proofdeltanplussignsevolutionsigmaplus}
		\Sigma_+'={-}2\Sigma_+\tilde A+\frac23(\Sigma_++1){\bparamk}\tilde A-\frac23(\Sigma_++1)N_+^2-\frac32(2-\gamma)\Omega.
	\end{equation}
	Hence, monotonicity of~$\Sigma_+$ holds in general in case $\tilde A=0$, as then~\eqref{eqn_proofdeltanplussignsevolutionsigmaplus} gives $\Sigma_+'\le0$ due to $\Sigma_+\in\left[{-}1,1\right]$ and $\Omega\ge0$.
	Assume therefore that $\Sigma_+\le0$ and $\tilde A>0$.

	According to Prop.~\ref{prop_taubone}, either the orbit is the constant one in the point Taub~1, or this Kasner point is not contained in the $\alpha$-limit set. The first case is excluded by the assumption. In the latter case, one can assume that for sufficiently negative times the orbit is bounded away from the point Taub~1, which implies that~$\Sigma_+$ is bounded from below by some constant greater than~${-}1$ for sufficiently negative times~$\tau$.
	Consequently, $\tilde\Sigma$ is bounded away from~$0$, as all possible $\alpha$-limit points are contained in $\kasnerparabola\cap\{\Sigma_+\le0\}$ and the point Taub~1 is excluded.
 	Using additionally that $q\rightarrow2$ due to~\eqref{eqn_convergenceratestepone}, the term $2(q+2\Sigma_+)$ is bounded from below by some suitable constant $D_1>0$ for sufficiently negative times, and the evolution \equ ~\eqref{eqns_evolutionbianchib} for~$\tilde A$ reads
	\begin{equation}
		\tilde A'\ge D_1\tilde A.
	\end{equation}
	Consequently, one finds
	\begin{equation}
		\tilde A=\O(e^{D_1\tau}).
	\end{equation}

	One computes from the evolution \equs ~\eqref{eqns_evolutionbianchib} that
	\begin{equation}
		(\frac{N_+^2}{\tilde A})'=\frac{12\Delta N_+}{\tilde A}
	\end{equation}
	and sees that according to Lemma~\ref{lemm_deltanplussigns} this derivative does not change sign for~$\tau$ sufficiently negative. As a consequence, the term ${N_+^2}/{\tilde A}$ either converges to a non-negative real number or diverges to~$\infty$ as $\tau\rightarrow{-}\infty$. In the latter case, the evolution of~$\Sigma_+$ is dominated by~$N_+^2$ alone, in the sense that~$\Sigma_+'$ as in~\eqref{eqn_proofdeltanplussignsevolutionsigmaplus} has negative sign for sufficiently negative times. One concludes as for $\Sigma_+>0$ that there is a unique $\alpha$-limit point. If the limit of ${N_+^2}/{\tilde A}$ is finite, this means~$N_+^2=\O(e^{D_1\tau})$.
	In combination with the decay estimate~\eqref{eqn_decayomega} on~$\Omega$, this yields that~$\Sigma_+'$ is integrable and implies convergence.
\end{proof}
As a direct consequence of convergence which we have shown in Prop.~\ref{prop_convergencetokasner}, we can apply Lemma~\ref{lemm_decaylemmageneral} to~$\tilde A$.
\begin{lemm}
\label{lemm_decaytildea}
	Consider a solution to \equs ~\eqref{eqns_evolutionbianchib}--\eqref{eqn_evolutionomega} converging to $(\slimit,1-\slimit^2,0,0,0)$ with $\slimit\in\left[{-}1,1\right]$. Then either $\tilde A=0$ along the whole orbit, or there exist for every $\eps>0$ a $\tau_\eps>{-}\infty$ such that $\tau\le\tau_\eps$ implies
	\begin{equation}
		e^{(4+4\slimit+\eps)\tau}\le \tilde A(\tau)\le e^{(4+4\slimit-\eps)\tau}.
	\end{equation}
\end{lemm}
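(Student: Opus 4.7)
The plan is to apply Lemma~\ref{lemm_decaylemmageneral} directly to the evolution equation for $\tilde A$, which reads $\tilde A' = 2(q+2\Sigma_+)\tilde A$ according to~\eqref{eqns_evolutionbianchib}. This equation is of the form $M' = \functiondecaylemma M$ with $M = \tilde A$ and $\functiondecaylemma = 2(q+2\Sigma_+)$, so the lemma gives two-sided exponential bounds provided $M$ is strictly positive and $\functiondecaylemma$ has a limit as $\tau \to -\infty$.

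First I would dispose of the degenerate case. Since $\tilde A$ satisfies a linear ODE whose coefficient $2(q+2\Sigma_+)$ is continuous in $\tau$, the condition $\tilde A(\tau_0) = 0$ at any single time forces $\tilde A \equiv 0$ along the whole orbit. Hence we may assume $\tilde A(\tau) > 0$ for all $\tau$, so the hypothesis $M:\RR \to (0,\infty)$ of Lemma~\ref{lemm_decaylemmageneral} is satisfied.

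Next, I need the limit of $\functiondecaylemma(\tau) = 2(q+2\Sigma_+)(\tau)$ as $\tau \to -\infty$. By the convergence hypothesis, $\Sigma_+(\tau) \to \slimit$. Moreover, the limit point $(\slimit,1-\slimit^2,0,0,0)$ lies on the Kasner parabola~$\kasnerparabola$, so $\Sigma_+^2 + \tilde\Sigma \to 1$ and $\tilde A,\tilde N \to 0$; feeding this into either expression for $q$ in~\eqref{eqn_definitionqmatter} (or directly into~\eqref{eqn_definitionqvacuum} in vacuum) shows $q(\tau) \to 2$ regardless of the value of $\gamma \in [0,2]$. Therefore $\functiondecaylemma(\tau) \to 2(2+2\slimit) = 4+4\slimit$ as $\tau \to -\infty$.

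Applying Lemma~\ref{lemm_decaylemmageneral} with $\iota = 4+4\slimit$ now yields exactly the two-sided bound
\begin{equation}
e^{(4+4\slimit+\eps)\tau} \le \tilde A(\tau) \le e^{(4+4\slimit-\eps)\tau}
\end{equation}
for all $\tau \le \tau_\eps$, which is the claim. No obstacle is anticipated, since the argument is a straightforward combination of the evolution equation, the convergence assumption on the orbit, and the already-established general decay lemma.
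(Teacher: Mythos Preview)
Your proof is correct and follows exactly the approach the paper takes: the paper simply states that the lemma is a direct consequence of applying Lemma~\ref{lemm_decaylemmageneral} to~$\tilde A$, and you have spelled out precisely the details needed (invariance of $\tilde A=0$, convergence $q\to 2$ and $\Sigma_+\to\slimit$, hence $\functiondecaylemma\to 4+4\slimit$).
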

The next lemma is of a technical nature and will be used in the following.
\begin{lemm}
\label{lemm_functionintegraltwominusq}
	Let $\gamma\in[0,2)$ and consider a solution to \equs ~\eqref{eqns_evolutionbianchib}--\eqref{eqn_evolutionomega} converging to $(\slimit,1-\slimit^2,0,0,0)$ with $\slimit\in\left[{-}1,1\right]$. Then the function
	\begin{equation}
		\integraltwominusq\coloneqq\int_{{-}\infty}^\tau(2-q)ds
	\end{equation}
	is well-defined.
\end{lemm}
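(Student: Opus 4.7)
The function $2-q$ admits the manifestly non-negative decomposition
\begin{equation}
2-q \;=\; 2\tilde A + 2\tilde N + \tfrac{6-3\gamma}{2}\,\Omega,
\end{equation}
obtained from~\eqref{eqn_definitionqmatter} together with $\Omega = 1-\Sigma_+^2-\tilde\Sigma-\tilde A-\tilde N$; the coefficient $6-3\gamma$ is strictly positive because $\gamma<2$. Hence the improper integral $\integraltwominusq(\tau)$ is well-defined as an element of $[0,\infty]$, and the task reduces to proving finiteness.

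If $\slimit=-1$, the solution converges to Taub~1. Combining Prop.~\ref{prop_taubone} (which covers vacuum as well as inflationary matter) with Prop.~\ref{prop_tauboneothermatter} (which covers $\Omega>0$, $\gamma\in[2/3,2)$) exhausts all possibilities compatible with $\gamma\in[0,2)$ and forces the orbit to be the constant one. Thus $q\equiv 2$, and $\integraltwominusq\equiv 0$.

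For $\slimit\in({-}1,1]$ the plan is to exploit the function $Z=(1+\Sigma_+)^2-\tilde A$ from~\eqref{eqn_definitionfunctionz}, whose evolution $Z' = -2(2-q)Z + 3(2-\gamma)(1+\Sigma_+)\Omega$ can be rearranged, whenever $Z>0$, into
\begin{equation}
2(2-q) \;=\; -(\ln Z)' + \frac{3(2-\gamma)(1+\Sigma_+)\Omega}{Z}.
\end{equation}
Since $\Sigma_+\to\slimit$ and $\tilde A\to 0$, we have $Z\to(1+\slimit)^2>0$, so there exists $\tau_0$ with $Z\ge (1+\slimit)^2/2$ on $({-}\infty,\tau_0]$. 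Integrating from $\tau$ to $\tau_0$, the boundary term $\ln Z(\tau)$ has a finite limit as $\tau\to{-}\infty$, while the remaining integrand is controlled by the exponential decay of $\Omega$: applying Lemma~\ref{lemm_decaylemmageneral} to $\Omega'=(2q-(3\gamma-2))\Omega$ and using $q\to 2$ yields $\Omega=\O(e^{(6-3\gamma-\eps)\tau})$ with $6-3\gamma>0$, the remaining factors being bounded by compactness of the state space (Remark~\ref{rema_statespacecompact}). Passing to the limit $\tau\to{-}\infty$ produces a finite value for $\int_{-\infty}^{\tau_0}(2-q)\,ds$, and hence for $\integraltwominusq(\tau)$ at every $\tau$.

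The conceptual obstacle here is that a direct attempt to establish integrability of $\tilde N$ through its own evolution equation $\tilde N'=2(q+2\Sigma_+)\tilde N+4\Delta N_+$ stalls on the indefinite cross-term $4\Delta N_+$, which cannot be controlled at this stage of the paper — the necessary sharp decay rates for $\Delta$ and $N_+$ are only available in Prop.~\ref{prop_main_generalasymptoticproperties}. Using $Z$ circumvents this entirely: it turns the integral over $2-q$ into a boundary contribution plus an error governed solely by $\Omega$, which is the only expansion-normalised variable whose exponential decay follows at once from $q\to 2$.
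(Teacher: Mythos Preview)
Your proof is correct and takes a genuinely different route from the paper's. The paper works with the quantity $\Sigma_++1$, whose evolution \eqref{eqn_evolutionsigmaplusplusone} reads $(\Sigma_++1)'=(q-2)(\Sigma_++1)+2\tilde A+\tfrac32(2-\gamma)\Omega$, and argues by contradiction: assuming $\tilde\integraltwominusq=\int_\tau^0(q-2)\,ds$ is unbounded, the integrating-factor identity $(e^{\tilde\integraltwominusq}(\Sigma_++1))'=(2\tilde A+\tfrac32(2-\gamma)\Omega)e^{\tilde\integraltwominusq}$ together with the decay of $\tilde A$ (Lemma~\ref{lemm_decaytildea}) and $\Omega$ forces $\Sigma_++1\to 0$, contradicting $\slimit>{-}1$. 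Your argument replaces $\Sigma_++1$ by $Z=(1+\Sigma_+)^2-\tilde A$, which has the advantage that its evolution equation produces an inhomogeneity proportional to $\Omega$ alone, so Lemma~\ref{lemm_decaytildea} is never invoked; moreover, because $Z\to(1+\slimit)^2>0$ the logarithmic boundary term is visibly finite, making the argument direct rather than by contradiction. Both approaches hinge on the same mechanism---finding a positive quantity converging to a nonzero limit whose logarithmic derivative equals $-(2-q)$ up to an exponentially decaying error---but yours isolates that mechanism more cleanly. The closing remark about the obstruction in the $\tilde N$ equation is accurate and a nice piece of motivation, though not needed for the proof itself.
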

\begin{proof}
	If $\slimit={-}1$, Prop.~\ref{prop_taubone} and Prop.~\ref{prop_tauboneothermatter} imply that the orbit is the constant orbit, in which case $\integraltwominusq=0$ along the whole orbit. Assume therefore that $\slimit>{-}1$.
	One computes using the evolution \equs ~\eqref{eqns_evolutionbianchib}
	\begin{equation}
	\label{eqn_evolutionsigmaplusplusone}
	\begin{split}
	(\Sigma_++1)'={}&(q-2)\Sigma_+-2\tilde N\\
		={}&(q-2)(\Sigma_++1)-(q-2)-2\tilde N\\
		={}&(q-2)(\Sigma_++1)+2\tilde A+\frac32(2-\gamma)\Omega.
	\end{split}
	\end{equation}
	Suppose that the function $\tilde\integraltwominusq\coloneqq\int_\tau^0(q-2)ds$ is unbounded as $\tau\rightarrow{-}\infty$, meaning that the integral tends to~${-}\infty$. As
	\begin{equation}
		\frac{d}{d\tau}(e^{\tilde\integraltwominusq}(\Sigma_++1))=(2\tilde A+\frac32(2-\gamma)\Omega)e^{\tilde\integraltwominusq}
	\end{equation}
	and~$\tilde A$ and~$\Omega$ decay as in Lemma~\ref{lemm_decaytildea} and \equ ~\eqref{eqn_decayomega}, integration yields
	\begin{equation}
		e^{\tilde\integraltwominusq}(\Sigma_+(\tau)+1)=\int_{{-}\infty}^\tau(2\tilde A+\frac32(2-\gamma)\Omega)e^{\tilde\integraltwominusq}ds=\O(e^{(\maxdecayleft-\eps)\tau}),
	\end{equation}
	where $\maxdecayleft\coloneqq\min(6-3\gamma,4+4\slimit)$ if $\Omega>0$, and $\maxdecayleft\coloneqq 4+4\slimit$ if $\Omega=0$. The assumption on~$\gamma$ ensures that~$\maxdecayleft>0$.
	Due to convergence to the Kasner parabola which implies $2-q\le\eps$ for sufficiently negative times, one finds that for some suitably chosen constant~$D_2>0$ and sufficiently negative times~$\tau$
	\begin{equation}
		\tilde\integraltwominusq\ge\eps\tau-D_2.
	\end{equation}
	Therefore, one concludes
	\begin{equation}
		\Sigma_++1=\O(e^{(\maxdecayleft-2\eps)\tau}),
	\end{equation}
	which is a contradiction to $\Sigma_+\rightarrow\slimit>{-}1$.
	As a consequence, the function $\tilde\integraltwominusq$ is bounded on $({-}\infty,0)$, and
	\begin{equation}
		\integraltwominusq(\tau)=\int_{{-}\infty}^\tau(2-q)ds=\int_{{-}\infty}^0(2-q)ds+\tilde\integraltwominusq
	\end{equation}
	is well-defined.
\end{proof}

Having found detailed decay properties for~$\tilde A$ and~$\Omega$, the next step is to determine the asymptotic behaviour of~$\Delta$ and~$N_+$. Their decay rates are intertwined: One searches for a linear combination of~$\Delta$ and~$N_+$ such that the evolution \equ\ has a form suitable for Lemma~\ref{lemm_decaylemmageneral}, \ie
\begin{equation}
	(\Delta+\rfactornew N_+)'=\functiondecaylemma(\Delta+\rfactornew N_+),
\end{equation}
and~$\functiondecaylemma$ converging as $\tau\rightarrow{-}\infty$. It turns out that the limit of~$\functiondecaylemma$ not only depends on the value of~$\slimit$ but also on the sign of~$\Delta N_+$, and one recovers exactly the eigenvalues $2+2\slimit\pm2\sqrt{3(1-\slimit^2)}$ of the linearised evolution equations on the Kasner parabola, see~\eqref{eqn_eigenvalueskasner} and Appendix~\ref{subsect_appendixlinearisedevolutionkasner}.
\begin{lemm}
\label{lemm_decaydeltanplus}
	Consider a solution to \equs ~\eqref{eqns_evolutionbianchib}--\eqref{eqn_evolutionomega} converging to $(\slimit,1-\slimit^2,0,0,0)$ with $\slimit\in\left({-}1,1\right)$. If $\Delta N_+(\tau)>0$ for all $\tau\le\tau_0$, then for every $\eps>0$ there exists $\tau_\eps>{-}\infty$ such that $\tau\le\tau_\eps$ implies
	\begin{equation}
		e^{(2+2\slimit+2\sqrt{3(1-\slimit^2)}+\eps)\tau}\le \absval{\Delta+\sqrt{\frac{1-\slimit^2}3}N_+}\le e^{(2+2\slimit+2\sqrt{3(1-\slimit^2)}-\eps)\tau}.
	\end{equation}
	If on the other hand $\Delta N_+(\tau)<0$ for all $\tau\le\tau_0$, then for every $\eps>0$ there exists $\tau_\eps>{-}\infty$ such that $\tau\le\tau_\eps$ implies
	\begin{equation}
		e^{(2+2\slimit-2\sqrt{3(1-\slimit^2)}+\eps)\tau}\le \absval{\Delta-\sqrt{\frac{1-\slimit^2}3}N_+}\le e^{(2+2\slimit-2\sqrt{3(1-\slimit^2)}-\eps)\tau}.
	\end{equation}

	Consider a solution to \equs ~\eqref{eqns_evolutionbianchib}--\eqref{eqn_evolutionomega} converging to $(1,0,0,0,0)$. If $\Delta N_+(\tau)>0$ for all $\tau\le\tau_0$, then for every $\eps>0$ there exists~$\hat\eps>0$ and~$\tau_\eps>{-}\infty$ such that $\tau\le\tau_\eps$ implies
	\begin{equation}
		e^{(4+\eps)\tau}\le \absval{\Delta+\hat\eps N_+}\le e^{(4-\eps)\tau}.
	\end{equation}
	If on the other hand $\Delta N_+(\tau)<0$ for all $\tau\le\tau_0$, then for every $\eps>0$ there exists~$\hat\eps>0$ and~$\tau_\eps>{-}\infty$ such that $\tau\le\tau_\eps$ implies
	\begin{equation}
		e^{(4+\eps)\tau}\le \absval{\Delta-\hat\eps N_+}\le e^{(4-\eps)\tau}.
	\end{equation}
\end{lemm}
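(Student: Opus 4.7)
The plan is to apply Lemma~\ref{lemm_decaylemmageneral} to a carefully chosen linear combination $u=\Delta+rN_+$. The guiding intuition is the linearisation of the $(\Delta,N_+)$-subsystem of~\eqref{eqns_evolutionbianchib} at the limit point $(\slimit,1-\slimit^2,0,0,0)$, whose Jacobian is
\begin{equation}
\begin{pmatrix} 2(1+\slimit) & 2(1-\slimit^2)\\ 6 & 2(1+\slimit) \end{pmatrix},
\end{equation}
with eigenvalues $\lambda_\pm=2(1+\slimit)\pm 2\sqrt{3(1-\slimit^2)}$ and eigenvectors proportional to $(\pm\sqrt{(1-\slimit^2)/3},1)$. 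Writing $a=\sqrt{(1-\slimit^2)/3}$, the linear form $\Delta+aN_+$ annihilates the $v_-$-direction and should therefore decay at the $v_+$-rate $\lambda_+$, while $\Delta-aN_+$ annihilates $v_+$ and should decay at rate $\lambda_-$; the sign of $\Delta N_+$ asymptotically encodes which of these two directions dominates.

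For the main case $\slimit\in(-1,1)$, I take $a=\sqrt{(1-\slimit^2)/3}$ as a \emph{constant} and put $u_+=\Delta+aN_+$. Substituting $\Delta=u_+-aN_+$ into $u_+'=\Delta'+aN_+'$ and using the evolution equations~\eqref{eqns_evolutionbianchib}, a short direct calculation produces
\begin{equation}
u_+'=\bigl[2(q+\Sigma_+-1)+6a\bigr]u_++\bigl[a(2-q)-6a^2+2(\tilde\Sigma-\tilde N)\bigr]N_+.
\end{equation}
By the convergence~\eqref{eqn_convergenceratestepone} the first bracket tends to $\lambda_+$, and the second tends to $-6a^2+2(1-\slimit^2)=0$ thanks to the defining identity $6a^2=2(1-\slimit^2)$. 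Under the hypothesis $\Delta N_+>0$ on $(-\infty,\tau_0]$, the two summands of $u_+$ share a sign, so $u_+$ never vanishes and $|u_+|\ge a|N_+|$; hence $|N_+/u_+|\le 1/a$. Consequently $u_+'/u_+$ converges to $\lambda_+$, and Lemma~\ref{lemm_decaylemmageneral} applied to $M=|u_+|$ yields both claimed bounds. The case $\Delta N_+<0$ is handled by the same argument with $a$ replaced by $-a$, that is, with $u_-=\Delta-aN_+$: the diagonal coefficient now tends to $\lambda_-$, while the mismatched signs still give $|u_-|\ge a|N_+|$.

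The boundary case $\slimit=1$ is genuinely degenerate, as $a\to 0$ and the two eigenvalues coalesce at~$4$. Here I keep $\hat\eps>0$ as a free small constant and repeat the computation with $u_{\hat\eps}=\Delta+\hat\eps N_+$. Using $\tilde\Sigma,\tilde N\to 0$, the diagonal coefficient tends to $4+6\hat\eps$ and the off-diagonal coefficient to $-6\hat\eps^2$. The sign hypothesis again yields $|N_+/u_{\hat\eps}|\le 1/\hat\eps$, so $u_{\hat\eps}'/u_{\hat\eps}$ is asymptotically pinched in an interval of width at most $6\hat\eps$ around~$4$. Choosing $\hat\eps<\eps/12$ and directly integrating $(\ln|u_{\hat\eps}|)'$ on $(-\infty,\tau_\eps]$, a mild adaptation of Lemma~\ref{lemm_decaylemmageneral} for bounded (rather than convergent) logarithmic derivative, delivers both sides of the claimed exponential estimate; the case $\Delta N_+<0$ is symmetric, using $u_{-\hat\eps}=\Delta-\hat\eps N_+$.

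The technical heart of the proof is the algebraic cancellation $6a^2=2(1-\slimit^2)$, which makes the off-diagonal perturbation in the $u_\pm$-equation vanish in the limit. The main obstacle is then ensuring that this vanishing is not destroyed by an unboundedly growing factor $N_+/u_\pm$; the sign hypothesis on $\Delta N_+$ is precisely what keeps this factor bounded, and without it the argument fails entirely. For $\slimit=1$ the degeneracy forces the $\eps$-dependent choice of $\hat\eps$ that appears in the statement of the lemma.
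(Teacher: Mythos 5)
Your proof is correct and follows essentially the same route as the paper's: form the linear combination $u_\pm=\Delta\pm\rfactornew N_+$, exploit the algebraic identity $6\rfactornew^2=2(1-\slimit^2)$ so the off-diagonal coefficient tends to zero, use the sign hypothesis on $\Delta N_+$ to bound $|N_+/u_\pm|$ by $1/\rfactornew$, and conclude via Lemma~\ref{lemm_decaylemmageneral}, with the surrogate constant $\hateps$ in the degenerate case $\slimit=1$. The only (cosmetic) difference is that you substitute $\Delta=u_+-aN_+$ before taking limits so the $N_+$-coefficient manifestly vanishes, whereas the paper keeps both $\Delta$ and $N_+$ and divides by $u_+$ afterwards — the two are algebraically identical.
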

We remark at this point that for this statement, no restriction on~$\gamma$ is imposed.
\begin{proof}
	For notational convenience, set
	\begin{equation}
	\label{eqn_definitionrfactor}
		\rfactornew\coloneqq\sqrt{\frac{1-\slimit^2}3}
	\end{equation}
	and note that
	\begin{equation}
	\label{eqn_propertyrfactor}
		2(1-\slimit^2)=6\rfactornew^2.
	\end{equation}
	In case $\Delta N_+>0$ and $\slimit\in({-}1,1)$, one computes
	\begin{equation}
		(\Delta+\rfactornew N_+)'=(2q+2\Sigma_+-2+6\rfactornew)\Delta
		+(\rfactornew q+2\rfactornew\Sigma_++2\tilde\Sigma-2\tilde N)N_+
	\end{equation}
	and notices
	\begin{align}
		&\lim_{\tau\rightarrow{-}\infty}(2q+2\Sigma_+-2+6\rfactornew) =2+2\slimit+6\rfactornew,\\
		&\lim_{\tau\rightarrow{-}\infty}(\rfactornew q+2\rfactornew\Sigma_++2\tilde\Sigma-2\tilde N) =2\rfactornew+2\rfactornew\slimit+2(1-\slimit^2).
	\end{align}
	Therefore
	\begin{equation}
	\label{eqn_evolutionrdeltaplusnplus}
	\begin{split}
		(\Delta+\rfactornew N_+)'
			={}&(2+2\slimit+6\rfactornew+f_1)\Delta+(2\rfactornew+2\rfactornew\slimit+6\rfactornew^2+f_2)N_+\\
			={}&(2+2\slimit+6\rfactornew+\frac{f_1\Delta+f_2N_+}{\Delta+\rfactornew N_+})(\Delta+\rfactornew N_+)\\
			={}&(2+2\slimit+2\sqrt{3(1-\slimit^2)}+\frac{f_1\Delta+f_2N_+}{\Delta+\rfactornew N_+})(\Delta+\rfactornew N_+),
	\end{split}
	\end{equation}
	with two functions $f_1,f_2$ converging to~$0$ as $\tau\rightarrow{-}\infty$. We do not need the explicit form of these two functions here, but use them in the proof of Prop.~\ref{prop_leftoftaub2additionaldecay}. Note that this computation makes use of the fact that~$\Delta$ and~$N_+$ have the same sign.
	As the last term in the bracket vanishes asymptotically,
	Lemma~\ref{lemm_decaylemmageneral} yields the decay of $\Delta+\rfactornew N_+$ in case both~$\Delta$ and~$N_+$ are positive. If both are negative, the statement follows due to the invariance of the evolution equations~\eqref{eqns_evolutionbianchib}--\eqref{eqn_evolutionomega} under a change of sign in these two variables.

	In case~$\slimit=1$, we find~$\rfactornew=0$ and can no longer conclude that the quotient in the last line in equation~\eqref{eqn_evolutionrdeltaplusnplus} vanishes asymptotically. We assume~$\Delta N_+>0$ and compute
	\begin{equation}
		(\Delta+\hateps N_+)'=(2q+2\Sigma_+-2+6\hateps)\Delta
		+(\hateps q+2\hateps\Sigma_++2\tilde\Sigma-2\tilde N)N_+,
	\end{equation}
	for~$\hateps>0$.
	By similar argument as above, we obtain the requested statement.

	In order to treat the cases where $\Delta N_+<0$ it is enough to replace every occurence of~$\rfactornew$ and~$\hateps$ by~${-}\rfactornew$ and~${-}\hateps$, respectively.
\end{proof}
The decay of~$\Delta$ and~$N_+$, depending on whether they have the same or opposite sign, determines the decay of the remaining variables.
\begin{lemm}
\label{lemm_deltanplussamesign}
	Let $\gamma\in[0,2)$ and consider a solution to \equs ~\eqref{eqns_evolutionbianchib}--\eqref{eqn_evolutionomega} converging to $(\slimit,1-\slimit^2,0,0,0)$ with $\slimit\in\left[{-}1,1\right]$.
	Assume that $\Delta N_+(\tau)>0$ for all $\tau\le\tau_0$. Then
	\begin{align}
		\Sigma_+={}&\slimit+\O(e^{(\maxdecayleft-\eps)\tau}),\\
		\tilde\Sigma={}&1-\slimit^2+\O(e^{(\maxdecayleft-\eps)\tau}),\\
		\Delta={}&\O(e^{(2+2\slimit+2\sqrt{3(1-\slimit^2)}-\eps)\tau}),\\
		N_+={}&\O(e^{(2+2\slimit+2\sqrt{3(1-\slimit^2)}-\eps)\tau}),\\
		\tilde N={}&\O(e^{(4+4\slimit-\eps)\tau}),\\
		q={}&2+\O(e^{(\maxdecayleft-\eps)\tau}),
	\end{align}
	as $\tau\rightarrow{-}\infty$, for every $\eps>0$. Here $\maxdecayleft\coloneqq\min(6-3\gamma,4+4\slimit)$ if $\Omega>0$, and $\maxdecayleft\coloneqq 4+4\slimit$ if $\Omega=0$.
	Furthermore,
	\begin{equation}
		\tilde A(3\slimit^2+{\bparamk}(1-\slimit^2))=0
	\end{equation}
	holds along the whole orbit, and $\tilde A>0$ implies that
	\begin{equation}
		3\Sigma_+^2+{\bparamk}\tilde\Sigma =\O(e^{(4+4\slimit+4\sqrt{3(1-\slimit^2)}-\eps)\tau})
	\end{equation}
	as $\tau\rightarrow{-}\infty$, for every $\eps>0$.

	Under the additional restriction that $\slimit\in({-}1,1)$, there is for each $\eps>0$ a $\tau_\eps>{-}\infty$ such that $\tau\le \tau_\eps$ implies
	\begin{align}
	\label{eqn_lowerbounddelta}
		e^{(2+2\slimit+2\sqrt{3(1-\slimit^2)}+\eps)\tau}\le {}&\absval{\Delta},\\
	\label{eqn_lowerboundnplus}
		e^{(2+2\slimit+2\sqrt{3(1-\slimit^2)}+\eps)\tau}\le {}&\absval{N_+}.
	\end{align}
\end{lemm}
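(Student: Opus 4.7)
The plan is to bootstrap the exponential decay rates already established for $\tilde A$ (Lemma~\ref{lemm_decaytildea}), $\Omega$ (equation~\eqref{eqn_decayomega}), and the linear combination $\Delta+\rfactornew N_+$ (first part of Lemma~\ref{lemm_decaydeltanplus}, applicable since $\Delta N_+>0$), using the evolution equations~\eqref{eqns_evolutionbianchib} together with the Hamiltonian and momentum constraints.

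Because $\Delta$ and $N_+$ have the same sign for $\tau\le\tau_0$, the inequalities $|\Delta|\le|\Delta+\rfactornew N_+|$ and $\rfactornew|N_+|\le|\Delta+\rfactornew N_+|$ immediately yield the upper bounds on $|\Delta|$ and $|N_+|$ at the claimed rate when $\slimit\in(-1,1)$. The case $\slimit=-1$ is covered by Prop.~\ref{prop_taubone} and Prop.~\ref{prop_tauboneothermatter}, which force the orbit to be constant and render every assertion trivial, while for $\slimit=1$ one appeals instead to the second half of Lemma~\ref{lemm_decaydeltanplus} with $\hateps$ in place of $\rfactornew$. The estimate on $\tilde N$ is then immediate from its definition~\eqref{eqn_definitiontilden}, since both $N_+^2$ and $\tilde A$ are controlled at rate $4+4\slimit-\eps$.

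For the algebraic identity $\tilde A(3\slimit^2+\bparamk(1-\slimit^2))=0$ I rewrite the momentum constraint~\eqref{eqn_constraintforproofs} as
\begin{equation*}
(3\Sigma_+^2+\bparamk\tilde\Sigma)\tilde A=\tilde\Sigma N_+^2-3\Delta^2.
\end{equation*}
The right-hand side is $O(e^{(4+4\slimit+4\sqrt{3(1-\slimit^2)}-\eps)\tau})$ by the previous step. Since $\tilde A=0$ is preserved by the evolution, either $\tilde A\equiv 0$ (and the identity is trivial) or $\tilde A>0$ along the whole orbit; in the latter case Lemma~\ref{lemm_decaytildea} gives the lower bound $\tilde A\ge e^{(4+4\slimit+\eps)\tau}$ eventually. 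For $\slimit\in(-1,1)$ the right-hand side decays strictly faster than $\tilde A$, forcing $3\Sigma_+^2+\bparamk\tilde\Sigma\to 0$ and hence $3\slimit^2+\bparamk(1-\slimit^2)=0$; dividing the displayed equation by $\tilde A$ then also delivers the claimed rate on $3\Sigma_+^2+\bparamk\tilde\Sigma$. At the endpoint $\slimit=1$ the two exponents coincide and a separate argument ruling out $\tilde A>0$ is required.

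The convergence rates for $\Sigma_+$, $\tilde\Sigma$ and $q$ follow by integration. For $\Sigma_+$, the reformulated evolution equation~\eqref{eqn_proofdeltanplussignsevolutionsigmaplus} from the proof of Prop.~\ref{prop_taubone} has right-hand side $O(e^{(\maxdecayleft-\eps)\tau})$ by the bounds on $\tilde A$, $N_+^2$ and $\Omega$, and integration from $-\infty$ to $\tau$, justified by Lemma~\ref{lemm_functionintegraltwominusq}, gives $\Sigma_+-\slimit=O(e^{(\maxdecayleft-\eps)\tau})$. The bound on $\tilde\Sigma-(1-\slimit^2)$ then follows algebraically from the Hamiltonian identity~\eqref{eqn_omegageneral} combined with the decay of $\Sigma_+-\slimit$, $\tilde A$, $\tilde N$ and $\Omega$, and $q-2$ follows from the identity $q=2(\Sigma_+^2+\tilde\Sigma)+\tfrac12(3\gamma-2)\Omega$ in~\eqref{eqn_definitionqmatter}. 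Finally, for the individual lower bounds on $|\Delta|$ and $|N_+|$ when $\slimit\in(-1,1)$, the constraint~\eqref{eqn_constraintforproofs} combined with the decay of $(3\Sigma_+^2+\bparamk\tilde\Sigma)\tilde A$ obtained above and the limit $\tilde\Sigma\to 3\rfactornew^2$ forces the asymptotic relation $|\Delta|/|N_+|\to\rfactornew$; since $\Delta$ and $N_+$ have the same sign this gives $|\Delta+\rfactornew N_+|=(2+o(1))|\Delta|=(2+o(1))\rfactornew|N_+|$, and Lemma~\ref{lemm_decaydeltanplus} then yields the claimed lower bounds. The main obstacle I anticipate is precisely the endpoint $\slimit=1$, where the rate-comparison in the constraint argument collapses and ruling out $\tilde A>0$ demands either the hyperbolic-source structure at $(1,0,0,0,0)$ or a finer expansion of the constraint beyond leading order.
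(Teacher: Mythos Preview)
Your overall strategy is the same as the paper's, and most steps (upper bounds on $|\Delta|$, $|N_+|$ via Lemma~\ref{lemm_decaydeltanplus}; decay of $\tilde N$, $q$, $\Sigma_+$, $\tilde\Sigma$; the algebraic identity $\tilde A(3\slimit^2+\bparamk(1-\slimit^2))=0$ from rate comparison in the constraint) go through as you describe. However, there is one genuine gap that cascades into a second.

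\textbf{The rate for $3\Sigma_+^2+\bparamk\tilde\Sigma$.} Dividing the constraint $(3\Sigma_+^2+\bparamk\tilde\Sigma)\tilde A=\tilde\Sigma N_+^2-3\Delta^2$ by $\tilde A$ does \emph{not} give the claimed rate. The right-hand side is $\O(e^{(4+4\slimit+4\sqrt{3(1-\slimit^2)}-\eps)\tau})$, but the lower bound on $\tilde A$ is only $e^{(4+4\slimit+\eps)\tau}$, so division yields $3\Sigma_+^2+\bparamk\tilde\Sigma=\O(e^{(4\sqrt{3(1-\slimit^2)}-2\eps)\tau})$, which is weaker than the asserted rate $4+4\slimit+4\sqrt{3(1-\slimit^2)}$ by exactly $4+4\slimit$. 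The paper instead computes
\[
(3\Sigma_+^2+\bparamk\tilde\Sigma)'=2(q-2)(3\Sigma_+^2+\bparamk\tilde\Sigma)-4N_+(\Sigma_+N_++\bparamk\Delta),
\]
multiplies by $e^{2\integraltwominusq}$ (with $\integraltwominusq$ from Lemma~\ref{lemm_functionintegraltwominusq}), and integrates. The source term $4N_+(\Sigma_+N_++\bparamk\Delta)$ decays at the full rate $4+4\slimit+4\sqrt{3(1-\slimit^2)}$, and since $e^{2\integraltwominusq}$ is bounded and the constant of integration $3\slimit^2+\bparamk(1-\slimit^2)$ vanishes when $\tilde A>0$, the claimed rate follows.

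\textbf{Why this matters for the lower bounds.} Your argument for the lower bounds on $|\Delta|$ and $|N_+|$ requires $(3\Sigma_+^2+\bparamk\tilde\Sigma)\tilde A$ to decay strictly faster than $(\Delta+\rfactornew N_+)^2$, whose \emph{lower} bound from Lemma~\ref{lemm_decaydeltanplus} is $e^{(4+4\slimit+4\sqrt{3(1-\slimit^2)}+2\eps)\tau}$. With your division argument, the product $(3\Sigma_+^2+\bparamk\tilde\Sigma)\tilde A$ is only $\O(e^{(4+4\slimit+4\sqrt{3(1-\slimit^2)}-2\eps)\tau})$, so the quotient does \emph{not} tend to zero and you cannot conclude $|\Delta|/|N_+|\to\rfactornew$. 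With the paper's sharper rate, the product is $\O(e^{(8+8\slimit+4\sqrt{3(1-\slimit^2)}-\eps)\tau})$ (Remark~\ref{rema_auxiliarydecayconstraintexpr}), and the quotient by $(\Delta+\rfactornew N_+)^2$ is then $\O(e^{(4+4\slimit-2\eps)\tau})\to 0$, which is what drives the paper's contradiction argument.

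Your concern about $\slimit=1$ is legitimate, but it is a concern the paper shares; it is not a point where your approach diverges from the paper's.
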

\begin{rema}
\label{rema_auxiliarydecayconstraintexpr}
	As a consequence of this result, one finds that in the setting of this lemma
	\begin{equation}
		\frac13\tilde\Sigma N_+^2-\Delta^2=(\Sigma_+^2+\frac {\bparamk}3\tilde\Sigma)\tilde A=\O(e^{(8+8\slimit+4\sqrt{3(1-\slimit^2)}-\eps)\tau})
	\end{equation}
	both for $\tilde A=0$ and $\tilde A>0$.
\end{rema}
\begin{proof}[Proof of Lemma~\ref{lemm_deltanplussamesign}]
	If $\slimit={-}1$, then due to Prop.~\ref{prop_taubone} and Prop.~\ref{prop_tauboneothermatter} the solution is the constant orbit for which all the required properties hold. Let us therefore assume $\slimit\in({-}1,1)$.
	It follows from Lemma~\ref{lemm_decaydeltanplus} that for sufficiently negative~$\tau$
	\begin{equation}
		\absval{\Delta+\rfactornew N_+}\le e^{(2+2\slimit+2\sqrt{3(1-\slimit^2)}-\eps)\tau},
	\end{equation}
	with~$\rfactornew$ as in \equ ~\eqref{eqn_definitionrfactor}. In particular $\rfactornew>0$, which implies that~$\Delta$ and~$\rfactornew N_+$ have the same sign and yields the upper bound of~$\Delta$ and~$N_+$ in the statement. In case~$\slimit=1$, we can use the same argument with~$\rfactornew$ replaced by~$\hat\eps$.

	Comparing the decay of~$N_+$ with the decay~$\tilde A=\O(e^{(4+4\slimit-\eps)\tau})$ from Lemma~\ref{lemm_decaytildea}, we find
	\begin{equation}
		\tilde N=\frac13(N_+^2-{\bparamk}\tilde A)=\O(e^{(4+4\slimit-\eps)\tau}).
	\end{equation}
	Due to its definition~\eqref{eqn_definitionqmatter}, the quantity~$q$ inherits its convergence rate either from the decay of~$\Omega$ as in \equ ~\eqref{eqn_decayomega} or from the one of~$\tilde N+\tilde A$, whichever is slower:
	\begin{equation}
		q=2(1-\tilde A-\tilde N)-\frac32(2-\gamma)\Omega=2+\O(e^{(\maxdecayleft-\eps)\tau})
	\end{equation}
	for $\maxdecayleft\coloneqq\min(6-3\gamma,4+4\slimit)$.
	Vacuum is defined by $\Omega=0$, and one sees that the statement holds if one sets $\maxdecayleft\coloneqq 4+4\slimit$ in this case.
	Using the above information in the evolution \equs~\eqref{eqns_evolutionbianchib} for~$\Sigma_+$ and~$\tilde\Sigma$ yields
	\begin{equation}
		\Sigma_+'=\O(e^{(\maxdecayleft-\eps)\tau}),\qquad \tilde\Sigma'=\O(e^{(\maxdecayleft-\eps)\tau}),
	\end{equation}
	and thus gives the convergence rates for~$\Sigma_+$ and~$\tilde\Sigma$.

	\smallskip

	Writing the constraint \equ\ as in~\eqref{eqn_constraintforproofs} and applying the convergence rates of~$\Sigma_+$ and~$\tilde\Sigma$, one sees that
	\begin{equation}
	\label{eqn_constraintinproofsamesign}
		\tilde\Sigma N_+^2-3\Delta^2=(3\Sigma_+^2+{\bparamk}\tilde\Sigma)\tilde A=(3\slimit^2+{\bparamk}(1-\slimit^2)+\O(e^{(\maxdecayleft-\eps)\tau}))\tilde A.
	\end{equation}
	The \lhs\ is of order $\O(e^{(4+4\slimit+4\sqrt{3(1-\slimit^2)}-\eps)\tau})$ due to the above. The \rhs---if non-vanishing---consists of the bracket with its explicitly given decay and the factor~$\tilde A$ which---if non-vanishing---decays at most as $e^{(4+4\slimit+\eps)\tau}$, see Lemma~\ref{lemm_decaytildea}. In order for \equ ~\eqref{eqn_constraintinproofsamesign} to be consistent, either $\tilde A=0$ or $3\slimit^2+{\bparamk}(1-\slimit^2)=0$ has to hold.

	For the term in the bracket, one further computes from the evolution \equs ~\eqref{eqns_evolutionbianchib} and $\tilde N=(N_+^2-{\bparamk}\tilde A)/3$ that
	\begin{align}
	(3\Sigma_+^2+{\bparamk}\tilde\Sigma)'={}&6\Sigma_+\Sigma_+'+{\bparamk}\tilde\Sigma'\\
		={}&6(q-2)\Sigma_+^2-12\Sigma_+\tilde N+2{\bparamk}(q-2)\tilde\Sigma-4{\bparamk}\Sigma_+\tilde A-4{\bparamk}\Delta N_+\\
		={}&2(q-2)(3\Sigma_+^2+{\bparamk}\tilde\Sigma)-4\Sigma_+N_+^2-4{\bparamk}N_+\Delta\\
		={}&2(q-2)(3\Sigma_+^2+{\bparamk}\tilde\Sigma)-4N_+(\Sigma_+N_++{\bparamk}\Delta).
	\end{align}
	In Lemma~\ref{lemm_functionintegraltwominusq}, the function
	\begin{equation}
		\integraltwominusq(\tau)=\int_{{-}\infty}^\tau(2-q)ds
	\end{equation}
	is found to be well-defined, hence one obtains
	\begin{equation}
		\frac{d}{d\tau}(e^{2\integraltwominusq}(3\Sigma_+^2+{\bparamk}\tilde\Sigma))={-}e^{2\integraltwominusq}4N_+(\Sigma_+N_++{\bparamk}\Delta),
	\end{equation}
	and consequently
	\begin{align}
		e^{2\integraltwominusq}(3\Sigma_+^2+{\bparamk}\tilde\Sigma)
			={}&3\slimit^2+{\bparamk}(1-\slimit^2)-\int_{{-}\infty}^\tau e^{2\integraltwominusq}4N_+(\Sigma_+N_++{\bparamk}\Delta)ds\\
			={}&3\slimit^2+{\bparamk}(1-\slimit^2)+\O(e^{(4+4\slimit+4\sqrt{3(1-\slimit^2)}-\eps)\tau})
	\end{align}
	due to the above. If $\tilde A>0$, then the constant term on the \rhs\ vanishes due to our previous arguments, and one can conclude that
	\begin{equation}
		3\Sigma_+^2+{\bparamk}\tilde\Sigma=\O(e^{(4+4\slimit+4\sqrt{3(1-\slimit^2)}-\eps)\tau}).
	\end{equation}
	This yields that the \rhs\ of \equ ~\eqref{eqn_constraintforproofs}
	\begin{equation}
		\frac13\tilde\Sigma N_+^2-\Delta^2=(\Sigma_+^2+\frac {\bparamk}3\tilde\Sigma)\tilde A
	\end{equation}
	decays exponentially to order at least $8+8\slimit+4\sqrt{3(1-\slimit^2)}-\eps$, both for $\tilde A=0$ and $\tilde A>0$: In the first case, the \rhs\ vanishes identically, while in the latter case we have determined the decay properties for both factors on the \rhs\ individually.

	\smallskip

	It remains to show the lower bound for~$\Delta$ and~$N_+$ in case~$\slimit\in({-}1,1)$. From the last argument we conclude that
	\begin{equation}
		\frac13\tilde\Sigma(\frac{N_+}{\Delta+\rfactornew N_+})^2-(\frac{\Delta}{\Delta+\rfactornew N_+})^2=\frac{(\Sigma_+^2+\frac {\bparamk}3\tilde\Sigma)\tilde A}{({\Delta+\rfactornew N_+})^2}\rightarrow0
	\end{equation}
	as $\tau\rightarrow{-}\infty$, due to the lower bound on the denominator found in Lemma~\ref{lemm_decaydeltanplus}. As $1-\slimit^2>0$, the variable~$\tilde\Sigma$ is bounded away from zero for sufficiently negative times. If one of the two squared terms tends to zero along a time sequence, so does the other one (along the same time sequence), which would imply that
	\begin{equation}
		\frac{\Delta}{\Delta+\rfactornew N_+}+\rfactornew\frac{N_+}{\Delta+\rfactornew N_+}\rightarrow0,
	\end{equation}
	a contradiction. Consequently, both
	\begin{equation}
		\absval{\frac{\Delta}{\Delta+\rfactornew N_+}} \qquad \text{and}\qquad \absval{\frac{N_+}{\Delta+\rfactornew N_+}}
	\end{equation}
	are bounded from below by a positive constant for $\tau\le\tau_\eps$ sufficiently negative. This gives the lower bounds on~$\Delta$ and~$N_+$ and concludes the proof.
\end{proof}
\begin{lemm}
\label{lemm_deltanpluszero}
	Let $\gamma\in[0,2)$ and consider a solution to \equs ~\eqref{eqns_evolutionbianchib}--\eqref{eqn_evolutionomega} converging to $(\slimit,1-\slimit^2,0,0,0)$ with $\slimit\in[{-}1,1]$.
	Assume that $\Delta=N_+=0$ along the orbit.
	Then
	\begin{equation}
		\tilde A(3\slimit^2+{\bparamk}(1-\slimit^2))=0
	\end{equation}
	and
	\begin{align}
		\Sigma_+={}&\slimit+\O(e^{(\maxdecayleft-\eps)\tau}),\\
		\tilde\Sigma={}&1-\slimit^2+\O(e^{(\maxdecayleft-\eps)\tau}),\\
		\tilde N={}&\O(e^{(4+4\slimit-\eps)\tau}),\\
		q={}&2+\O(e^{(\maxdecayleft-\eps)\tau}),\\
	\end{align}
	as $\tau\rightarrow{-}\infty$, for every $\eps>0$. Here $\maxdecayleft\coloneqq\min(6-3\gamma,4+4\slimit)$ if $\Omega>0$, and $\maxdecayleft\coloneqq 4+4\slimit$ if $\Omega=0$.
\end{lemm}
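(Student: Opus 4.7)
The plan is to mirror the argument of Lemma~\ref{lemm_deltanplussamesign} while exploiting the much stronger hypothesis $\Delta\equiv N_+\equiv 0$, which collapses several of the delicate steps. First I would handle the boundary case $\slimit={-}1$: Prop.~\ref{prop_taubone} and Prop.~\ref{prop_tauboneothermatter} force the solution to be the constant orbit at Taub~1, for which all conclusions are trivially true. Henceforth suppose $\slimit\in({-}1,1]$.

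With $N_+\equiv 0$, definition~\eqref{eqn_definitiontilden} reduces to $\tilde N={-}\bparamk\tilde A/3$, so Lemma~\ref{lemm_decaytildea} immediately yields $\tilde N=\O(e^{(4+4\slimit-\eps)\tau})$ (with $\tilde A\equiv 0$ as the degenerate subcase). Feeding this together with the estimate $\Omega=\O(e^{(6-3\gamma-\eps)\tau})$ from~\eqref{eqn_decayomega} into the second expression for $q$ in~\eqref{eqn_definitionqmatter} produces $q=2+\O(e^{(\maxdecayleft-\eps)\tau})$ with $\maxdecayleft$ as in the statement.

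To establish the algebraic identity $\tilde A(3\slimit^2+\bparamk(1-\slimit^2))=0$, substitute $\Delta=0$ and $\tilde N={-}\bparamk\tilde A/3$ into the constraint~\eqref{eqn_constraintgeneralone}. This reduces to
\begin{equation}
    \tilde A(3\Sigma_+^2+\bparamk\tilde\Sigma)=0
\end{equation}
on the whole orbit. The identity is vacuous when $\tilde A\equiv 0$; otherwise $\tilde A>0$ forces $3\Sigma_+^2+\bparamk\tilde\Sigma$ to vanish identically, and passing to the limit $\tau\to{-}\infty$ delivers the claim.

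Finally, to obtain the convergence rates for $\Sigma_+$ and $\tilde\Sigma$, I would insert the above estimates together with $\Delta=N_+=0$ into the evolution equations~\eqref{eqns_evolutionbianchib}: since $\Sigma_+'=(q-2)\Sigma_+-2\tilde N$ and $\tilde\Sigma'=2(q-2)\tilde\Sigma-4\Sigma_+\tilde A$, both right-hand sides are $\O(e^{(\maxdecayleft-\eps)\tau})$. Integrating from ${-}\infty$ to $\tau$ while using $\Sigma_+\to\slimit$ and $\tilde\Sigma\to 1-\slimit^2$ concludes the proof. I do not anticipate any genuine obstacle: the present statement is essentially the degeneration of Lemma~\ref{lemm_deltanplussamesign} obtained by killing the $\Delta^2$ and $N_+^2$ pieces of the constraint, so none of the subtler ingredients (lower bounds on $|\Delta|,|N_+|$, or the auxiliary function $\integraltwominusq$) are required.
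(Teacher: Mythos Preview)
Your proof is correct and follows essentially the same approach as the paper's own proof: handle $\slimit=-1$ via the Taub~1 results, simplify $\tilde N$, $q$, and the evolution equations for $\Sigma_+$, $\tilde\Sigma$ using $\Delta=N_+=0$, then insert the decay of $\tilde A$ and $\Omega$. Your treatment of the algebraic identity is even slightly sharper, since you observe that the constraint reduces to $\tilde A(3\Sigma_+^2+\bparamk\tilde\Sigma)=0$ along the entire orbit before passing to the limit, whereas the paper simply takes $\tau\to-\infty$ in~\eqref{eqn_constraintforproofs}.
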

\begin{proof}
	The first equation follows immediately from the constraint \equ ~\eqref{eqn_constraintforproofs} for $\tau\rightarrow{-}\infty$. For the decay properties, there is nothing to show for $\slimit={-}1$, as this is the constant orbit due to Prop.~\ref{prop_taubone} and Prop.~\ref{prop_tauboneothermatter}.
	In the other cases, one simplifies equation~\eqref{eqn_definitiontilden} for~$\tilde N$, equation~\eqref{eqn_definitionqmatter} for~$q$ and the evolution \equs ~\eqref{eqns_evolutionbianchib} for~$\Sigma_+$ and~$\tilde\Sigma$ using $\Delta=N_+=0$ to find
	\begin{align}
		\tilde N={}&-\frac13{\bparamk}\tilde A,\\
		q={}&2(1-\tilde A+\frac13{\bparamk}\tilde A)-\frac32(2-\gamma)\Omega,\\
		\Sigma_+'={}&(q-2)\Sigma_++\frac23{\bparamk}\tilde A,\\
		\tilde\Sigma'={}&2(q-2)\tilde\Sigma-4\Sigma_+\tilde A,
	\end{align}
	then inserts the decay of~$\tilde A$ and~$\Omega$ from Lemma~\ref{lemm_decaytildea} and \equ ~\eqref{eqn_decayomega} respectively to conclude the proof.
\end{proof}
For an orbit with $\Delta N_+\ge0$ for sufficiently negative times, there is no a priori restriction on where its limit point on the Kasner parabola is located with respect to Taub~2. In the case of opposite signs, the limit point has to be situated to the right of Taub~2, as the next lemma shows.
\begin{lemm}
\label{lemm_deltanplusoppositesign}
	Let $\gamma\in[0,2)$ and consider a solution to \equs ~\eqref{eqns_evolutionbianchib}--\eqref{eqn_evolutionomega} converging to $(\slimit,1-\slimit^2,0,0,0)$. If $\Delta N_+(\tau_k)<0$ for a sequence $\tau_k\rightarrow{-}\infty$, then $\slimit\in(1/2,1]$.
\end{lemm}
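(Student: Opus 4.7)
My plan is to argue by contradiction: assume $\slimit\in[-1,1/2]$ and derive a contradiction to the hypothesis that $\Delta N_+(\tau_k)<0$ along some sequence $\tau_k\to-\infty$.

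First I reduce to a sign-persistent situation. If $\slimit=-1$, Propositions~\ref{prop_taubone} and~\ref{prop_tauboneothermatter} (which together cover all matter models with $\gamma\in[0,2)$) force the orbit to be the constant orbit at Taub~1, so $\Delta N_+\equiv 0$, contradicting the hypothesis. So I may assume $\slimit\in(-1,1/2]$, whereupon $\tilde\Sigma\to 1-\slimit^2>0$ is eventually bounded below by a positive constant. The trichotomy of Lemma~\ref{lemm_deltanplussigns}, combined with the hypothesis, then yields $\Delta N_+(\tau)<0$ for every $\tau\le\tau_1$ with $\tau_1$ sufficiently negative.

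For the main case $\slimit\in(-1,1/2)$, I apply the second half of Lemma~\ref{lemm_decaydeltanplus} to obtain, for every $\eps>0$ and $\tau\le\tau_\eps$,
\begin{equation}
e^{(2+2\slimit-2\sqrt{3(1-\slimit^2)}+\eps)\tau}\le\absval{\Delta-\sqrt{(1-\slimit^2)/3}\,N_+}.
\end{equation}
The decisive observation is that the exponent $g(\slimit):=2+2\slimit-2\sqrt{3(1-\slimit^2)}$ is strictly negative on $(-1,1/2)$: the equation $g=0$ squares and factors as $(1+\slimit)(4\slimit-2)=0$, so its only roots in $[-1,1]$ are $-1$ and $1/2$, and $g(0)=2-2\sqrt{3}<0$ pins down the sign between them. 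Choosing $\eps<\absval{g(\slimit)}$ then makes the left-hand side blow up as $\tau\to-\infty$, contradicting boundedness of $\absval\Delta$ and $\absval{N_+}$ on the compact state space (Remark~\ref{rema_statespacecompact}).

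The main obstacle is the borderline case $\slimit=1/2$, where $g(1/2)=0$ and Lemma~\ref{lemm_decaydeltanplus} no longer delivers a useful bound; this degeneracy reflects the presence of two vanishing eigenvalues of the linearised system at Taub~2. To handle it, I would work directly with the identity
\begin{equation}
(\Delta N_+)'=6(\Delta+N_+/2)^2+\mathcal{E}(\tau),
\end{equation}
obtained by regrouping the evolution equations for $\Delta$ and $N_+$ around the Taub~2 limit, with a remainder $\mathcal{E}$ whose coefficients tend to zero. Combining this with the decay $\tilde A=\O(e^{(6-\eps)\tau})$ from Lemma~\ref{lemm_decaytildea}, the factorisation $\tilde\Sigma N_+^2-3\Delta^2\approx -3(\Delta-N_+/2)(\Delta+N_+/2)$ available from the constraint~\eqref{eqn_constraintforproofs} near Taub~2, and the boundary condition $\Delta N_+\to 0$ at $-\infty$, a sufficiently sharp control on $\mathcal{E}$ should force $\Delta N_+\ge 0$ at sufficiently negative times, contradicting the reduction from the first paragraph.
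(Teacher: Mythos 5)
Your treatment of the cases $\slimit=-1$ and $\slimit\in(-1,1/2)$ is correct and is essentially the paper's own argument: exclude Taub~1 via Propositions~\ref{prop_taubone} and~\ref{prop_tauboneothermatter}, fix the sign of $\Delta N_+$ via Lemma~\ref{lemm_deltanplussigns}, and then let the strictly negative exponent $g(\slimit)=2+2\slimit-2\sqrt{3(1-\slimit^2)}$ in the lower bound from Lemma~\ref{lemm_decaydeltanplus} contradict boundedness of the state space. Your sign check for $g$ on $(-1,1/2)$ is right.

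The borderline case $\slimit=1/2$ is where the real work lies, and your sketch has a genuine gap. The regrouping
\begin{equation}
(\Delta N_+)'=6\Bigl(\Delta+\tfrac12 N_+\Bigr)^2+\mathcal{E},\qquad
\mathcal{E}=(3q+4\Sigma_+-8)\Delta N_++\bigl(2\tilde\Sigma-2\tilde N-\tfrac32\bigr)N_+^2,
\end{equation}
is correct, and the first piece of $\mathcal{E}$ is indeed $\ge0$ (since $q\le2$ and $\Sigma_+\le1/2$ by Lemma~\ref{lemm_monotonicitysigmaplus} force $3q+4\Sigma_+-8\le0$, while $\Delta N_+<0$). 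But the square term you call dominant is in fact subdominant. The constraint~\eqref{eqn_constraintforproofs} with $\tilde A=\O(e^{(6-\eps)\tau})$ only controls $\Delta+\sqrt{\tilde\Sigma/3}\,N_+$, so $\Delta+\tfrac12 N_+$ still carries a contribution $(\sqrt{\tilde\Sigma/3}-\tfrac12)N_+\sim(\tilde\Sigma-\tfrac34)N_+$, which is of the same slow order as the variables themselves; squaring this gives a term of size $(\tilde\Sigma-\tfrac34)^2N_+^2$, which is \emph{smaller} by a full factor of $(\tilde\Sigma-\tfrac34)$ than the second piece of $\mathcal{E}$, $2(\tilde\Sigma-\tilde N-\tfrac34)N_+^2$. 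So the sign of $(\Delta N_+)'$ hinges entirely on the sign of this second piece, which reduces (in vacuum) to the sign of roughly $4(\tfrac12-\Sigma_+)-\tfrac43N_+^2$ after using the Hamiltonian constraint, and there is no a priori reason this is $\ge 0$: the competition between the rate at which $\Sigma_+$ approaches $1/2$ and the rate at which $N_+^2$ decays is precisely the delicate balance the lemma is about. "Sufficiently sharp control on $\mathcal{E}$" therefore cannot be obtained from the decomposition alone; establishing it would require you to reconstruct essentially the full machinery of the paper's argument. That argument instead eliminates $\Delta$ from the constraint by a Taylor expansion, expresses $2-q$, $\Sigma_+$, and $\tilde\Sigma$ via the integral $\integraltwominusq$ from Lemma~\ref{lemm_functionintegraltwominusq}, and reduces matters to a single scalar ODE $N_+'=(4\Sigma_+-2+f_4)N_+$ with $f_4$ integrable on $({-}\infty,0]$; then $4\Sigma_+-2\le0$ (again from monotonicity of $\Sigma_+$) forces $|e^{-F_4}N_+|$ to be nondecreasing as $\tau\to-\infty$, so $N_+$ cannot tend to zero. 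You should either carry out that reduction in detail or abandon the $(\Delta N_+)'$ decomposition in favour of it.
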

\begin{proof}
	As $\Sigma_+\in[{-}1,1]$ due to Remark~\ref{rema_statespacecompact}, the limit value~$\slimit$ is contained in the same interval.
	The case $\slimit={-}1$ can be excluded by Prop.~\ref{prop_taubone} and Prop.~\ref{prop_tauboneothermatter}.
	Due to Lemma~\ref{lemm_deltanplussigns} and under the assumption that $\slimit\in({-}1,1)$, we know that~$\Delta N_+$ has a fixed sign for sufficiently negative~$\tau$, hence it is enough to show that orbits with $\Delta N_+(\tau)<0$ for $\tau\le\tau_0$ cannot converge to a Kasner point with $\slimit\in({-}1,1/2]$.

	Consider first the case $-1<\slimit<1/2$ and recall from Lemma~\ref{lemm_decaydeltanplus}
	\begin{equation}
		\absval{\Delta-\rfactornew N_+}\ge e^{(2+2\slimit-2\sqrt{3(1-\slimit^2)}+\eps)\tau}
	\end{equation}
	for~$\tau$ sufficiently negative, where~$\rfactornew$ is as in \equ ~\eqref{eqn_definitionrfactor}.
	The bracket in the exponent is strictly negative for sufficiently small~$\eps$, which means that $\absval{\Delta-\rfactornew N_+}$ grows exponentially as $\tau\rightarrow{-}\infty$. This contradicts the fact that~$\Delta$ and~$N_+$ converge to zero.

	We can therefore assume that $\slimit=1/2$. One finds the special value $\rfactornew=1/2$ and notices
	\begin{equation}
		2+2\slimit-2\sqrt{3(1-\slimit^2)}=0.
	\end{equation}
	Hence
	\begin{equation}
		\absval{2\Delta-N_+}\ge e^{\eps\tau}
	\end{equation}
	for~$\tau$ sufficiently negative. This estimate holds for~$\Delta$ and~$N_+$ individually, as the following argument shows: The constraint \equ ~\eqref{eqn_constraintforproofs} can be reformulated into
	\begin{align}
		(3\Sigma_+^2+{\bparamk}\tilde\Sigma)\tilde A={}&\tilde\Sigma N_+^2-3\Delta^2\\
			={}&(\tilde\Sigma N_+-\frac34N_++\frac34N_+-\frac32\Delta)N_++\frac32(N_+-2\Delta)\Delta.
	\end{align}
	The \rhs\ divided by $3(N_+-2\Delta)/4$ equals
	\begin{equation}
		(\frac{4(\tilde\Sigma-\frac34)}{3(N_+-2\Delta)}N_++1)N_++2\Delta=(1+f_3)N_++2\Delta,
	\end{equation}
	with an asymptotically vanishing function~$f_3$, while the \lhs\ divided by the same expression decays exponentially to order $\O(e^{(4+4\slimit-2\eps)\tau})$ due to Lemma~\ref{lemm_decaytildea}. This yields
	\begin{equation}
		\absval{N_+}\ge e^{\eps\tau},\qquad \absval{\Delta}\ge e^{\eps\tau}.
	\end{equation}
	Using the definition of~$\tilde N$, the decay of~$\tilde A$ from Lemma~\eqref{lemm_decaytildea}, and \equ ~\eqref{eqn_definitionqmatter} for~$q$, one additionally finds
	\begin{equation}
		\tilde N\ge e^{\eps\tau},\qquad 2-q\ge e^{\eps\tau}.
	\end{equation}

	\smallskip

	The rest of the proof aims at constructing a contradiction to the slow decay behaviour of~$N_+$. This becomes possible by considering the evolution \equ s in more detail and relating the decay of several quantities.
	One starts by noting that Taylor expansion of the square root applied to the constraint \equ ~\eqref{eqn_constraintforproofs} yields
	\begin{equation}
	\label{eqn_proofoppositesignsdelta}
		\Delta={-}\frac{\sqrt3}{3}\tilde\Sigma^{\nicefrac12}N_++\O(e^{(6-2\eps)\tau}),
	\end{equation}
	where we used Lemma~\ref{lemm_decaytildea}, and the sign stems from our assumption $\Delta N_+<0$.
	We then multiply the constraint \equ ~\eqref{eqn_constraintgeneralone} by~$N_+^2$ and reformulate the resulting expression using \equ~\eqref{eqn_definitiontilden} defining~$\tilde N$ and \equ~\eqref{eqn_definitionqmatter} in the form
	\begin{equation}
		\tilde N=\frac12(2-q)-\tilde A-\frac34(2-\gamma)\Omega
	\end{equation}
	to find
	\begin{align}
		\Delta^2N_+^2={}&\tilde\Sigma\tilde NN_+^2-\Sigma_+^2\tilde AN_+^2\\
		={}&3\tilde\Sigma\tilde N^2+{\bparamk}\tilde\Sigma\tilde A\tilde N-\Sigma_+^2\tilde AN_+^2\\
		={}&\frac34(2-q)^2\tilde\Sigma+\O(e^{(6-3\gamma-\eps)\tau}).
	\end{align}
	The last line follows from convergence to the point Taub~2, \ie $\slimit=1/2$, in combination with Lemma~\ref{lemm_decaytildea}: $\tilde A$ decays as $\O(e^{(6-\eps)\tau})$, which is faster than the decay of~$\Omega$, given by $\O(e^{(6-3\gamma-\eps)\tau})$.
	Using once more Taylor expansion of the square root yields
	\begin{equation}
		\Delta N_+={-}\frac{\sqrt3}{2}(2-q)\tilde\Sigma^{\nicefrac12} +\O(e^{(6-3\gamma-2\eps)\tau}).
	\end{equation}

	In the next step one determines the behaviour of~$\Sigma_+$ and~$\tilde\Sigma$. The evolution of~$\Sigma_++1$ has been computed in~\eqref{eqn_evolutionsigmaplusplusone}. With the function~$\integraltwominusq$ defined as in Lemma~\ref{lemm_functionintegraltwominusq}, integration of $(e^\integraltwominusq(\Sigma_++1))'$ yields
	\begin{equation}
		e^{\integraltwominusq}(\Sigma_+(\tau)+1)=\frac32+\int_{{-}\infty}^\tau(2\tilde A+\frac32(2-\gamma)\Omega)e^{\integraltwominusq}ds,
	\end{equation}
	and thus
	\begin{equation}
	\label{eqn_proofoppositesignssigmaplus}
		\Sigma_++1=\frac32e^{-\integraltwominusq(\tau)}+\O(e^{(6-3\gamma-2\eps)\tau}).
	\end{equation}
	For~$\tilde\Sigma$, one finds
	\begin{align}
		\tilde\Sigma'={}&2(q-2)\tilde\Sigma-4\Delta N_+-4\Sigma_+\tilde A\\
			={}&2(q-2)(\tilde\Sigma-\sqrt3\tilde\Sigma^{\nicefrac12})+\O(e^{(6-3\gamma-2\eps)\tau}).
	\end{align}
	Hence
	\begin{equation}
		(\tilde\Sigma^{\nicefrac12}-\sqrt3)'
		=\frac12\frac{\tilde\Sigma'}{\tilde\Sigma^{\nicefrac12}}
		=(q-2)(\tilde\Sigma^{\nicefrac12}-\sqrt3)+\O(e^{(6-3\gamma-2\eps)\tau}),
	\end{equation}
	from which one concludes
	\begin{equation}
		\frac{d}{d\tau}(e^\integraltwominusq(\tilde\Sigma^{\nicefrac12}-\sqrt3))
			=\O(e^{(6-3\gamma-2\eps)\tau}).
	\end{equation}
	Integration yields
	\begin{equation}
	\label{eqn_proofoppositesignstildesigma}
		\tilde\Sigma^{\nicefrac12}-\sqrt3
		={-}\frac{\sqrt3}2e^{-\integraltwominusq(\tau)}+\O(e^{(6-3\gamma-2\eps)\tau}).
	\end{equation}
	One can eliminate~$\integraltwominusq$ using a linear combination of \equs ~\eqref{eqn_proofoppositesignssigmaplus} and~\eqref{eqn_proofoppositesignstildesigma} and finds
	\begin{equation}
	\label{eqn_proofoppositesignscombination}
		\Sigma_++\sqrt3\tilde\Sigma^{\nicefrac12}-2=\O(e^{(6-3\gamma-2\eps)\tau}).
	\end{equation}

	Now, we consider the evolution of~$N_+$. Due to \equs ~\eqref{eqn_proofoppositesignsdelta} and~\eqref{eqn_proofoppositesignscombination} and the slow decay of~$N_+$, one finds
	\begin{align}
		N_+'={}&(q+2\Sigma_+)N_++6\Delta\\
			={}&(q+2\Sigma_+-2\sqrt3\tilde\Sigma^{\nicefrac12})N_++\O(e^{(6-2\eps)\tau})\\
			={}&(q-2+4\Sigma_+-2+\O(e^{(6-3\gamma-2\eps)\tau}))N_++\O(e^{(6-2\eps)\tau})\\
			={}&(q-2+4\Sigma_+-2+\O(e^{(6-3\gamma-3\eps)\tau}))N_+,
	\end{align}
	which, recalling the discussion of~$\integraltwominusq$ in the proof of Lemma~\ref{lemm_functionintegraltwominusq}, implies that there is a function~$f_4$ which is integrable on $({-}\infty,0)$ and satisfies
	\begin{equation}
		N_+'=(4\Sigma_+-2+f_4)N_+.
	\end{equation}
	Set
	\begin{equation}
		F_4(\tau)\coloneqq\int_{{-}\infty}^\tau f_4(s)ds
	\end{equation}
	and compute
	\begin{equation}
		\frac{d}{d\tau}(e^{-F_4}N_+)=(4\Sigma_+-2)e^{-F_4}N_+.
	\end{equation}
	As $\Sigma_+'\le0$ asymptotically due to Lemma~\ref{lemm_monotonicitysigmaplus}, every solution converging to the point Taub~2 satisfies $4\Sigma_+\le2$, which means that the function $\absval{e^{-F_4}N_+}$ increases as $\tau\rightarrow{-}\infty$. It follows that~$N_+$ does not converge to~$0$, a contradiction.
\end{proof}
With the results found above, we are finally in a position to prove Prop.~\ref{prop_main_generalasymptoticproperties}.
\begin{proof}[Proof of Prop.~\ref{prop_main_generalasymptoticproperties}]
	In case $\slimit={-}1$, the solution is the constant solution (Prop.~\ref{prop_taubone} and Prop.~\ref{prop_tauboneothermatter}), for which the statement trivially holds. Otherwise, the decay of~$\Omega$ and~$\tilde A$ follows from \equ ~\eqref{eqn_decayomega} and Lemma~\ref{lemm_decaytildea}. According to Lemma~\ref{lemm_deltanplussigns}, there are three cases to be considered regarding the sign of~$\Delta N_+$. As $\slimit\in({-}1,1/2]$, the case $\Delta N_+<0$ is excluded by Lemma~\ref{lemm_deltanplusoppositesign}, and the remaining two cases are discussed in Lemma~\ref{lemm_deltanplussamesign} and~\ref{lemm_deltanpluszero}.
\end{proof}

In the proof of the previous lemma, we have discussed the asymptotic behaviour of orbits converging to a Kasner point to the left of Taub~2, as $\tau\rightarrow{-}\infty$. In particular, we could exclude $\Delta N_+<0$ asymptotically.
For orbits converging to a Kasner point to the right of the point Taub~2, \ie with~$\Sigma_+$ converging to $\slimit\in(1/2,1]$, we cannot exclude the negative sign. This stems from the fact that the non-positive eigenvalue $2+2\slimit-2\sqrt{3(1-\slimit^2)}$ which was used to construct a contradiction in the proof of Lemma~\ref{lemm_deltanplusoppositesign} becomes positive when changing from the Kasner arc to the left of Taub~2 to the one on the right. Orbits which satisfy~$\Delta N_+<0$ asymptotically and converge to a point on the Kasner parabola to the right of Taub~2 exist, and as in the case of~$\Delta N_+>0$ asymptotically, one finds that the rates of convergence are related to eigenvalues of the linearised evolution equation: For~$\Delta N_+>0$ asymptotically, we found exponential decay of order~$2+2\slimit+2\sqrt{3(1-\slimit^2)}$ (see Prop.~\ref{prop_main_generalasymptoticproperties}), while for~$\Delta N_+<0$ asymptotically, we obtain exponential decay of order~$2+2\slimit-2\sqrt{3(1-\slimit^2)}$.
\begin{lemm}
\label{lemm_deltanplusrightoftaub2oppositesign}
	Let $\gamma\in[0,2)$ and consider a solution to \equs ~\eqref{eqns_evolutionbianchib}--\eqref{eqn_evolutionomega} converging to $(\slimit,1-\slimit^2,0,0,0)$ with $\slimit\in\left(1/2,1\right]$.
	Assume that $\Delta N_+(\tau)<0$ for all $\tau\le\tau_0$.
	Then
	\begin{align}
		\Sigma_+={}&\slimit+\O(e^{(\maxdecayright-\eps)\tau}),\\
		\tilde\Sigma={}&1-\slimit^2+\O(e^{(\maxdecayright-\eps)\tau}),\\
		\Delta={}&\O(e^{(2+2\slimit-2\sqrt{3(1-\slimit^2)}-\eps)\tau}),\\
		N_+={}&\O(e^{(2+2\slimit-2\sqrt{3(1-\slimit^2)}-\eps)\tau}),\\
		\tilde N={}&\O(e^{(4+4\slimit-4\sqrt{3(1-\slimit^2)}-\eps)\tau}),\\
		q={}&2+\O(e^{(\maxdecayright-\eps)\tau})
	\end{align}
	as $\tau\rightarrow{-}\infty$, for every $\eps>0$. Here $\maxdecayright\coloneqq\min(6-3\gamma,4+4\slimit-4\sqrt{3(1-\slimit^2)})$ if $\Omega>0$, and $\maxdecayright\coloneqq4+4\slimit-4\sqrt{3(1-\slimit^2)}$ if $\Omega=0$.

	Under the additional restriction that $\slimit\in(1/2,1)$, there is for each $\eps>0$ a $\tau_\eps>{-}\infty$ such that $\tau\le \tau_\eps$ implies
	\begin{align}
		e^{(2+2\slimit-2\sqrt{3(1-\slimit^2)}+\eps)\tau}\le {}&\absval{\Delta},\\
		e^{(2+2\slimit-2\sqrt{3(1-\slimit^2)}+\eps)\tau}\le {}&\absval{N_+}.
	\end{align}
\end{lemm}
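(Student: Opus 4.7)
The argument follows the same template as the proof of Lemma~\ref{lemm_deltanplussamesign}, with the combination $\Delta+\rfactornew N_+$ replaced by $\Delta-\rfactornew N_+$, where $\rfactornew=\sqrt{(1-\slimit^2)/3}$ as in~\eqref{eqn_definitionrfactor}. First I would invoke Lemma~\ref{lemm_decaydeltanplus} in the $\Delta N_+<0$ case to obtain the two-sided bound
\begin{equation}
	e^{(2+2\slimit-2\sqrt{3(1-\slimit^2)}+\eps)\tau}\le|\Delta-\rfactornew N_+|\le e^{(2+2\slimit-2\sqrt{3(1-\slimit^2)}-\eps)\tau}.
\end{equation}
Since the assumption $\Delta N_+<0$ means that $\Delta$ and $-\rfactornew N_+$ have the same sign, one has $|\Delta-\rfactornew N_+|=|\Delta|+\rfactornew|N_+|$, so the upper bound immediately yields the requested upper bounds on $|\Delta|$ and $|N_+|$ individually. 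For $\slimit=1$ one uses the corresponding $\hateps$-statement in Lemma~\ref{lemm_decaydeltanplus} instead.

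Next I would derive the decay of the remaining quantities exactly as in Lemma~\ref{lemm_deltanplussamesign}. From the definition $\tilde N=(N_+^2-{\bparamk}\tilde A)/3$ combined with Lemma~\ref{lemm_decaytildea} and the just-obtained bound on $N_+^2$, one gets
\begin{equation}
	\tilde N=\O(e^{(4+4\slimit-4\sqrt{3(1-\slimit^2)}-\eps)\tau}),
\end{equation}
noting that $4+4\slimit-4\sqrt{3(1-\slimit^2)}<4+4\slimit$ for $\slimit<1$, so the $N_+^2$-term dominates $\tilde A$. Plugging this into the second identity of~\eqref{eqn_definitionqmatter} together with the decay~\eqref{eqn_decayomega} of $\Omega$ gives the bound on $q-2$ with rate $\maxdecayright=\min(6-3\gamma,4+4\slimit-4\sqrt{3(1-\slimit^2)})$ (or without the $6-3\gamma$ if $\Omega=0$). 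Finally, the evolution \equ s~\eqref{eqns_evolutionbianchib} for $\Sigma_+$ and $\tilde\Sigma$ read
\begin{align}
	\Sigma_+'={}&(q-2)\Sigma_+-2\tilde N=\O(e^{(\maxdecayright-\eps)\tau}),\\
	\tilde\Sigma'={}&2(q-2)\tilde\Sigma-4\Sigma_+\tilde A-4\Delta N_+=\O(e^{(\maxdecayright-\eps)\tau}),
\end{align}
where in the second line I used that $|\Delta N_+|\le|\Delta-\rfactornew N_+|^2/\rfactornew$ up to constants, i.e.\ also $\O(e^{(4+4\slimit-4\sqrt{3(1-\slimit^2)}-\eps)\tau})$. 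Integration from $-\infty$ to $\tau$ yields the claimed convergence rates.

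For the lower bounds in the case $\slimit\in(1/2,1)$ I would mimic the last step of the proof of Lemma~\ref{lemm_deltanplussamesign}: divide the constraint \equ\ in the form $\tilde\Sigma N_+^2-3\Delta^2=(3\Sigma_+^2+{\bparamk}\tilde\Sigma)\tilde A$ by $(\Delta-\rfactornew N_+)^2$ to obtain
\begin{equation}
	\tilde\Sigma\Bigl(\tfrac{N_+}{\Delta-\rfactornew N_+}\Bigr)^2-3\Bigl(\tfrac{\Delta}{\Delta-\rfactornew N_+}\Bigr)^2=\tfrac{(3\Sigma_+^2+{\bparamk}\tilde\Sigma)\tilde A}{(\Delta-\rfactornew N_+)^2}.
\end{equation}
Using the lower bound on $|\Delta-\rfactornew N_+|$ from Lemma~\ref{lemm_decaydeltanplus} together with $\tilde A=\O(e^{(4+4\slimit-\eps)\tau})$ shows the right hand side is $\O(e^{(4\sqrt{3(1-\slimit^2)}-3\eps)\tau})$, which tends to zero as $\tau\rightarrow-\infty$ precisely because $\slimit<1$. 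Since $\tilde\Sigma\to 1-\slimit^2>0$, if one of the two bounded quotients on the left went to zero along a subsequence, so would the other; but the identity $\Delta/(\Delta-\rfactornew N_+)-\rfactornew N_+/(\Delta-\rfactornew N_+)=1$ forbids this. Hence both quotients are bounded away from zero, which delivers the required lower bounds on $|\Delta|$ and $|N_+|$.

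The main conceptual point to be careful about is that, unlike in Lemma~\ref{lemm_deltanplussamesign}, one cannot derive an extra relation of the form $\tilde A(3\slimit^2+{\bparamk}(1-\slimit^2))=0$: here $\tilde A$ decays faster than $(\Delta-\rfactornew N_+)^2$ (since $4+4\slimit>4+4\slimit-4\sqrt{3(1-\slimit^2)}$ for $\slimit<1$), so the constraint \equ\ does not force the bracket to vanish, consistent with the absence of that statement in the lemma. The only genuine technical obstacle is keeping track of which of $N_+^2$, $\tilde A$ and $\Omega$ dominates in the various places, and I expect the $\slimit=1$ endpoint to require the $\hateps$-version of Lemma~\ref{lemm_decaydeltanplus} and a separate tally of the exponents, with $\maxdecayright=\min(6-3\gamma,4)$ and the lower bounds dropping out of the range of validity.
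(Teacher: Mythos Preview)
Your proposal is correct and follows essentially the same route as the paper's proof: invoke Lemma~\ref{lemm_decaydeltanplus} in the $\Delta N_+<0$ case to control $\Delta-\rfactornew N_+$, read off the upper bounds on $|\Delta|$ and $|N_+|$ from the fact that the two summands share a sign, feed these into $\tilde N$, $q$, $\Sigma_+'$, $\tilde\Sigma'$ exactly as in Lemma~\ref{lemm_deltanplussamesign}, and for the lower bounds divide the constraint by $(\Delta-\rfactornew N_+)^2$ and use that the right-hand side tends to zero because $\tilde A$ now decays faster than $(\Delta-\rfactornew N_+)^2$. Your closing remark about why no relation $\tilde A(3\slimit^2+\bparamk(1-\slimit^2))=0$ emerges here is exactly the point and matches the paper's implicit reasoning.
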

\begin{proof}
	If $\slimit={-}1$, then due to Prop.~\ref{prop_taubone} and Prop.~\ref{prop_tauboneothermatter} the solution is the constant orbit for which all the required properties hold. Let us therefore assume $\slimit\in({-}1,1]$.
	As in the beginning of the proof of Lemma~\ref{lemm_deltanplussamesign}, we conclude the upper bounds for~$\Delta$ and~$N_+$ from Lemma~\ref{lemm_decaydeltanplus}.

	The decay for~$\Omega$ and~$\tilde A$ has been determined in \equ ~\eqref{eqn_decayomega} and Lemma~\ref{lemm_decaytildea} independently of the sign of~$\Delta N_+$. In particular, $\tilde A$ decays faster than~$\Delta^2$ and~$N_+^2$, and consequently the evolution is no longer dominated by~$\tilde A$, but rather by those two variables. More precisely
	\begin{align}
		\tilde N={}&\frac13(N_+^2-{\bparamk}\tilde A)=\O(e^{(4+4\slimit-4\sqrt{3(1-\slimit^2)}-\eps)\tau}),\\
		q-2={}&{-}2\tilde A-2\tilde N-\frac32(2-\gamma)\Omega=\O(e^{(\maxdecayright-\eps)\tau}),
	\end{align}
	with $\maxdecayright$ defined as in the statement of the lemma.
	Using these properties in the evolution \equ s for~$\Sigma_+$ and~$\tilde\Sigma$ yields
		\begin{equation}
		\Sigma_+'=\O(e^{(\maxdecayright-\eps)\tau}),\qquad \tilde\Sigma'=\O(e^{(\maxdecayright-\eps)\tau}),
	\end{equation}
	and hence gives the convergence rates for~$\Sigma_+$ and~$\tilde\Sigma$.

	\smallskip

	In order to show the lower bounds for~$\Delta$ and~$N_+$ in case $\slimit\in({-}1,1)$, one considers the expression
	\begin{equation}
		\frac13\tilde\Sigma(\frac{N_+}{\Delta-\rfactornew N_+})^2-(\frac{\Delta}{\Delta-\rfactornew N_+})^2=\frac{(\Sigma_+^2+\frac {\bparamk}3\tilde\Sigma)\tilde A}{({\Delta-\rfactornew N_+})^2},
	\end{equation}
	which is a reformulation of the constraint \equ~\eqref{eqn_constraintforproofs}. Due to the previous argument, the numerator of the \rhs\ decays as $\O(e^{(4+4\slimit-\eps)\tau})$, while the denominator is bounded from below by $e^{(2+2\slimit-2\sqrt{3(1-\slimit^2)}+\eps)\tau}$ due to Lemma~\ref{lemm_decaydeltanplus}. The \lhs\ consequently converges to~$0$ as $\tau\rightarrow{-}\infty$. We conclude the same way we did when proving the lower bounds of Lemma~\ref{lemm_deltanplussamesign}.
\end{proof}

So far, we have obtained exponential convergence rates with exponents that always included an~$\eps>0$. Such convergence rates hold for all variables.
As an immediate consequence, we can apply Lemma~\ref{lemm_decaylemmageneralimproved} in order to eliminate the~$\eps$ in the decay of~$\tilde A$ and~$\Omega$.
\begin{lemm}
\label{lemm_improveddecaytildeaomega}
	Let $\gamma\in[0,2)$ and consider a solution to \equs ~\eqref{eqns_evolutionbianchib}--\eqref{eqn_evolutionomega} converging to $(\slimit,1-\slimit^2,0,0,0)$ with $\slimit\in\left[{-}1,1\right)$. Then the following holds:
	\begin{itemize}
		\item Either $\tilde A=0$ along the whole orbit, or there are constants $c_{\tilde A},C_{\tilde A}>0$ and~$\tau_0$ such that $\tau\le\tau_0$ implies
		\begin{equation}
			c_{\tilde A} e^{(4+4\slimit)\tau}\le \tilde A(\tau)\le C_{\tilde A} e^{(4+4\slimit)\tau}.
		\end{equation}
		\item Either $\Omega=0$ along the whole orbit, or there are constants $c_{\Omega},C_{\Omega}>0$ and~$\tau_0$ such that $\tau\le\tau_0$ implies
		\begin{equation}
			c_{\Omega} e^{(6-3\gamma)\tau}\le \Omega(\tau)\le C_{\Omega} e^{(6-3\gamma)\tau}.
		\end{equation}
	\end{itemize}
\end{lemm}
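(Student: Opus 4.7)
My plan is to apply Lemma~\ref{lemm_decaylemmageneralimproved} directly to each of the two evolution equations
\begin{equation}
\tilde A' = 2(q + 2\Sigma_+)\tilde A, \qquad \Omega' = (2q - (3\gamma - 2))\Omega.
\end{equation}
Since $\tilde A = 0$ and $\Omega = 0$ are invariant under the flow, each of these quantities is either identically zero along the orbit or strictly positive throughout, and in the second case Lemma~\ref{lemm_decaylemmageneralimproved} applies with $\iota = 4 + 4\slimit$ and $\iota = 6 - 3\gamma$ respectively. What remains to be checked is that the coefficient functions $\functiondecaylemma_{\tilde A} \coloneqq 2(q + 2\Sigma_+)$ and $\functiondecaylemma_{\Omega} \coloneqq 2q - (3\gamma - 2)$ approach these limiting values at an exponential rate $\O(e^{\xi\tau})$ for some $\xi > 0$, rather than merely the $(\delta - \eps)$-type rates appearing in the earlier propositions.

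First I would dispose of the boundary case $\slimit = -1$: by Prop.~\ref{prop_taubone} and Prop.~\ref{prop_tauboneothermatter}, any solution converging to Taub~1 is the constant orbit, at which $\tilde A \equiv 0$ and $\Omega \equiv 0$, so the statement holds trivially. For $\slimit \in (-1, 1)$, Lemma~\ref{lemm_deltanplussigns} together with Lemma~\ref{lemm_deltanplusoppositesign} reduces the analysis to three mutually exclusive asymptotic regimes: $\Delta N_+ > 0$ eventually, $\Delta \equiv N_+ \equiv 0$, and (only if $\slimit \in (1/2, 1)$) $\Delta N_+ < 0$ eventually, handled respectively by Lemma~\ref{lemm_deltanplussamesign}, Lemma~\ref{lemm_deltanpluszero}, and Lemma~\ref{lemm_deltanplusrightoftaub2oppositesign}. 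In each case the cited result already provides
\begin{equation}
\Sigma_+ = \slimit + \O(e^{(\delta - \eps)\tau}), \qquad q = 2 + \O(e^{(\delta - \eps)\tau})
\end{equation}
for every $\eps > 0$, where $\delta$ equals $\maxdecayleft$ in the first two regimes and $\maxdecayright$ in the third.

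Linearly combining these expansions gives $\functiondecaylemma_{\tilde A} = (4 + 4\slimit) + \O(e^{(\delta - \eps)\tau})$ and $\functiondecaylemma_{\Omega} = (6 - 3\gamma) + \O(e^{(\delta - \eps)\tau})$, which is precisely the hypothesis required by Lemma~\ref{lemm_decaylemmageneralimproved}. The only remaining point to verify is that $\delta > 0$, so that $\eps$ may be chosen small enough to make $\xi \coloneqq \delta - \eps > 0$: this is immediate from the definitions, since $\gamma \in [0, 2)$ gives $6 - 3\gamma > 0$, while $\slimit \in (-1, 1)$ gives $4 + 4\slimit > 0$, and the elementary inequality $4 + 4\slimit > 4\sqrt{3(1-\slimit^2)}$ holds precisely for $\slimit > 1/2$, which is the range in which $\maxdecayright$ is actually invoked. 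I do not foresee any serious obstacle: the content of this lemma is essentially a clean repackaging of the exponential convergence rates established earlier in the section, extracted via the sharper decay lemma whose hypothesis demands a pointwise exponential bound on $\functiondecaylemma - \iota$ rather than merely asymptotic convergence.
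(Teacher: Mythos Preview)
Your proposal is correct and follows essentially the same argument as the paper: dispose of $\slimit=-1$ via the Taub~1 propositions, then use Lemma~\ref{lemm_deltanplussigns} (and Lemma~\ref{lemm_deltanplusoppositesign}) to split into the three sign regimes, extract exponential convergence of $q$ and $\Sigma_+$ from Lemmata~\ref{lemm_deltanplussamesign}, \ref{lemm_deltanpluszero}, \ref{lemm_deltanplusrightoftaub2oppositesign}, and feed this into Lemma~\ref{lemm_decaylemmageneralimproved}. Your explicit verification that $\delta>0$ in each regime is a welcome detail that the paper leaves implicit.
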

\begin{proof}
	In case $\slimit={-}1$, nothing has to be shown, as this is the constant orbit (Prop.~\ref{prop_taubone} and Prop.~\ref{prop_tauboneothermatter}).
	For the remaining values, all that has to be shown to apply Lemma~\ref{lemm_decaylemmageneralimproved} is that the expressions
	\begin{equation}
		2(q+2\Sigma_+),\qquad 2q-(3\gamma-2)
	\end{equation}
	appearing in the evolution equations of~$\tilde A$ and~$\Omega$ converge at exponential rates.
	From Lemma~\ref{lemm_deltanplussigns}, one knows that~$\Delta N_+$ has constant sign for sufficiently negative times. The case $\Delta N_+>0$ is covered in Lemma~\ref{lemm_deltanplussamesign}. If $\Delta=N_+=0$ along the whole orbit, then Lemma~\ref{lemm_deltanpluszero} yields the result. In case $\Delta N_+<0$ exponential convergence is shown in Lemma~\ref{lemm_deltanplusrightoftaub2oppositesign}.
	Note however that this case of opposite sign can be excluded for $\slimit\in({-1},1/2]$ by Lemma~\ref{lemm_deltanplusoppositesign}.
\end{proof}
These more detailed estimates can now be used to determine the convergence of the remaining variables in more detail. So far, we have found the constant term and the slowest order of exponential convergence. This is improved in the following way: We can relate the slowest order of exponential convergence of several variables, and determine which is the next non-vanishing order.

We carry this out for orbits converging to the left of the point Taub~2, as it is in this case that we make use of the more detailed convergence properties in Section~\ref{section_leftoftaubtwo}.
We point out, however, that the same approach can also be used on the remaining Kasner limit points.
\begin{prop}
\label{prop_leftoftaub2additionaldecay}
	Let $\gamma\in[0,2)$ and consider a solution to \equs ~\eqref{eqns_evolutionbianchib}--\eqref{eqn_evolutionomega} converging to $(\slimit,1-\slimit^2,0,0,0)$ with $\slimit\in({-1},1/2)$. If $\tilde A>0$, and if $\Delta$ and $N_+$ do not both vanish identically along the solution, then $\slimit=\pm\sqrt{{\bparamk}/({\bparamk}-3)}$, and there are constants $\alpha>0$, $\omega\ge0$, $\betaD\not=0$ and $\betaN\not=0$ \st
	\begin{align}
		\Delta={}&\betaD e^{(2+2\slimit+2\sqrt{3(1-\slimit^2)})\tau}+\O(e^{(2+2\slimit+2\sqrt{3(1-\slimit^2)}+\maxdecayleft-\eps)\tau}),\\
		\tilde A={}&\alpha e^{(4+4\slimit)\tau}+\O(e^{(4+4\slimit+\maxdecayleft-\eps)\tau}),\\
		N_+={}&\betaN e^{(2+2\slimit+2\sqrt{3(1-\slimit^2)})\tau}+\O(e^{(2+2\slimit+2\sqrt{3(1-\slimit^2)}+\maxdecayleft-\eps)\tau}),\\
		\tilde N={}&{-}\frac {\bparamk}3\alpha e^{(4+4\slimit)\tau}+\O(e^{(4+4\slimit+\maxdecayleft-\eps)\tau}),\\
		q={}&2+2\alpha(\frac {\bparamk}3-1)e^{(4+4\slimit)\tau}-\frac32(2-\gamma)\omega e^{(6-3\gamma)\tau}+\O(e^{(2\maxdecayleft-\eps)\tau}),
	\end{align}
	as $\tau\rightarrow{-}\infty$, for every $\eps>0$. Here $\maxdecayleft\coloneqq\min(6-3\gamma,4+4\slimit)$ if $\Omega>0$, and $\maxdecayleft\coloneqq 4+4\slimit$ if $\Omega=0$, and irrespectively of whether~$\Omega$ is positive or not
	\begin{equation}
		\betaD=\sqrt{(1-\slimit^2)/3}\betaN.
	\end{equation}
	Furthermore, if $\slimit\not=0$, then
	\begin{align}
	\label{eqn_leftoftaub2additionaldecay_otherssigmaplus}
		\Sigma_+={}&\slimit+\frac{\alpha {\bparamk}}{6\slimit}e^{(4+4\slimit)\tau}-\frac{\slimit}2\omega e^{(6-3\gamma)\tau}+\O(e^{(2\maxdecayleft-\eps)\tau}),\\
	\label{eqn_leftoftaub2additionaldecay_otherstildesigma}
		\tilde\Sigma={}&1-\slimit^2-\alpha e^{(4+4\slimit)\tau}-(1-\slimit^2)\omega e^{(6-3\gamma)\tau}+\O(e^{(2\maxdecayleft-\eps)\tau}),
	\end{align}
	otherwise
	\begin{align}
	\label{eqn_leftoftaub2additionaldecay_specialssigmaplus}
		\Sigma_+={}&\O(e^{(4+4\sqrt3)\tau}),\\
	\label{eqn_leftoftaub2additionaldecay_specialstildesigma}
		\tilde\Sigma={}&1-\alpha e^{4\tau}-\omega e^{(6-3\gamma)\tau}+\O(e^{(2\maxdecayleft-\eps)\tau}).
	\end{align}
	Additionally, either $\Omega\equiv0$ and then $\omega=0$, or
	\begin{equation}
		\Omega=\omega e^{(6-3\gamma)\tau}+\O(e^{(6-3\gamma+\maxdecayleft-\eps)\tau})
	\end{equation}
	with $\omega>0$.
\end{prop}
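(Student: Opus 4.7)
The plan is to upgrade the $\eps$-off estimates of Prop.~\ref{prop_main_generalasymptoticproperties} to genuine leading-order expansions by a bootstrap that applies Lemma~\ref{lemm_decaylemmageneralimproved} repeatedly. Since $\tilde A > 0$ and $\Delta, N_+$ do not both vanish identically, Prop.~\ref{prop_main_generalasymptoticproperties} gives $3\slimit^2 + \bparamk(1-\slimit^2) = 0$ (hence $\slimit = \pm\sqrt{\bparamk/(\bparamk-3)}$) and $\Delta N_+ > 0$ asymptotically (Lemma~\ref{lemm_deltanplussigns} together with Lemma~\ref{lemm_deltanplusoppositesign}). From that proposition one also inherits $q-2$, $\Sigma_+ - \slimit$, $\tilde\Sigma - (1-\slimit^2)$ all of order $\O(e^{(\maxdecayleft - \eps)\tau})$, and the upper bounds on $\tilde A$, $\Omega$, $\tilde N$, $\Delta$, $N_+$ which are used below.

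I would first extract $\alpha$ and $\omega$. The coefficient $2(q+2\Sigma_+)$ in the evolution equation $\tilde A' = 2(q+2\Sigma_+)\tilde A$ can be written as $(4+4\slimit) + \O(e^{(\maxdecayleft-\eps)\tau})$; the refinement in the proof of Lemma~\ref{lemm_decaylemmageneralimproved} (integrating $(\ln \tilde A)'$ and exponentiating) then yields $\tilde A = \alpha e^{(4+4\slimit)\tau} + \O(e^{(4+4\slimit+\maxdecayleft-\eps)\tau})$ with $\alpha > 0$. The same argument applied to $\Omega' = (2q-(3\gamma-2))\Omega$ produces the expansion of $\Omega$ with $\omega > 0$ in the non-vacuum case. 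To extract the coefficient of $\Delta + \rfactornew N_+$, I return to the evolution derived in the proof of Lemma~\ref{lemm_decaydeltanplus}, which has the form $(\Delta + \rfactornew N_+)' = (2+2\slimit+2\sqrt{3(1-\slimit^2)} + (f_1 \Delta + f_2 N_+)/(\Delta + \rfactornew N_+))(\Delta + \rfactornew N_+)$. Substituting the refined estimates for $q-2$, $\Sigma_+-\slimit$, $\tilde\Sigma-(1-\slimit^2)$ and $\tilde N$ into the explicit $f_1, f_2$, and using the lower bound $\absval{\Delta + \rfactornew N_+} \ge e^{(2+2\slimit+2\sqrt{3(1-\slimit^2)}+\eps)\tau}$ from Lemma~\ref{lemm_decaydeltanplus}, the correction term is shown to be $\O(e^{(\maxdecayleft-\eps)\tau})$; a second application of Lemma~\ref{lemm_decaylemmageneralimproved} then produces $\Delta + \rfactornew N_+ = C e^{(2+2\slimit+2\sqrt{3(1-\slimit^2)})\tau} + \O(e^{(2+2\slimit+2\sqrt{3(1-\slimit^2)}+\maxdecayleft-\eps)\tau})$. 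To separate $\Delta$ from $N_+$, I invoke Remark~\ref{rema_auxiliarydecayconstraintexpr}: the right-hand side of the constraint~\eqref{eqn_constraintforproofs} decays as $\O(e^{(8+8\slimit+4\sqrt{3(1-\slimit^2)}-\eps)\tau})$, so at the leading exponential order one has $(1-\slimit^2) N_+^2 = 3\Delta^2$. Together with $\Delta N_+ > 0$ this yields $\Delta = \rfactornew N_+ +(\text{strictly faster decay})$, whence $\betaD = \rfactornew \betaN$ with $\betaN = C/(2\rfactornew) \neq 0$.

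Finally, inserting the expansions of $\tilde A$, $\Omega$ and $N_+^2$ into $\tilde N = (N_+^2 - \bparamk \tilde A)/3$ and $q = 2 - 2\tilde A - 2\tilde N - \frac{3}{2}(2-\gamma)\Omega$ gives the stated expansion of $q$ (the $N_+^2$ contribution is strictly subleading). The formulas for $\Sigma_+$ and $\tilde\Sigma$ follow from the evolution \equs ~\eqref{eqns_evolutionbianchib}: write $\Sigma_+' = (q-2)\slimit - 2\tilde N + \O(\text{faster})$ and $\tilde\Sigma' = 2(q-2)(1-\slimit^2) - 4\slimit \tilde A + \O(\text{faster})$, substitute the above expansions, and integrate from $-\infty$ (which is licit because all exponents are positive). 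The algebraic identity $\bparamk(1-\slimit^2) = -3\slimit^2$, applied inside the resulting coefficients, is exactly what collapses them to $\alpha\bparamk/(6\slimit)$ and $-\alpha$ in \eqref{eqn_leftoftaub2additionaldecay_otherssigmaplus}--\eqref{eqn_leftoftaub2additionaldecay_otherstildesigma}; in the degenerate case $\slimit = 0$ one necessarily has $\bparamk = 0$, so the $\tilde N$-driven leading term in $\Sigma_+'$ is $-\frac{2}{3}N_+^2$, producing \eqref{eqn_leftoftaub2additionaldecay_specialssigmaplus}. The main technical obstacle is the third step of the previous paragraph: one must verify that after substitution the correction term in the evolution of $\Delta + \rfactornew N_+$ decays at a genuine exponential rate (not merely tends to zero), so that Lemma~\ref{lemm_decaylemmageneralimproved} applies and upgrades the two-sided bounds of Lemma~\ref{lemm_decaydeltanplus} into an asymptotic expansion with an identifiable coefficient.
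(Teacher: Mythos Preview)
Your approach coincides with the paper's in all essential respects: extracting $\alpha$ and $\omega$ via the integration trick behind Lemma~\ref{lemm_decaylemmageneralimproved}, bootstrapping the evolution of $\Delta + \rfactornew N_+$ using the explicit $f_1,f_2$ from the proof of Lemma~\ref{lemm_decaydeltanplus}, separating $\Delta$ from $N_+$ via the factored constraint in Remark~\ref{rema_auxiliarydecayconstraintexpr}, and then integrating the evolution of $\Sigma_+,\tilde\Sigma$. The order of steps differs slightly (the paper derives the $\Sigma_+,\tilde\Sigma$ expansions before treating $\Delta+\rfactornew N_+$), but this is immaterial since only the cruder estimates from Prop.~\ref{prop_main_generalasymptoticproperties} are needed for $f_1,f_2$.

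There is, however, one genuine gap, in the degenerate case $\slimit = 0$. You write that ``the $\tilde N$-driven leading term in $\Sigma_+'$ is $-\frac{2}{3}N_+^2$, producing \eqref{eqn_leftoftaub2additionaldecay_specialssigmaplus}'', but this does not work as stated. With $\slimit = 0$ and $\bparamk = 0$ one has $\Sigma_+' = (q-2)\Sigma_+ - \frac{2}{3}N_+^2$; the forcing $N_+^2$ is of order $e^{(4+4\sqrt{3})\tau}$, but the linear term $(q-2)\Sigma_+$ is a priori only $\O(e^{(2\maxdecayleft - \eps)\tau})$, and since $2\maxdecayleft \le 8 < 4+4\sqrt{3}$ this term is \emph{larger}, not smaller, than the $N_+^2$ contribution. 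A single direct integration therefore cannot yield $\Sigma_+ = \O(e^{(4+4\sqrt{3})\tau})$. The paper handles this by the integrating factor $e^{\integraltwominusq}$ from Lemma~\ref{lemm_functionintegraltwominusq}: one computes $(\Sigma_+ e^{\integraltwominusq})' = -2\tilde N\, e^{\integraltwominusq} = \O(e^{(4+4\sqrt{3})\tau})$, integrates from $-\infty$, and uses $\integraltwominusq \ge 0$ to conclude. Alternatively one could iterate your integration finitely many times (the number depending on $\gamma$) until $(q-2)\Sigma_+$ is pushed below $N_+^2$, but the integrating-factor trick is the clean fix you should supply here.
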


\begin{proof}
	It follows from Lemma~\ref{lemm_deltanplusoppositesign} that $\Delta N_+>0$ for sufficiently negative times, and Lemma~\ref{lemm_deltanplussamesign} collects the convergence properties which we obtained so far. From the relation
	\begin{equation}
		\tilde A(3\slimit^2+{\bparamk}(1-\slimit^2))=0,
	\end{equation}
	which was shown there, we find the special value for~$\slimit$. In particular, if~$\bparamk>0$ then there are no solutions converging to a point on the Kasner parabola with~$\slimit\in({-}1,1/2)$.
	We now deduce more precise decay properties for the individual variables.

	\smallskip

	For~$\tilde A$, the decay established in Lemma~\ref{lemm_improveddecaytildeaomega} implies that $e^{-(4+4\slimit)\tau}\tilde A$ is bounded for sufficiently negative~$\tau$. As
	\begin{equation}
		(e^{-(4+4\slimit)\tau}\tilde A)'=(2(q-2)+4(\Sigma_+-\slimit))e^{-(4+4\slimit)\tau}\tilde A
	\end{equation}
	and $2(q-2)+4(\Sigma_+-\slimit)=\O(e^{(\maxdecayleft-\eps)\tau})$, the \lhs\ decays as $\O(e^{(\maxdecayleft-\eps)\tau})$, and one finds
	\begin{equation}
	\label{eqn_auxiliarybetterdecaytildea}
		e^{-(4+4\slimit)\tau}\tilde A(\tau)=\lim_{\tau\rightarrow{-}\infty}e^{-(4+4\slimit)\tau}\tilde A(\tau)+\int_{{-}\infty}^\tau\O(e^{(\maxdecayleft-\eps)s})ds.
	\end{equation}
	Setting
	\begin{equation}
		\alpha\coloneqq \lim_{\tau\rightarrow{-}\infty}e^{-(4+4\slimit)\tau}\tilde A(\tau),
	\end{equation}
	equation~\eqref{eqn_auxiliarybetterdecaytildea} is equivalent to
	\begin{equation}
		\tilde A=\alpha e^{(4+4\slimit)\tau}+\O(e^{(4+4\slimit+\maxdecayleft-\eps)\tau}).
	\end{equation}
	The exact value of~$\alpha$ is not known, only that it is positive due to $\tilde A>0$. But this form shows that there are no terms of exponential order between $4+4\slimit$ and $4+4\slimit+\maxdecayleft$. By the same method one finds that
	\begin{equation}
		\Omega=\omega e^{(6-3\gamma)\tau}+\O(e^{(6-3\gamma+\maxdecayleft-\eps)\tau})
	\end{equation}
	with a constant $\omega>0$, if not $\Omega\equiv0$ along the whole orbit.

	\smallskip

	The improved decay properties of~$\tilde A$ and~$\Omega$ together with the ones from Lemma~\ref{lemm_deltanplussamesign} imply
	\begin{align}
		\tilde N={}&\frac13(N_+^2-{\bparamk}\tilde A)={-}\frac {\bparamk}3\alpha e^{(4+4\slimit)\tau}+\O(e^{(4+4\slimit+\maxdecayleft-\eps)\tau}),\\
		q={}&2(1-\tilde A-\tilde N)-\frac32(2-\gamma)\Omega\\
			={}&2+2\alpha(\frac {\bparamk}3-1)e^{(4+4\slimit)\tau}-\frac32(2-\gamma)\omega e^{(6-3\gamma)\tau}+\O(e^{(2\maxdecayleft-\eps)\tau}),
	\end{align}
	as $4\sqrt{3(1-\slimit^2)}\ge4+4\slimit$ \iif\ $\slimit\in\left[{-}1,1/2\right]$.
	The evolution \equs\ for~$\Sigma_+$ and~$\tilde\Sigma$ thus read
	\begin{align}
	\Sigma_+'={}&(q-2)\Sigma_+-2\tilde N\\
		={}&2\alpha(\frac13{\bparamk}\slimit-\slimit+\frac13{\bparamk}) e^{(4+4\slimit)\tau}-s\omega \frac32(2-\gamma)e^{(6-3\gamma)\tau}+\O(e^{(2\maxdecayleft-\eps)\tau}),\\
	\tilde\Sigma'={}&2(q-2)\tilde\Sigma-4\Delta N_+-4\Sigma_+\tilde A\\
		={}&4\alpha(\frac13{\bparamk}-\frac13{\bparamk}\slimit^2-1+\slimit^2-\slimit )e^{(4+4\slimit)\tau}-3\omega(1-\slimit^2)(2-\gamma) e^{(6-3\gamma)\tau}+\O(e^{(2\maxdecayleft-\eps)\tau}).
	\end{align}
	Integrating these expressions and making use of
	the relation $\slimit^2={\bparamk}/({\bparamk}-3)$
	yields that for $\slimit\not=0$ (or equivalently ${\bparamk}\not=0$) one finds~\eqref{eqn_leftoftaub2additionaldecay_otherssigmaplus} and~\eqref{eqn_leftoftaub2additionaldecay_otherstildesigma}. In case~$\slimit=0$ and~${\bparamk}=0$, one obtains equation~\eqref{eqn_leftoftaub2additionaldecay_specialstildesigma} for $\tilde\Sigma$, and
	\begin{equation}
		\Sigma_+=\O(e^{(2\maxdecayleft-\eps)\tau}).
	\end{equation}
	Note that~$2\maxdecayleft\le8<4+4\sqrt3$. We are going to obtain a stronger estimate further down.

	\smallskip

	To find the improved decay for the remaining variables~$\Delta$ and~$N_+$, we recall from the proof of Lemma~\ref{lemm_decaydeltanplus} that
	\begin{equation}
	\label{eqn_deltaplusrnplusforimproveddecay}
		(\Delta+\rfactornew N_+)'=(2+2\slimit+2\sqrt{3(1-\slimit^2)}+\frac{f_1\Delta+f_2N_+}{\Delta+\rfactornew N_+})(\Delta+\rfactornew N_+),
	\end{equation}
	which there appeared as \equ ~\eqref{eqn_evolutionrdeltaplusnplus}. Here, one has $\rfactornew=\sqrt{(1-\slimit^2)/3}$, and a closer look at the computation carried out in that proof reveals that the two functions~$f_1$, $f_2$ which vanish asymptotically as $\tau\rightarrow{-}\infty$ have the form
	\begin{align}
		f_1={}&2q-4+2\Sigma_+-2\slimit,\\
		f_2={}&\rfactornew q-2\rfactornew+2\rfactornew\Sigma_+-2\rfactornew\slimit+2\tilde\Sigma-2(1-\slimit^2)-2\tilde N.
	\end{align}
	Due to the previous results, both functions decay as $\O(e^{(\maxdecayleft-\eps)\tau})$.
	As~$\Delta$ and~$\rfactornew N+$ have the same sign, we see that the quotient with the functions~$f_1$, $f_2$ in \equ ~\eqref{eqn_deltaplusrnplusforimproveddecay} inherits the asymptotic behaviour of these two functions, \ie decays exponentially to order $\O(e^{(\maxdecayleft-\eps)\tau})$.
	Hence, we have
	\begin{equation}
		(e^{-(2+2\slimit+2\sqrt{3(1-\slimit^2)})\tau}(\Delta+\rfactornew N_+))'=e^{-(2+2\slimit+2\sqrt{3(1-\slimit^2)})\tau}(\Delta+\rfactornew N_+)\O(e^{(\maxdecayleft-\eps)\tau}),
	\end{equation}
	and find
	\begin{equation}
		e^{-(2+2\slimit+2\sqrt{3(1-\slimit^2)})\tau}(\Delta+\rfactornew N_+)=\beta+\O(e^{(\maxdecayleft-\eps)\tau}),
	\end{equation}
	for some constant $\beta\not=0$ having the same sign as~$\Delta$ and~$N_+$. Consequently, we obtain
	\begin{equation}
	\label{eqn_improveddecaydeltaplusrnplus}
		(\Delta+\rfactornew N_+)=\beta e^{(2+2\slimit+2\sqrt{3(1-\slimit^2)})\tau}+\O(e^{(2+2\slimit+2\sqrt{3(1-\slimit^2)}+\maxdecayleft-\eps)\tau}).
	\end{equation}
	We further see from Remark~\ref{rema_auxiliarydecayconstraintexpr} that
	\begin{equation}
		(\sqrt{\frac{\tilde\Sigma}{3}}N_++\Delta)(\sqrt{\frac{\tilde\Sigma}{3}}N_+-\Delta)=\frac13(3\Sigma_+^2+{\bparamk}\tilde\Sigma)\tilde A=\O(e^{(8+8\slimit+4\sqrt{3(1-\slimit^2)}-\eps)\tau}).
	\end{equation}
	The very first bracket is of order $e^{(2+2\slimit+2\sqrt{3(1-\slimit^2)})\tau}$ and not faster.
	Thus,
	\begin{equation}
	\label{eqn_relationdeltanplusimproveddecay}
		\Delta=\rfactornew N_++\O(e^{(2+2\slimit+2\sqrt{3(1-\slimit^2)}+\maxdecayleft-\eps)\tau}).
	\end{equation}
	Together with \equ~\eqref{eqn_improveddecaydeltaplusrnplus}, this implies the decay expressions for~$\Delta$ and~$N_+$ in the statement, with $\betaD=\rfactornew\betaN$.

	\smallskip

	It remains to show equation~\eqref{eqn_leftoftaub2additionaldecay_specialssigmaplus}. We know from the above that~$\tilde N=N_+^2/3=\O(e^{(4+4\sqrt{3})\tau})$.
	Using the function~$\integraltwominusq$ defined as in Lemma~\ref{lemm_functionintegraltwominusq},
	we integrate
	\begin{equation}
		(\Sigma_+e^{\integraltwominusq})'={-}2\tilde N e^{\integraltwominusq}=\O(e^{(4+4\sqrt{3})\tau})
	\end{equation}
	and obtain
	\begin{equation}
		\Sigma_+e^{\integraltwominusq}=\O(e^{(4+4\sqrt{3})\tau}).
	\end{equation}
	As the function~$\integraltwominusq$ is non-negative, this gives the decay for~$\Sigma_+$ and concludes the proof.
\end{proof}

\section{Asymptotics at Taub~2}
\label{section_taubtwo}

In this section, we determine precisely which orbits converge to the point Taub~2, \ie which solutions to \equs ~\eqref{eqns_evolutionbianchib}--\eqref{eqn_evolutionomega} satisfy
\begin{equation}
	\lim_{\tau\rightarrow{-}\infty}(\Sigma_+,\tilde\Sigma,\Delta,\tilde A,N_+)(\tau)=\left(\frac12,\frac34,0,0,0\right).
\end{equation}
We have to restrict ourselves to~$\gamma\in[0,2)$ in order to apply our results from the previous section. From Prop.~\ref{prop_main_generalasymptoticproperties}, we know that along every solution converging to the point Taub~2
\begin{equation}
	\tilde A({\bparamk}+1)=0
\end{equation}
has to hold, and we found precise decay conditions for all variables.

We deduce in the present section that only locally rotationally symmetric models of Bianchi type~I, II and~VI$_{{-}1}$ are possible.
In order to prove this statement, we show that the invariant set consisting of these three Bianchi types, compare Table~\ref{table_bianchihighersymmetry} and Definition~\ref{defi_lrsexpansionnorm}, contains all solution converging to the point Taub~2.
\begin{prop}
\label{prop_taub2specialandnonspecialkvalue}
	Let $\gamma\in[0,2)$ and consider a solution to~\eqref{eqns_evolutionbianchib}--\eqref{eqn_evolutionomega} which satisfies
	\begin{equation}
		\lim_{\tau\rightarrow{-}\infty}(\Sigma_+,\tilde\Sigma,\Delta,\tilde A,N_+)(\tau)=\left(\frac12,\frac34,0,0,0\right).
	\end{equation}
	Then the solution is contained in the invariant set
	\begin{equation}
		3\Sigma_+^2=\tilde\Sigma\qquad \Sigma_+N_+=\Delta.
	\end{equation}
\end{prop}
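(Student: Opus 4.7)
The plan is to introduce the two scalar quantities $U := 3\Sigma_+^2 - \tilde\Sigma$ and $V := \Sigma_+N_+ - \Delta$, whose joint vanishing is precisely the conclusion of the proposition, and to show that the non-negative combination $W := U^2 + 2V^2$ must be identically zero by a logarithmic-derivative argument.

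First I would recall the derivative identities already computed in~\eqref{eqn_derivativefunctionforscc},
\begin{align*}
U' &= 2(q-2)\,U - 4N_+\,V + 4\Sigma_+(1+\bparamk)\,\tilde A,\\
V' &= 2(q-1-2\Sigma_+)\,V + 2N_+\,U.
\end{align*}
Proposition~\ref{prop_main_generalasymptoticproperties} applied with $\slimit=\tfrac12$ gives $\tilde A\bigl(3\slimit^2+\bparamk(1-\slimit^2)\bigr)=\tfrac34\,\tilde A(1+\bparamk)=0$ along the whole orbit, which kills the inhomogeneous term in~$U'$. Thus $(U,V)$ obeys a closed, time-dependent linear system.

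The decisive observation will be that with the weight $2$ the cross terms $\pm 8N_+\,UV$ cancel upon differentiating $W$, leaving
\[
W' \;=\; 4(q-2)\,U^2 + 8(q-1-2\Sigma_+)\,V^2 \;=\; 4(q-2)\,W + 8(1-2\Sigma_+)\,V^2.
\]
Suppose for contradiction that $W\not\equiv 0$; linearity and uniqueness for $(U,V)$ then force $W>0$ everywhere, so $(\ln W)' = 4(q-2) + 8(1-2\Sigma_+)\,V^2/W$ with the bounded factor $V^2/W\in[0,\tfrac12]$. I then integrate from a fixed $\tau_0$ down to $\tau\to -\infty$. The first term is controlled because Lemma~\ref{lemm_functionintegraltwominusq} shows $\int_{-\infty}^{\tau_0}(2-q)\,ds$ is finite; the second is absolutely integrable because Proposition~\ref{prop_main_generalasymptoticproperties} gives $|1-2\Sigma_+|=O(e^{(\maxdecayleft-\eps)\tau})$ with $\maxdecayleft>0$ for $\gamma\in[0,2)$. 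Hence $\ln W(\tau)$ stays bounded as $\tau\to -\infty$, contradicting $W(\tau)\to 0$ (which holds since $U,V\to 0$ at Taub~2). Consequently $W\equiv 0$, yielding both $3\Sigma_+^2=\tilde\Sigma$ and $\Sigma_+N_+=\Delta$ along the orbit.

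The main obstacle is spotting the correct quadratic form: a generic weight $U^2+aV^2$ leaves an uncontrolled $N_+\,UV$ contribution whose sign cannot be fixed a priori, and the argument collapses; only $a=2$ produces the clean identity above, after which the asymptotic bookkeeping established in Section~\ref{section_asymptoticdecaykasner} takes over routinely.
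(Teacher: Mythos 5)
Your proposal is correct and follows essentially the same route as the paper: the quadratic form $W=U^2+2V^2$ is exactly the paper's function $\functionforscc=(3\Sigma_+^2-\tilde\Sigma)^2+2(\Sigma_+N_+-\Delta)^2$, the invariance via $\tilde A(1+\bparamk)=0$ from Prop.~\ref{prop_main_generalasymptoticproperties} is the same, and the contradiction between $W\to0$ and boundedness/integrability of $(\ln W)'$ (using Lemma~\ref{lemm_functionintegraltwominusq} and the decay of $1-2\Sigma_+$) is the paper's argument phrased as an equality rather than the paper's inequality $\functionforscc'\le\functionprooftaubtwo\functionforscc$. You have merely made explicit the cross-term cancellation that the paper leaves implicit.
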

The proof revolves around the function
\begin{equation}
	\functionforscc=(3\Sigma_+^2-\tilde\Sigma)^2+2(\Sigma_+N_+-\Delta)^2.
\end{equation}
It is constructed in such a way that its zero set characterises the invariant set from Prop.~\ref{prop_taub2specialandnonspecialkvalue}.
The proof consists of showing that the function vanishes along all orbits converging to the point Taub~2. This is an adaptation of an approach which has already been used successfully in~\cite[Sect.~4]{ringstrom_curvblowupbianchiviiiandixvacuumspacetimes} in the case of Bianchi~A vacuum models. However, our method of showing that the function~$\functionforscc$ indeed vanishes is slightly different.
\begin{proof}
	The derivatives of the constituents of~$\functionforscc$ have been computed in~\eqref{eqn_derivativefunctionforscc}. Combining this with
	\begin{equation}
		\tilde A(\bparamk+1)=0
	\end{equation}
	which holds due to Prop.~\ref{prop_main_generalasymptoticproperties},
	this immediately yields that the set $\functionforscc=0$ is invariant.
	From the definition of~$\functionforscc$ as the sum of two squares
	we see that either~$\functionforscc\equiv0$ or~$\functionforscc>0$. In the first case, the statement follows. Assume therefore that~$\functionforscc>0$ holds.

	Let us have a closer look at the derivatives of the two terms which compose $\functionforscc$, see equation~\eqref{eqn_derivativefunctionforscc}. As
	\begin{equation}
		2\Sigma_+\rightarrow1,\qquad N_+\rightarrow0,\qquad q\rightarrow2,
	\end{equation}
	due to the assumption on the limit point, we conclude that
	\begin{equation}
		\functionforscc'\le \functionprooftaubtwo\functionforscc,
	\end{equation}
	for some function~$\functionprooftaubtwo$ which is integrable on~$({-}\infty,T]$ for some sufficiently negative time~$T$. This follows from the convergence rates obtained in Prop.~\ref{prop_main_generalasymptoticproperties}. Consequently, integration yields
	\begin{equation}
		\functionforscc(T)\le e^C\functionforscc(\tau)
	\end{equation}
	for all~$\tau\le T$, where~$C>0$ is a constant. Due to~$\functionforscc(\tau)\rightarrow0$ as~$\tau\rightarrow{-}\infty$, this implies~$\functionforscc(T)=0$, a contradiction.
\end{proof}
As a direct consequence of Prop.~\ref{prop_taub2specialandnonspecialkvalue}, the following theorem characterises the models whose orbits converge to the Kasner point Taub~2.
\begin{theo}
\label{theo_taubtwocharacterisationoforbits}
	Let $\gamma\in[0,2)$ and consider a solution to \equs ~\eqref{eqns_evolutionbianchib}--\eqref{eqn_evolutionomega} converging to $\left(1/2,3/4,0,0,0\right)$ as $\tau\rightarrow{-}\infty$. Then this solution is one of the following:
	\begin{itemize}
		\item Bianchi~I~LRS, \ie $\Delta=\tilde A=N_+=0$, $3\Sigma_+^2=\tilde\Sigma$,
		\item Bianchi~II~LRS, \ie $\tilde A=0$, $3\Sigma_+^2=\tilde\Sigma$, $\Sigma_+N_+=\Delta$,
		\item Bianchi~VI$_{{-}1}$~LRS, \ie ${\bparamk}={-}1$, $\tilde A>0$, $3\Sigma_+^2=\tilde\Sigma$, $\Sigma_+N_+=\Delta$.
	\end{itemize}
\end{theo}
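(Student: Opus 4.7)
The statement is essentially a corollary of Prop.~\ref{prop_taub2specialandnonspecialkvalue} combined with Prop.~\ref{prop_main_generalasymptoticproperties}, together with a comparison to the invariant subsets listed in Table~\ref{table_bianchihighersymmetry}. Since the hard analytic work --- showing that the LRS characterisation holds along the orbit --- is done in Prop.~\ref{prop_taub2specialandnonspecialkvalue}, the remaining task for the theorem is a short case distinction. I would therefore structure it as follows.

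First I would invoke Prop.~\ref{prop_taub2specialandnonspecialkvalue} with $\gamma\in[0,2)$ to conclude that any solution converging to $(1/2,3/4,0,0,0)$ as $\tau\to{-}\infty$ satisfies the two pointwise relations
\begin{equation}
3\Sigma_+^2=\tilde\Sigma,\qquad \Sigma_+N_+=\Delta,
\end{equation}
along the whole orbit. Next I would invoke Prop.~\ref{prop_main_generalasymptoticproperties} with $\slimit=1/2$, which lies in $[{-}1,1/2]$ and hence falls under the hypothesis of that proposition; it yields
\begin{equation}
\tilde A\bigl(3\slimit^2+\bparamk(1-\slimit^2)\bigr)=\tilde A\,\tfrac{3}{4}(\bparamk+1)=0
\end{equation}
along the orbit, so either $\tilde A\equiv 0$ or $\bparamk={-}1$.

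I then split into cases. If $\tilde A>0$ at some point, invariance of $\tilde A>0$ under the evolution (see Remark~\ref{rema_expnormvarconstraintspreserved}) and the above relation force $\bparamk={-}1$ and $\tilde A>0$ throughout. Combined with $3\Sigma_+^2=\tilde\Sigma$ and $\Sigma_+N_+=\Delta$, this is exactly the defining set $S_{\lrs}^\pm(\text{VI}_{{-}1})$ from Table~\ref{table_bianchihighersymmetry}. If instead $\tilde A\equiv 0$, two subcases remain: either $N_+\equiv 0$ along the orbit, in which case $\Delta=\Sigma_+N_+=0$ so the solution lies in $S_{\lrs}(\text{I})$; or $N_+$ does not vanish identically, in which case the invariance of the set $\{\tilde A=0\}\setminus\{N_+=0\}$ (which follows from the evolution equations for $\tilde A$ and $N_+$) together with the two pointwise identities places the orbit in $S_{\lrs}^\pm(\text{II})$.

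No genuinely difficult step remains: the only minor verification is checking that, in the two Bianchi class~A subcases $\tilde A\equiv 0$, the set $\{N_+=0\}$ is itself invariant --- which is immediate from the evolution equation $N_+'=(q+2\Sigma_+)N_++6\Delta$ combined with $\Delta=\Sigma_+ N_+$ --- so that the dichotomy $N_+\equiv 0$ versus $N_+$ nowhere vanishing is well posed. The main conceptual obstacle, namely proving that the function $\functionforscc=(3\Sigma_+^2-\tilde\Sigma)^2+2(\Sigma_+N_+-\Delta)^2$ vanishes identically along the orbit, has already been handled in Prop.~\ref{prop_taub2specialandnonspecialkvalue}, so the theorem follows by simply reading off the three invariant subsets.
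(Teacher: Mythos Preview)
Your proposal is correct and follows essentially the same route as the paper. The paper treats the theorem as a direct consequence of Prop.~\ref{prop_taub2specialandnonspecialkvalue} (noting that the relation $\tilde A(\bparamk+1)=0$ from Prop.~\ref{prop_main_generalasymptoticproperties} is already invoked inside that proposition's proof), and your explicit case distinction on $\tilde A$ and $N_+$ simply spells out the matching against the invariant sets in Table~\ref{table_bianchihighersymmetry} that the paper leaves implicit.
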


\section{Asymptotics to the left of Taub~2}
\label{section_leftoftaubtwo}

In this section, we discuss the set of solutions converging to a Kasner point to the left of Taub~2 as $\tau\rightarrow{-}\infty$, \ie with a limit point $(\slimit,1-\slimit^2,0,0,0)$ satisfying~${-}1<\slimit<1/2$.
We have shown in Prop.~\ref{prop_main_generalasymptoticproperties} that for such orbits either $\tilde A=0$ has to hold, or the limit value~$\slimit$ and the parameter~$\bparamk$ have to satisfy the relation~$3\slimit^2+\bparamk(1-\slimit^2)=0$. This is the same restriction we appealed to in the previous section, when discussing convergence to the point Taub~2.
In other words, for a given~${\bparamk}$, any solution to \equ s~\eqref{eqns_evolutionbianchib}--\eqref{eqn_evolutionomega} converging to a point $(\slimit,1-\slimit^2,0,0,0)$ on the Kasner parabola with $-1<\slimit<1/2$ is either a Bianchi~A solution or has to converge to the limit point with $\slimit=\pm\sqrt{{\bparamk}/({\bparamk}-3)}$. This last special case, which can only occur if ${\bparamk}\le0$, is discussed in more detail in this section, using concepts, notation and results from dynamical systems theory which we recall in Appendix~\ref{section_appendixdynamsystheo}.

The object we are interested in is the sub\mf\ called centre-unstable \mf~$\C^u$, as it contains the maximal negatively invariant set~$A^-(U)$ of a suitable open neighborhood~$U$ of the point $(\slimit,1-\slimit^2,0,0,0)$. This latter set consists of all points which remain in~$U$ under the evolution in the negative time direction. In particular, all solutions converging to a point in~$U$ as $\tau\rightarrow{-}\infty$ are contained in this set~$A^-(U)$ for sufficiently negative times.

To determine the properties of the centre-unstable \mf~$\C^u$, we consider
the linearised evolution equations in the extended five-dimensional state space, by which we mean the linear approximation of the evolution \equ s~\eqref{eqns_evolutionbianchib}, with~$q$ and~$\tilde N$ defined as in equations~\eqref{eqn_definitionqgeneral} and~\eqref{eqn_definitiontilden}, but without assuming the constraint equations~\eqref{eqn_constraintgeneralone} and~\eqref{eqn_constraintgeneraltwo}.
The corresponding matrix is a linear transformation of the five-dimensional tangent space to~$\RR^5$ at equilibrium points of the evolution. We give the explicit form of the linear mapping and its eigenvectors and -values in Appendix~\ref{subsect_appendixlinearisedevolutionkasner}. For every Kasner point to the left of Taub~2, there are exactly four eigenvectors such that the corresponding eigenvalues are non-negative.

The centre-unstable \mf~$\C^u$ which we want to understand is tangent to the set~$E_m{}^c\oplus E_m{}^u$ which in its turn is spanned by the eigenvectors to eigenvalues with non-negative real part. The information on the number of such eigenvalues therefore translates into properties of the centre-unstable \mf, and thus into information on the set which solutions converging to a Kasner point to the left of Taub~2 have to eventually be contained in. We find the following statement: There is a four-dimensional \mf\ in~$\RR^5$ such that every solution converging to a point $(\slimit,1-\slimit^2,0,0,0)$ on the Kasner parabola with $-1<\slimit<1/2$ is contained in this \mf\ for sufficiently negative times.
\begin{prop}
\label{prop_centreunstablelocallyleftoftaub2}
	Consider the evolution \equ s~\eqref{eqns_evolutionbianchib}--\eqref{eqn_definitiontilden}
	in the extended state space, \ie without assuming the constraint \equ s~\eqref{eqn_constraintgeneralone}--\eqref{eqn_constraintgeneraltwo}.
	If ${\bparamk}\le0$ and $\slimit\coloneqq\pm\sqrt{{\bparamk}/({\bparamk}-3)}\in({-}1,1/2)$, then
	there is a neighborhood~$U$ of the point $(\slimit,1-\slimit^2,0,0,0)$ and a four-dimensional
	$C^1$~sub\mf
	~$\centreunstablespecialkasner$ of~$\RR^5$ in~$U$ with the following properties:
	\begin{itemize}
		\item $\centreunstablespecialkasner$ contains the point $(\slimit,1-\slimit^2,0,0,0)$ and in this point is tangent to those eigenvectors of the linearised evolution equations in the extended five-dimensional space, given in Appendix~\ref{subsect_appendixlinearisedevolutionkasner}, which correspond to eigenvalues with non-negative real part.
		\item Points in~$U$ are either contained in~$\centreunstablespecialkasner$ or
		their evolution under \equ s~\eqref{eqns_evolutionbianchib}--\eqref{eqn_definitiontilden} leaves~$U$ as $\tau\rightarrow{{-}\infty}$.
	\end{itemize}
\end{prop}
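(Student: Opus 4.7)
The proposition is essentially an application of the centre-unstable manifold theorem, which is recalled in Appendix~\ref{section_appendixdynamsystheo} as Thm~\ref{theo_centremftheory}. The plan is to verify the hypotheses of that theorem at the equilibrium point $p_\slimit \coloneqq (\slimit,1-\slimit^2,0,0,0)$ and then read off the stated properties from its conclusion.

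First I would check that $p_\slimit$ is indeed an equilibrium of the evolution equations~\eqref{eqns_evolutionbianchib}--\eqref{eqn_definitiontilden} in the extended state space. Since $p_\slimit\in\kasnerparabola$, this follows by inspection of the \rhs. The vector field is polynomial and hence~$C^\infty$, so the smoothness hypothesis of Thm~\ref{theo_centremftheory} is trivially satisfied.

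Next I would analyse the spectrum of the linearisation, whose explicit form is given in Appendix~\ref{subsect_appendixlinearisedevolutionkasner} with eigenvalues listed in~\eqref{eqn_eigenvalueskasner}:
\begin{equation}
	0,\qquad 2(1+\slimit\pm\sqrt{3(1-\slimit^2)}),\qquad 4(1+\slimit),\qquad 3(2-\gamma).
\end{equation}
For $\slimit\in({-}1,1/2)$ and $\gamma\in[0,2)$ one has $4(1+\slimit)>0$ and $3(2-\gamma)>0$, while the $\pm$-pair splits into one strictly positive and one strictly negative eigenvalue, since $\sqrt{3(1-\slimit^2)}>1+\slimit$ precisely when $\slimit<1/2$. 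Hence there are four eigenvalues with non-negative real part (one zero, three strictly positive) and exactly one with strictly negative real part. The associated generalised eigenspaces~$E^c_{p_\slimit}$, $E^u_{p_\slimit}$, $E^s_{p_\slimit}$ therefore have dimensions~$1$, $3$, $1$, so $E^c_{p_\slimit}\oplus E^u_{p_\slimit}$ is four-dimensional, matching the dimension claimed for~$\centreunstablespecialkasner$.

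With this spectral information, Thm~\ref{theo_centremftheory} furnishes a neighborhood~$U$ of $p_\slimit$ and a $C^1$ centre-unstable sub\mf~$\centreunstablespecialkasner\subset U$ which is tangent to $E^c_{p_\slimit}\oplus E^u_{p_\slimit}$ at $p_\slimit$ and which contains the maximal negatively invariant subset $A^-(U)$ of~$U$. The first bullet of the proposition is then immediate. For the second bullet, recall that if $x\in U$ does not leave $U$ as $\tau\to{-}\infty$ under the evolution, then the whole backward orbit of $x$ lies in $U$, so $x\in A^-(U)\subset\centreunstablespecialkasner$; the contrapositive is exactly the dichotomy stated.

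The only point requiring a small amount of care is the precise sign bookkeeping in showing $\sqrt{3(1-\slimit^2)}>1+\slimit$ throughout $({-}1,1/2)$ (the inequality degenerates to equality at $\slimit=1/2$, which is why the point Taub~2 had to be excluded, and at $\slimit={-}1$ the point Taub~1 acquires an extra zero eigenvalue); these boundary cases explain the strict bounds imposed on $\slimit$ in the hypothesis. No genuine obstacle arises, since the heavy lifting is done by the dynamical systems theorem recalled in Appendix~\ref{section_appendixdynamsystheo}.
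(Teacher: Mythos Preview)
Your proposal is correct and follows essentially the same approach as the paper: apply Thm~\ref{theo_centremftheory} to the single equilibrium point $p_\slimit$ (viewed as a zero-dimensional manifold of equilibria), count eigenvalues to see that $E^c\oplus E^u$ is four-dimensional, and read off both bullets from the conclusion $A^-(U)\subset\C^u$. Your explicit verification that $\sqrt{3(1-\slimit^2)}>1+\slimit$ on $({-}1,1/2)$ is a small bonus over the paper, which simply lists the non-negative eigenvalues without justifying the sign split.
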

\begin{rema}
	We see in the proof that we could change the regularity of the manifold~$\centreunstablespecialkasner$ to~$C^r$, for some~$r<\infty$. Further, as all eigenvalues on~$\kasnerparabola$ are real, the \mf~$\centreunstablespecialkasner$ is tangent to those eigenvectors which correspond to non-negative eigenvalues.
\end{rema}
\begin{rema}
	This statement in particular applies to every solution to the evolution \equ s~\eqref{eqns_evolutionbianchib}--\eqref{eqn_evolutionomega} which converges to the point $(\slimit,1-\slimit^2,0,0,0)$, as $\tau\rightarrow{-}\infty$, with limit value $\slimit\coloneqq\pm\sqrt{{\bparamk}/({\bparamk}-3)}\in({-}1,1/2)$: There is a time~$\tau_0$ such that the solution is contained in the neighborhood~$U$ for all times~$\tau\le\tau_0$. Therefore, the solution has to be contained in the sub\mf~$\centreunstablespecialkasner$ for~$\tau\le\tau_0$.
\end{rema}
\begin{proof}
	Every point on the Kasner parabola is a zero-dimensional \mf\ consisting of equilibrium points of the evolution \equ s. Using the notation of Appendix~\ref{section_appendixdynamsystheo}, we can therefore apply Thm~\ref{theo_centremftheory} to this zero-dimensional sub\mf\ of equilibrium points to find a centre-unstable \mf~$\C^u$ near this point, and
	a neighborhood~$U$ of $(\slimit,1-\slimit^2,0,0,0)$ such that the maximal negatively invariant set~$A^-(U)$ is contained in~$\C^u$.
	The point~$(\slimit,1-\slimit^2,0,0,0)$ is contained in the manifold~$\C^u$ by Def.~\ref{defi_centremf}. Without loss of generality, we therefore restrict the manifold to~$U$.

	By definition, the \mf~$\C^u$ is tangential to $E_m{}^c\oplus E_m{}^u$, which are the subspaces of the tangent space at $(\slimit,1-\slimit^2,0,0,0)$ associated with eigenvalues on the imaginary axis and in the right half-plane:
	\begin{equation}
		0\qquad
		2(1+\slimit+\sqrt{3(1-\slimit^2)})\qquad
		4(1+\slimit)\qquad
		3(2-\gamma).
	\end{equation}
	The evolution equations~\eqref{eqns_evolutionbianchib} are polynomial and consequently~$C^\infty$.
	We can therefore apply Thm~\ref{theo_centremftheory} with some finite~$r$, for example~$r=1$.
	Consequently, $\C^u$ is a four-dimensional sub\mf, and it has the requested properties.
\end{proof}
\begin{rema}
	This statement could have been achieved using results from the theory of dynamical systems which are less powerful than the one we used here, Thm~\ref{theo_centremftheory}, as we only apply it to zero-dimensional \mf s and the individual points they contain. However, in Section~\ref{section_asymptoticsplanewave}, we make use of this theorem again, this time using it to a fuller extent.
\end{rema}

The previous statement gives information on the centre-unstable \mf\ corresponding to the evolution in the extended state space.
In order to understand the evolution in the non-extended state space, \ie restricted to the set of points
\st the constraint \equ s~\eqref{eqn_constraintgeneralone} and~\eqref{eqn_constraintgeneraltwo} are satisfied, we need to understand the relation between the \mf\ we found and the constraint surface defined by equation~\eqref{eqn_constraintgeneralone}. More precisely, we are interested in the codimension of their intersection, which is what we determine in Thm~\ref{theo_leftoftaubtworesult}. As the constraint \equ~\eqref{eqn_constraintgeneralone} defines a set which becomes singular in certain points, see Remark~\ref{rema_singularconstraintequ}, we cannot deduce this codimension solely from the knowledge about normal and tangent directions derived above. In addition, we use the convergence behaviour of the individual variables from Prop.~\ref{prop_leftoftaub2additionaldecay}, where we determined the slowest order exponential term and the next non-vanishing one, for convergence to Kasner points to the left of the point Taub~2.

\begin{theo}
\label{theo_leftoftaubtworesult}
	Let~$0\not={\bparamk}\in\RR$, $\gamma\in[0,2)$ and consider the set of solutions to \equs ~\eqref{eqns_evolutionbianchib}--\eqref{eqn_evolutionomega} converging to $(\slimit,1-\slimit^2,0,0,0)$ with $\slimit\in\left({-}1,1/2\right)$, as $\tau\rightarrow{-}\infty$. If~$\tilde A>0$, then
	\begin{equation}
		\slimit=\pm\sqrt{\frac{\bparamk}{{\bparamk}-3}}
	\end{equation}
	and the solution is contained in one of the following subsets:
	\begin{itemize}
		\item The invariant set satisfying $\tilde A>0$, $\Delta=0=N_+$, $3\Sigma_+^2+{\bparamk}\tilde\Sigma=0$.
		\item A countable union of $C^1$ sub\mf s satisfying $\tilde A>0$, and~$\Delta$, $N_+$ not both vanishing identically. The sub\mf s are contained either in the set of non-vacuum solutions or the set of vacuum solutions, and in the respective sets have codimension at least~one.
	\end{itemize}
\end{theo}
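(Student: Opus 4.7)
The plan is to separate the orbits of the statement into those with $\Delta\equiv N_+\equiv 0$ (first bullet, handled by the constraint) and those with $(\Delta,N_+)\not\equiv(0,0)$ (second bullet, handled via Prop.~\ref{prop_centreunstablelocallyleftoftaub2} together with the asymptotic expansions of Prop.~\ref{prop_leftoftaub2additionaldecay}). For any such orbit, Prop.~\ref{prop_main_generalasymptoticproperties} applied under $\tilde A>0$ gives $3\slimit^2+\bparamk(1-\slimit^2)=0$, which for $\bparamk\neq 0$ forces $\slimit=\pm\sqrt{\bparamk/(\bparamk-3)}$. If additionally $\Delta\equiv 0\equiv N_+$, then $\tilde N={-}\bparamk\tilde A/3$ and the constraint~\eqref{eqn_constraintgeneralone} reduces to $-\bparamk\tilde\Sigma\tilde A/3=\Sigma_+^2\tilde A$; dividing by $\tilde A>0$ places the orbit in the invariant set $\{\tilde A>0,\,\Delta=N_+=0,\,3\Sigma_+^2+\bparamk\tilde\Sigma=0\}$ of the first bullet.

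For the remaining orbits with $(\Delta,N_+)\not\equiv(0,0)$, I would apply Prop.~\ref{prop_centreunstablelocallyleftoftaub2} at each of the two possible limit points $P^\pm\coloneqq(\pm\sqrt{\bparamk/(\bparamk-3)},1-\slimit^2,0,0,0)$ to obtain open neighborhoods $U^\pm$ and four-dimensional $C^1$ sub\mf s $\centreunstablespecialkasner^\pm\subset\RR^5$. Every orbit of this type eventually enters and remains in the corresponding $U^\pm$, hence lies in $\centreunstablespecialkasner^\pm$ for all $\tau\le\tau_0$ with $\tau_0$ depending on the orbit; writing $\Psi_\sigma$ for the flow of~\eqref{eqns_evolutionbianchib}--\eqref{eqn_definitiontilden}, the time-translates $\Psi_n(\centreunstablespecialkasner^\pm)$ for $n\in\NN$ are again four-dimensional $C^1$ sub\mf s, and together they form a countable family whose union contains every such orbit. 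To cut down to codimension at least one, I intersect each $\Psi_n(\centreunstablespecialkasner^\pm)$ with the invariant hypersurface $\{C=0\}$, where $C\coloneqq\tilde\Sigma\tilde N-\Delta^2-\Sigma_+^2\tilde A$, and in the vacuum branch additionally with $\{\Omega=0\}$. Inside the invariant open set $\{\tilde A>0\}$, which for the relevant parameter range $\bparamk<0$ avoids the singular locus of $\{C=0\}$ listed in Remark~\ref{rema_singularconstraintequ}, both $\{C=0\}$ and $\{\Omega=0\}$ are smooth hypersurfaces, and a transverse cut produces $C^1$ sub\mf s of dimension three (non-vacuum) or two (vacuum), which is codimension one inside the four-dimensional non-vacuum and three-dimensional vacuum constraint manifolds respectively.

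The chief technical obstacle is transversality: the gradient~\eqref{eqn_gradientconstraintequ} of $C$ vanishes precisely at $P^\pm$, this being the singular case of Remark~\ref{rema_singularconstraintequ} for $\bparamk<0$, so nothing can be concluded from the linearisation of $C$ at the limit point alone. I would resolve this using Prop.~\ref{prop_leftoftaub2additionaldecay}, which parametrises every non-trivial orbit of the type under consideration by three independent constants $\alpha>0$, $\betaN\neq 0$ and $\omega\ge 0$, with $\betaD=\sqrt{(1-\slimit^2)/3}\,\betaN$ forced by the constraint. These constants correspond to the leading exponentials of $\tilde A$, $N_+$ and $\Omega$ and are in bijection with the three non-zero eigenvalues of non-negative real part listed in~\eqref{eqn_eigenvalueskasner}, spanning $E_m^c\oplus E_m^u$ in Appendix~\ref{subsect_appendixlinearisedevolutionkasner}. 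The resulting three-parameter description of non-vacuum orbits, dropping to a two-parameter description in vacuum ($\omega=0$), exactly matches the dimension counts of the previous paragraph and certifies the required transversality away from $P^\pm$.
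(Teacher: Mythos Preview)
Your overall architecture matches the paper's proof: the first bullet is handled exactly as you say via Prop.~\ref{prop_main_generalasymptoticproperties} and the constraint, and for the second bullet the paper likewise invokes Prop.~\ref{prop_centreunstablelocallyleftoftaub2}, intersects $\centreunstablespecialkasner$ with $\{C=0\}$ (and with $\{\Omega=0\}$ in vacuum), and then flows forward by integer times to produce the countable union.

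The genuine gap is in your transversality step. You correctly identify that the gradient~\eqref{eqn_gradientconstraintequ} of~$C$ vanishes at the limit point~$P^\pm$, but your proposed resolution --- that the three asymptotic constants~$\alpha,\betaN,\omega$ from Prop.~\ref{prop_leftoftaub2additionaldecay} ``parametrise'' the orbits and that a dimension count ``certifies the required transversality'' --- is not a proof. A parametrisation of asymptotic leading coefficients does not by itself show that $\centreunstablespecialkasner\cap\{C=0\}$ is a $C^1$ submanifold, nor does it identify which tangent direction of~$\centreunstablespecialkasner$ fails to lie in~$T\{C=0\}$. What the paper does instead is use Prop.~\ref{prop_leftoftaub2additionaldecay} in a much more concrete way: it inserts the precise decay rates into the gradient~\eqref{eqn_gradientconstraintequ} (with the fourth component rewritten via the constraint as~${-}(\tilde\Sigma N_+^2-3\Delta^2)/(3\tilde A)$), observes that the leading term is
\[
({-}2\slimit,\,{-}\tfrac{\bparamk}{3},\,0,\,0,\,0)\cdot\alpha e^{(4+4\slimit)\tau}\,+\,\text{faster},
\]
and hence, after rescaling by~$e^{-(4+4\slimit)\tau}/\alpha$, obtains a well-defined nonzero limiting normal direction even at the singular point~$P^\pm$. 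It then computes scalar products of this rescaled normal with each of the four eigenvectors spanning~$T_{P^\pm}\centreunstablespecialkasner$ (Appendix~\ref{subsect_appendixlinearisedevolutionkasner}) and finds it is orthogonal to those for eigenvalues $2+2\slimit\pm2\sqrt{3(1-\slimit^2)}$, $4(1+\slimit)$, $3(2-\gamma)$, but \emph{not} to the eigenvector for~$0$. The analogous computation for~$\nabla\Omega$ singles out the eigenvector for~$3(2-\gamma)$ instead. These two explicit non-orthogonalities, extended by~$C^1$ continuity of~$\centreunstablespecialkasner$, are what actually establish transversality and the codimension-one conclusion; your proposal skips this computation entirely.
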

\begin{proof}
	The relation between the parameter~${\bparamk}$ and the value of~$\slimit$ is an immediate consequence of Prop.~\ref{prop_main_generalasymptoticproperties}, as we assume~$\tilde A>0$. If the solution satisfies~$\Delta=0=N_+$ at some time, then it does so at all times, as this property is invariant under the evolution equations~\eqref{eqns_evolutionbianchib}. The relation between~$\Sigma_+$ and~$\tilde\Sigma$ follows from equation~\eqref{eqn_constraintgeneralone},
	which concludes the proof for the first case.
	It thus remains to calculate the maximal dimension and codimension of the submanifolds in the second case.

	Consider a solution as in the statement which satisfies that~$\Delta$ and~$N_+$ do not both vanish identically. Due to convergence, the solution is contained in the neighborhood~$U$ from the previous proposition for sufficiently negative times, and therefore the solution has to lie in the sub\mf ~$\centreunstablespecialkasner$ for sufficiently negative times.
	At the same time, the solution satisfies the constraint \equ ~\eqref{eqn_constraintgeneralone}.
	We therefore have to show that the constraint surface and the sub\mf ~$\centreunstablespecialkasner$ either have an empty intersection or intersect transversally in a set of the correct dimension, when additionally restricted to $\tilde A>0$ and~$\Delta$, $N_+$ not both vanishing identically.
	To do this, we compare the normal direction of the constraint surface to the vectors spanning~$\centreunstablespecialkasner$. By counting the number of spanning vectors which are \ogon\ to the gradient direction, we find the dimension and properties of the set in question.

	As we consider solutions with~$\tilde A>0$ and $\Delta$, $N_+$ not both vanishing, and
	\begin{equation}
		\kappa=\frac{3\slimit^2}{\slimit^2-1}<0
	\end{equation}
	by Prop.~\ref{prop_main_generalasymptoticproperties}, we conclude from Remark~\ref{rema_singularconstraintequ}
	that the gradient of the constraint equation does not vanish along such solutions.
	Consequently, the set of points defined by equation~\eqref{eqn_constraintgeneralone} is smooth along the solutions under consideration.
	However, due to the special value of~${\bparamk}$ the constraint surface becomes singular in the limit point~$(\slimit,1-\slimit^2,0,0,0)$.
	We therefore cannot simply compute the scalar product between the (vanishing) gradient of the constraint equation and the vectors spanning the sub\mf ~$\centreunstablespecialkasner$ to show transversality, at least not in the limit point itself. Instead, we consider the gradient in its general form~\eqref{eqn_gradientconstraintequ}, but replace the fourth component using the constraint \equ ~\eqref{eqn_constraintforproofs} to obtain
	\begin{equation}
		({-}2\Sigma_+\tilde A\,,\,\frac13(N_+^2-{\bparamk}\tilde A)\,,\,{-}2\Delta\,,\,{-}\frac1{3\tilde A}(\tilde\Sigma N_+^2-3\Delta^2)\,,\,\frac23\tilde\Sigma N_+).
	\end{equation}
	Applying the improved convergence properties found in Prop.~\ref{prop_leftoftaub2additionaldecay} as well as Remark~\ref{rema_auxiliarydecayconstraintexpr} shows that the decay behaviour of this gradient is
	\begin{equation}
		({-}2\slimit,{-}\frac {\bparamk}3,0,0,0)\cdot\alpha e^{(4+4\slimit)\tau}+\O(e^{(\min(2+2\slimit+2\sqrt{3(1-\slimit^2)},4+4\slimit+\maxdecayleft-\eps))\tau}),
	\end{equation}
	where we used the special value of~${\bparamk}$. The last term here denotes a vector in~$\RR^5$ whose every component has the denoted decay. This decay is faster than the one of the first vector, which decays to order $4+4\slimit$. Consequently, we can normalise the gradient vector by multiplying with $e^{{-}(4+4\slimit)\tau}/\alpha$
	to eliminate the highest order of decay. Rescaling in this manner gives a well-defined non-vanishing gradient direction even up to the singular point $(\slimit,1-\slimit^2,0,0,0)$.

	Let us now turn to~$\centreunstablespecialkasner$. In the limit point~$(\slimit,1-\slimit^2,0,0,0)$, this four-dimensional sub\mf\ is spanned by the eigenvectors to non-negative eigenvalues, see Appendix~\ref{subsect_appendixlinearisedevolutionkasner} for the explicit form of these eigenvectors.
	Direct computation of the scalar product shows that the rescaled gradient of the constraint equation, \ie the vector $({-}2\slimit,{-}{\bparamk}/3,0,0,0)$, is \ogon\ to the eigenvectors to eigenvalues $2+2\slimit\pm2\sqrt{3(1-\slimit^2)}$, $4(1+\slimit)$ and $3(2-\gamma)$, but not the eigenvector to~$0$.

	Similarly, we can compare the spanning directions of~$\centreunstablespecialkasner$ to the normal direction of the set~$\Omega=0$, \ie the gradient of~$\Omega$. Due to equation~\eqref{eqn_omegageneral}, this gradient is
	\begin{equation}
		({-}2\slimit,{-}1,0,\frac{\bparamk}3-1,0)
	\end{equation}
	for the point on the Kasner parabola~$\kasnerparabola$ with~$\Sigma_+=\slimit$. Direct computation of the scalar product shows that this gradient is \ogon\ to the eigenvectors to eigenvalues $2+2\slimit\pm2\sqrt{3(1-\slimit^2)}$, $4(1+\slimit)$ and $0$, but not the eigenvector to~$3(2-\gamma)$.

	As the \mf~$\centreunstablespecialkasner$ is $C^1$, the spanning vectors depend continuously on the point. Consequently, in a sufficiently small neighborhood of the point~$(\slimit,1-\slimit^2,0,0,0)$, the \mf~$\centreunstablespecialkasner$ and the constraint surface intersect transversally in a sub\mf\ of dimension at most~three if they intersect at all. The solutions for non-vacuum, \ie $\Omega>0$, form a set of dimension four in the constraint surface, see also Table~\ref{table_bianchibsubsets}, as~$\bparamk<0$ by assumption. Hence, in a sufficiently small neighborhood the intersection of the \mf~$\centreunstablespecialkasner$ and the constraint surface is of codimension at least one in the set of all non-vacuum solutions.

	For vacuum solutions, we realise that one of the four eigenvectors in question is non-\ogon\ to the gradient of~$\Omega$, and another one is non-\ogon\ to the rescaled gradient of the constraint equation. Consequently, restricting the \mf~$\centreunstablespecialkasner$ first to the set~$\Omega=0$ and then additionally to the constraint surface by an argument similar to the one for~$\Omega>0$ yields that
	in a sufficiently small neighborhood of the point~$(\slimit,1-\slimit^2,0,0,0)$, the intersection of the \mf~$\centreunstablespecialkasner$, the constraint surface and the set~$\Omega=0$ is a sub\mf\ of dimension at most~two. As all vacuum solutions form a set of dimension three, in a sufficiently small neighborhood this intersection is of codimension at least one in the set of all vacuum solutions.

	We now apply the flow corresponding to the evolution equations and integer times to this intersection \mf. As the flow is a diffeomorphism coming from a polynomial evolution equation, the resulting set is a countable union of $C^1$ sub\mf s of codimension at least~one in the respective set of solutions, and by construction contains all solutions satisfying the properties listed in the statement. Further, the set~$\Omega=0$ is invariant under the flow.
\end{proof}
\begin{rema}
	In case~${\bparamk}=0$, we cannot apply the same reasoning, as the improved convergence properties from Prop.~\ref{prop_leftoftaub2additionaldecay} and the relation between~$\betaD$ and~$\betaN$ found in the same proposition yield that the gradient~\eqref{eqn_gradientconstraintequ} of the constraint equation decays as
	\begin{equation}
		(0,0,{-}1,0,\frac{\sqrt3}3)\cdot2\betaD e^{(2+2\sqrt{3})\tau}
		+\O(e^{(2+2\sqrt{3}+\maxdecayleft-\eps)\tau}),
	\end{equation}
	where we used Remark~\ref{rema_auxiliarydecayconstraintexpr} for the fourth component.
	One can normalise this vector by multiplication with $e^{{-}(2+2\sqrt{3})\tau}/(2\betaD)$ but the resulting direction in the limit point~$(0,1,0,0,0)$ is \ogon\ to all four eigenvectors to non-negative eigenvalues. This means that the sub\mf~$\centreunstablespecialkasner$ and the constraint surface do not intersect transversally, but are tangent in the limit point. This does not give any additional information on the set containing possible solutions.
\end{rema}

\section{Asymptotics to the right of Taub~2}

\label{section_rightoftaubtwo}

In this section, we turn our attention to solutions with a limit point on the Kasner parabola to the right of Taub~2, \ie a limit point in $\kasnerparabola\cap\{\Sigma_+>1/2\}$.
For such points, all eigenvalues but one are positive as soon as $\gamma<2$. One can therefore expect that this arc of equilibrium points acts as a source, even in the extended state space, as mentioned by Hewitt--Wainwright in~\cite{hewittwainwright_dynamicalsystemsapproachbianchiorthogonalB}.
As one considers an arc of equilibrium points, it is desirable to make a more thorough analysis. We carry out this analysis, using not the signs of the eigenvalues but the explicit evolution equations.
In our understanding, an arc of the Kasner parabola can be considered a source if for any point on this arc, every orbit which
enters a sufficiently small neighborhood of that point
also converges to the Kasner parabola as~$\tau\rightarrow{-}\infty$, and the limiting point is close to that particular point on the arc. Corollary~\ref{coro_rightoftaubtwovacuuminflationary} states that this holds for the arc of the Kasner parabola which lies to the right of the point Taub~2, for vacuum and inflationary matter models. In general, we are able to show that the~$\Sigma_+$ coordinate cannot differ too much.
\begin{prop}
\label{prop_rightoftaub2source}
	Let $1/2<\bar\slimit\le1$ and $0<\eps<\bar\slimit-1/2$. Then there is a neighborhood~$U$ of $(\bar\slimit,1-\bar\slimit^2,0,0,0)$
	with the following properties: For every solution to \equs ~\eqref{eqns_evolutionbianchib}--\eqref{eqn_evolutionomega} which intersects~$U$ at time~$\tau_0$, the~$\Sigma_+$-value
	satisfies
	\begin{equation}
		\absval{\Sigma_+(\tau)-\bar\slimit}<\eps
	\end{equation}
	for all~$\tau\le\tau_0$.
\end{prop}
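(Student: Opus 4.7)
The plan is to split the proof into a lower and an upper bound on $\Sigma_+(\tau)$. The lower bound $\Sigma_+(\tau) > \bar\slimit - \eps$ for all $\tau \le \tau_0$ is immediate from Lemma~\ref{lemm_monotonicitysigmaplus}: if $U$ is chosen small enough that $\Sigma_+ > \bar\slimit - \eps/2 > 1/2$ on $U$ (possible since $\eps < \bar\slimit - 1/2$), then backward monotonicity $\Sigma_+' \le 0$ (valid as long as $\Sigma_+ \ge 0$, which is preserved inductively) keeps $\Sigma_+$ non-decreasing going backward, so $\Sigma_+(\tau) \ge \Sigma_+(\tau_0) > \bar\slimit - \eps$.

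For the upper bound, my plan is a bootstrap argument. I shrink $U$ further so that also $\tilde A < \delta$, $\Omega < \delta$, and $\Delta^2 + N_+^2 < \delta$ in $U$ for a small $\delta > 0$ to be determined, and define
\begin{equation*}
	T := \inf \bigl\{\, t \le \tau_0 \, : \, \Sigma_+(\tau) < \bar\slimit + \eps \text{ for all } \tau \in [t, \tau_0]\, \bigr\}.
\end{equation*}
Assuming $T > -\infty$ for contradiction, continuity gives $\Sigma_+(T) = \bar\slimit + \eps$, and on $[T, \tau_0]$ one has $\Sigma_+ \in (1/2, \bar\slimit + \eps)$ together with $\Sigma_+(T) - \Sigma_+(\tau_0) > \eps/2$. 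The task is to show the latter is at most a small multiple of $\delta$, which will contradict the lower bound once $\delta$ is chosen small.

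The main step is to establish backward exponential decay on $[T, \tau_0]$ for $\tilde A$, $\Delta^2 + N_+^2$, $\tilde N$, and, when $\gamma < 2$, also $\Omega$. Decay of $\tilde A$ follows immediately from $\tilde A' = 2(q + 2\Sigma_+)\tilde A$ together with $q + 2\Sigma_+ \ge 2\Sigma_+ - 1 \ge 2\bar\slimit - 1 - \eps > 0$. For $(\Delta, N_+)$ the crucial input is that for $\bar\slimit > 1/2$ both eigenvalues $2(1+\bar\slimit) \pm 2\sqrt{3(1-\bar\slimit^2)}$ from~\eqref{eqn_eigenvalueskasner} are strictly positive; a perturbation of the eigenvector analysis underlying Lemma~\ref{lemm_decaydeltanplus} should yield a differential inequality $(\Delta^2 + \rfactornew^2 N_+^2)' \ge \mu(\Delta^2 + \rfactornew^2 N_+^2)$ with $\mu > 0$ and $\rfactornew = \sqrt{(1-\bar\slimit^2)/3}$ (with the obvious modification using $\hat{\eps}$ in place of $\rfactornew$ in the degenerate case $\bar\slimit = 1$, cf.\ Lemma~\ref{lemm_decaydeltanplus}). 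For $\gamma < 2$ the identity $2q - (3\gamma - 2) = (6-3\gamma)(\Sigma_+^2 + \tilde\Sigma) - (3\gamma - 2)(\tilde A + \tilde N)$, close to $6 - 3\gamma > 0$ under the bootstrap, gives backward exponential decay of $\Omega$; for $\gamma = 2$ the term $\tfrac{3}{2}(2-\gamma)\Omega$ vanishes and no decay of $\Omega$ is needed. Finally $\tilde N$ inherits decay from $\Delta^2$ and $\tilde A$ via the momentum constraint $\tilde N = (\Delta^2 + \Sigma_+^2\tilde A)/\tilde\Sigma$, together with the lower bound $\tilde\Sigma \ge 1 - (\bar\slimit+\eps)^2 - \tilde A - \tilde N - \Omega$ from the Hamiltonian constraint. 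Integrating the identity $-\Sigma_+' = (2(\tilde A+\tilde N) + \tfrac{3}{2}(2-\gamma)\Omega)\Sigma_+ + 2\tilde N$ over $[T, \tau_0]$ and combining the decay estimates then yields $\Sigma_+(T) - \Sigma_+(\tau_0) \le C(\bar\slimit, \eps, \gamma, \bparamk)\, \delta$, the desired contradiction for $\delta$ small.

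The hardest step will be closing the bootstrap for $\Delta^2 + \rfactornew^2 N_+^2$. The cross-term in its derivative involves a coefficient that vanishes at the Kasner limit by the definition of $\rfactornew$, but is perturbed by errors of order $\tilde A + \tilde N + \Omega + |\Sigma_+ - \bar\slimit|$ along the trajectory. Showing that these perturbations do not destroy the lower bound $\mu > 0$ amounts to propagating the smallness of all controlled quantities simultaneously through the bootstrap window, and is a quantitative version of the eigenvalue analysis performed at the Kasner limit. The boundary case $\bar\slimit = 1$, where $\tilde\Sigma$ tends to zero and the momentum constraint degenerates, will need a separate but analogous treatment using the $\hat{\eps}$-variant of Lemma~\ref{lemm_decaydeltanplus}.
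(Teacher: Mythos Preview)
Your strategy is quite different from the paper's, and as written it has a circularity you flag but do not close. Your bootstrap interval $[T,\tau_0]$ is defined solely by the condition $\Sigma_+<\bar\slimit+\eps$; on that interval you need the differential inequality $(\Delta^2+\rfactornew^2 N_+^2)'\ge\mu(\Delta^2+\rfactornew^2 N_+^2)$, whose validity requires the perturbation terms---in particular $\tilde N$ and $2-q$---to be small. But smallness of $\tilde N$ is supposed to come from smallness of $N_+^2$, which in turn comes from the differential inequality. Nothing in the hypothesis $\Sigma_+<\bar\slimit+\eps$ prevents $\tilde N$ or $\Omega$ from being of order one somewhere on $[T,\tau_0]$; if so, $q$ is far from~$2$ and even the coefficient $4(q+\Sigma_+-1)$ of $\Delta^2$ can be negative. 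The natural fix is to enlarge the bootstrap to include smallness of $N_+^2+\Omega$ (hence of $\tilde N$) in addition to the $\Sigma_+$ condition, and then argue that neither condition can fail first; this is what ``propagating the smallness of all controlled quantities simultaneously'' would have to mean. Separately, your worry about $\bar\slimit=1$ is misplaced: when $\bar\slimit+\eps\ge1$ the upper bound is automatic from $\Sigma_+\le1$, and when $\bar\slimit+\eps<1$ the factor $\rfactornew$ has a positive lower bound depending only on~$\eps$.

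The paper sidesteps all of this. It never controls $\Delta$, $N_+$, $\tilde N$ or $\Omega$ individually; instead it proves, pointwise and using only $\Sigma_+>1/2$, the single estimate
\[
(\Sigma_++\tilde\Sigma)'\le M\sqrt{\tilde A}
\]
with $M$ depending only on $\bparamk$. (When $\Delta N_+\ge0$ one has $(\Sigma_++\tilde\Sigma)'\le0$ by inspection; when $\Delta N_+<0$ the constraint gives $|\Delta|\le\sqrt{\tilde\Sigma/3}\,|N_+|+C\sqrt{\tilde A}$, and an explicit expansion shows that the dangerous $-4\Delta N_+$ term is absorbed by the other non-positive contributions up to the $M\sqrt{\tilde A}$ error.) Since $\tilde A$ decays exponentially backward using only $\Sigma_+>1/2$, integration gives $(\Sigma_++\tilde\Sigma)(\tau)\ge(\Sigma_++\tilde\Sigma)(\tau_0)-2M\sqrt\delta$. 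The upper bound on $\Sigma_+$ is then pure planar geometry: projected to the $(\Sigma_+,\tilde\Sigma)$-plane, the trajectory lies above a line of slope~$-1$ and below the parabola $\tilde\Sigma=1-\Sigma_+^2$, whose slope $-2\Sigma_+$ is strictly less than~$-1$ to the right of Taub~2, so the two curves intersect and trap $\Sigma_+$ in a short interval. This is both simpler and more robust (uniform in~$\gamma$, no simultaneous bootstrap) than the linearisation route.
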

This estimate holds in particular for the~$\Sigma_+$-value of every $\alpha$-limit point of the solution.
In case one knows that all $\alpha$-limit points satisfying $\Sigma_+>1/2$ are located on the Kasner parabola, this immediately gives an even stronger statement:
\begin{coro}
\label{coro_rightoftaubtwovacuuminflationary}
	Assume either vacuum or inflationary matter, \ie either $\Omega=0$ or $\Omega>0$, $\gamma\in\left[0,2/3\right)$, let $1/2<\bar\slimit\le1$ and $0<\eps<\bar\slimit-1/2$.
	Then there is a neighborhood~$U$ of $(\bar\slimit,1-\bar\slimit^2,0,0,0)$
	such that every solution to \equs ~\eqref{eqns_evolutionbianchib}--\eqref{eqn_evolutionomega} which intersects~$U$ converges to a Kasner point $(\slimit,1-\slimit^2,0,0,0)$ with $\absval{\slimit-\bar\slimit}<\eps$.
\end{coro}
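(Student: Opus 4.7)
The strategy is to combine the $\Sigma_+$-confinement of Prop.~\ref{prop_rightoftaub2source} with the $\alpha$-limit classification of Prop.~\ref{prop_alphalimitsets_vacuum_inflat} and the convergence statements Prop.~\ref{prop_convergencetokasner} and Prop.~\ref{prop_convergencetoplanewave}. Given $\bar\slimit\in(1/2,1]$ and $\eps\in(0,\bar\slimit-1/2)$, first apply Prop.~\ref{prop_rightoftaub2source} with the parameter $\eps/2$ (which still satisfies the hypothesis $\eps/2<\bar\slimit-1/2$) to obtain a neighborhood $U$ of $(\bar\slimit,1-\bar\slimit^2,0,0,0)$ such that every solution intersecting $U$ at time $\tau_0$ satisfies $|\Sigma_+(\tau)-\bar\slimit|<\eps/2$ for all $\tau\le\tau_0$. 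In particular $\Sigma_+(\tau)>\bar\slimit-\eps/2>1/2$ on this past half-line.

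Next, fix such a solution $\Gamma$ and examine its $\alpha$-limit set. By Prop.~\ref{prop_alphalimitsets_vacuum_inflat} we know that
\begin{equation}
\alpha(\Gamma)\subset\kasnerparabola\cup\planewave {\bparamk}\cup\{(0,0,0,0,0)\}
\end{equation}
(the extra point only arising in the inflationary matter case). Every point of $\alpha(\Gamma)$ has $\Sigma_+\ge\bar\slimit-\eps/2>1/2$, since the $\Sigma_+$-coordinate is continuous and confined to $(\bar\slimit-\eps/2,\bar\slimit+\eps/2)$ along the backward orbit. The plane wave equilibrium points $\planewave{\bparamk}$, however, all satisfy $\Sigma_+\in(-1,0]$ by Def.~\ref{defi_planewaveexpansionnorm} (since $\tilde\Sigma=-\Sigma_+(1+\Sigma_+)\ge0$ forces $\Sigma_+\in[-1,0]$), and the extra point has $\Sigma_+=0$. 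Both are therefore incompatible with $\Sigma_+>1/2$, so $\alpha(\Gamma)\subset\kasnerparabola$.

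At this point Prop.~\ref{prop_convergencetokasner} applies directly in the non-constant case, and in the vacuum case it applies unconditionally; in the inflationary matter case, the hypothesis of Prop.~\ref{prop_convergencetokasner} that $\alpha(\Gamma)$ avoids $\planewave{\bparamk}$ has just been verified. We conclude that there exists a unique $\slimit\in(-1,1]$ such that
\begin{equation}
\lim_{\tau\rightarrow{-}\infty}\Gamma(\tau)=(\slimit,1-\slimit^2,0,0,0).
\end{equation}
Constant solutions through $U$ are automatically of this form, since all equilibrium points with $\Sigma_+>1/2$ lie on $\kasnerparabola$. Passing to the limit in the bound $|\Sigma_+(\tau)-\bar\slimit|<\eps/2$ gives $|\slimit-\bar\slimit|\le\eps/2<\eps$, which is the claimed estimate. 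The only subtlety worth guarding against is the strict versus non-strict inequality in the limit, which is precisely why the proof uses Prop.~\ref{prop_rightoftaub2source} with $\eps/2$ in place of $\eps$; with this slack, no further work is needed.
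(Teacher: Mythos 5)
Your proof is correct and follows essentially the same route as the paper's: use Prop.~\ref{prop_rightoftaub2source} to confine $\Sigma_+$ on the backward orbit, observe that this excludes $\planewave{\bparamk}$ (which lies in $\{\Sigma_+\le 0\}$) and the origin from the $\alpha$-limit set given by Prop.~\ref{prop_alphalimitsets_vacuum_inflat}, and then invoke Prop.~\ref{prop_convergencetokasner}. The $\eps/2$ safeguard you add to keep the final inequality strict is sound but not actually needed, since the neighborhood constructed in the proof of Prop.~\ref{prop_rightoftaub2source} is already cut out with an internal parameter $\eps_1<\eps$, so that the bound on $\slimit$ comes out strict for free.
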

\begin{proof}
	It follows from the previous proposition in combination with Prop.~\ref{prop_alphalimitsets_vacuum_inflat} and the fact that
	\begin{equation}
		\planewave {\bparamk} \cap \left\{\Sigma_+\ge\frac12\right\}=\emptyset
	\end{equation}
	that the~$\alpha$-limit set of every such solution is contained in the Kasner parabola~$\kasnerparabola$ and does not intersect the plane wave equilibrium points~$\planewave {\bparamk}$. Due to Prop.~\ref{prop_convergencetokasner}, the solution converges to a Kasner point $(\slimit,1-\slimit^2,0,0,0)$, and the estimate on~$\slimit$ follows from applying Prop.~\ref{prop_rightoftaub2source} again.
\end{proof}
To prove the previous proposition, the main idea is to use the fact that to the right of Taub~2, the Kasner parabola has a slope which is steeper than~${-}1$. Then, one shows that one can bound solutions from below by some straight line with this slope. This is the statement of Lemma~\ref{lemm_rightoftaub2deltanplusbothsigns} below and we provide a visualisation in Figure~\ref{figure_rightoftaubtwo}.
As the convergence point has to lie above this line but below the Kasner parabola, one gains control over where the convergence point has to be situated exactly.
\begin{lemm}
\label{lemm_rightoftaub2deltanplusbothsigns}
	Consider a solution to \equs ~\eqref{eqns_evolutionbianchib}--\eqref{eqn_evolutionomega} such that $\Sigma_+(\tau_0)>1/2$ for some fixed~$\tau_0$,
	and let $\eps_1>0$. Then there is a $\delta=\delta(\eps_1)>0$ such that $\tilde A(\tau_0)<\delta$ implies
	\begin{equation}
	\label{eqn_rightoftaub2estimates}
		\Sigma_+(\tau)\ge\Sigma_+(\tau_0),\qquad
		\tilde\Sigma(\tau)\ge{-}\Sigma_+(\tau)+(\Sigma_++\tilde\Sigma)(\tau_0)-\eps_1,\qquad
		\tilde A<\delta e^{\tau-\tau_0},
	\end{equation}
	for all $\tau\le\tau_0$.
\end{lemm}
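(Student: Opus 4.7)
The plan is to establish the three inequalities in order, each partly building on the previous one.

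For $\Sigma_+(\tau) \ge \Sigma_+(\tau_0)$, I invoke Lemma~\ref{lemm_monotonicitysigmaplus}: since $\Sigma_+(\tau_0) > 1/2 > 0$, continuity gives $\Sigma_+ > 0$ on a maximal interval $(\tau_1, \tau_0]$, on which Lemma~\ref{lemm_monotonicitysigmaplus} forces $\Sigma_+' \le 0$; hence $\Sigma_+$ is non-decreasing as $\tau$ decreases and stays bounded below by $\Sigma_+(\tau_0) > 1/2$, so $\tau_1 = -\infty$ by a standard bootstrap. For $\tilde A(\tau) < \delta e^{\tau - \tau_0}$, I combine $\Sigma_+ \ge 1/2$ with $q = 2(\Sigma_+^2 + \tilde\Sigma) + \tfrac12(3\gamma - 2)\Omega$ and $\Omega \le 1$, $\gamma \le 2$, to bound $q \ge -\tfrac12$, hence $q + 2\Sigma_+ \ge \tfrac12$. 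The equation $\tilde A' = 2(q+2\Sigma_+)\tilde A \ge \tilde A$ then integrates backwards from $\tau_0$ to give $\tilde A(\tau) \le \tilde A(\tau_0) e^{\tau - \tau_0} < \delta e^{\tau - \tau_0}$.

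The second inequality is the substantive one. Setting $f \coloneqq \Sigma_+ + \tilde\Sigma$ and using~\eqref{eqns_evolutionbianchib}, I compute
\begin{equation}
f' = (q-2)(\Sigma_+ + 2\tilde\Sigma) - 2\tilde N - 4\Sigma_+\tilde A - 4\Delta N_+.
\end{equation}
The first three terms are non-positive under our hypotheses, so $f(\tau_0) - f(\tau) \le 4\int_\tau^{\tau_0}|\Delta N_+|\,ds$, and the claim reduces to arranging that this integral is bounded by $\eps_1/4$ once $\delta(\eps_1)$ is taken small. I plan to use the constraint identities $\Delta^2 = \tilde\Sigma\tilde N - \Sigma_+^2\tilde A$ and $N_+^2 = 3\tilde N + \bparamk\tilde A$, together with $\tilde\Sigma \le 3/4$ (from Step~1), to obtain a pointwise estimate $|\Delta N_+| \le C_1 \tilde N + C_2\tilde A$ with $C_i$ depending only on $\bparamk$; the contribution of $\tilde A$ to the integral is then $\O(\delta)$ by Step~2.

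The main obstacle is controlling $\int_{-\infty}^{\tau_0}\tilde N\,ds$ uniformly small in $\delta$, because $\tilde N' = 2(q+2\Sigma_+)\tilde N + 4\Delta N_+$ has an indefinite-sign coupling and does not decay backwards on its own. I would address this via a sign analysis of $\Delta N_+$ in the spirit of Lemma~\ref{lemm_deltanplussigns}, combined with the identity $(1 - \Sigma_+^2 - \tilde\Sigma)' = 2(2-q)(\Sigma_+^2 + \tilde\Sigma) + 4\Sigma_+(\tilde N + \tilde A) + 4\Delta N_+$: on tail intervals where $\Delta N_+ \ge 0$ the right-hand side is a sum of non-negative terms, bounding $\int \Sigma_+\tilde N \,ds$ in terms of the total variation of $1 - \Sigma_+^2 - \tilde\Sigma$, while on the opposite-sign regime a parallel cancellation together with a bootstrap using Step~2 closes the argument.
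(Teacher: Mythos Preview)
Your first two estimates are correct and match the paper. The gap is in the second inequality.

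You compute $f'=(q-2)(\Sigma_++2\tilde\Sigma)-2\tilde N-4\Sigma_+\tilde A-4\Delta N_+$ correctly, but then discard the first three terms as non-positive and reduce to bounding $\int|\Delta N_+|\,ds$ via $|\Delta N_+|\le C_1\tilde N+C_2\tilde A$. This cannot close: nothing in the hypotheses forces $N_+$ (hence $\tilde N$) to be small, so $\int\tilde N\,ds$ is only $O(1)$, not $O(\eps_1)$. Your identity for $(1-\Sigma_+^2-\tilde\Sigma)'$ confirms this --- it bounds $\int\Sigma_+\tilde N\,ds$ by a total variation of size at most $1$, and moreover reintroduces $\int\Delta N_+\,ds$ on the right, making the argument circular. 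The vague ``parallel cancellation'' in the $\Delta N_+<0$ regime is exactly the missing piece, and it does not come for free from a bootstrap on $\tilde A$.

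The paper's resolution is precisely not to throw away those first three terms. On intervals where $\Delta N_+\ge0$ one has $f'\le0$ directly. Where $\Delta N_+<0$, the constraint gives $-\Delta N_+\le\sqrt{\tilde\Sigma/3}\,N_+^2+C\sqrt{\tilde A}$, and one expands $(q-2)$ and $\tilde N$ in terms of $N_+^2$, $\tilde A$, $\Omega$. The $N_+^2$-coefficient in $f'$ then becomes $\tfrac43\bigl(-\tfrac12(\Sigma_++1)-\tilde\Sigma+\sqrt{3\tilde\Sigma}\bigr)$, and this is \emph{non-positive} precisely because $\Sigma_+\ge\tfrac12$ forces $\tilde\Sigma\le\tfrac34$, so $-\tilde\Sigma+\sqrt{3\tilde\Sigma}\le\tfrac34\le\tfrac12(\Sigma_++1)$. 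What survives is a pointwise bound $f'\le M\sqrt{\tilde A}$ with $M$ independent of $\delta$; integrating against the decay of $\tilde A$ from Step~2 gives $f(\tau_0)-f(\tau)\le 2M\sqrt\delta$, and one takes $\delta=(\eps_1/2M)^2$. The essential idea you are missing is that the $\tilde N$-terms you discarded are exactly what absorbs the bad part of $-4\Delta N_+$.
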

We visualise the two first inequalities of this statement in Figure~\ref{figure_rightoftaubtwo}: Given that the~$\Sigma_+$- and~$\tilde\Sigma$-values are known at some time~$\tau_0$, the solution at earlier times~$\tau\le\tau_0$ has to be contained in the shaded area which is bounded by the vertical~$\Sigma_+=const$ line, the straight line with slope ${-}1$ through the point~$(\Sigma_+(\tau_0),\tilde\Sigma(\tau_0)-\eps_1)$, and the Kasner parabola~$\kasnerparabola$.
\begin{figure}[!ht]
	\begin{tikzpicture}[xscale=5.45,yscale=5.45]
	\draw [->] (0,0) -- (1.2,0);
	\draw [->] (0,0) -- (0,0.85);
	\draw [gray, thick, domain=0.35:1] plot (\x, {(\x+1)*(1-\x)});
	\node [below left] at (0,0) {$0$};
	\draw (1,-0.02) -- (1,0.02);
	\node [below] at (1,-0.01) {$1$};
	\draw (-0.02,0.75) -- (0.02,0.75);
	\node [left] at (0,0.75) {$3/4$};
	\node [right] at (1.2,0) {$\Sigma_+$};
	\node [above] at (0,0.85) {$\tilde\Sigma$};
	\draw (0.5,-0.02) -- (0.5,0.02);
	\node [below] at (0.5,-0.01) {$1/2$};
	\draw[fill] (0.5,0.75) circle [radius=0.01];
	\node [below left] at (0.5,0.75) {$\taubtwo$};
	\node [above left] at (0.7,0.6) {$\kasnerparabola$};
	\newcommand{\splusnull}{0.7}
	\newcommand{\stildenull}{0.4}
	\path [fill=gray,fill opacity=0.7] plot [smooth,samples=100,domain=\splusnull:0.958](\x,{(\x+1)*(1-\x)}) -- plot [smooth,samples=100,domain=0.958:\splusnull] (\x,{-\x +\splusnull +\stildenull-0.06});
	\draw [dotted] (0,\stildenull) -- (\splusnull,\stildenull);
	\draw [dotted] (\splusnull,0) -- (\splusnull,0.3);
	\draw (-0.02,\stildenull) -- (0.02,\stildenull);
	\node [left] at (0,\stildenull) {$\tilde\Sigma(\tau_0)$};
	\draw [dotted] (0,0.34) -- (\splusnull,0.34);
	\draw (-0.02,0.34) -- (0.02,0.34);
	\node [left] at (0,0.34) {$\tilde\Sigma(\tau_0)-\eps_1$};
	\draw (\splusnull,-0.02) -- (\splusnull,0.02);
	\node [below] at (\splusnull,-0.01) {$\Sigma_+(\tau_0)$};
	\draw[fill] (\splusnull,\stildenull) circle [radius=0.012];
	\node [above left] at (\splusnull,\stildenull) {$(\Sigma_+,\tilde\Sigma)(\tau_0)$};
	\draw [dashed] (\splusnull,0.3)--(\splusnull,0.55);
	\draw [dashed, domain=0.67:1.01] plot (\x, {-\x +\splusnull +\stildenull-0.06});
	\end{tikzpicture}
	\caption{Visualisation of Lemma~\ref{lemm_rightoftaub2deltanplusbothsigns}: For times~$\tau\le\tau_0$, the projection of the solution to the~$\Sigma_+\tilde\Sigma$-plane has to be contained in the shaded area.}
	\label{figure_rightoftaubtwo}
\end{figure}
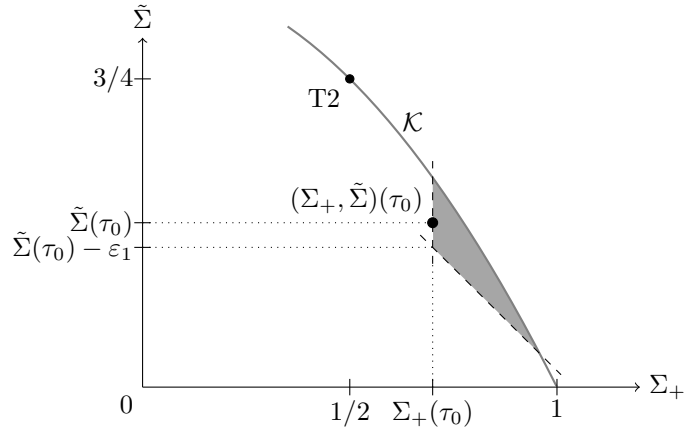
\begin{proof}
	The estimate for~$\Sigma_+$ is an immediate consequence of the monotonicity shown in Lemma~\ref{lemm_monotonicitysigmaplus}. It then follows that $\Sigma_+(\tau)>1/2$ for all~$\tau\le\tau_0$, and
	equation~\eqref{eqn_definitionqmatter} implies
	\begin{equation}
		q=2(\Sigma_+^2+\tilde\Sigma)+\frac12(3\gamma-2)\Omega\ge2\Sigma_+^2-\Omega\ge{-}\frac12
	\end{equation}
	due to Remark~\ref{rema_statespacecompact}.
	Using this estimate in the evolution \equ\ for~$\tilde A$, \equ ~\eqref{eqns_evolutionbianchib},
	one finds
	\begin{equation}
		\tilde A(\tau)\le\tilde A(\tau_0)e^{\tau-\tau_0}
	\end{equation}
	for all~$\tau\le\tau_0$.
	Consequently, the estimate on~$\tilde A$ holds for every choice of~$\delta$.

	To prove the remaining estimate, we fix a time~$\tau_1\le\tau_0$ and
	distinguish between the two cases that~$\Delta N_+(\tau_1)\ge0$ or~$\Delta N_+(\tau_1)<0$. We prove that in both cases
	\begin{equation}
	\label{eqn_evolutionsigmaplusplustildesigma}
		(\Sigma_++\tilde\Sigma)'(\tau_1)\le M\sqrt{\tilde A(\tau_1)},
	\end{equation}
	for some constant~$M>0$ independent of~$\tau_1$. This estimate is then used to conclude the proof.

	We start with the case $\Delta N_+(\tau_1)\ge 0$.
	As $\Sigma_+(\tau)>1/2$ holds for all~$\tau\le\tau_0$ due to the estimate on~$\Sigma_+$, we find
	\begin{equation}
		(\Sigma_++\tilde\Sigma)'(\tau_1)\le0,
	\end{equation}
	which gives the desired statement.

	Next, we consider the case $\Delta N_+(\tau_1)<0$.
	The constraint \equ ~\eqref{eqn_constraintgeneralone} in the form
	\begin{equation}
		\Delta^2=\frac13\tilde\Sigma N_+^2-(\Sigma_+^2+\frac{\bparamk}3\tilde\Sigma)\tilde A
	\end{equation}
	implies that
	\begin{equation}
		\absval{\Delta}\le \sqrt{\frac{\tilde\Sigma}{3}}\absval{N_+}+\sqrt{\frac{3+\absval {\bparamk}}{3}}\sqrt{\tilde A},
	\end{equation}
	using $\Sigma_+\in\left[{-}1,1\right]$ and $\tilde\Sigma\in\left[0,1\right]$.
	Due to the assumption on the sign of~$\Delta N_+$, this implies
	\begin{equation}
		0<{-}\Delta N_+\le \sqrt{\frac{\tilde\Sigma}{3}}N_+^2+\sqrt{\frac{3+\absval {\bparamk}}{3}}\absval{N_+}\sqrt{\tilde A}
	\end{equation}
	at time~$\tau_1$.
	With this and \equ ~\eqref{eqn_definitionqmatter} for~$q$ one computes, suppressing the time~$\tau_1$ for readibility, that
	\begin{align}
		(\Sigma_++\tilde\Sigma)'(\tau_1)={}&{-}\frac23(\Sigma_++1)N_+^2-\frac43\tilde\Sigma N_+^2-4\Delta N_+
			-\frac32(2-\gamma)(\Sigma_++2\tilde\Sigma)\Omega\\
			&{}-2\tilde A((3-\frac {\bparamk}3)\Sigma_++2(1-\frac {\bparamk}3)\tilde\Sigma-\frac {\bparamk}3)\\
			\le{}&\frac43({-}\frac12(\Sigma_++1)-\tilde\Sigma+\sqrt{3\tilde\Sigma})N_+^2 -\frac32(2-\gamma)(\Sigma_++2\tilde\Sigma)\Omega+f_1\sqrt{\tilde A},
	\end{align}
	for a function~$f_1$ which is
	bounded due to the compactness on the state space, say
	\begin{equation}
	\absval{f_1}\le M
	\end{equation}
	for some constant~$M>0$.
	The term containing~$\Omega$ is non-positive. For the first bracket, one easily sees that $\Sigma_+\in[1/2,1]$ and $\tilde\Sigma\in[0,3/4]$ imply
	\begin{align}
		-\frac12(\Sigma_++1){}&\in\left[{-}1,{-}\frac34\right],\\
		-\tilde\Sigma+\sqrt{3\tilde\Sigma}{}&\in\left[0,\frac34\right],
	\end{align}
	hence the term containing~$N_+^2$ is non-positive as well. In total, this implies that
	\begin{equation}
	(\Sigma_++\tilde\Sigma)(\tau_1)'\le M\sqrt{\tilde A}.
	\end{equation}

	We now integrate inequality~\eqref{eqn_evolutionsigmaplusplustildesigma} from~$\tau<\tau_0$ to~$\tau_0$.
	Using the estimate on~$\tilde A$ yields
	\begin{align}
		(\Sigma_++\tilde\Sigma)(\tau)\ge{}&(\Sigma_++\tilde\Sigma)(\tau_0)-\int_{\tau}^{\tau_0}M\sqrt{\tilde A(s)}ds\\
			\ge{}&(\Sigma_++\tilde\Sigma)(\tau_0)-\int_{\tau}^{\tau_0}M\sqrt\delta e^{(s-\tau_0)/2}ds\\
			\ge{}&(\Sigma_++\tilde\Sigma)(\tau_0)-2M\sqrt\delta,
	\end{align}
	and setting $\delta\coloneqq\eps_1^2(2M)^{{-}2}$ concludes the proof.
\end{proof}
\begin{proof}[Proof of Prop.~\ref{prop_rightoftaub2source}]
	Set~$\tilde U$ to be the neighborhood of the limiting point $(\bar\slimit,1-\bar\slimit^2,0,0,0)$ that satisfies
	\begin{equation}
		\absval{\Sigma_+-\bar\slimit}<\eps.
	\end{equation}
	One now constructs an even smaller neighborhood~$U$ whose closure is contained in~$\tilde U$ and such that all orbits starting in this smaller neighborhood are contained in~$\tilde U$.

	The construction proceeds as follows: Note that due to the restriction on the state space,
	equation~\eqref{eqn_constraintgeneraltwo}, one finds that
	\begin{equation}
		0\le\Sigma_+^2+\tilde\Sigma\le1.
	\end{equation}
	That is, the orbit projected to the $(\Sigma_+,\tilde\Sigma)$-plane lies below the Kasner parabola, which is the graph of a function with slope~$-2\Sigma_+$. For~$\Sigma_+$ in the interval $(1/2,1)$, this slope is strictly less than~$-1$ and decaying.
	One can therefore choose a constant $d<1-\bar\slimit^2$ such that the straight line with slope~${-}1$ through $(\bar\slimit,d)$
	\begin{equation}
	\label{eqn_proofrightoftaub2straightline}
		\tilde\Sigma-d={-}(\Sigma_+-\bar\slimit)
	\end{equation}
	intersects the Kasner parabola at some $\bar\slimit<\Sigma_+<\bar\slimit+\eps$. Let $0<\eps_1<\min((1-\bar\slimit^2-d)/2,\eps)$, choose $\delta=\delta(\eps_1)$ as in Lemma~\ref{lemm_rightoftaub2deltanplusbothsigns} and let~$U$ be the set defined by
	\begin{equation}
		\absval{\Sigma_+-\bar\slimit}<\eps_1,\qquad
		\Sigma_++\tilde\Sigma>\bar\slimit+d+\eps_1,\qquad
		\tilde A<\delta.
	\end{equation}

	By Lemma~\ref{lemm_rightoftaub2deltanplusbothsigns}, any orbit which is contained in~$U$ at time~$\tau_0$ satisfies
	\begin{align}
		\Sigma_+(\tau)\ge{}&\Sigma_+(\tau_0)>\bar\slimit-\eps_1,\\
		\tilde\Sigma(\tau)\ge{}&{-}\Sigma_+(\tau)+(\Sigma_++\tilde\Sigma)(\tau_0)-\eps_1>{-}\Sigma_+(\tau)+\bar\slimit+d,
	\end{align}
	for all~$\tau\le\tau_0$.
	From the first inequality we conclude that~$\Sigma_+(\tau)>\bar\slimit-\eps$ at all times~$\tau\le\tau_0$.
	The second inequality implies that the graph of the solution lies above the straight line from \equ~\eqref{eqn_proofrightoftaub2straightline}. Both inequalities are visualised in Figure~\ref{figure_rightoftaubtwo}. Because the graph also has to lie below the Kasner parabola, but we have chosen the constant~$d$ such that the straight line and the Kasner parabola intersect at some $\bar\slimit<\Sigma_+<\bar\slimit+\eps$, this implies that
	\begin{equation}
		\Sigma_+(\tau)<\bar\slimit+\eps
	\end{equation}
	for all times~$\tau\le\tau_0$ and every $\alpha$-limit point. This concludes the proof.
\end{proof}

\section{Asymptotics towards the plane wave equilibrium solutions}

\label{section_asymptoticsplanewave}

In this section, we use the theory of dynamical systems to determine the qualitative behaviour of solutions converging to the plane wave equilibrium points~$\planewave {\bparamk}$ as $\tau\rightarrow{-}\infty$.
The statements are not qualitatively new, as certain parts of~$\planewave {\bparamk}$ have already been identified as ``saddles'' or ``sinks'' in~\cite{hewittwainwright_dynamicalsystemsapproachbianchiorthogonalB}. Here, we state and prove more detailed properties of solutions converging to~$\planewave {\bparamk}$. The approach we use here is similar to the one in Section~\ref{section_leftoftaubtwo}, where we discussed the behaviour of solutions converging to a Kasner point situated to the left of the point Taub~2.
\begin{prop}
\label{prop_centreunstablelocallyplanewave}
	Consider the evolution \equ s~\eqref{eqns_evolutionbianchib}--\eqref{eqn_definitiontilden}
	in the extended state space, \ie without assuming the constraint \equ s~\eqref{eqn_constraintgeneralone}--\eqref{eqn_constraintgeneraltwo}.
	Let~$K_1$, $K_2$ be compact subsets of the arc $\planewave {\bparamk}\cap\{{-}1<\Sigma_+<{-}(3\gamma-2)/4\}$ and the arc $\planewave {\bparamk}\cap\{{-}(3\gamma-2)/4<\Sigma_+<0\}$, if these are non-empty. Let~$K_3$ and~$K_4$ denote the points on~$\planewave {\bparamk}$ with $\Sigma_+={-}(3\gamma-2)/4$ and $\Sigma_+=0$, \resp Then there are neighborhoods~$U_i$ of~$K_i$, $i=1,\ldots,4$, and $C^1$~sub\mf s~$\centreunstableplanewaveleft$, $\centreunstableplanewaveright$, $\centreunstableplanewavespecial$, $\centreunstableplanewavezero$ of~$\RR^5$ in~$U_1$, $U_2$, $U_3$, $U_4$ \resp with the following properties:
	\begin{itemize}
		\item $\centreunstableplanewaveleft$ contains the set~$K_1$ and in these points is tangent to those eigenvectors of the linearised evolution equations in the extended five-dimensional space, given in Appendix~\ref{subsect_appendixlinearisedevolutionplanewave}, which correspond to eigenvalues with non-negative real parts.
		\item $\centreunstableplanewaveright$ contains the set~$K_2$ and in these points
		is tangent to those eigenvectors which correspond to eigenvalues with non-negative real parts.
		\item $\centreunstableplanewavespecial$ contains the point~$K_3$ and in this point
		is tangent to those eigenvectors
		which correspond to eigenvalues with non-negative real parts.
		\item $\centreunstableplanewavezero$ contains the point~$K_4$ and in this point
		is tangent to those eigenvectors
		which correspond to eigenvalues with non-negative real parts.
	\end{itemize}
	Further:
	\begin{itemize}
		\item Points in~$U_1$ are either contained in~$\centreunstableplanewaveleft$ or
		their evolution under \equ s~\eqref{eqns_evolutionbianchib}--\eqref{eqn_definitiontilden} leaves~$U_1$ as $\tau\rightarrow{{-}\infty}$, and similarly for the remaining~$U_i$.
	\end{itemize}
	The dimensions of the sub\mf s are
	\begin{align}
		\dim \centreunstableplanewaveleft ={}&2,\\
		\dim \centreunstableplanewaveright ={}&1,\\
		\dim \centreunstableplanewavespecial ={}&2,\\
		\dim \centreunstableplanewavezero
			={}& \left\{\begin{array}{ll} 1 & \text{if}\quad3\gamma-2>0 ,\\
			2 & \text{if}\quad3\gamma-2\le0 \end{array}\right..
	\end{align}
\end{prop}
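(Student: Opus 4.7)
The plan is to mimic the argument used in the proof of Proposition~\ref{prop_centreunstablelocallyleftoftaub2} and apply Theorem~\ref{theo_centremftheory} separately to each of the four (families of) sets $K_1,\ldots,K_4$. The plane wave equilibrium points~$\planewave{\bparamk}$ form a smooth one-dimensional curve of equilibria of the polynomial extended system~\eqref{eqns_evolutionbianchib}--\eqref{eqn_definitiontilden}. For the compact arcs $K_1$, $K_2$, I would apply Theorem~\ref{theo_centremftheory} directly to this one-dimensional equilibrium submanifold. For the isolated points $K_3$, $K_4$, I would apply the theorem instead to each point, viewed as a zero-dimensional equilibrium submanifold. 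In all cases, since the vector field is $C^\infty$, choosing the differentiability parameter $r=1$ in Theorem~\ref{theo_centremftheory} produces a $C^1$ centre-unstable manifold $\C^u$ together with a neighborhood on which the maximal negatively invariant set lies in $\C^u$; the statements about containment in, respectively escape from, $U_i$ follow immediately from this.

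The core step is to read off, from the eigenvalue list~\eqref{eqn_eigenvaluesplanewave}
\begin{equation}
0, \quad -4(1+\Sigma_+), \quad -4\Sigma_+-(3\gamma-2), \quad -2(1+\Sigma_+)\pm 4iN_+,
\end{equation}
how many eigenvalues have non-negative real part at each point of $\planewave{\bparamk}$. Since $\Sigma_+>-1$ on $\planewave{\bparamk}$, the eigenvalue $-4(1+\Sigma_+)$ and the complex pair $-2(1+\Sigma_+)\pm 4iN_+$ all have strictly negative real part; the eigenvalue $0$ is always present and corresponds to the tangent direction of the equilibrium curve; and the sign of $-4\Sigma_+-(3\gamma-2)$ is positive for $\Sigma_+<-(3\gamma-2)/4$, negative for $\Sigma_+>-(3\gamma-2)/4$, and vanishes at $\Sigma_+=-(3\gamma-2)/4$. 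The dimension of each centre-unstable manifold then equals the number of eigenvalues (counted with multiplicity) of non-negative real part: $\dim\centreunstableplanewaveleft=1+1=2$ (the zero plus one strictly positive eigenvalue), $\dim\centreunstableplanewaveright=1$ (just the zero), $\dim\centreunstableplanewavespecial=2$ (the zero together with the additional zero of $-4\Sigma_+-(3\gamma-2)$ at $K_3$), and $\dim\centreunstableplanewavezero$ equals $2$ if $-(3\gamma-2)\ge 0$, i.e.\ $3\gamma-2\le 0$, and equals $1$ otherwise. For $K_4$ I would also note that the point only exists when $\bparamk\ge 0$, via the defining relation $N_+^2=\bparamk$ at $\Sigma_+=0$ in~\eqref{eqn_defiplanewave}.

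The main obstacle is verifying the normal-hyperbolicity hypotheses of Theorem~\ref{theo_centremftheory} on each $K_i$. On the compact arcs $K_1$, $K_2$, the assumption that they lie in the open sets $\Sigma_+\in(-1,-(3\gamma-2)/4)$ respectively $\Sigma_+\in(-(3\gamma-2)/4,0)$ guarantees, together with continuity and compactness, that the four non-trivial eigenvalues stay uniformly bounded away from the imaginary axis and that the only eigenvalue on the imaginary axis is the zero one forced by the equilibrium-curve structure; Theorem~\ref{theo_centremftheory} then applies with the one-dimensional equilibrium submanifold playing the role of the invariant base. At the exceptional points $K_3$ and $K_4$ an additional eigenvalue meets the imaginary axis, breaking normal hyperbolicity for the one-dimensional family, which is precisely why I switch to treating them as zero-dimensional equilibrium submanifolds; this is permissible because Theorem~\ref{theo_centremftheory} has no normal-hyperbolicity requirement along a single point. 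The tangency statements finally follow by reading the spanning eigenvectors off Appendix~\ref{subsect_appendixlinearisedevolutionplanewave}, exactly as in the corresponding step of Proposition~\ref{prop_centreunstablelocallyleftoftaub2}.
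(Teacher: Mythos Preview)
Your proposal is correct and follows essentially the same approach as the paper: apply Theorem~\ref{theo_centremftheory} to the compact arcs $K_1$, $K_2$ viewed as subsets of the one-dimensional equilibrium curve, and to the isolated points $K_3$, $K_4$ viewed as zero-dimensional equilibrium manifolds, then count the eigenvalues~\eqref{eqn_eigenvaluesplanewave} with non-negative real part to obtain the dimensions. The paper phrases the hypothesis of Theorem~\ref{theo_centremftheory} as constancy of the eigenvalue counts in each half-plane rather than as normal hyperbolicity, but the content of your verification is the same.
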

\begin{rema}
	We see in the proof that we could change the regularity of the sub\mf s to~$C^r$, for some~$r<\infty$.
	Further, the eigenvalues on~$\planewave k$ which are not real have real part~${-}2(1+\Sigma_+)$. The only possibility where this is non-negative for a point in one of the compact sets is if~$\gamma=2$ and~$\Sigma_+={-}1$. This implies that the solution converges to the point Taub~1 and hence is constant, see Prop.~\ref{prop_tauboneothermatter}. For the current statement, this situation is not of interest, and excluding the case~$K_4=\planewave k\cap\{\Sigma_+={-}1\}$, the sub\mf s are tangent to those eigenvectors which correspond to non-negative eigenvalues.
\end{rema}
For the proof, we make use of the concepts and notation introduced in Appendix~\ref{section_appendixdynamsystheo}, see also the explanation in the beginning of Section~\ref{section_leftoftaubtwo}. The set of points which remain in~$U_i$ under the evolution in negative time direction is the maximal negatively invariant set~$A^-(U_i)$, and we prove the proposition using properties of the centre-unstable \mf~$\C^u$.
\begin{proof}
	The arc $\planewave {\bparamk}\cap\{{-}1<\Sigma_+<{-}(3\gamma-2)/4\}$ is a \mf\ in~$\RR^5$ consisting of equilibrium points of the evolution \equ, with three eigenvalues of the linearised evolution equations lying in the left half-plane, one vanishing, and one lying in the right half-plane, see Def.~\ref{defi_planewaveexpansionnorm} and the adjacent text.

	According to Thm~\ref{theo_centremftheory}, there is a centre-unstable \mf~$\C^u$ near~$K_1$ and a neighborhood~$U_1$ of~$K_1$ such that the maximal negatively invariant set~$A^-(U_1)$ is contained in~$\C^u$.
	Without loss of generality, we can restrict the manifold~$\C^u$ to the open set~$U_1$.

	The \mf~$\C^u$ is by definition tangential to $E_m{}^c\oplus E_m{}^u$, which are the subspaces of the tangent spaces at points on~$K_1$ associated with eigenvalues on the imaginary axis and in the right half-plane.
	These are the eigenvalues~$0$ and ${-}4\Sigma_+-(3\gamma-2)$ whose eigenvectors span a two-dimensional subspace, which implies that~$\C^u$ is a sub\mf\ of dimension~two.
	The evolution equations~\eqref{eqns_evolutionbianchib} are polynomial and consequently~$C^\infty$, and the plane wave equilibrium points form a smooth curve. We can therefore apply Thm~\ref{theo_centremftheory} with some finite~$r$, for example~$r=1$.

	The proof for the arc $\planewave {\bparamk}\cap\{{-}(3\gamma-2)/4<\Sigma_+<0\}$ proceeds in the same way, with~$0$ being the only eigenvalue with non-negative real part. The two individual points on the arc~$\planewave {\bparamk}$ constitute (zero-dimensional) \mf s of equilibrium points on their own, to which we can also apply Thm~\ref{theo_centremftheory}. In case $\Sigma_+=0$, the value of~$\gamma$ determines where the eigenvalue ${-}4\Sigma_+-(3\gamma-2)$ is situated in the complex plane.
\end{proof}
As in~Section~\ref{section_leftoftaubtwo}, we now have to restrict the sub\mf s to the constraint surface. There are two main differences between the situation in that section and the present one: The dimension of all centre-unstable \mf s is now at most~two, and the constraint surface is singular only in the point $\Sigma_+=0$, see Remark~\ref{rema_singularconstraintequ}.

On the other hand,
intersecting the centre-unstable~\mf s for plane wave equilibrium points~$\planewave {\bparamk}$ with the constraint surface does not necessarily result in a lower dimension:
We know from the proof of the previous proposition that the centre-unstable \mf s are tangent to the eigenvectors to eigenvalues with positive or zero real parts, which are~$0$ and possibly ${-}4\Sigma_+-(3\gamma-2)$, depending on the relation between~$\Sigma_+$ and~$\gamma$, see Appendix~\ref{subsect_appendixlinearisedevolutionplanewave}. It follows by direct computation that these eigenvectors are orthogonal to the gradient of equation~\eqref{eqn_constraintgeneralone}.
Consequently, the centre-unstable \mf s and the constraint surface do not intersect transversally.
We can only conclude that restricting the sub\mf s to the constraint surface yields sub\mf s of at most the same dimension, that is dimension at most~one or~two.

\begin{theo}
\label{theo_centreunstablegloballyplanewave}
	For a given~${\bparamk}\in\RR$ and~$\gamma\in[0,2)$, consider the set of solutions to \equs ~\eqref{eqns_evolutionbianchib}--\eqref{eqn_evolutionomega} converging to a point on~$\planewave {\bparamk}$ as $\tau\rightarrow{-}\infty$. Assume that the limit point satisfies~$\Sigma_+=\slimit$. Then
	\begin{equation}
	\label{eqn_inequalityplanewave}
		{\bparamk}(1+\slimit)\ge3\slimit
	\end{equation}
	has to hold. Furthermore:
	\begin{itemize}
		\item If~${-}(3\gamma-2)/4<\slimit<0$, then the solution is the constant solution. In particular, the solution is a vacuum solution.
		\item There is a countable family of $C^1$ sub\mf s $\{\submfsplanewave_m\}_{m\in\NN}$ of dimension at most~two
				such that if $\slimit\le{-}(3\gamma-2)/4$ and~$\slimit<0$,
				then the solution is contained in $\bigcup_{m\in\NN}\submfsplanewave_m$.
		\item There is a countable family of $C^1$ sub\mf s $\{\submfsplanewavezero_m\}_{m\in\NN}$ of dimension at most~two
				such that if $\slimit=0$,
				then the solution is contained in $\bigcup_{m\in\NN}\submfsplanewavezero_m$.
	\end{itemize}
\end{theo}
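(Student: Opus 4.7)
The first claim is immediate from the definition of $\planewave{\bparamk}$: the limit point must satisfy $N_+^2 = (1+\slimit)(\bparamk(1+\slimit) - 3\slimit) \ge 0$, and since $1+\slimit > 0$ this forces $\bparamk(1+\slimit) \ge 3\slimit$. The remaining three claims will follow from Proposition~\ref{prop_centreunstablelocallyplanewave} combined with a flow-diffeomorphism argument analogous to the one used at the end of the proof of Theorem~\ref{theo_leftoftaubtworesult}.

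For the range $-(3\gamma-2)/4 < \slimit < 0$, the plan is to take a compact arc $K_2 \subset \planewave{\bparamk} \cap \{{-}(3\gamma-2)/4 < \Sigma_+ < 0\}$ containing the limit point in its interior and apply Proposition~\ref{prop_centreunstablelocallyplanewave} to obtain a neighborhood $U_2$ and a one-dimensional $C^1$ sub-manifold $\centreunstableplanewaveright$ with $A^-(U_2) \subset \centreunstableplanewaveright$. The arc $\planewave{\bparamk} \cap U_2$ is itself a one-dimensional $C^1$ curve of equilibrium points, invariant under the flow, hence contained in $A^-(U_2)$. Using the explicit form of the eigenvectors collected in Appendix~\ref{subsect_appendixlinearisedevolutionplanewave}, the zero eigenvector at each point is proportional to the tangent vector of $\planewave{\bparamk}$; since $\centreunstableplanewaveright$ is tangent to this zero eigenvector by Proposition~\ref{prop_centreunstablelocallyplanewave}, both one-dimensional $C^1$ sub-manifolds of $\RR^5$ have the same tangent at each point of the arc and therefore coincide in a (possibly smaller) neighborhood. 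Any solution converging to the limit point lies in $U_2$ for $\tau$ sufficiently negative, hence in $\centreunstableplanewaveright = \planewave{\bparamk} \cap U_2$ there; by uniqueness of ODE solutions through an equilibrium point the orbit is constant, and since $\planewave{\bparamk} \subset \{\Omega = 0\}$ it is a vacuum solution.

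For the second and third bullets, I would cover the arcs $\planewave{\bparamk} \cap \{{-}1 < \Sigma_+ \le {-}(3\gamma-2)/4,\, \Sigma_+ < 0\}$ and $\planewave{\bparamk} \cap \{\Sigma_+ = 0\}$ by countably many compact subsets of the types $K_1$, $K_3$ or $K_4$ from Proposition~\ref{prop_centreunstablelocallyplanewave}. Each yields a $C^1$ sub-manifold $\C^u_j$ of dimension at most two in a neighborhood $U_j$ with $A^-(U_j) \subset \C^u_j$. Since the evolution equations are polynomial on the compact state space, the time-$m$ flow $\Phi_m$ is a diffeomorphism of $\RR^5$ for every $m \in \ZZ$; setting $\submfsplanewave_{j,m} := \Phi_m(\C^u_j)$ (respectively $\submfsplanewavezero_{j,m}$ in the case $\slimit = 0$) produces a countable collection of $C^1$ sub-manifolds of dimension at most two. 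A solution converging to a limit point in some $K_j$ enters $U_j$ at some time $\tau_0$ and thereafter (going backward) lies in $\C^u_j$; for any other time $\tau$ one chooses $m \in \NN$ with $\tau - m \le \tau_0$, so that the orbit value at $\tau$ equals $\Phi_m$ applied to a point in $\C^u_j$ and hence lies in $\submfsplanewave_{j,m}$. Re-indexing $\{\submfsplanewave_{j,m}\}_{(j,m) \in \NN \times \ZZ}$ as a single sequence yields the required families. In the sub-case $\slimit = 0$ with $3\gamma - 2 > 0$ the centre-unstable manifold is already one-dimensional, so the argument of the previous paragraph applies and the solution is the constant one; it is then trivially contained in any two-dimensional $C^1$ sub-manifold through the limit point, which we append to the family.

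The delicate step, and the principal obstacle, is the identification in the middle case of $\centreunstableplanewaveright$ with a neighborhood of the limit point in $\planewave{\bparamk}$. This rests on the two observations that the arc is $C^1$ and consists of equilibrium points (so it is trivially contained in $A^-(U_2)$), and that the zero eigenvector of the linearised evolution equations in the extended state space is tangent to this arc, the latter being a direct computation from the expressions in Appendix~\ref{subsect_appendixlinearisedevolutionplanewave} together with the implicit parametrisation of $\planewave{\bparamk}$ by $\Sigma_+$. Once this local coincidence has been established, the remainder of the argument is a routine application of centre-manifold theory combined with the flow-diffeomorphism trick already used in Theorem~\ref{theo_leftoftaubtworesult}.
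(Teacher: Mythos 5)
Your proposal is correct and follows essentially the same route as the paper: derive the inequality from $N_+^2\ge 0$, then apply Proposition~\ref{prop_centreunstablelocallyplanewave} to compact subarcs of $\planewave{\bparamk}$ and push the resulting centre-unstable manifolds around by integer-time flows to obtain a countable family. One small reasoning slip in the first bullet is worth pointing out. After observing that the equilibrium arc $\planewave{\bparamk}\cap U_2$ lies in $A^-(U_2)\subset\centreunstableplanewaveright$, you conclude local coincidence of the two curves by arguing that they ``have the same tangent at each point of the arc and therefore coincide.'' Tangency alone is not a valid deduction here: two distinct $C^1$ curves can share a common point and the tangent direction there without agreeing on a neighbourhood. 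What actually forces coincidence is precisely the containment you already established, combined with the fact that a one-dimensional $C^1$ submanifold sitting inside a one-dimensional $C^1$ submanifold is automatically open in it (invariance of domain); this is the paper's ``due to the dimension'' step. The tangency observation is then a consequence of that containment, not an independent ingredient, so the paragraph labelling it as the ``principal obstacle'' misdiagnoses where the weight of the argument sits. Everything else, including your explicit re-indexing of the family $\{\submfsplanewave_{j,m}\}$ and the extra case split at $\slimit=0$ depending on the sign of $3\gamma-2$ (which the paper does not bother to separate, since the general flow argument already covers it), is sound.
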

\begin{proof}
	The relation between the parameter~${\bparamk}$ and the limiting value~$\slimit$ follows from the definition of the plane wave equilibrium points, Def.~\ref{defi_planewaveexpansionnorm}, as~$N_+^2\ge0$.

	We start with the case ${-}(3\gamma-2)/4<\slimit<0$.
	Fix a compact subarc~$K_2$ of the arc $\planewave {\bparamk}\cap\{{-}(3\gamma-2)/4<\Sigma_+<0\}$ containing the limiting point of the solution. As the solution converges to a point in~$K_2$, it is contained in the neighborhood~$U_2$ from the previous statement for sufficiently negative times, and therefore the solution has to lie in the one-dimensional sub\mf ~$\centreunstableplanewaveright$ for sufficiently negative times. The set of plane wave equilibrium points~$\planewave {\bparamk}$ itself forms a one-dimensional sub\mf\ as well. It consists of constant solutions, and~$K_2$ consequently must be contained in the centre-unstable \mf ~$\centreunstableplanewaveright$.
	Due to the dimension, both sub\mf s coincide in a sufficiently small neighborhood of the limiting point, and the first part of the statement follows. The solution satisfies~$\Omega=0$ due to the definition of~$\planewave {\bparamk}$, see Def.~\ref{defi_planewaveexpansionnorm} and below.

	For the second case,
	choose a countable family of compact subarcs~$K_m$, $m\in\NN$, which exhaust the arc $\planewave {\bparamk}\cap\{{-}1<\Sigma_+<{-}(3\gamma-2)/4\}$. For every~$K_m$ as well as for the
	point on~$\planewave {\bparamk}$ with $\Sigma_+={-}(3\gamma-2)/4$,
	consider the sub\mf s found in Prop.~\ref{prop_centreunstablelocallyplanewave}. Convergence implies that for sufficiently negative times the solution cannot escape the corresponding open neighborhoods, and it consequently lies in one of these sub\mf s for sufficiently negative times.

	Still without restricting to the constraint \equs\, we apply the flow corresponding to the evolution \equ s and integer times to these sub\mf s. The flow resulting from a polynomial evolution equation implies that the regularity of the sub\mf s is preserved. As the graph of any solution is invariant under this flow, solutions to the evolution \equ s in the extended state space are fully contained in the resulting family of sub\mf s~$\submfsplanewave_m$. By construction, the family is countable. In the extended state space, the statement about the dimension follows from Prop.~\ref{prop_centreunstablelocallyplanewave} and the fact that the dimension of a sub\mf\ is invariant under \diffeo s, consequently invariant under the flow. Restricting to the constraint \equ s cannot increase the dimension, which concludes the proof in this case. The remaining case where~$\slimit=0$ is treated in the same way.
\end{proof}
\begin{rema}
\label{rema_planewaveconvergencebianchiseparated}
	Consider the element of the plane wave equilibrium points~$\planewave {\bparamk}$ which satisfies~$\Sigma_+=\slimit\in({-}1,0]$.
	We have found in the previous theorem that the question whether there is a solution to equations~\eqref{eqns_evolutionbianchib}--\eqref{eqn_evolutionomega} converging to this point, and whether it is constant or not, depends on the relation between~$\slimit$ and the group parameter~${\bparamk}$, as well as the relation between~$\slimit$ and the matter parameter~$\gamma$.

	In case of vacuum~$\Omega=0$, the only solutions converging to~$\planewave {\bparamk}$ are the constant solutions, see Prop.~\ref{prop_alphalimitsets_vacuum_inflat}.
	Assume that~$\Omega>0$. All solutions have to be non-constant and in particular cannot converge to limit points with~${-}(3\gamma-2)/4<\slimit<0$. We analyse the possible combinations of~$\slimit$,~${\bparamk}$ and~$\gamma$, depending on the Bianchi type.
	\begin{itemize}
		\item Bianchi class~A solutions cannot converge to the plane wave equilibrium points~$\planewave {\bparamk}$, due to~$\tilde A=0$ for class~A but non-vanishing on~$\planewave {\bparamk}$.
		\item Bianchi type VI$_{\binvparam}$: As~${\bparamk}<0$ in this Bianchi type, solutions can only converge to limit points satisfying
		\begin{equation}
			\slimit\le\frac{{\bparamk}}{3-{\bparamk}}<0, \qquad \slimit\le{-}(3\gamma-2)/4,
		\end{equation}
		due to inequality~\eqref{eqn_inequalityplanewave} and~$\Omega>0$. The solutions are then contained in the union of submanifolds~$\submfsplanewave_m$.
		\item Bianchi type VII$_{\binvparam}$: In this Bianchi type, the parameter~$\bparamk$ is positive.
		Inequality~\eqref{eqn_inequalityplanewave} is satisfied for all possible values of~$\slimit$.
		Then either~$\slimit\le{-}(3\gamma-2)/4$ and~$\slimit<0$, or~$\slimit=0$ has to hold, and the solutions are contained in the union of submanifolds~$\submfsplanewave_m$ or~$\submfsplanewavezero_m$.
		\item Bianchi type IV: As~${\bparamk}=0$, inequality~\eqref{eqn_inequalityplanewave} is satisfied for all possible values of~$\slimit$.
		Then either~$\slimit\le{-}(3\gamma-2)/4$ and~$\slimit<0$, or~$\slimit=0$ has to hold, and the solutions are contained in the union of submanifolds~$\submfsplanewave_m$ or~$\submfsplanewavezero_m$.
		\item Bianchi type V: These models are defined by~${\bparamk}=0$, $\tilde A>0$, $\Sigma_+=\Delta=N_+=0$, which implies that solutions can only converge to the limit point with~$\slimit=0$.
		The evolution equation~\eqref{eqns_evolutionbianchib} for~$\tilde\Sigma$ reads
		\begin{equation}
			\tilde\Sigma'=2(q-2)\tilde\Sigma
		\end{equation}
		on Bianchi type~V solutions. As~$q\le2$ and~$\tilde\Sigma\ge0$, due to~\eqref{eqn_constraintgeneralthree}, this implies that~$\tilde\Sigma$ is monotone decreasing or constant. As the only element of the plane wave equilibrium points~$\planewave {\bparamk}$ with~$\slimit=0$ satisfies~$\tilde\Sigma=0$,
		this shows that~$\tilde\Sigma$ vanishes at all times.
		Further, the element in the plane wave equilibrium point with~$\slimit=0$ satisfies~$\tilde A=1$. With the information on the other variables, the evolution equation of~$\tilde A$ reads
		\begin{equation}
			\tilde A'=(3\gamma-2)(1-\tilde A)\tilde A.
		\end{equation}
		Its range, see~\eqref{eqn_constraintgeneralthree} then implies that for~$\gamma\ge2/3$, only the constant orbit is possible.
		In case~$0\le\gamma<2/3$, the solution is contained in~$\Sigma_+=\tilde\Sigma=\Delta=N_+=0$ and~$\tilde A$ decreases mononotically from~$1$ to~$0$.
	\end{itemize}
\end{rema}

\section{Equivalence of the initial data perspective and the expansion-normalised variables}

\label{section_equivalenceinitialdataexpansionnormalised}

The goal of this section is to justify the use of expansion-normalised variables, and show that under the correct transformation the evolution of these variables is equivalent to solving Einstein's equation for \ogon\ Bianchi~B perfect fluid initial data.
At the same time, we show how to construct, for given initial data $(G,{\idmetric},{\idfundform},\mu_0)$ as in Def.~\ref{defi_initialdatabianchib}, the \mghd\ and prove properties of this spacetime. This is done via the expansion-normalised variables $(\Sigma_+,\tilde\Sigma,\Delta,\tilde A,N_+)$ for Bianchi class~B models.

The expansion-normalised variables in Bianchi class~B models were developed in~\cite{hewittwainwright_dynamicalsystemsapproachbianchiorthogonalB}, motivated by a similar set of coordinates for Bianchi~A models introduced in~\cite{wainwrighthsu_dynamicalsystemsapproachbianchiorthogonalA}.
Their deduction starts out with given structure constants~$\gamma_{\alpha\beta}^\delta$ of a suitably chosen four-dimensional \onorm\ frame of the spacetime in question. To connect this to the initial data perspective, we have to understand how such a frame can be constructed from the knowledge of the metric~$\idmetric$ and the two-tensor~${\idfundform}$ on the \liegr ~$G$. In particular, this means choosing a suitable three-dimensional basis of the \liealg\ associated with the \liegr.
For this reason, we collect the necessary background on three-dimensional \liegr s in Subsection~\ref{subsect_appendixbianchiclassification}.

\smallskip

We recall the deduction of the expansion-normalised variables from given structure constants~$\gamma_{\alpha\beta}^\delta$ in Subsection~\ref{subsect_appendixdeducenormalisedvariablesbianchib}.
Note that this deduction already starts with the full spacetime, in particular with structure constants $\gamma_{\alpha\beta}^\delta$ in four dimensions. For initial data, this information is not available, only three-dimensional spacelike structure constants~$\gamma_{ij}^k$ make sense. The structure constants~$\gamma_{0i}^j$ require the existence of a timelike vector field~$e_0$.
However, to begin with there is no such vector field.
Nonetheless, we can define objects~$\tilde\gamma_{0i}^j$ using the metric~$\idmetric$ and the two-tensor~${\idfundform}$ and need to make sure that they have the form required for the construction by~\cite{hewittwainwright_dynamicalsystemsapproachbianchiorthogonalB}, which means that
the only non-vanishing ones are
\begin{equation}
	\tilde\gamma_{01}^1,\quad \tilde\gamma_{0A}^B,
\end{equation}
$A,B\in\{2,3\}$.
In order to see that in our setting we can indeed choose a suitable basis such that the commutators have the required form, we discuss in more detail initial data sets where the three-dimensional \riem\ \mf\ is a \liegr\ with left-invariant metric. This is done in Subsection~\ref{subsect_liegrinitialdata}, and it is here that we explain the terms 'exceptional' and '\ogon ' as well as the reason for excluding \liegr s of type~VI$_{{-}1/9}$.

The objects~$\tilde\gamma_{01}^1,\tilde\gamma_{0A}^B$ have to be understood as merely numbers, devoid of any geometric meaning. It is only a posteriori that we can interpret these numbers as structure constants of a suitable four-dimensional frame.
With these objects at our disposal, we can follow through with the transformation explained in Subsection~\ref{subsect_appendixdeducenormalisedvariablesbianchib}. We wish to point out however that we make use only of the algebraic relations, not their geometric interpretation.
This yields initial data $(\Sigma_+,\tilde\Sigma,\Delta,\tilde A,N_+)(0)$ for the evolution \equs\ in expansion-normalised variables, equations~\eqref{eqns_evolutionbianchib}--\eqref{eqn_evolutionomega}.

In order to avoid confusion and shorten notation, we are going to denote initial data to Einstein's field equation, \ie initial data as in Def.~\ref{defi_initialdatabianchib}, by \textit{geometric initial data}, and initial data to the evolution equations in expansion-normalised variables, equations~\eqref{eqns_evolutionbianchib}--\eqref{eqn_evolutionomega}, by \textit{dynamical initial data}.
The details of how to translate geometric initial data into dynamical initial data are given in the first part of the construction of the \mghd, Subsection~\ref{constr_initialdatatoexpnorm}.

\smallskip

Once geometric initial data is translated into the expansion-normalised variables setting, \ie into dynamical initial data, and we have obtained a solution in these variables, the main work lies in the construction of a global four-dimensional frame with structure constants behaving correctly over time, and such that our initially defined objects~$\tilde\gamma_{0i}^j$ are consistent with the geometric objects~$\gamma_{0i}^j$ at the starting time~$t=0$, \ie on the initial Cauchy hypersurface.
Finally, we use this four-dimensional frame to construct a spacetime metric. This is done in the second and third part of the construction, Subsections~\ref{constr_expnormtosolutionbasic} and~\ref{constr_solutionbasictomghd}.
Having obtained through this construction a spacetime into which the initial data is embedded in the correct way, we investigate its properties in Subsection~\ref{constr_propertiesdevelopmentinitialdata}.
We can show that this spacetime is in fact the \mghd\ of the geometric initial data.

\smallskip

In Subsection~\ref{constr_highersymmetrysolutions}, we then consider Bianchi spacetimes with additional symmetries, namely local rotational symmetry and plane wave equilibrium solutions. We compare their definitions from the point of view of geometric initial data with their definitions in expansion-normalised variables and show that these definitions coincide, respectively.

\smallskip

The construction of expansion-normalised variables by~\cite{hewittwainwright_dynamicalsystemsapproachbianchiorthogonalB} which we explain in Subsection~\ref{subsect_appendixdeducenormalisedvariablesbianchib} is related to the study of \ogon ly transitively~$G_2$ cosmologies. Even though~\cite{hewittwainwright_dynamicalsystemsapproachbianchiorthogonalB} appeal to results having been obtained in this context, our construction does not make use of these additional statements but is self-contained. We do nonetheless give several remarks explaining the relations, but these are logically independent from
the construction presented here.

\smallskip

In the Bianchi class~A setting, results equivalent to what we achieve in this section have been obtained by~\cite{ringstrom_bianchiixattr}, however with a somewhat different approach.

\subsection{Bianchi classification of three-dimensional \liegr s}
\label{subsect_appendixbianchiclassification}

We give a brief introduction to the classification of three-dimensional \liegr s proposed by Bianchi in 1903. The article~\cite{krasinskibehrschueckingestabrookwahlquistellisjantzenkundt_bianchiclass} gives a historical overview and provides insights into how our modern understanding of this classification came to be.
In~\cite{ellismaccallum_classofhomogcosmmodels}, the details are laid out, and for the first part of this subsection we refer to~\cite[App.~E]{ringstrom_topologyfuturestabilityuniverse} for the details.

For a three-dimen\-sional Lie group~$G$, let $\{e_i\}$, $i=1,2,3$, be a basis of the associated Lie algebra~$\ggg$, and let~$\gamma_{ij}^k$ denote the structure constants, \ie
\begin{equation}
	\liebr{e_i}{e_j}=\gamma_{ij}^ke_k.
\end{equation}
The equivalent information is encoded in the symmetric matrix~$n$ and the vector~$a$ given by
\begin{equation}
\label{eqn_data3dimliegr}
	n^{ij}=\frac12\gamma_{kl}^{\(i}\epsilon^{j\)kl},\quad a_k=\frac12\gamma_{ki}^i,
\end{equation}
with $\epsilon_{ijk}=\epsilon^{ijk}$ the permutation symbol which satisfies $\epsilon_{123}=1$ and is antisymmetric in all indices. The brackets~$^{()}$ denote symmetrisation, in fact $\gamma_{kl}^{(i}\epsilon^{j)kl}=(\gamma_{kl}^{i}\epsilon^{jkl}+\gamma_{kl}^{j}\epsilon^{ikl})/2$.
Equivalence follows from
\begin{equation}
\label{eqn_data3dimliegrback}
	\gamma_{jk}^i=\epsilon_{jkl}n^{li}+a_j\delta^i_k-a_k\delta^i_j.
\end{equation}
The structure constants have to satisfy the Jacobi identity, which is equivalent to the condition
\begin{equation}
\label{eqn_jacobiidentequiv}
	n a=0.
\end{equation}
Applying a suitable change of basis yields~$n$ and~$a$ in a specific simplified form and gives the following classification of three-dimensional \liegr s, of which we sketch a proof further down in Lemma~\ref{lemm_classthreedimliealg}. As there is a one-to-one correspondence between simply connected \liegr s and their \liealg s, see~\cite[Thm~3.28]{warner_founddiffmfsliegrs}, we formulate this classification in terms of (simply connected) \liegr s.
\begin{enumerate}
	\item In the case of simply connected unimodular Lie groups (Bianchi class~A) which are defined by $a=0$, one can choose~$n$ diagonal, \ie $n=\diag(\nu_1,\nu_2,\nu_3)$, such that it falls in exactly one of the categories given in Table~\ref{table_bianchiaclassification}.

	Given a left-invariant metric in~$G$, the basis~$e_1,e_2,e_3$ producing~$a$ and~$n$ of this form can be chosen \onorm.
	\begin{table}[htp]
		\centering
		\caption{Bianchi class~A}
	\label{table_bianchiaclassification}
		\begin{tabular}{l|c|c|c}
			Type & $\nu_1$ & $\nu_2$ & $\nu_3$ \\
			\hline
			I & 0 & 0 & 0 \\
			II & + & 0 & 0 \\
			VI$_0$ & 0 & + & -- \\
			VII$_0$ & 0 & + & + \\
			VIII & -- & + & + \\
			IX & + & + & +
		\end{tabular}
	\end{table}
	\item In the case of simply connected non-unimodular Lie groups (Bianchi class~B) which are defined by $a\not=0$, one can choose $a_1\not=0$, $a_2=a_3=0$, and~$n$ diagonal, \ie $n=\diag(\nu_1,\nu_2,\nu_3)$. The Jacobi identity then implies $\nu_1=0$. All possible types and their \resp names are listed in Table~\ref{table_bianchibclassification}.

	Again, given a left-invariant metric in~$G$, the basis~$e_1,e_2,e_3$ can be chosen \onorm.

	Lie groups of Bianchi type~VI and~VII have an additional degree of freedom which is captured in the quantity~$\binvparam$. In the chosen basis this parameter satisfies
	\begin{equation}
	\label{eqn_binvparamdefi}
		\binvparam \nu_2\nu_3=a_1^2
	\end{equation}
	and is invariant under scaling.
	We give a more geometric definition further down, see Lemma~\ref{lemm_invardefibparaminv}.
	\begin{table}[htp]
		\centering
		\caption{Bianchi class~B}
	\label{table_bianchibclassification}
		\begin{tabular}{l|c|c|c}
			Type & $\nu_1$ & $\nu_2$ & $\nu_3$ \\
			\hline
			V & 0 & 0 & 0 \\
			IV & 0 & 0 & + \\
			VI$_{\binvparam}$ & 0 & + & -- \\
			VII$_{\binvparam}$ & 0 & + & +
		\end{tabular}
	\end{table}
\end{enumerate}

In order to understand the parameter~$\binvparam$, we have to gather more detailed information on the structure of the \liealg s of the different Bianchi types.
\begin{lemm}
\label{lemm_abeliansubalgstructconstants}
	Let~$G$ a three-dimensional \liegr, and~$\ggg$ the associated \liealg. Then~$\ggg$ admits a two-dimensional Abelian subalgebra \iif~$G$ is not of Bianchi type~VIII or~IX.
\end{lemm}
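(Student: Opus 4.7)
The plan is to prove both directions directly from the canonical forms of $n$ and $a$ recalled in Tables~\ref{table_bianchiaclassification} and~\ref{table_bianchibclassification}, combined with the explicit relation~\eqref{eqn_data3dimliegrback} between $(n,a)$ and the structure constants $\gamma^i_{jk}$ of~$\ggg$.

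First, I would exhibit an explicit two\-/dimensional Abelian subalgebra in every Bianchi type other than VIII and IX. In Bianchi class~B, after bringing $a$ to the form $(a_1,0,0)$ with $a_1\neq 0$ and $n=\diag(0,\nu_2,\nu_3)$, formula~\eqref{eqn_data3dimliegrback} yields $\gamma_{23}^i=n^{1i}=0$, so $\liebr{e_2}{e_3}=0$ and the subspace spanned by $e_2,e_3$ is Abelian. In Bianchi class~A, with $a=0$ and $n=\diag(\nu_1,\nu_2,\nu_3)$, the brackets take the form
\begin{equation}
\liebr{e_1}{e_2}=\nu_3 e_3,\qquad \liebr{e_2}{e_3}=\nu_1 e_1,\qquad \liebr{e_3}{e_1}=\nu_2 e_2.
\end{equation}
In types~I, VI$_0$ and VII$_0$ one has $\nu_1=0$, so the subspace spanned by $e_2,e_3$ is Abelian; in type~II one may arrange $\nu_1\neq 0$, $\nu_2=\nu_3=0$, making the subspace spanned by $e_1,e_2$ Abelian (and in type~I every two\-/dimensional subspace works).

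For the converse I would show that in types VIII and IX no two\-/dimensional Abelian subalgebra exists. In both types all three eigenvalues $\nu_1,\nu_2,\nu_3$ are nonzero. A direct computation for $x=x^i e_i$ and $y=y^j e_j$ based on the brackets above yields
\begin{equation}
\liebr{x}{y}=\nu_1(x^2 y^3-x^3 y^2)e_1+\nu_2(x^3 y^1-x^1 y^3)e_2+\nu_3(x^1 y^2-x^2 y^1)e_3,
\end{equation}
so $\liebr{x}{y}=0$ is equivalent to the vanishing of all three components of the cross product of $(x^1,x^2,x^3)$ and $(y^1,y^2,y^3)$. This forces $x$ and $y$ to be linearly dependent, and hence no two\-/dimensional subspace of $\ggg$ can be Abelian.

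I do not anticipate any serious obstacle beyond careful bookkeeping of the canonical forms: the entire argument reduces to the dichotomy that in every Bianchi type apart from VIII and IX at least one of the three quantities $\nu_1,\nu_2,\nu_3$ vanishes in the chosen basis, whereas in types VIII and IX none does. The cross\-/product identity above then converts this algebraic observation into the existence (resp.\ non\-/existence) of a two\-/dimensional Abelian subalgebra.
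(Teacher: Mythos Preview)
Your proposal is correct and follows essentially the same approach as the paper: exhibit an explicit Abelian two-plane from the canonical form of $(n,a)$ in each type other than VIII and IX, and for types VIII and IX reduce the vanishing of a general bracket $[x,y]$ to the vanishing of the cross product of the coefficient vectors. The only cosmetic difference is that for type~II the paper picks the pair $e_1,e_3$ (using $\nu_2=0$) while you pick $e_1,e_2$ (using $\nu_3=0$); since both $\nu_2$ and $\nu_3$ vanish in type~II, either choice works.
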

\begin{proof}
	Choose a basis $e_1,e_2,e_3$ of the \liealg\ such that~$n$ and~$a$ are of one of the forms given in i) and ii). If the \liegr\ is not of Bianchi type~II, VIII or~IX, then~$\nu_1=0$. One computes that
	\begin{equation}
		\liebr{e_2}{e_3}=\gamma_{23}^ie_i=(\epsilon_{23l}n^{li}+a_2\delta^i_3-a_3\delta^i_2)e_i
			=\epsilon_{231}n^{1i}e_i=\nu_1e_1=0,
	\end{equation}
	which implies that $e_2,e_3$ span an Abelian subalgebra. For a \liegr\ of Bianchi type~II one finds~$\nu_2=0$, and~$e_1,e_3$ commute.

	Assume now that the \liegr\ is of type~VIII or~IX and suppose that there are two linearly independent vectors~$b,c\in\ggg$ which commute, \ie
	\begin{equation}
		b=\sum_{i=1}^3b_ie_i,\qquad c=\sum_{i=1}^3c_ie_i,
	\end{equation}
	for $b_i,c_i\in\RR$, and
	\begin{equation}
		\liebr bc=0.
	\end{equation}
	This implies
	\begin{equation}
		0=\liebr bc=\liebr{\sum_{i=1}^3b_ie_i}{\sum_{j=1}^3c_je_j}=\sum_{i,j}b_ic_j\liebr{e_i}{e_j}=\sum_{i,j}b_ic_j\gamma_{ij}^ke_k,
	\end{equation}
	hence for all $k=1,2,3$
	\begin{equation}
	\label{eqn_abeliansubgroupcondition}
		0=\sum_{i,j}b_ic_j\gamma_{ij}^k=\sum_{i<j}(b_ic_j-b_jc_i)\gamma_{ij}^k.
	\end{equation}
	As the two Bianchi types in question are of class~A, one finds from equation~\eqref{eqn_data3dimliegrback} that
	\begin{equation}
		\gamma_{12}^k=\delta_{3k}\nu_3,\qquad
		\gamma_{13}^k=-\delta_{2k}\nu_2,\qquad
		\gamma_{23}^k=\delta_{1k}\nu_1.
	\end{equation}
	All~$\nu_i$, $i=1,2,3$, are non-vanishing, and equation~\eqref{eqn_abeliansubgroupcondition} is therefore equivalent to the system of equations
	\begin{equation}
		b_2c_3-b_3c_2=0,\qquad
		b_1c_3-b_3c_1=0,\qquad
		b_1c_2-b_2c_1=0,
	\end{equation}
	or in other words $b\times c=0$, where~$\times$ denotes the cross product. Hence,~$b$ and~$c$ are parallel, a contradiction.
\end{proof}
In case the \liegr\ is of class~B, there is also a geometric way of defining the Abelian subalgebra. For this, we consider the adjoint~$\ad:\ggg\rightarrow\operatorname{End}(\ggg)$, $\ad_xy=\liebr xy$, see~\cite[Prop.~3.47]{warner_founddiffmfsliegrs}.
\begin{lemm}
\label{lemm_abeliansubalginvardefi}
	Let~$G$ a three-dimensional \liegr\ of class~B and~$H$ the one-form 
	\begin{equation}
		H:\ggg\rightarrow\RR, \quad x\mapsto \frac12\tr(\ad_x).
	\end{equation}
	Denote by~$\ggg_2$ the kernel of~$H$.
	Then~$\ggg_2$ is an Abelian subalgebra and coincides with the subalgebra identified in Lemma~\ref{lemm_abeliansubalgstructconstants}.
\end{lemm}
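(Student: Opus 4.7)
The plan is to compute $H$ in the standard class~B basis and then read off both claims directly.

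First I would check that $H$ is a well-defined linear functional: since $\ad: \ggg \to \operatorname{End}(\ggg)$ is linear and trace is linear, $H$ is a one-form on $\ggg$, so $\ker H$ is a linear subspace and in particular invariantly defined without any choice of basis. Next, using the standard class~B basis $e_1, e_2, e_3$ from Lemma~\ref{lemm_classthreedimliealg} (item~ii above) with $a = (a_1, 0, 0)$, $a_1 \neq 0$, and $n = \diag(0, \nu_2, \nu_3)$, I would compute
\begin{equation}
    \tr(\ad_{e_k}) = \sum_{j} \gamma_{kj}^j = 2a_k,
\end{equation}
using the defining relation~\eqref{eqn_data3dimliegr}, so that $H(e_k) = a_k$. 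Thus $H(e_1) = a_1 \neq 0$ while $H(e_2) = H(e_3) = 0$, showing that $H$ is a non-zero one-form on $\ggg$ and that its kernel is the two-dimensional subspace $\spn\{e_2, e_3\}$.

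This kernel is exactly the subalgebra identified in Lemma~\ref{lemm_abeliansubalgstructconstants}: in the proof of that lemma we saw that for any class~B \liealg\ one has $\nu_1 = 0$, hence $[e_2, e_3] = \gamma_{23}^i e_i = \nu_1 e_1 = 0$, so $\spn\{e_2, e_3\}$ is abelian. Therefore $\ggg_2 = \ker H = \spn\{e_2, e_3\}$ coincides with the Abelian subalgebra of Lemma~\ref{lemm_abeliansubalgstructconstants}, and in particular it is itself a subalgebra.

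As a cross-check that this identification is basis-independent, I would note the trace identity $\tr(\ad_{[x,y]}) = \tr([\ad_x, \ad_y]) = 0$, valid for any finite-dimensional Lie algebra, which gives $H([x,y]) = 0$ for all $x, y \in \ggg$. Hence $[\ggg, \ggg] \subset \ker H$, so for $x, y \in \ggg_2$ one has $[x, y] \in \ggg_2$, confirming that $\ggg_2$ is closed under the bracket without recourse to a specific basis. Since $H$ is defined intrinsically, this also shows that the resulting subalgebra does not depend on the choice of basis used to put $n$ and $a$ in the canonical form, justifying the uniqueness claim alluded to in the paragraph preceding Def.~\ref{defi_initialdatabianchib}. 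There is no real obstacle here: the only point to be careful about is confirming that class~B indeed forces $H \not\equiv 0$ (and so $\dim \ker H = 2$), which is immediate from $a_1 \neq 0$.
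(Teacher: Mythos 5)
Your proof is correct and follows essentially the same route as the paper's: compute $H$ in a basis to find $H=a_ke^k$, pass to the canonical class~B basis where $a=(a_1,0,0)$ with $a_1\neq0$ and $n$ diagonal, read off $\ker H=\operatorname{span}\{e_2,e_3\}$, and invoke the proof of Lemma~\ref{lemm_abeliansubalgstructconstants} to see this is the same Abelian subalgebra. The one thing you add beyond the paper's argument is the intrinsic cross-check via the identity $\tr(\ad_{[x,y]})=\tr([\ad_x,\ad_y])=0$, showing $[\ggg,\ggg]\subset\ker H$ without choosing a basis; the paper instead just verifies $[e_2,e_3]=0$ in the adapted basis. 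This addition is a nice touch that makes basis-independence of the subalgebra property transparent, but it does not change the structure of the argument.
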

\begin{proof}
	We choose a basis~$e_1,e_2,e_3$ of~$\ggg$ and denote by~$e^k$, $k=1,2,3$, the dual basis. Setting~$a_k$, $k=1,2,3$, as in~\eqref{eqn_data3dimliegr}, we find
	\begin{equation}
		H = a_ke^k.
	\end{equation}
	As the \liegr\ is of class~B, we can choose the basis~$e_1,e_2,e_3$ as above in ii), meaning that~$a_1\not=0$, $a_2=a_3=0$, and $n$ is diagonal, \ie $n=\diag(0,\nu_2,\nu_3)$. With this, we find that the kernel~$\ggg_2$ is spanned by the two basis elements~$e_2$ and~$e_3$.
	Comparison with the proof of Lemma~\ref{lemm_abeliansubalgstructconstants} shows that it is the same subalgebra we identified there.
\end{proof}
\begin{rema}
	Given a class~B \liegr~$G$ with a left-invariant metric, we obtain the following useful interpretation of the kernel~$\ggg_2$: it provides a splitting of the \liealg~$\ggg_2$ into a two-dimensional Abelian subalgebra and the direction \ogon\ to it. For any \onorm\ basis~$e_1,e_2,e_3$ of~$\ggg$ such that~$e_2,e_3$ span~$\ggg_2$, the basis element~$e_1$ is the one with the non-vanishing~$a_i$.
\end{rema}

\begin{rema}
\label{rema_frameforggg2rotationtodiagonalise}
	Given a \liegr\ of class~B with \liealg~$\ggg$ and a left-invariant metric on~$G$, we have at our disposal the uniquely defined Abelian subalgebra~$\ggg_2$. If instead the \liegr\ is of class~A but not of type VIII or IX, we can fix an Abelian subalgebra, which exists due to Lemma~\ref{lemm_abeliansubalgstructconstants}. In both cases, we can then introduce an \onorm\ basis $e_1,e_2,e_3$ of~$\ggg$ such that~$e_2,e_3$ span the Abelian subalgebra and~$e_1$ is \ogon\ to this span. Given~$\ggg_2$, this basis is uniquely defined up to a rotation in the $e_2e_3$-plane and a choice of sign in~$e_1$.

	One easily sees that it is equivalent to say that one uses the choice of \onorm\ basis from~i) and~ii), and then allows for a rotation in the~$e_2e_3$-plane. This holds in all cases apart from the case of Bianchi type~II, where it is necessary to first switch the basis elements~$e_1$ and~$e_2$. In the later subsections, in particular for the deduction of the different variables in Subsection~\ref{subsect_appendixdeducenormalisedvariablesbianchib}, this rotation in the~$e_2e_3$-plane is left as a gauge freedom.
\end{rema}

With this knowledge, we are now in a position to give an invariant definition of the parameter~$\binvparam$ appearing in the \liegr s of type~VI$_\binvparam$ and VII$_\binvparam$, which are of class~B.
\begin{lemm}
\label{lemm_invardefibparaminv}
	Let~$G$ a three-dimensional \liegr\ of type~VI$_\binvparam$ or VII$_\binvparam$, $\binvparam\not=0$ in either case, with associated \liealg~$\ggg$, and~$\ggg_2$ the uniquely defined Abelian subalgebra which is the kernel of the one-form~$H$.
	Let~$v_1\in\ggg\setminus\ggg_2$. Then
	\begin{equation}
		A_2\coloneqq \ad_{v_1} |_{\ggg_2}:\ggg_2\rightarrow\ggg_2,
	\end{equation}
	and a different choice~$\hat v_1\in\ggg\setminus\ggg_2$ results into a map~$\hat A_2$ differing from~$A_2$ by a constant non-zero multiple.
	Further, setting
	\begin{equation}
		\binvparam\coloneqq \frac{(\tr A_2)^2}{4\det A_2-(\tr A_2)^2}
	\end{equation}
	is well-defined independently of the choice of~$v_1$ and consistent with equation~\eqref{eqn_binvparamdefi}, \ie under a choice of basis as in ii) one finds
	\begin{equation}
		\binvparam\nu_2\nu_3=a_1^2.
	\end{equation}
\end{lemm}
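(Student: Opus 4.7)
The plan is to verify the four assertions in sequence: that $A_2$ is well-defined as an endomorphism of $\ggg_2$; that two choices of $v_1 \in \ggg \setminus \ggg_2$ yield maps differing by a nonzero scalar factor; that the proposed formula for $\binvparam$ is invariant under such rescaling; and finally that the formula recovers $a_1^2/(\nu_2\nu_3)$ in the standard basis of ii). First, I would note that $\ggg_2 = \ker H$ is in fact an ideal, not merely a subalgebra: since $H(\liebr{x}{y}) = \frac12 \tr\liebr{\ad_x}{\ad_y} = 0$, the map $H:\ggg \to \RR$ is a Lie algebra homomorphism with $\RR$ viewed as abelian, and its kernel is therefore an ideal. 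Hence $\ad_{v_1}$ restricts to an endomorphism of $\ggg_2$ for \emph{every} $v_1 \in \ggg$, and $A_2$ is well-defined.

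For the second claim, given $\hat v_1 \in \ggg \setminus \ggg_2$, set $c \coloneqq H(\hat v_1)/H(v_1)$, which is defined and nonzero because both numerator and denominator are nonzero on $\ggg \setminus \ggg_2$. Then $w \coloneqq \hat v_1 - c\, v_1$ lies in $\ker H = \ggg_2$, and since $\ggg_2$ is abelian, $\ad_w$ vanishes on $\ggg_2$. By linearity of $\ad$ we conclude $\hat A_2 = c\, A_2$.

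The third claim is immediate linear algebra in dimension two: $\tr(c A_2) = c \tr A_2$ and $\det(c A_2) = c^2 \det A_2$, so the quotient $(\tr A_2)^2/(4 \det A_2 - (\tr A_2)^2)$ is unchanged under $A_2 \mapsto c A_2$. Combined with the previous step, this shows that $\binvparam$ depends only on $\ggg$.

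Finally, to match the normalisation of equation~\eqref{eqn_binvparamdefi}, I would take $v_1 = e_1$ in the basis of ii) (so that $\ggg_2 = \operatorname{span}(e_2,e_3)$) and compute via~\eqref{eqn_data3dimliegrback}, obtaining
$$A_2 = \begin{pmatrix} a_1 & -\nu_2 \\ \nu_3 & a_1 \end{pmatrix}$$
in the basis $e_2, e_3$, whence $\tr A_2 = 2 a_1$, $\det A_2 = a_1^2 + \nu_2\nu_3$, and $4 \det A_2 - (\tr A_2)^2 = 4\nu_2\nu_3$. Substituting gives $\binvparam = a_1^2/(\nu_2\nu_3)$, i.e.\ $\binvparam\nu_2\nu_3 = a_1^2$. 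The denominator is nonzero precisely because types VI$_\binvparam$ and VII$_\binvparam$ with $\binvparam \neq 0$ satisfy $\nu_2\nu_3 \neq 0$, so the formula indeed degenerates outside this regime, consistent with the hypothesis of the lemma. No significant obstacle is expected; the only conceptual input is the recognition of $\ggg_2$ as an ideal via the vanishing of $H$ on commutators.
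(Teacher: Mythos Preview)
Your proof is correct and follows essentially the same route as the paper: work in the standard basis of ii), verify $A_2$ maps $\ggg_2$ to itself, show two choices of $v_1$ yield scalar multiples via the abelian property of $\ggg_2$, check scale-invariance of the quotient, and compute explicitly with $v_1=e_1$. The one point of genuine difference is that you establish $A_2(\ggg_2)\subset\ggg_2$ by the coordinate-free observation that $H$ vanishes on commutators (trace of a commutator), making $\ggg_2$ an ideal; the paper instead does this by direct computation in the chosen basis, obtaining $A_2(e_B)=b_1\gamma_{1B}^C e_C$ from $\gamma_{1A}^1=0$. Your argument is slightly more conceptual and gives the ideal property for free, while the paper's is more hands-on; both lead to the same scaling step and the same matrix computation (your matrix is the transpose of the paper's, reflecting a column-versus-row convention, but trace and determinant agree).
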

\begin{proof}
	Consider a basis~$e_1,e_2,e_3$ as in~ii), \ie satisfying~$a_1\not=0$, $a_2=a_3=0$, and $n=\diag(0,\nu_2,\nu_3)$. By Remark~\ref{rema_frameforggg2rotationtodiagonalise}, the Abelian subalgebra~$\ggg_2$ is spanned by~$e_2,e_3$. Consequently, we can express the given unit vector~$v_1$ as~$v_1=b_1e_1+\sum_Ab_Ae_A$ and find
	\begin{equation}
		A_2(e_B)=\liebr{b_1e_1+\sum_Ab_Ae_A}{e_B}=b_1\liebr{e_1}{e_B}=b_1\gamma_{1B}^Ce_C\quad\in\ggg_2,
	\end{equation}
	as~$\liebr{e_2}{e_3}=0$ and~$\gamma_{1A}^1=0$. This proves the first statement.

	For the second statement, we realise that~$v_1,e_2,e_3$ also forms a basis of~$\ggg$, hence~$\hat v_1=c_1v_1+\sum_Ac_Ae_A$ with~$c_1\not=0$, and
	\begin{equation}
		(A_2-\hat A_2)(e_B)=\liebr{e_1-c_1e_1}{e_B}=(1-c_1)A_2(e_B).
	\end{equation}

	From the last computation, it even follows immediately that~$\binvparam$ is independent of the choice of~$v_1$, and we can for all purposes assume that~$v_1=e_1$.
	Doing so, the linear mapping~$A_2$ in the chosen basis~$e_2,e_3$ is described by the matrix
	\begin{equation}
		\begin{pmatrix}
			a_1 & \nu_3\\
			-\nu_2 & a_1
		\end{pmatrix},
	\end{equation}
	which implies
	\begin{equation}
		\tr A_2=2a_1,\qquad \det A_2=a_1^2+\nu_2\nu_3,
	\end{equation}
	and yields
	\begin{equation}
		\binvparam=\frac{a_1^2}{\nu_2\nu_3}.
	\end{equation}
	This is equivalent to the requested relation and in addition shows that~$\binvparam$ is well-defined, as both~$\nu_2$ and~$\nu_3$ are non-vanishing for the Bianchi types in question.
\end{proof}
We conclude this subsection by proving that the types we listed in Tables~\ref{table_bianchiaclassification} and~\ref{table_bianchibclassification}, together with the parameter~$\binvparam$, indeed provide a classification of all three-dimensional \liealg s.
\begin{lemm}
\label{lemm_classthreedimliealg}
	Two three-dimensional \liealg s are isomorpic \iif\ they have the same Bianchi type, and in case of Bianchi type~VI or~VII additionally the same quantity~$\binvparam$.
\end{lemm}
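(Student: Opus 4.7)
The strategy is to show that the data $(n,a)$ introduced in equation~\eqref{eqn_data3dimliegr} transforms in a controlled way under \liealg\ isomorphisms, and that the Bianchi type (together with $\binvparam$ for types~VI and~VII) captures precisely the invariants of this transformation. The direction ``isomorphic implies same type'' then follows by checking that each invariant one associates to $(n,a)$ is intrinsic, while the converse follows by an explicit scaling normalisation.

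First, I would establish the transformation law: if $L \in GL(\ggg)$ sends $e_i \mapsto L^j_i e_j'$ between two \liealg s, then a direct computation from \equ ~\eqref{eqn_data3dimliegr} shows that $a$ transforms as a covector (essentially $a \mapsto (L^{-1})^T a$) while $n$ transforms, up to the factor $\det L$, as a contravariant symmetric tensor, $n \mapsto (\det L)^{-1} L n L^T$. In particular, the vanishing of $a$ is intrinsic, which gives invariance of the Bianchi class (A versus B); this is the same observation as in Lemma~\ref{lemm_abeliansubalginvardefi}, since $a=0$ is equivalent to $H\equiv 0$. Within class~A, the signature of $n$ (the triple of numbers of positive, negative and zero eigenvalues, modulo a joint sign flip that can be absorbed by $e_1 \mapsto {-}e_1$) is preserved and, by inspection of Table~\ref{table_bianchiaclassification}, distinguishes the six class~A types. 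Within class~B, one first notes that the two-dimensional Abelian subalgebra~$\ggg_2$ from Lemma~\ref{lemm_abeliansubalginvardefi} is defined intrinsically as $\ker H$, and any isomorphism sends $\ggg_2$ to $\ggg_2'$; the signature of $n$ restricted appropriately to $\ggg_2$ then distinguishes V, IV, VI, VII, and for types~VI and~VII the quantity~$\binvparam$ is already shown in Lemma~\ref{lemm_invardefibparaminv} to be invariant.

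For the converse direction I would use the scaling freedom to construct an explicit isomorphism between two \liealg s of the same type and (where relevant) the same~$\binvparam$. Pick bases in which $(n,a)$ take the normal forms of items~i) and~ii), so that the non-vanishing entries are $\nu_2, \nu_3 \in \RR\setminus\{0\}$ (with prescribed signs) and, in class~B, $a_1 \ne 0$. Applying a diagonal basis change $e_i \mapsto \lambda_i e_i$ rescales $\nu_2 \mapsto \lambda_2\lambda_3 \nu_2/\lambda_1$, etc., and analogously for $a_1$, with three free positive parameters $\lambda_i$ (signs can be fixed first). In class~A this rescaling is easily seen to bring every representative of a given type to a canonical form $\nu_i \in \{0,\pm 1\}$, yielding isomorphism. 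In class~B the three scalings must simultaneously normalise $(a_1,\nu_2,\nu_3)$; the one algebraic obstruction to doing so is precisely the scale-invariant combination $a_1^2/(\nu_2\nu_3) = \binvparam$, so equality of $\binvparam$ is exactly what is needed to complete the normalisation for types~VI and~VII, while types~IV and~V have no such obstruction.

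The main technical obstacle is the correct derivation of the transformation rules of $n$ and $a$ under a general $L \in GL(\ggg)$ and keeping track of the factor $\det L$; once this is in place the remaining arguments are essentially linear algebra and rescaling. A minor subtlety is the sign freedom in the normal forms (e.g.\ reversing $e_1$ flips the sign of $a_1$ but preserves $\binvparam$), which has to be handled so that the enumerated types in Tables~\ref{table_bianchiaclassification} and~\ref{table_bianchibclassification} are indeed pairwise non-isomorphic.
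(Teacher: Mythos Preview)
Your proposal is correct and follows essentially the same approach as the paper's own proof: both establish that the signature data of~$n$ (and the vanishing of~$a$) are invariant under change of basis, and then use diagonal rescalings to normalise each type to a unique representative, with~$\binvparam$ appearing as the one scaling obstruction in types~VI and~VII. The only differences are cosmetic: the paper cites an external reference for the transformation behaviour of~$(n,a)$ rather than deriving it, and handles the invariance of~$\binvparam$ via the scaling argument directly rather than appealing to Lemma~\ref{lemm_invardefibparaminv} as you do.
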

\begin{proof}
	Given a \liealg\ of either class, the matrix~$n$ and the vector~$a$ defined in equations~\eqref{eqn_data3dimliegr} admit the form~$n=\diag(\nu_1,\nu_2,\nu_3)$ and~$a=(a_1,0,0)$ after applying a suitable change of basis.
	As a consequence, the \liealg\ falls in one of the types defined in Tables~\ref{table_bianchiaclassification} and~\ref{table_bianchibclassification}.
	From the transformation behaviour of~$n$ and~$a$ under a change of basis, see~\cite[eq.~(E.4)]{ringstrom_topologyfuturestabilityuniverse}, we see that the number of non-vanishing diagonal elements of~$\nu$ as well as the number of diagonal elements of~$\nu$ having the same sign is fixed. As a conclusion, the given \liealg\ cannot fall in two different of the types from Tables~\ref{table_bianchiaclassification} and~\ref{table_bianchibclassification}.
	For \liealg s of class~A, scaling the basis such that~$\nu_i\in\{{-}1,0,1\}$, $i=1,2,3$, shows that each of the types in Table~\ref{table_bianchiaclassification} has one unique representative, which concludes the proof for this class. For a \liealg\ of type V or IV, one can apply scaling to achieve~$a_1=1$, $\nu_3\in\{0,1\}$, and thereby uniqueness. In case of a \liealg\ of type~VI or~VII, the quantities~$a_1$ and~$\nu_2,\nu_3$ cannot be scaled independently of one another but have to satisfy equation~\eqref{eqn_binvparamdefi}. Requesting~$\nu_2=1$ and~$\nu_3\in\{\pm1\}$ fixes~$a_1$ up to sign, which by a change of direction in~$e_1$ can be set to be positive. This concludes the proof.
\end{proof}

\subsection{The expansion-normalised variables from a four-dimensional spacetime point of view}
\label{subsect_appendixdeducenormalisedvariablesbianchib}

For this subsection and this subsection alone, we assume the existence of a four-dimensional spacetime from the start. That is, we assume that we already have a four-dimensional solution to Einstein's equations at our disposal, not only geometric initial data consisting of information on a three-dimensional \mf. With the four-dimensional information given, we recall the deduction of the expansion-normalised variables $(\Sigma_+,\tilde\Sigma,\Delta,\tilde A,N_+)$ from given structure constants~$\gamma_{\alpha\beta}^\delta$, as it was developed in~\cite{hewittwainwright_dynamicalsystemsapproachbianchiorthogonalB}.
In the following subsections, we then connect this to the initial data perspective we started with in the beginning of this paper.

We restrict ourselves to spacetimes with a Bianchi symmetry, \ie with a three-dimen\-sional \liegr~$G$ acting on the spacetime.
We assume a stress-energy tensor which is either that of vacuum or that of a perfect fluid, and additionally assume that the fluid velocity~$u$ is \ogon\ to the group orbits of~$G$.

We further assume that we are given an \onorm\ frame $(e_0,e_1,e_2,e_3)$ such that the only non-zero structure constants are
\begin{equation}
\label{eqn_structureconstantsnonzero}
	\gamma_{01}^1,\quad \gamma_{10}^1,\quad \gamma_{0A}^B,\quad\gamma_{A0}^B,\quad\gamma_{1A}^B,\quad
	\gamma_{A1}^B,
\end{equation}
with $A,B=2,3$, which is the setting considered in~\cite{hewittwainwright_dynamicalsystemsapproachbianchiorthogonalB}.
\begin{rema}
	In~\cite{hewittwainwright_dynamicalsystemsapproachbianchiorthogonalB}, the property that the only non-vanishing structure constants are those in~\eqref{eqn_structureconstantsnonzero} is justified by building upon the study of orthogonally transitive~$G_2$ cosmologies, which are a generalisation of \ogon\ Bianchi cosmologies. In~\cite{wainwright_classschemenonrotinhomogcosm}, a suitably adapted \onorm\ frame $(e_0,e_1,e_2,e_3)$ of a~$G_2$ spacetime is introduced, such that the vector fields $e_2,e_3$ are tangential to the group orbits of the two-dimensional symmetry group~$G_2$, the vector field~$e_1$ is spacelike, and the vector field~$e_0$ is aligned with the fluid velocity.
	The study of orthogonally transitive~$G_2$ cosmologies is continued in~\cite{hewittwainwright_orthogtransg2cosm} where the authors, following a construction proposed by~\cite{maccallum_cosmmodelsgeometricpointofview}, decompose the structure constants and deduce the corresponding evolution \equs\ equivalent to the Einstein field \equs\ of an \ogon\ perfect fluid.
	This decomposition is explicitly adapted to the setting of Bianchi~B orthogonal perfect fluid models by~\cite{hewittwainwright_dynamicalsystemsapproachbianchiorthogonalB}.

	We do not make use of these arguments but instead treat the non-vanishing of all structure constants other than those in~\eqref{eqn_structureconstantsnonzero} as an assumption. In the following subsection, we discuss in more detail initial data sets where the three-dimensional \mf\ is a \liegr. We notice that by excluding one specific Bianchi type, we can ensure that we end up in a setting where the structure constants have the requested form, see Lemma~\ref{lemm_exceptionalbianchiasinitialdata} and the beginning of Subsection~\ref{constr_initialdatatoexpnorm}.
\end{rema}

With structure constants as in~\eqref{eqn_structureconstantsnonzero} at hand, we set
\begin{equation}
\label{eqns_prebasic_definitionpartone}
\begin{subaligned}
	\sigma_{22}={}&{-}\gamma_{02}^2+\frac13(\gamma_{01}^1+\gamma_{02}^2+\gamma_{03}^3), &
	n_{22}={}&{-}\gamma_{13}^2 ,\\
	\sigma_{23}={}&\frac12(-\gamma_{02}^3-\gamma_{03}^2)=\sigma_{32}, &
	n_{23}={}&\frac12\gamma_{12}^2-\frac12\gamma_{13}^3=n_{32} ,\\
	\sigma_{33}={}&{-}\gamma_{03}^3+\frac13(\gamma_{01}^1+\gamma_{02}^2+\gamma_{03}^3), \qquad&
	n_{33}={}&\gamma_{12}^3,\\
	\theta={}&{-}\gamma_{01}^1-\gamma_{02}^2-\gamma_{03}^3, &
	a_1={}&\frac12(\gamma_{12}^2+\gamma_{13}^3) .\\
	\Omega_1={}&\frac12(\gamma_{02}^3-\gamma_{03}^2), &&
\end{subaligned}
\end{equation}
This gives a decomposition of the structure constants as proposed for general cosmological models by~\cite{maccallum_cosmmodelsgeometricpointofview} and deduced in the present setting by~\cite{hewittwainwright_dynamicalsystemsapproachbianchiorthogonalB}:
\begin{equation}
\label{eqns_commutators}
\begin{subaligned}
	\liebr{e_0}{e_1}={}&(\sigma_A{}^A-\frac13\theta)e_1,\\
	\liebr{e_0}{e_A}={}&({-}\sigma_A{}^B-\frac13\theta\delta_A^B+\Omega_1\epsilon_A{}^B)e_B,\\
	\liebr{e_1}{e_A}={}&(\epsilon_{AB}n^{BC}+a_1\delta_A^C)e_C,\\
	\liebr{e_A}{e_B}={}&0,
\end{subaligned}
\end{equation}
with $A,B\in\{2,3\}$, $\epsilon_{AB}$ the two-dimensional permutation symbol, and~$\delta_{AB}$ the Kronecker delta which is also used for lifting and lowering indices.
The non-vanishing structure constants~$\gamma_{\alpha\beta}^\delta$, see~\eqref{eqn_structureconstantsnonzero}, of the frame elements on the one hand and the set of variables $(\theta,\sigma_{AB},\Omega_1,a_1,n_{AB})$ as given in~\eqref{eqns_commutators} on the other hand encode the same information.

The choice of \onorm\ frame $(e_0,e_1,e_2,e_3)$ is not unique but allows for a rotation in the $e_2e_3$-plane. In the classification of \liegr s in Subsection~\ref{subsect_appendixbianchiclassification}
we have chosen a specific frame which additionally diagonalises~$n_{AB}$, but for the remainder of this section, we leave this rotation as a gauge freedom. Having this freedom of rotation in choosing the frame however becomes an issue when constructing the \mghd\ in Subsections~\ref{constr_initialdatatoexpnorm}--\ref{constr_solutionbasictomghd}.

\begin{rema}
\label{rema_geometricmeaningthetasigmaAB}
	The quantities appearing in the left column of~\eqref{eqns_prebasic_definitionpartone} can be interpreted geometrically. In order to do so, assume that the spacetime is~$I\times G$, with~$I$ an open interval described by a time parameter~$t$, and the frame is such that~$e_0=\partial_t$, and~$e_i$, $i=1,2,3$, are tangential to every~$\{t\}\times G$ and invariant under the action of~$G$.

	The decomposition of structure constants in a spacetime was originally proposed by~\cite{maccallum_cosmmodelsgeometricpointofview}.
	Upon comparison, one notices that the quantity~$\theta_{ij}$ given there (but in our index convention) equals the second fundamental form of $\{t\}\times G$ in the spacetime.
	Consequently, the trace~$\theta$ of~$\theta_{ij}$ equals the mean curvature of~$\{t\}\times G$, and
	\begin{equation}
		\sigma_{ij}=\theta_{ij}-\frac{\theta}{3}h_{ij}
	\end{equation}
	is the trace-free part of the second \fundform, where~$h$ is the spacetime metric restricted to~$\{t\}\times G$.

	We compute
	\begin{equation}
		\gamma_{0i}^i=\scalprod{\liebr{e_0}{e_i}}{e_i}=\scalprod{\nabla_{e_0}e_i-\nabla_{e_i}e_0}{e_i}={-}\theta_{ii},
	\end{equation}
	where~$\nabla$ denotes the four-dimensional \levi\ connection corresponding to the metric on~$I\times G$, which we here denote by~$\scalprod{\cdot}{\cdot}$.
	As~$\sigma_{ij}$ is trace free, we find
	\begin{equation}
		0=\sigma_{11}+\sigma_{22}+\sigma_{33}=\theta_{11}-\frac\theta3+\sigma_{22}+\sigma_{33},
	\end{equation}
	or equivalently
	\begin{equation}
		{-}\theta_{11}=\sigma_A{}^A-\frac13\theta,
	\end{equation}
	which should be compared to the first equation in~\eqref{eqns_commutators}.

	The quantity~$\Omega_1$ can be expressed in terms of the four-dimensional frame as follows:
	\begin{equation}
		\Omega_1=\scalprod{e_3}{\nabla_{e_0}e_2},
	\end{equation}
	where~$\nabla$ denotes the four-dimensional \levi\ connection. The quantity~$\Omega_1$ thus describes a certain timelike derivative, but not one which is encoded in the second fundamental form~$\theta_{ij}$.
\end{rema}

Following the deduction of the expansion-normalised variables $(\Sigma_+,\tilde\Sigma,\Delta,\tilde A,N_+)$ as given in~\cite{hewittwainwright_dynamicalsystemsapproachbianchiorthogonalB}, one in a first step replaces~$n_{AB}$ and~$\sigma_{AB}$, $A,B\in\{2,3\}$, by their trace and trace-free part (\wrt\ the two-dimensional trace in the $23$-components), \ie one defines
\begin{equation}
\label{eqns_prebasic_definitionparttwo}
\begin{subaligned}
	\sigma_+={}&\frac32\delta^{AB}\sigma_{AB}, &
	n_+={}&\frac32\delta^{AB}n_{AB} ,\\
	\tilde\sigma_{AB}={}&\sigma_{AB}-\frac13\sigma_+\delta_{AB}, \qquad &
	\tilde n_{AB}={}&n_{AB}-\frac13n_+\delta_{AB}.
\end{subaligned}
\end{equation}
The information encoded in these new variables is equivalent to that encoded in~$n_{AB}$, $\sigma_{AB}$. Additionally, one sets
\begin{equation}
\label{eqns_basic_definitionpartone}
	{}^*\tilde\sigma_{AB}={}\tilde\sigma_A{}^C\epsilon_{BC}, \qquad
	{}^*\tilde n_{AB}={}\tilde n_A{}^C\epsilon_{BC}.
\end{equation}

Assuming that the only non-vanishing structure constants are those given in~\eqref{eqn_structureconstantsnonzero} which are decomposed as above, and further assuming a stress-energy tensor of an \ogon\ perfect fluid~\eqref{eqn_stressenergyperfectfluid} with linear equation of state~\eqref{eqn_lineareqnofstate}, the evolution \equs\ for the variables $(\theta,\sigma_+,\tilde\sigma_{AB},a_1,n_+,\tilde n_{AB})$ have been given explicitly in~\cite[App.~A]{hewittwainwright_dynamicalsystemsapproachbianchiorthogonalB}:
\begin{equation}
\label{eqns_prebasic_evolutionbianchib}
\begin{subaligned}	\dot\theta={}&{-}\frac13\theta^2-\frac23\sigma_+^2-\tilde\sigma^{AB}\tilde\sigma_{AB}-\frac12(3\gamma-2)\mu,\\
	\dot\sigma_+={}&{-}\theta\sigma_+-\tilde n^{AB}\tilde n_{AB},\\
	\dot{\tilde\sigma}_{AB}={}&{-}\theta\tilde\sigma_{AB}+2\Omega_1{}^*\tilde\sigma_{AB}-\frac23n_+\tilde n_{AB}+2a_1{}^*\tilde n_{AB},\\
	\dot a_1={}&\frac13(2\sigma_+-\theta)a_1,\\
	\dot n_+={}&\frac13(2\sigma_+-\theta)n_++3\tilde\sigma^{AB}\tilde n_{AB},\\
	\dot{\tilde n}_{AB}={}&\frac13(2\sigma_+-\theta)\tilde n_{AB}+2\Omega_1{}^*\tilde n_{AB}+\frac23n_+\tilde\sigma_{AB},
\end{subaligned}
\end{equation}
with constraint \equ
\begin{equation}
\label{eqn_prebasic_constraint}
	3a_1\sigma_+-\frac32{}^*\tilde\sigma^{AB}\tilde n_{AB}=0,
\end{equation}
\noeqref{eqn_prebasic_constraint}
defining equation for~$\mu$
\begin{equation}
\label{eqn_prebasic_mu}
	\mu=\frac13\theta^2-\frac13\sigma_+^2-\frac12\tilde\sigma^{AB}\tilde\sigma_{AB}-\frac12\tilde n^{AB}\tilde n_{AB}-3a_1^2,
\end{equation}
and auxiliary \equ
\begin{equation}
\label{eqn_prebasic_evolutionmu}
	\dot\mu={-}\gamma\mu\theta.
\end{equation}
\begin{rema}
\label{rema_equivalenceeinsteinperfectfluidprebasic}
	Consider a four-dimensional spacetime with an \onorm\ frame~$(e_0,e_1,e_2,e_3)$.
	Under the assumption that the only non-vanishing structure constants~$\gamma_{\alpha\beta}^\delta$ are those given in~\eqref{eqn_structureconstantsnonzero} and they additionally satisfy
	\begin{equation}
		e_i(\gamma_{\alpha\beta}^\delta)=0,
	\end{equation}
	the system of evolution equations with constraints~\eqref{eqns_prebasic_evolutionbianchib}--\eqref{eqn_prebasic_evolutionmu}
	holds \iif\
	the Ricci \curv\ of the spacetime is invariant under a three-dimensional \liegr\ action in the~$e_1e_2e_3$-space such that in the chosen basis the only non-vanishing terms are
	\begin{equation}
		Ric(e_0,e_0)=\frac12(3\gamma-2)\mu,\qquad Ric(e_i,e_i)=\frac12(2-\gamma)\mu,
	\end{equation}
	$i=1,2,3$, and the Jacobi identities hold in the $e_1e_2e_3$-space. This follows from direct computation,
	expressing the Ricci \curv\ in terms of the structure constants~$\gamma_{\alpha\beta}^\delta$ and then replacing them by the variables $(\theta,\sigma_+,\tilde\sigma_{AB},a_1,n_+,\tilde n_{AB})$.
	A Ricci \curv\ of this form is equivalent to Einstein's equations of a perfect fluid~\eqref{eqn_stressenergyperfectfluid} with linear equation of state~\eqref{eqn_lineareqnofstate}.

	In detail, one finds that the evolution equations for the variables~$\theta$, $\sigma_+$ and~$\tilde\sigma_{AB}$ correspond to the Ricci \curv\ terms~$Ric(e_i,e_j)$ with~$i\not=0\not=j$, while the evolution of~$a_1$, $n_+$ and~$\tilde n_{AB}$ follows from the Jacobi identities on the \liegr. The constraint equation~\eqref{eqn_prebasic_constraint} is equivalent to the momentum constraint, and equation~\eqref{eqn_prebasic_mu} defining~$\mu$ is equivalent to the Hamiltonian constraint. The evolution of~$\mu$ is equivalent to the matter equation of a perfect fluid with linear equation of state.
\end{rema}

In the next step of the construction, one introduces the variables
\begin{equation}
\label{eqns_basic_definitionparttwo}
\begin{subaligned}
	\tilde\sigma=\frac32\tilde\sigma^{AB}\tilde\sigma_{AB}, \qquad&\tilde n={}\frac32\tilde n^{AB}\tilde n_{AB},\\
	\delta=\frac32\tilde\sigma^{AB}\tilde n_{AB}, \qquad& ^*\delta={}\frac32\,^*\tilde\sigma^{AB}\tilde n_{AB},
\end{subaligned}
\end{equation}
which are invariant under the freedom of rotation and satisfy $\tilde\sigma\tilde n=\delta^2+\,^*\delta^2$.
The constraint equation~\eqref{eqn_prebasic_constraint}
implies $^*\delta=3a_1\sigma_+$, which can be used to eliminate~$^*\delta$ in the following.
One further finds that there is a constant~${\bparamk}$ such that
\begin{equation}
\label{eqn_basic_tildenrelation}
	\tilde n=\frac13(n_+^2-9{\bparamk}a_1^2).
\end{equation}
In the case of Bianchi type~VI and~VII, this relation immediately follows from the definition of the Bianchi group parameter~$\binvparam$ in equation~\eqref{eqn_binvparamdefi}, and the constant satisfies~${\bparamk}=1/{\binvparam}$. In all other Bianchi cases, one computes directly from the form of~$n_{AB}$ and~$a_1$ that equation~\eqref{eqn_basic_tildenrelation} holds for~${\bparamk}=0$.
One further sets
\begin{equation}
\label{eqn_basic_definitionpartthree}
	\tilde a=9a_1^2
\end{equation}
for simplification. The resulting variables
$(\theta ,\sigma_+ ,\tilde{\sigma},\delta ,\tilde{a},n_+)$
are called the \textit{basic variables} and evolve according to
\begin{equation}
\label{eqns_basic_evolutionbianchib}
\begin{subaligned}
	\dot\theta={}&{-}\frac13\theta^2-\frac23(\sigma_+^2+\tilde\sigma)-\frac12(3\gamma-2)\mu,\\
	\dot\sigma_+={}&{-}\theta\sigma_+-\frac23\tilde n,\\
	\dot{\tilde\sigma}={}&{-}2\theta\tilde\sigma-\frac43n_+\delta-\frac43\tilde a\sigma_+,\\
	\dot\delta={}&\frac23(\sigma_+-2\theta)\delta+\frac23n_+(\tilde\sigma-\tilde n),\\
	\dot{\tilde a}={}&\frac23(2\sigma_+-\theta)\tilde a,\\
	\dot n_+={}&\frac13(2\sigma_+-\theta)n_++2\delta.
\end{subaligned}
\end{equation}
They satisfy
\begin{equation}
\label{eqn_basic_constraintnonnegative}
	\tilde\sigma\ge0, \qquad \tilde a\ge0, \qquad\tilde n\ge0, \qquad \sigma_+^2+\tilde\sigma+\tilde a+\tilde n\le \theta^2,
\end{equation}
and the constraint
\begin{equation}
\label{eqn_basic_constraint}
	\tilde\sigma \tilde n -\delta^2-\tilde a\sigma_+^2=0,
\end{equation}
where
\begin{equation}
\label{eqn_basic_definitiontilden}
	\tilde n=\frac13(n_+^2-{\bparamk}\tilde a).
\end{equation}
The matter $\mu$ is given by
\begin{equation}
\label{eqn_basic_mu}
	\mu=\frac13(\theta^2-\sigma_+^2-\tilde\sigma-\tilde a-\tilde n),
\end{equation}
and the auxiliary equation is satisfied
\begin{equation}
\label{eqn_basic_evolutionmu}
	\dot\mu={-}\gamma\theta\mu.
\end{equation}
\begin{rema}
	One sees that the evolution equations~\eqref{eqns_prebasic_evolutionbianchib}--\eqref{eqn_prebasic_evolutionmu} imply the evolution equations~\eqref{eqns_basic_evolutionbianchib}--\eqref{eqn_basic_evolutionmu}, provided relation~\eqref{eqn_basic_definitiontilden} holds and the matter~$\mu$ in the first set of equations is assumed to be non-negative.

	Note also that the relations in~\eqref{eqn_basic_constraintnonnegative} and the constraint equation~\eqref{eqn_basic_constraint} are preserved under the evolution equations~\eqref{eqns_basic_evolutionbianchib}. This follows from similar arguments as for the expansion-normalised variables, see Remark~\ref{rema_expnormvarconstraintspreserved}.
\end{rema}

\begin{lemm}
	\label{lemm_excludethetazerobasic}
	Consider a solution to equations~\eqref{eqns_basic_evolutionbianchib}--\eqref{eqn_basic_evolutionmu}. Then either the solution is the trivial solution
	\begin{equation}
		\theta=\sigma_+=\tilde{\sigma}=\delta=\tilde{a}=n_+=\mu=0,
	\end{equation}
	or $\theta\not=0$ at all times.
\end{lemm}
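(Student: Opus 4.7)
The plan is to show that if $\theta$ vanishes at some single time $t_0$, then at that time all the basic variables and the matter $\mu$ must simultaneously vanish, so that by uniqueness of ODE solutions the orbit coincides with the trivial equilibrium on all of its interval of existence. Note that the trivial solution is indeed an equilibrium of~\eqref{eqns_basic_evolutionbianchib} and~\eqref{eqn_basic_evolutionmu}, so this is the alternative we need.

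Suppose $\theta(t_0)=0$. The key input is the Hamiltonian-type relation~\eqref{eqn_basic_mu} together with the sign conditions~\eqref{eqn_basic_constraintnonnegative}. Rewriting~\eqref{eqn_basic_mu} as
\begin{equation}
\theta^2=\sigma_+^2+\tilde\sigma+\tilde a+\tilde n+3\mu,
\end{equation}
the right-hand side is a sum of non-negative quantities (the first four by~\eqref{eqn_basic_constraintnonnegative}, and $\mu\ge 0$ because $\sigma_+^2+\tilde\sigma+\tilde a+\tilde n\le\theta^2$ is exactly that inequality). Evaluating at $t_0$ forces
\begin{equation}
\sigma_+(t_0)=0,\quad \tilde\sigma(t_0)=0,\quad \tilde a(t_0)=0,\quad \tilde n(t_0)=0,\quad \mu(t_0)=0.
\end{equation}

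Next, we recover the two remaining variables $n_+$ and $\delta$. Since $\tilde n(t_0)=0=\tilde a(t_0)$, the definition~\eqref{eqn_basic_definitiontilden} of $\tilde n$ reduces to $0=\frac13 n_+^2$, so $n_+(t_0)=0$. The momentum-type constraint~\eqref{eqn_basic_constraint} then reads $0=\tilde\sigma\tilde n-\delta^2-\tilde a\sigma_+^2=-\delta^2$ at $t_0$, hence $\delta(t_0)=0$. Thus every one of the basic variables, as well as $\mu$, vanishes at $t_0$.

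Now observe that the right-hand sides of~\eqref{eqns_basic_evolutionbianchib} and of~\eqref{eqn_basic_evolutionmu} are polynomial, hence locally Lipschitz, in $(\theta,\sigma_+,\tilde\sigma,\delta,\tilde a,n_+,\mu)$. The zero vector is a fixed point of this system. Since our solution agrees with the zero vector at the single time $t_0$, the Picard--Lindel\"of uniqueness theorem forces the solution to be identically zero on its entire interval of existence. This establishes the dichotomy. The only step that requires any care is the verification that $\mu\ge 0$ along the orbit, and as noted this is immediate from~\eqref{eqn_basic_constraintnonnegative}; no delicate estimates are needed.
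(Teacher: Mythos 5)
Your proof is correct and follows essentially the same route as the paper's: evaluate the Hamiltonian relation~\eqref{eqn_basic_mu} at the zero of~$\theta$ together with the sign constraints~\eqref{eqn_basic_constraintnonnegative} to force $\sigma_+,\tilde\sigma,\tilde a,\tilde n,\mu$ to vanish there, then use~\eqref{eqn_basic_definitiontilden} and~\eqref{eqn_basic_constraint} to kill $n_+$ and $\delta$, and finally invoke ODE uniqueness. The only addition you make is spelling out that $\mu\ge0$ is itself a consequence of~\eqref{eqn_basic_constraintnonnegative}, which the paper takes as given.
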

\begin{proof}
	Using non-negativity of~$\mu$ in equation~\eqref{eqn_basic_mu} shows that if~$\theta=0$ at one time then
	\begin{equation}
		\sigma_+=0,\quad \tilde\sigma=0,\quad \tilde a=0,\quad \tilde n=0,\quad \mu=0
	\end{equation}
	at this time, and from equations~\eqref{eqn_basic_definitiontilden} and~\eqref{eqn_basic_constraint}, even~$n_+$ and~$\delta$ vanish at this time.
	The evolution equations~\eqref{eqns_basic_evolutionbianchib} imply that all variables vanish at all times.
\end{proof}

In the final step, the basic variables are normalised with appropriate powers of the rate of expansion scalar~$\theta$. By excluding the trivial solution in basic variables, we ensure that $\theta(t)\not=0$ at all times~$t$, see Lemma~\ref{lemm_excludethetazerobasic}. Setting then
\begin{equation}
\label{eqn_expansionnormalisation}
	\Sigma_+=\frac{\sigma_+}{\theta},\quad \tilde\Sigma=\frac{\tilde\sigma}{\theta^2},\quad
	\Delta=\frac{\delta}{\theta^2},\quad \tilde A=\frac{\tilde a}{\theta^2},\quad
	N_+=\frac{n_+}{\theta},\quad
	\tilde N=\frac{\tilde n}{\theta^2},
\end{equation}
and replacing the energy density~$\mu$ by the density parameter~$\Omega$ defined by
\begin{equation}
\label{eqn_definitionomega}
	\Omega=\frac{3\mu}{\theta^2}
\end{equation}
yields the \textit{expansion-normalised and dimensionless variables}
$(\Sigma_+,\tilde\Sigma,\Delta,\tilde A,N_+)$. After changing to the dimensionless time~$\tau$ which satisfies
\begin{equation}
\label{eqn_evolutiontau}
	\frac{dt}{d\tau}=\frac{3}{\theta}
\end{equation}
and some arbitrarily chosen initial condition,
they evolve according to \equs ~\eqref{eqns_evolutionbianchib}--\eqref{eqn_evolutionomega}. The deceleration parameter~$q$ appearing in these evolution equations is related to the basic variable~$\theta$ via
\begin{equation}
\label{eqn_evolutiontheta}
	\theta'={-}(1+q)\theta.
\end{equation}
\begin{rema}
\label{rema_equivalencebasicvarexpnormvar}
	It follows by direct computation that the evolution in basic variables, \equs~\eqref{eqns_basic_evolutionbianchib}--\eqref{eqn_basic_evolutionmu}, on the one hand and the evolution in expansion-normalised variables, equations \eqref{eqns_evolutionbianchib}--\eqref{eqn_evolutionomega}, together with equation~\eqref{eqn_evolutiontheta} on the other hand are equivalent to one another, provided~$\theta\not=0$, the variables are related as in~\eqref{eqn_expansionnormalisation}--\eqref{eqn_definitionomega},
	and the time~$t$ is related to the dimensionless time~$\tau$ as in equation~\eqref{eqn_evolutiontau}.
\end{rema}

\subsection{\liegr s as initial data subsets}
\label{subsect_liegrinitialdata}

	In this subsection, we consider initial data sets where the three-dimensional \mf\ is a \liegr, and the initial metric and \fundform\ are invariant under the group action.
	In addition to the purely three-dimensional \liegr\ properties which we have discussed in Subsection~\ref{subsect_appendixbianchiclassification}, the Hamilton and momentum constraint equations pose restrictions on the two-tensor~$\idfundform$ which we now investigate in detail.
	In the following subsections, these restrictions are used to connect geometric initial data to expansion-normalised variables and the construction of the \mghd.

	We start with initial data $(G,{\idmetric},{\idfundform},\mu_0)$ as in Def.~\ref{defi_initialdatabianchib}:
	A \liegr~$G$ of class~B or
	of type~I or~II,
	a left-invariant \riem\ metric~${\idmetric}$ on~$G$, a left-invariant symmetric covariant two-tensor~${\idfundform}$ on~$G$, and a constant $\mu_0\ge0$, satisfying the constraint equations
	\begin{align}
		\overline R-{\idfundform}_{ij}{\idfundform}^{ij}+(\tr_{\idmetric}{\idfundform})^2={}&2\mu_0,\\
		\overline\nabla_i\tr_{\idmetric}{\idfundform}-\overline\nabla^j{\idfundform}_{ij}={}&0.
	\end{align}
	In a first step, we fix a two-dimensional Abelian subalgebra~$\ggg_2$ of the \liealg~$\ggg$ corresponding to the \liegr~$G$. In case of a group of Bianchi class~B, this is the uniquely defined subalgebra which is the kernel of the one-form~$H$, see Lemma~\ref{lemm_abeliansubalginvardefi}. In case of a group of Bianchi type~I or~II, the existence of such a subalgebra is ensured by Lemma~\ref{lemm_abeliansubalgstructconstants}. We then choose an orthonormal basis $\tilde e_2,\tilde e_3$ of this subalgebra.
	Having fixed~$\ggg_2$, these two vectors can be chosen uniquely up to rotation and reflection.
	Using the given initial metric~${\idmetric}$ to fix a third unit vector~$\tilde e_1$ in the \liealg\ orthogonal to the span of those two yields a basis of~$\ggg$.
	This choice of basis gives
	structure constants~$\gamma_{ij}^k$, $i,j,k\in\{1,2,3\}$, or equivalently $n^{ij}$, $a_k$, see~\eqref{eqn_data3dimliegr} and~\eqref{eqn_data3dimliegrback} in Subsection~\ref{subsect_appendixbianchiclassification}.

	For a fixed~$\ggg_2$, the chosen frame of the \liealg~$\ggg$ is unique up to rotation and reflection in~$\ggg_2$ in case of~$\tilde e_2$, $\tilde e_3$, and a choice of sign in case of~$\tilde e_1$.
	We can therefore use the frame from the classification of \liegr s in Subsection~\ref{subsect_appendixbianchiclassification}, see also Remark~\ref{rema_frameforggg2rotationtodiagonalise}. Note that in the case of a Bianchi type~II \liegr, we rename the basis elements such that~$\tilde e_2$ and~$\tilde e_3$ commute. For this special frame, the structure constants are such that~$a_2=a_3=0$, the matrix~$n$ is diagonal with~$n_{11}=0$, and~$a_1\not=0$ for~$G$ of Bianchi class~B. The reflection in~$\tilde e_1$ and as well as the rotation and reflection in~$\ggg_2$ do not affect~$a_i$ or~$n_{1i}=n_{i1}$, $i=2,3$.

	Let us now have a closer look at the constraint equations, which the metric~$\idmetric$ and two-tensor~$\idfundform$ have to satisfy by Def.~\ref{defi_initialdatabianchib}. As both tensors are left-invariant, the momentum constraint reduces to
	\begin{equation}
		\overline\nabla^j\idfundform_{ij}=0.
	\end{equation}
	In terms of the chosen \onorm\ frame, we find
	\begin{equation}
		\overline\nabla_l\idfundform_{ij}
		=\tilde e_l(\idfundform(\tilde e_i,\tilde e_j)) -\idfundform(\overline\nabla_{\tilde e_l}\tilde e_i, \tilde e_j) -\idfundform(\tilde e_i,\overline\nabla_{\tilde e_l}\tilde e_j)
		={-}\overline\Gamma_{li}^m\idfundform_{mj}-\overline\Gamma_{lj}^m\idfundform_{im},
	\end{equation}
	where~$\overline\Gamma_{ij}^k$ denote the Christoffel symbols~$\overline\Gamma_{ij}^k\tilde e_k=\overline\nabla_{\tilde e_i}\tilde e_j$
	corresponding to the \levi\ connection~$\overline\nabla$ on the \riem\ \mf~$(G,h)$.
	We conclude that the momentum constraint is equivalent to
	\begin{equation}
	\label{eqn_momentumconstraintchristoffel}
		0={-}\sum_{j=1}^3(\overline\Gamma_{ji}^m\idfundform_{mj}+\overline\Gamma_{jj}^m\idfundform_{im}).
	\end{equation}
	As a consequence of Koszul's formula, the Christoffel symbols can be expressed in terms of the structure constants via
	\begin{equation}
	\label{eqn_christoffelinstructureconstants}
		\overline\Gamma_{ij}^k=\frac12({-}\gamma_{jk}^i+\gamma_{ki}^j+\gamma_{ij}^k).
	\end{equation}
	Using the definitions for~$n^{ij}$ and $a_k$, equations~\eqref{eqn_data3dimliegr}, as well as the symmetry of~$\idfundform$
	we then find
	\begin{equation}
	\label{eqn_threericcizeroione}
		0={-}2a_1\idfundform_{11}+a_1(\idfundform_{22}+\idfundform_{33})+(n_{33}-n_{22})\idfundform_{23}
	\end{equation}
	for the~$i=1$-component, while
	the~$i=2$ and~$i=3$-component of equation~\eqref{eqn_momentumconstraintchristoffel} read
	\begin{equation}
	\label{eqns_threericcizeroitwothree}
	\begin{subaligned}
		0={}&{-}3a_1\idfundform_{12}-n_{33}\idfundform_{13}\\
		0={}&{+}n_{22}\idfundform_{12}-3a_1\idfundform_{13}.
	\end{subaligned}
	\end{equation}
	We conclude that there is a non-vanishing solution~$(\idfundform_{12},\idfundform_{13})$ of the system of equations~\eqref{eqns_threericcizeroitwothree} \iif
	\begin{equation}
		9 a_1^2+n_{22}n_{33}=0.
	\end{equation}
	In Bianchi class~B, this is only possible for Bianchi type~VI due to~$a_1^2>0$ and the signs of~$n_{ii}$,  compare Table~\ref{table_bianchibclassification}.  Furthermore, the invariance of the parameter~$\binvparam$, equation~\eqref{eqn_binvparamdefi}, implies that only the case~$\binvparam={-}1/9$ allows for a non-vanishing solution. In total, we have shown the following:
	\begin{lemm}
	\label{lemm_exceptionalbianchiasinitialdata}
		Let~$G$ be a \liegr\ of class~B, ${\idmetric}$ a left-invariant \riem\ metric on~$G$, and ${\idfundform}$ a left-invariant symmetric covariant two-tensor on~$G$
		satisfying the momentum constraint
		\begin{equation}
			\overline\nabla_i\tr_{\idmetric}{\idfundform}-\overline\nabla^j{\idfundform}_{ij}=0.
		\end{equation}
		Let~$\ggg_2$ be the kernel of the one-form~$H$ from Lemma~\ref{lemm_abeliansubalginvardefi} which is an Abelian subalgebra of the \liealg~$\ggg$ corresponding to~$G$. Let~$(\tilde e_1,\tilde e_2,\tilde e_3)$ an \onorm\ basis of~$\ggg$ such that~$\tilde e_2$ and~$\tilde e_3$ span~$\ggg_2$.
		If~$\idfundform_{12}$ or $\idfundform_{13}$ is non-vanishing, then the \liegr~$G$ is of Bianchi type~VI$_{{-}1/9}$.
	\end{lemm}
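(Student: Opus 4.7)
The plan is to turn the momentum constraint into an explicit linear algebraic system on the three off-diagonal components of~$\idfundform$ in the chosen basis, and then read off exactly when a non-trivial off-diagonal part is permitted. Since the metric and the tensor~$\idfundform$ are left-invariant, all directional derivatives of their components in the frame~$(\tilde e_1,\tilde e_2,\tilde e_3)$ vanish, so the momentum constraint~$\overline\nabla^j\idfundform_{ij}=0$ reduces to the purely algebraic condition~\eqref{eqn_momentumconstraintchristoffel}. Substituting Koszul's identity~\eqref{eqn_christoffelinstructureconstants} and using the definitions~\eqref{eqn_data3dimliegr} of~$n^{ij}$ and~$a_k$ (recalling that in our choice of basis~$a_2=a_3=0$ and~$n_{12}=n_{13}=0$) will yield the three equations~\eqref{eqn_threericcizeroione} and~\eqref{eqns_threericcizeroitwothree}.

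The first of these,~\eqref{eqn_threericcizeroione}, involves only the diagonal components and~$\idfundform_{23}$ and so plays no role in the present lemma. The two remaining equations form a homogeneous linear system in the unknowns~$(\idfundform_{12},\idfundform_{13})$:
\begin{equation}
	\begin{pmatrix} {-}3a_1 & {-}n_{33} \\ n_{22} & {-}3a_1 \end{pmatrix}
	\begin{pmatrix} \idfundform_{12} \\ \idfundform_{13} \end{pmatrix}=0.
\end{equation}
Consequently, if either~$\idfundform_{12}$ or~$\idfundform_{13}$ is non-zero, the determinant of this matrix must vanish, giving the algebraic condition $9a_1^2+n_{22}n_{33}=0$ which is where the Bianchi type enters.

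Since~$G$ is of class~B we have~$a_1\ne 0$, hence~$n_{22}n_{33}={-}9a_1^2<0$, so~$n_{22}$ and~$n_{33}$ must be non-zero and of opposite sign. Inspecting Table~\ref{table_bianchibclassification}, this rules out types~V, IV, and~VII$_\binvparam$ and forces~$G$ to be of type~VI$_\binvparam$. To pin down the parameter, I will invoke the invariant relation~\eqref{eqn_binvparamdefi}, namely~$\binvparam\, n_{22} n_{33}=a_1^2$; substituting~$n_{22}n_{33}={-}9a_1^2$ gives~$\binvparam={-}1/9$, which completes the argument. The only mildly delicate step is to make sure nothing about the rotational gauge in the~$\tilde e_2\tilde e_3$-plane affects the derivation; this is why the proof relies exclusively on the quantities~$a_1$ and the diagonal entries~$n_{22},n_{33}$, which, together with~$n_{22}n_{33}$, are unaffected by such a rotation (the off-diagonal entry~$n_{23}$ cancels out of the relevant two equations), so no further gauge argument is needed.
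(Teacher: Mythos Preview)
Your proposal is correct and follows essentially the same route as the paper: reduce the momentum constraint to the algebraic system~\eqref{eqns_threericcizeroitwothree}, read off the determinant condition $9a_1^2+n_{22}n_{33}=0$, and use the sign of $n_{22}n_{33}$ together with~\eqref{eqn_binvparamdefi} to force type~VI$_{-1/9}$. One small imprecision: your closing gauge remark is not quite right---the individual diagonal entries $n_{22},n_{33}$ (and their product) are \emph{not} invariant under rotations in the $\tilde e_2\tilde e_3$-plane; the clean way to handle the gauge (as the paper does immediately after the lemma) is to note that both the hypothesis ``$\idfundform_{12}^2+\idfundform_{13}^2>0$'' and the conclusion ``$G$ is of type~VI$_{-1/9}$'' are basis-independent, so it suffices to compute in the particular frame where $n$ is diagonal.
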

	Note that in the previous Lemma, the subalgebra~$\ggg_2$ is defined geometrically, as the kernel of a well-defined one-form. Consequently, whether
	\begin{equation}
	\label{eqn_nonexceptionalsigmaij}
		\idfundform_{12}=\idfundform_{21}=0=\idfundform_{13}=\idfundform_{31}
	\end{equation}
	or not is also a well-defined geometric property, and independent of the exact choice of \onorm\ basis, as long as~$\tilde e_2$ and~$\tilde e_3$ span~$\ggg_2$.

	For a \liegr\ of Bianchi class~A, a basis can be chosen which satisfies the properties of i) in the classification and diagonalises both the metric and the \fundform, see~\cite[Cor.~9.14]{ringstrom_cauchyproblem}. Therefore in this basis the relations~\eqref{eqn_nonexceptionalsigmaij}
	hold as well.

	Lemma~\ref{lemm_exceptionalbianchiasinitialdata} is the reason why we have explicitly excluded Bianchi type~VI$_{{-}1/9}$ \liegr s in the definition of geometric initial data, Def.~\ref{defi_initialdatabianchib}: By doing so, we ensure that equations~\eqref{eqn_nonexceptionalsigmaij} hold.
	In fact, we do not have to exclude the special Bianchi type~VI$_{{-}1/9}$ altogether. It can be included in the discussion as long as we adopt equation~\eqref{eqn_nonexceptionalsigmaij} as an additional assumption.
	\begin{rema}
	\label{rema_exceptionalityexplained}
		It has been shown in~\cite{ellismaccallum_classofhomogcosmmodels} that condition~\eqref{eqn_nonexceptionalsigmaij} is equivalent to `non-exceptio\-nality' of a Bianchi spacetime.
		More precisely, this reference considers the trace-free part
		\begin{equation}
			\sigma_{ij}= {\idfundform}_{ij}-\frac{\theta}3\idmetric_{ij}
		\end{equation}
		of~$\idfundform$, where~$\theta=\tr_\idmetric\idfundform$ is the mean \curv, see also Remark~\ref{rema_geometricmeaningthetasigmaAB}, and makes use of the equivalent formulation
		\begin{equation}
			\sigma_{12}=\sigma_{21}=0=\sigma_{13}=\sigma_{31}.
		\end{equation}
		This notion of 'non-exceptionality' is a property of the four-dimensional spacetime, namely not being one of the 'exceptional' spacetimes which~\cite{ellismaccallum_classofhomogcosmmodels} denote by Bbii. These 'exceptional' spacetimes are those which admit an \onorm\ frame~$e_i$, $i=0,1,2,3$, such that~$e_1,e_2,e_3$ are tangential to the spacelike hypersurfaces and
		\begin{equation}
		\label{eqn_definitionBbii}
			\sigma_{12}\sigma_{13}\not=0,\qquad {\binvparam}={-}1/9,\qquad n_{22}=\frac{3a_1\sigma_{13}}{\sigma_{12}},\qquad n_{33}={-}\frac{3a_1\sigma_{12}}{\sigma_{13}}
		\end{equation}
		holds.

		The property of being `non-exceptional' is further related to the notion of \ogon ly transitive~$G_2$ cosmologies, see~\cite[Thm~3.1(i)]{wainwright_classschemenonrotinhomogcosm} for a characterisation of \ogon\ transitivity. In combination with several results from~\cite{ellismaccallum_classofhomogcosmmodels}, in particular Lemma~4.1 and Thm~5.1 in that reference, we conclude the following: A Bianchi class~B spacetime admits an Abelian subgroup acting \ogon ly transitively \iif\ it is `non-exceptional'.
		For this reason, we use the terms '\ogon ' and 'non-exceptional' initial data interchangeably.
		 
		In terms of the Bianchi classification, only certain initial data sets for one specific parameter in one Bianchi type of class~B satisfy the 'exceptional' properties. However, due to the additional degree of freedom, the phase space effectively has a higher dimension than that of all remaining Bianchi class~B types.
		For this reason, the notion 'exceptional' should be interpreted as 'having exceptional behaviour', not 'being special enough to discard'. In fact, towards the initial singularity such spacetimes are expected to show chaotic oscillatory behaviour, see~\cite{hewitthorwoodwainwright_asymptdynamexceptbianchicosm}.
	\end{rema}

\subsection{Construction part I: From geometric initial data to the expansion-normalised evolution equa\-tions}
\label{constr_initialdatatoexpnorm}

	In this subsection and the two following ones, we carry out the construction of the \mghd\ for given geometric initial data, by which we mean initial data to Einstein's field equations.

	We start with initial data $(G,{\idmetric},{\idfundform},\mu_0)$ as in Def.~\ref{defi_initialdatabianchib}.
	We fix a two-dimensional Abelian subalgebra~$\ggg_2$ of the \liealg~$\ggg$ corresponding to the \liegr~$G$. As we detailed in the previous subsection, this is the kernel of the one-form~$H$ from Lemma~\ref{lemm_abeliansubalginvardefi} in case of a \liegr\ of class~B, otherwise such a subalgebra exists due to Lemma~\ref{lemm_abeliansubalgstructconstants}. We then choose an orthonormal basis $\tilde e_1,\tilde e_2,\tilde e_3$ of~$\ggg$ such that~$\tilde e_2,\tilde e_3$ span~$\ggg_2$.
	Such a choice of basis gives initial structure constants~$\gamma_{ij}^k(0)$, $i,j,k\in\{1,2,3\}$, and this information is equivalent to\begin{equation}
		n^{ij}(0),\qquad a_k(0)
	\end{equation}
	due to~\eqref{eqn_data3dimliegr} and~\eqref{eqn_data3dimliegrback} in Subsection~\ref{subsect_appendixbianchiclassification}.
	We have explained in the previous subsection that in this basis one has~$a_2=a_3=0$, the matrix~$n$ is diagonal with~$n_{11}=0$, and~$a_1\not=0$, in case of a \liegr~$G$ of Bianchi class~B. For groups of type~I or~II, we can assume that~$a_1=a_2=a_3=0$ and the matrix~$n$ is diagonal with~$n=\diag(0,n_{22},n_{33})$. In particular, the initial structure constants are such that the third and fourth equation in~\eqref{eqns_commutators} hold.

	We further set initial values for~$\theta$ and~$\sigma_{ij}$
	\begin{equation}
	\label{eqn_initialvaluesthetasigmaAB}
		\theta(0)\coloneqq\tr_{\idmetric}{\idfundform},\qquad \sigma_{ij}(0)\coloneqq {\idfundform}_{ij}-\frac{\theta(0)}3\idmetric_{ij},
	\end{equation}
	in accordance with Remark~\ref{rema_geometricmeaningthetasigmaAB}.
	For the initial value~$\Omega_1(0)$, we choose an arbitrary value. This is not determined by the geometric initial data. We will see further down that one can \woutlog\ assume $\Omega_1\equiv0$, but at this stage this is a non-determined variable.
	From Lemma~\ref{lemm_exceptionalbianchiasinitialdata}, we conclude that
	the~$12$- and~$13$-component of~$\sigma_{ij}$ (or equivalently of~$\idfundform_{ij}$) vanish. In particular, we are now in a position to connect geometric initial data to the construction carried out in Subsection~\ref{subsect_appendixdeducenormalisedvariablesbianchib}, as the only non-vanishing initial quantities are those appearing on the \lhs\ of equations~\eqref{eqns_prebasic_definitionpartone}.

	With the definitions made so far, we can introduce quantities
	\begin{equation}
		\tilde\gamma_{01}^1(0)\coloneqq\sigma_A{}^A(0)-\frac13\theta(0),\qquad
		\tilde\gamma_{0A}^B(0)\coloneqq{-}\sigma_A{}^B(0)-\frac13\theta(0)\delta_A^B+\Omega_1(0)\epsilon_A{}^B.
	\end{equation}
	We stress that these are merely numbers at this point. Only after we have constructed the spacetime can they be interpreted as the structure constants on the hypersurface~$\{0\}\times G$ of a suitable four-dimensional frame.
	In combination with the three-dimensional structure constants~$\gamma_{ij}^k(0)$ we found above, we have now constructed, at time~$t=0$, a set of numbers as appear on the~\rhs\ of \equ s~\eqref{eqns_commutators}.

	In the next step, we apply the algebraic operations~\eqref{eqns_prebasic_definitionparttwo} and~\eqref{eqns_basic_definitionpartone} to the set of numbers
	\begin{equation}
		(\theta,\sigma_{AB},\Omega_1,a_1,n_{AB})(0)
	\end{equation}
	to obtain a set of numbers
	\begin{equation}
		(\theta,\sigma_+,\tilde\sigma_{AB},a_1,n_+,\tilde n_{AB})(0).
	\end{equation}
	Direct computation shows that the Hamiltonian constraint
	\begin{equation}
		\overline R-{\idfundform}_{ij}{\idfundform}^{ij}+(\tr_{\idmetric}{\idfundform})^2=2\mu_0
	\end{equation}
	which the geometric initial data has to satisfy
	is equivalent to equation~\eqref{eqn_prebasic_mu}, using for example~\cite[eq.~{[}E.12{]}]{ringstrom_topologyfuturestabilityuniverse} to compute the three-dimensional scalar \curv~$\overline{R}$.
	The~$i=1$-component of the momentum constraint implies equation~\eqref{eqn_prebasic_constraint}, as can be concluded from equation~\eqref{eqn_threericcizeroione}.
	We therefore interpret~$(\theta,\sigma_+,\tilde\sigma_{AB},a_1,n_+,\tilde n_{AB})(0)$ as initial data for the evolution equations~\eqref{eqns_prebasic_evolutionbianchib}--\eqref{eqn_prebasic_evolutionmu}. Note that from this point on, the arbitrary value~$\Omega_1(0)$ does no longer appear.
	Applying now also the transformations~\eqref{eqns_basic_definitionparttwo} and~\eqref{eqn_basic_definitionpartthree}, one obtains a set of numbers
	\begin{equation}
		(\theta,\sigma_+,\tilde\sigma,\delta,\tilde a,n_+)(0)
	\end{equation}
	which can be interpreted as initial data in basic variables. Note that equation~\eqref{eqn_basic_definitiontilden} holds due to the underlying \liegr\ structure of the geometric initial data we started with.

	We assume that we have started out with initial data~$(G,{\idmetric},{\idfundform},\mu_0)$
	such that after the steps carried out so far, the initial data in basic variables is different from the trivial data, \ie
	\begin{equation}
		(\theta,\sigma_+,\tilde\sigma,\delta,\tilde a,n_+)(0)\not=(0,0,0,0,0,0).
	\end{equation}
	Further down in Lemma~\ref{lemm_excludeminkowski}, we show that this restriction is equivalent to geometric initial data whose \mghd\ is not isometric to a quotient of Minkowski space.
	Going through the construction so far, looking for a solution to~\eqref{eqns_prebasic_evolutionbianchib}--\eqref{eqn_prebasic_evolutionmu} for the initial values we constructed, we can assume that $\theta\not=0$ at all times due to Lemma~\ref{lemm_excludethetazerobasic}. We therefore assume that
	\begin{equation}
		\theta>0.
	\end{equation}
	Once we have obtained a spacetime which admits the geometric initial data induced on a hypersurface, this choice of sign corresponds to the choice of normal vector~$e_0$, or in other words fixes the orientation of time~$t$.

	Essentially, the transformation of geometric initial data into initial values in basic variables is a one-to-one correspondence up to sign. We can recover the signs from the geometric initial data, see the third part of the construction.

	In the next step, we normalise the initial values via equations~\eqref{eqn_expansionnormalisation} to obtain initial values
	\begin{equation}
		(\Sigma_+,\tilde\Sigma,\Delta,\tilde A,N_+)(0).
	\end{equation}
	These are interpreted as initial data to the evolution equations in expansion-normalised variables~\eqref{eqns_evolutionbianchib}--\eqref{eqn_evolutionomega}, where we also change to an expansion-normalised time~$\tau$ which we choose to satisfy~$\tau(t=0)=0$.

	In total, we have achieved the following: Geometric initial data $(G,{\idmetric},{\idfundform},\mu_0)$ as in Def.~\ref{defi_initialdatabianchib} (or even of Bianchi type~VI$_{{-}1/9}$ satisfying equation~\eqref{eqn_nonexceptionalsigmaij}) is transformed into dynamical initial data $(\Sigma_+,\tilde\Sigma,\Delta,\tilde A,N_+)(0)$ for the evolution equations~\eqref{eqns_evolutionbianchib}--\eqref{eqn_evolutionomega} of expansion-normalised variables. In the process, an arbitrary number~$\Omega_1(0)$ was chosen, but the resulting dynamical initial data is independent of the choice of~$\Omega_1(0)$.

\subsection{Construction part II: From the expansion-normalised evolution equations to a solution in basic variables}
\label{constr_expnormtosolutionbasic}

	At the end of the first part of the construction, Subsection~\ref{constr_initialdatatoexpnorm}, we have obtained dynamical initial data $(\Sigma_+,\tilde\Sigma,\Delta,\tilde A,N_+)(0)$. One can apply the evolution equations~\eqref{eqns_evolutionbianchib}--\eqref{eqn_evolutionomega} for the expansion-normalised variables, which form a system of polynomial differential equations on a compact subset of~$\RR^5$, see Remark~\ref{rema_statespacecompact}. Consequently, for given dynamical initial data, there exists a unique solution which is defined at all times $\tau\in({-}\infty,\infty)$. In this part of the construction, this solution is now translated back to a solution in basic variables, and then in the following subsection even further, to a spacetime solving the correct Einstein's equations and with correct initial data.

	In order to retrieve basic variables at all times~$t$, one considers the evolution equation for the basic variable~$\theta$, equation~\eqref{eqn_evolutiontheta}. The initial value~$\theta(0)$ is given from the geometric initial data, see equation~\eqref{eqn_initialvaluesthetasigmaAB}. Consequently this evolution equation has a unique maximal solution~$\theta(\tau)$. As~$q\in[{-}1,2]$ due to the relations in~\eqref{eqn_constraintgeneralthree}, there are two cases to consider:
	\begin{enumerate}
		\item $q={-1}$ at some time. This corresponds to $\gamma=0$ and $\Omega=1$, which implies
		\begin{equation}
			\Sigma_+=\tilde\Sigma=\Delta=\tilde A=N_+=0.
		\end{equation}
		This point is an equilibrium point of the evolution \equs ~\eqref{eqns_evolutionbianchib}, from which we conclude that $q={-1}$ at all times and $\theta=\theta(0)$ constant.
		\item $q>{-}1$ at all times.
		In case of inflationary matter~$\Omega>0$, $\gamma\in[0,2/3)$, we have shown in Prop.~\ref{prop_alphalimitsets_vacuum_inflat} that the only solution whose $\alpha$-limit set contains the point $\Sigma_+=\tilde\Sigma=\Delta=\tilde A=N_+=0$ is the constant solution, which is excluded in this case.
		Consequently, $1+q$ is bounded from below by a positive number for~$\tau$ sufficiently negative. In vacuum and in the remaining matter cases~$\Omega>0$, $\gamma\in[2/3,2]$, equation~\eqref{eqn_definitionqvacuum} and equation~\eqref{eqn_definitionqmatter} even yield~$1+q>1$. With this, equation~\eqref{eqn_evolutiontheta} implies that~$\theta$ is monotone and, for sufficiently negative times, $\absval\theta$ can be estimated from above and below by exponential functions of~$\tau$ with a negative exponent.
		Due to choice of sign we have made in the first part of the construction, $\theta>0$ at all times and therefore~$\theta$ monotone decaying.
		Thus, $\theta$ converges
		to some non-negative number~$\theta_\infty$
		as $\tau\rightarrow\infty$, and diverges to~$\infty$ as $\tau\rightarrow{-}\infty$.
	\end{enumerate}
	In both cases, $\theta$ is positive and defined for all~$\tau\in({-}\infty,\infty)$. One can therefore multiply the individual components of the solution $(\Sigma_+,\tilde\Sigma,\Delta,\tilde A,N_+)(\tau)$ with the corresponding power of~$\theta$ such that \equ ~\eqref{eqn_expansionnormalisation} holds, and retrieves the remaining basic variables $(\sigma_+ ,\tilde{\sigma},\delta ,\tilde{a},n_+)(\tau)$. As was the case for~$\theta$, these variables are defined at all times~$\tau\in({-}\infty,\infty)$.

	To replace the expansion-normalised time~$\tau$, we define a different time scale~$t$ via equation~\eqref{eqn_evolutiontau} and the initial condition~$\tau(t=0)=0$.
	Due to $\theta>0$, this definition immediately yields that~$t$ and~$\tau$ have the same time orientation. In case $q={-}1$, the time~$t$ is simply a rescaling of~$\tau$, and the basic variables are defined at all times $t\in({-}\infty,\infty)$. In case~$q>{-}1$, our discussion in~ii) implies that time~$\tau\rightarrow{-}\infty$ corresponds to diverging~$\theta$ and a finite time~$t$, while~$\tau\rightarrow\infty$ corresponds to~$\theta\rightarrow\theta_\infty\ge0$
	and $t\rightarrow\infty$.

	In total, we obtain a curve in basic variables $(\theta ,\sigma_+ ,\tilde{\sigma},\delta ,\tilde{a},n_+)(t)$. By our construction, they are related to the solution in expansion-normalised variables $(\Sigma_+,\tilde\Sigma,\Delta,\tilde A,N_+)(\tau)$ exactly as in Subsection~\ref{subsect_appendixdeducenormalisedvariablesbianchib}. We have noted there,
	in Remark~\ref{rema_equivalencebasicvarexpnormvar}, that the evolution \equs\ for the expansion-normalised variables and the evolution \equ s for the basic variables are equivalent. Therefore, the construction carried out here yields the maximal solution to equations~\eqref{eqns_basic_evolutionbianchib}--\eqref{eqn_basic_evolutionmu}. The resulting maximal interval of existence is $(t_-,\infty)$, with $t_->{-}\infty$ apart from when $\gamma=0$ and $\Omega=1$, in which case $t_-={-}\infty$.

\subsection{Construction part III: From a solution in basic variables to a spacetime}
\label{constr_solutionbasictomghd}

	Having found a solution in basic variables $(\theta ,\sigma_+ ,\tilde{\sigma},\delta ,\tilde{a},n_+)(t)$ via the previous two parts of the construction, Subsections~\ref{constr_initialdatatoexpnorm} and~\ref{constr_expnormtosolutionbasic}, we now translate this into a spacetime with suitable Lorentzian metric which is consistent with the geometric initial data~$(G,{\idmetric},{\idfundform},\mu_0)$ we started with.

	Recall how the construction started: In order to obtain initial data in basic variables from the geometric initial data, one introduced a frame~$\tilde e_1,\tilde e_2,\tilde e_3$ and then had to choose an arbitrary initial value~$\Omega_1(0)$, whose information was lost in the first algebraic manipulation. This corresponded to introducing variables which were explicitly invariant under the choice of frame, in the sense that they did no longer depend on the freedom of rotation in the $\tilde e_2\tilde e_3$-plane. In order to now recover from a maximal solution $(\theta ,\sigma_+ ,\tilde{\sigma},\delta ,\tilde{a},n_+)(t)$, $t\in(t_-,\infty)$, a spacetime frame with four-dimensional structure constants and then a spacetime metric, one needs to break this gauge invariance and reverse this process by choosing a suitable frame. This is done in several steps:
	\begin{enumerate}
		\item Given an arbitrary, sufficiently smooth~$\Omega_1$ defined on the same~$t$-time interval~$(t_-,\infty)$ as the solution in basic variables, one retrieves the variables $(\sigma_{AB},\theta,n_{AB},a_1)(t)$.
		\item With this, one defines a \mf\ and constructs a frame~$e_0,e_1,e_2,e_3$.
		\item Then, one checks that this construction holds through time, independently of the choice of~$\Omega_1$.
		\item Once the frame is obtained, one defines this frame as \onorm, which uniquely defines the corresponding spacetime metric on the manifold~$(t_-,\infty)\times G$.
		\item Finally, one checks that in the resulting spacetime Einstein's equations are satisfied and the correct initial data induced.
	\end{enumerate}

	We begin with the first step, which is achieved via constructing a solution in the variables~$(\theta,\sigma_+,\tilde\sigma_{AB},a_1,n_+,\tilde n_{AB})$ satisfying equations~\eqref{eqns_prebasic_evolutionbianchib}--\eqref{eqn_prebasic_evolutionmu}. With this solution at our disposal, we can immediately deduce the variables $(\sigma_{AB},\theta,n_{AB},a_1)(t)$. It is the first part which is more intricate, as we have to be careful about notation. We start with a solution in basic variables, which for the sake of precision, we denote by
	\begin{equation}
		(\theta_\basic \,,\,\sigma_{+,\basic} \,,\,\tilde{\sigma}_\basic\,,\,\delta_\basic \,,\,\tilde{a}_\basic\,,\,n_{+,\basic})(t)\,,
	\end{equation}
	as the solution satisfies the evolution equations for basic variables, equations~\eqref{eqns_basic_evolutionbianchib}--\eqref{eqn_basic_evolutionmu}.

	We want to find a solution $(\theta,\sigma_+,\tilde\sigma_{AB},a_1,n_+,\tilde n_{AB})$ to equations~\eqref{eqns_prebasic_evolutionbianchib}--\eqref{eqn_prebasic_evolutionmu}. Even though we do not know that the functions~$\theta_\basic$, $\sigma_{+,\basic}$ and~$n_{+,\basic}$ evolve correctly, we carry them over, but keep the subscript as a reminder.
	The variable~$a_1$ is supposed to satisfy $\tilde a=9a_1^2$, see equation~\eqref{eqn_basic_definitionpartthree}. As $\tilde a=0$ is an invariant set due to the evolution equations~\eqref{eqns_basic_evolutionbianchib}, the sign of~$a_1$ cannot change and is determined by the initial data. From the knowledge of~$\tilde a_\basic$ and the initial data, we can uniquely define a function~$a_{1,\basic}$. Upon comparison with the evolution equation for~$\tilde a$ in~\eqref{eqns_basic_evolutionbianchib}, we find that~$a_{1,\basic}$ satisfies the evolution equation for~$a_1$ in~\eqref{eqns_prebasic_evolutionbianchib}, if one replaces~$\sigma_+$ and~$\theta$ by~$\sigma_{+,\basic}$ and~$\theta_\basic$.

	Now consider the evolution \equs\ of~$\tilde \sigma_{AB}$ and~$\tilde n_{AB}$ in~\eqref{eqns_prebasic_evolutionbianchib}. Together, they form a system of linear differential equations whose coefficients are the function~$\Omega_1$ and---after adding the subscript~${}_\basic$ at all necessary places---quantities whose time development is known from the solution in basic variables. For every sufficiently smooth function~$\Omega_1$, there is a maximal solution~$(\tilde n_{AB,\basic},\tilde\sigma_{AB,\basic})$ to this system.
	Due to linearity of the differential equations, the maximal solution is defined on the whole time interval on which the coefficients are defined. By construction, this is the same time interval as for the solution in basic variables.

	Next, we consider the expressions
	\begin{align}
		\difftildesigma\coloneqq{}&\tilde\sigma_\basic-\frac32\tilde\sigma^{AB}_\basic\tilde\sigma_{AB,\basic},\\
		\difftilden\coloneqq{}&\tilde n_\basic-\frac32\tilde n^{AB}_\basic\tilde n_{AB,\basic},\\
		\diffdelta\coloneqq{}&\delta_\basic-\frac32\tilde\sigma^{AB}_\basic\tilde n_{AB,\basic},\\
		\diffconstraint\coloneqq{}&3a_{1,\basic}\sigma_{+,\basic}-\frac32\,^*\tilde\sigma^{AB}_\basic\tilde n_{AB,\basic},
	\end{align}
	with~$\tilde n_{\basic}$ defined by equation~\eqref{eqn_basic_definitiontilden}.
	These expressions should be compared to the constraint equation~\eqref{eqn_prebasic_constraint} and equations~\eqref{eqns_basic_definitionparttwo}. The derivative of ~$(\difftildesigma,\difftilden,\diffdelta,\diffconstraint)$ is a homogeneous system of equations, and as all four functions vanish initially due to their components having been constructed from the same geometric initial data, we see that all four functions vanish identically. In particular, the constructed functions~$\tilde n_{\basic,AB}$ and~$\tilde\sigma_{\basic,AB}$ are related to the basic variables~$\tilde\sigma_\basic$, $\tilde n_\basic$ and~$\delta_\basic$ as in equations~\eqref{eqns_basic_definitionparttwo}, and we can replace these expressions at every occurence. This in particular implies that the function~$\mu_\basic$, defined via equation~\eqref{eqn_basic_mu} in basic variables, also satisfies equation~\eqref{eqn_prebasic_mu}. Further, we conclude from the argumentation above that the constraint equation~\eqref{eqn_prebasic_constraint} holds for the functions indexed~${}_\basic$.

	It remains to check that the variables~$\theta_\basic$, $\sigma_{+,\basic}$ and~$n_{+,\basic}$ satisfy the evolution equations~\eqref{eqns_prebasic_evolutionbianchib}. To this end, consider the initial data to the evolution equations~\eqref{eqns_prebasic_evolutionbianchib}--\eqref{eqn_prebasic_evolutionmu} which we, in the construction in Subsection~\ref{constr_initialdatatoexpnorm}, obtained from geometric initial data and then subsequently used to obtain initial data to the evolution in basic variables. There is a unique solution with this initial data, and in order to distinguish its individual variables from the ones constructed above, we denote this solution by
	\begin{equation}
		(\theta_\initialdata \,,\,\sigma_{+,\initialdata} \,,\,\tilde{\sigma}_\initialdata\,,\,\delta_\initialdata \,,\,\tilde{a}_\initialdata\,,\,n_{+,\initialdata})(t)\,.
	\end{equation}
	The initial values of these variables coincide with the initial values of the ones constructed, and uniqueness of systems of ordinary differential equations yields that~$\theta_\initialdata=\theta_\basic$ at all times, and equivalently for all other variables.
	We conclude that via this construction, we have obtained the unique solution to equations~\eqref{eqns_prebasic_evolutionbianchib}--\eqref{eqn_prebasic_evolutionmu} to the initial data coming from the geometric initial data.

	In this transformation, the maximal interval of the solution~$(\theta,\sigma_+,\tilde\sigma_{AB},a_1,n_+,\tilde n_{AB})$ cannot exceed that of the solution in basic variables, as otherwise this would yield an extension of the solution in basic variables, a contradiction. Consequently, the maximal interval of existence of the solution after transformation coincides with the interval~$(t_-,\infty)$.
	With this solution at hand, we use equations~\eqref{eqns_prebasic_definitionparttwo} and uniquely retrieve the variables~$\sigma_{AB}$ and~$n_{AB}$, defined on the same interval.

	\smallskip

	In the next step, we construct a four-dimensional \mf\ with a frame whose structure constants have the correct form.
	To this end, we define scalar functions~$\gamma_{\alpha\beta}^\delta(t)$ on the time interval~$(t_-,\infty)$
	via setting
	\begin{equation}
	\label{eqns_structureconstantlikeobjectsalltimes}
	\begin{subaligned}
		\gamma_{01}^1(t)\coloneqq{}&\sigma_A{}^A(t)-\frac13\theta(t),\\
		\gamma_{0A}^B(t)\coloneqq{}&{-}\sigma_A{}^B(t)-\frac13\theta(t)\delta_A^B+\Omega_1(t)\epsilon_A{}^B,\\
		\gamma_{1A}^C(t)\coloneqq{}&\epsilon_{AB}n^{BC}(t)+a_1(t)\delta_A^C,
	\end{subaligned}
	\end{equation}
	and setting all other~$\gamma_{\alpha\beta}^\delta$ to vanish identically. The form of these scalar functions coincides with the form of the structure constants in equation~\eqref{eqns_commutators}.
	By construction, these objects are consistent with the initial data:
	At time~$t=0$, the spacelike ones~$\gamma_{ij}^k(0)$ coincide with the \liegr\ structure constants we chose in Subsection~\ref{constr_initialdatatoexpnorm}, and the remaining ones~$\gamma_{0i}^j(0)$ coincide with the structure constant-like object~$\tilde\gamma_{0i}^j(0)$ defined there.
	We construct now a four-dimensional frame whose structure constants coincide with these~$\gamma_{\alpha\beta}^\delta(t)$ at all times.
	Note that it is at this point that the structure constant-like object~$\tilde\gamma_{0i}^j(0)$ can finally be interpreted geometrically.

	From the initial data, we have obtained an initial frame $\tilde e_1,\tilde e_2,\tilde e_3$ on the \liegr~$G$. We consider the \mf~$(t_-,\infty)\times G$ and in this \mf\ extend this frame globally, \ie time-independently to every $\{t\}\times G$.
	Further, we choose $e_0=\partial_t$ globally. This gives a four-dimensional frame~$e_0,\tilde e_1,\tilde e_2,\tilde e_3$, but the relation between its structure constants and the solution~$(\theta ,\sigma_{AB},\Omega_1,a_1,n_{AB})(t)$ does not necessarily fulfill equations~\eqref{eqns_commutators} at times other than~$t=0$. The final frame $e_0,e_1,e_2,e_3$ which does satisfy these relations can be constructed as follows.

	We wish the frame vectors~$e_2$,~$e_3$ to be tangent to the subalgebra~$\ggg_2$ which we have chosen in the beginning of Subsection~\ref{constr_initialdatatoexpnorm}, which implies that we need to construct time-dependent functions~$f_A^B(t)$ such that
	\begin{equation}
		e_A=f_A^B\tilde e_B, \qquad f_A^B(0)=\delta_A^B.
	\end{equation}
	Assuming the existence of such functions gives the following commutator relation:
	\begin{equation}
		\liebr{e_0}{e_A}=\liebr{e_0}{f_A^B\tilde e_B}=e_0(f_A^B)\tilde e_B+f_A^B\liebr{e_0}{\tilde e_B}=e_0(f_A^B)\tilde e_B,
	\end{equation}
	where we seek to find
	\begin{equation}
		\liebr{e_0}{e_A}=(-\sigma_A{}^B-\frac13\theta\delta_A^B+\Omega_1\epsilon_A{}^B)e_B=(-\sigma_A{}^B-\frac13\theta\delta_A^B+\Omega_1\epsilon_A{}^B)f_B^C\tilde e_C.
	\end{equation}
	Consequently, solving the system
	\begin{equation}
		e_0(f_A^C)=(-\sigma_A{}^B-\frac13\theta\delta_A^B+\Omega_1\epsilon_A{}^B)f_B^C,\qquad f_A^C(0)=\delta_A^C,
	\end{equation}
	yields the final frame vectors~$e_2$ and~$e_3$. Note that in order to do so, we have to fix a function~$\Omega_1(t)$.

	For the missing spatial frame component~$e_1$, we make the ansatz
	\begin{equation}
		e_1=f_1^1\tilde e_1+f_1^A\tilde e_A,\qquad f_1^1(0)=1,\quad f_1^A(0)=0.
	\end{equation}
	After a computation similar to the above, the resulting commutator reads
	\begin{equation}
		\liebr{e_0}{e_1}=\liebr{e_0}{f_1^1\tilde e_1+f_1^A\tilde e_A}=e_0(f_1^1)\tilde e_1+e_0(f_1^A)\tilde e_A.
	\end{equation}
	Comparison with the desired commutator
	\begin{equation}
		\liebr{e_0}{e_1}=(\sigma_A{}^A-\frac13\theta)e_1=(\sigma_A{}^A-\frac13\theta)(f_1^1\tilde e_1+f_1^A\tilde e_A)
	\end{equation}
	gives the following systems:
	\begin{align}
		e_0(f_1^1)={}&(\sigma_A{}^A-\frac13\theta)f_1^1,& \qquad f_1^1(0)={}&1,\\
		e_0(f_1^A)={}&(\sigma_A{}^A-\frac13\theta)f_1^A,& \qquad f_1^A(0)={}&0.
	\end{align}
	In particular, $f_1^A(t)=0$ at all times.
	The solution to this system of equations is independent of~$\Omega_1$, and yields the final frame vector~$e_1$.

	Having found functions~$f_A^B,f_1^A,f_1^1$, we have constructed a frame which satisfies all but the third commutator relation in~\eqref{eqns_commutators}. These remaining equations can be considered as constraint equations. From the construction of the final frame vectors, we obtain
	\begin{align}
		\liebr{e_1}{e_A}={}&\liebr{f_1^1\tilde e_1+f_1^C\tilde e_C}{f_A^B\tilde e_B}=f_1^1f_A^B\liebr{\tilde e_1}{\tilde e_B}+f_1^Cf_A^B\liebr{\tilde e_C}{\tilde e_B}=f_1^1f_A^B\liebr{\tilde e_1}{\tilde e_B}\\
		={}&f_1^1f_A^B\gamma_{1B}^C(0)\tilde e_C,
	\end{align}
	which has to coincide with
	\begin{equation}
		\gamma_{1A}^Be_B=\gamma_{1A}^Bf_B^C\tilde e_C,
	\end{equation}
	$\gamma_{1A}^B$ as defined in~\eqref{eqns_structureconstantlikeobjectsalltimes},
	in order for our construction to by consistent.
	Thus, we have to check whether
	\begin{equation}
	\label{eqn_constraintincommutatorform}
		\LHS_A^C\coloneqq f_1^1f_A^B\gamma_{1B}^C(0)=\gamma_{1A}^Bf_B^C \eqqcolon \RHS_A^C
	\end{equation}
	holds. Initially, the \lhs\ and \rhs\ coincide, and in order to prove that the relation holds at all times, it is therefore enough to prove an evolution equation of the form
	\begin{equation}
		\partial_t((\LHS-\RHS)_A^C)=(\LHS-\RHS)_A^B \cdot M_B^C,
	\end{equation}
	for some time-dependent functions~$M_B^C$.

Using the derivatives of the functions~$f_1^1$, $f_1^A$ and~$f_A^B$ which we determined above, we find that the derivative of the \lhs\ of~\eqref{eqn_constraintincommutatorform} reads
\begin{equation}
\label{eqn_constructionmghdconstraint}
\begin{split}
	\partial_t(\LHS_A^C)={}&
		\dot f_1^1f_A^B\gamma_{1B}^C(0)+f_1^1\dot f_A^B\gamma_{1B}^C(0)
		=\gamma_{01}^1f_1^1f_A^B\gamma_{1B}^C(0)+f_1^1\gamma_{0A}^Df_D^B\gamma_{1B}^C(0)\\
		={}&\gamma_{01}^1\LHS_A^C+\gamma_{0A}^D\LHS_D^C.
\end{split}
\end{equation}
For the \rhs, we find
\begin{equation}
	\partial_t(\RHS_A^C)={}
		\dot\gamma_{1A}^Bf_B^C+\gamma_{1A}^B\dot f_B^C
		=\dot\gamma_{1A}^Bf_B^C+\gamma_{1A}^B\gamma_{0B}^Df_D^C
		=(\dot\gamma_{1A}^B+\gamma_{1A}^D\gamma_{0D}^B)f_B^C.
\end{equation}
The time-derivative of the structure constants can be obtained by expressing~$\gamma_{1A}^B$ in terms of the variables $(\theta,\sigma_+,\tilde\sigma_{AB},a_1,n_+,\tilde n_{AB})$, applying their evolution \equs ~\eqref{eqns_prebasic_evolutionbianchib}--\eqref{eqn_prebasic_evolutionmu}, and converting the results back to structure constants. This yields
\begin{equation}
	\dot\gamma_{1A}^B+\gamma_{1A}^D\gamma_{0D}^B=\gamma_{01}^1\gamma_{1A}^B+\gamma_{0A}^D\gamma_{1D}^B,
\end{equation}
and therefore
\begin{equation}
	\partial_t(\RHS_A^C)
	=(\gamma_{01}^1\gamma_{1A}^B+\gamma_{0A}^D\gamma_{1D}^B)f_B^C
	=\gamma_{01}^1\RHS_A^C+\gamma_{0A}^D\RHS_D^C,
\end{equation}
which is the same structure we also found for the evolution of the \lhs, \equ ~\eqref{eqn_constructionmghdconstraint}. We conclude that the constraint is preserved independently of the choice of~$\Omega_1$. For all our purposes, we can assume $\Omega_1\equiv0$ at all times.
The above construction results in a time-dependent left-invariant frame $e_1(t),e_2(t),e_3(t)$ on the \liegr, which in combination with $e_0=\partial_t$ yields a spacetime frame admitting the correct structure constants~\eqref{eqns_commutators}.
In checking this consistency, we have completed the third step of our construction of the spacetime.

\smallskip

There is for every~$t$ a unique metric ${}^t\overline g$ on $\{t\}\times G$ such that the frame~$e_1(t),e_2(t),e_3(t)$ is orthonormal. With this, we define the spacetime
\begin{equation}
\label{eqn_constructedspacetime}
	I\times G,\qquad g=-dt^2+{}^t\overline g,
\end{equation}
where~$I$ is the existence interval of the basic variables and contains~$0$, \ie $I=(t_-,\infty)$, with ${-}\infty<t_-<0$ apart from the case $\gamma=0$, $\Omega=1$, where $t_-={-}\infty$.

\smallskip

By our construction above, the only non-vanishing structure constants of the four-dimensional frame~$(e_0,e_1,e_2,e_3)$ are those in~\eqref{eqn_structureconstantsnonzero}, and as the variables composing the structure constants satisfy the evolution \equs~\eqref{eqns_prebasic_evolutionbianchib}--\eqref{eqn_prebasic_evolutionmu}, we can deduce from Remark~\ref{rema_equivalenceeinsteinperfectfluidprebasic} that the spacetime constructed in this way solves Einstein's field \equs\ for an \ogon\ perfect fluid with linear equation of state.

On the~$t=0$ timeslice, the frame vectors~$e_i$ and~$\tilde e_i$ coincide, $i=1,2,3$. The metric induced on this timeslice by the spacetime metric~\eqref{eqn_constructedspacetime} therefore coincides with the initial metric~$\idmetric$.
Due to our choice of metric, the vector field~$e_0$ is the timelike unit normal to every~$\{t\}\times G$ timeslice.
It follows from the construction that the \fundform\ of~the timeslice~$\{0\}\times G$ in~$(I\times G,g)$ coincides with the initial two-tensor~$\idfundform$. We conclude that the constructed spacetime induces the correct initial data.
This completes the fifth step, and as a consequence the following definition is reasonable.
\begin{defi}
\label{defi_mghd}
	Given initial data as in Def.~\ref{defi_initialdatabianchib}, we call a spacetime~$(I\times G,g,\mu)$ constructed as in the three parts of the construction, Subsections~\ref{constr_initialdatatoexpnorm}--\ref{constr_solutionbasictomghd}, a Bianchi~B development of the data.
\end{defi}
Due to the choice of frame~$\tilde e_2,\tilde e_3$ in the beginning of the construction, the spacetime constructed this way is not necessarily unique.

In order to transform to expansion-normalised variables in Subsection~\ref{constr_initialdatatoexpnorm}, we had to exclude initial data with zero mean curvature~$\theta$. We now prove that this corresponds to excluding Bianchi~B developments which are part of Minkowski spacetime.
\begin{lemm}
\label{lemm_excludeminkowski}
	Consider initial data as in Def.~\ref{defi_initialdatabianchib}. Then either
	its universal covering space is initial data for the four-dimensional Minkowski spacetime, or the mean \curv~$\theta=\tr_{\idmetric}\idfundform$ is non-vanishing at all times.
\end{lemm}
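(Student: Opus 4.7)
The plan is to invoke Lemma~\ref{lemm_excludethetazerobasic} applied to the solution in basic variables produced by the construction in Subsections~\ref{constr_initialdatatoexpnorm}--\ref{constr_expnormtosolutionbasic}. That solution has $\theta(0) = \tr_\idmetric \idfundform$, and Lemma~\ref{lemm_excludethetazerobasic} tells us that either all six basic variables are identically zero, or $\theta \neq 0$ throughout the existence interval. The latter alternative is exactly the second case of the conclusion, so the remaining work is to show that a trivial basic-variable solution corresponds to data which, lifted to the universal cover, is Minkowski initial data.

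In the trivial case I plan to read off the geometric consequences in the adapted orthonormal frame $(\tilde e_1, \tilde e_2, \tilde e_3)$ fixed in Subsection~\ref{constr_initialdatatoexpnorm}. From $\tilde a = 9 a_1^2 \equiv 0$ one gets $a_1 = 0$, so the vector $a$ vanishes; from $n_+ = 0$ together with $\tilde n = 0$ and the definitions in~\eqref{eqns_prebasic_definitionparttwo} one obtains $n_{AB} = 0$ for $A, B \in \{2,3\}$, which combined with $n_{11} = 0$ forces $n \equiv 0$. By~\eqref{eqn_data3dimliegrback} all structure constants of $\ggg$ vanish, so $\ggg$ is abelian; the simply connected Lie group with this Lie algebra is $\RR^3$, and a left-invariant Riemannian metric on it is translation-invariant and therefore flat.

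For the second fundamental form I will use $\sigma_+ = \tilde\sigma = 0$ together with the definitions~\eqref{eqns_prebasic_definitionparttwo} to conclude $\sigma_{AB} = 0$ for $A, B \in \{2, 3\}$; the non-exceptionality condition $\sigma_{12} = \sigma_{13} = 0$, which is available because Def.~\ref{defi_initialdatabianchib} excludes Bianchi type~VI$_{{-}1/9}$ (Lemma~\ref{lemm_exceptionalbianchiasinitialdata}), combined with trace-freeness of $\sigma$ forces $\sigma_{11} = 0$ as well, and $\theta = 0$ then gives $\idfundform = 0$. Either from $\mu \equiv 0$ in basic variables or directly from the Hamiltonian constraint with flat metric and vanishing $\idfundform$ one reads $\mu_0 = 0$, so the lifted data is precisely Minkowski initial data on $\RR^3$. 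The only potential obstacle here is bookkeeping through the sequence of trace and trace-free decompositions in~\eqref{eqns_prebasic_definitionparttwo}--\eqref{eqn_basic_definitionpartthree}, and the main subtlety is that the basic variables do not see the components $\sigma_{12}, \sigma_{13}$ directly---which is precisely why the non-exceptionality hypothesis is essential for recovering $\idfundform = 0$.
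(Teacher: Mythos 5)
Your proposal is correct and follows essentially the same route as the paper: both reduce to Lemma~\ref{lemm_excludethetazerobasic}, conclude that $\theta(0)=0$ forces the trivial solution in basic variables, and then unpack what triviality means for the structure constants, the metric, and $\idfundform$. The one mild streamlining in your version is that you work purely at the level of the three-dimensional data (reading off $a\equiv 0$, $n\equiv 0$, $\sigma_{ij}\equiv 0$ directly from the definitions~\eqref{eqns_prebasic_definitionparttwo}--\eqref{eqn_basic_definitionpartthree} and the non-exceptionality condition), whereas the paper first retraces the spacetime construction of Subsection~\ref{constr_solutionbasictomghd} to conclude that all four-dimensional $\gamma_{\alpha\beta}^\delta$ vanish before restricting to the initial slice. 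Your identification of the abelian Lie algebra and passage to the universal cover $\RR^3$ replaces the paper's Christoffel-symbol computation of vanishing three-dimensional curvature; both are valid, and both require the non-exceptionality hypothesis at exactly the place you flag, since $\sigma_{12},\sigma_{13}$ are invisible to the basic variables.
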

The proof proceeds similar to that of~\cite[Lemma~20.6]{ringstrom_cauchyproblem} for Bianchi class~A developments.
\begin{proof}
	As in the first part of the construction, Subsection~\ref{constr_initialdatatoexpnorm}, we choose a suitable frame and carry out the transformations into initial data in basic variables, \ie initial data for the evolution equations~\eqref{eqns_basic_evolutionbianchib}--\eqref{eqn_basic_evolutionmu}. From the construction, we see that the mean \curv\ coincides with the initial value for~$\theta$. Using Lemma~\ref{lemm_excludethetazerobasic}, we find that if the initial value~$\theta(0)$ vanishes, then the solution in basic variables is the trivial solution.

	With this trivial solution in basic variables at hand, we retrace the steps of the third part of the construction, Subsection~\ref{constr_solutionbasictomghd}, and recover a spacetime with an \onorm\ frame~$(e_0,e_1,e_2,e_3)$ such that the initial data is correctly induced. We have been able to choose~$\Omega_1\equiv0$ in this construction, and from this we conclude that all structure constants~$\gamma_{\alpha\beta}^\delta$ vanish identically.

	Consider now the \liegr~$G$ with structure constants~$\gamma_{ij}^k=0$. The three-dimensional Christoffel symbols~$\overline\Gamma_{ij}^k$ vanish due to equation~\eqref{eqn_christoffelinstructureconstants}, and we conclude that the three-dimensional Ricci \curv
	\begin{equation}
		\overline{R}_{ij}=\sum_{l=1}^3\scalprod{\overline R(e_i,e_l)e_l}{e_j}
		=\sum_{k,l=1}^3(\Gamma_{ll}^k\Gamma_{ik}^j-\Gamma_{il}^k\Gamma_{lk}^j-\gamma_{il}^k\Gamma_{kl}^j)
	\end{equation}
	is zero identically.
	As~$G$ is three-dimensional, this implies that the \riem\ \curv\ tensor vanishes as well. As the metric is left-invariant, it is complete, from which it follows that~$G$ with this metric is isometric to a quotient of Euclidean space, by virtue of having the same dimension, index and curvature, see~\cite[Prop.~8.23]{oneill_semiriemgeomappl}.

	We have explained in Remark~\ref{rema_geometricmeaningthetasigmaAB} that the \fundform\ of the timeslice~$\{t\}\times G$ can be computed from~$\theta$ and the trace-free variable~$\sigma_{ij}$, which satisfies~$\sigma_{12}=\sigma_{13}=0$. We can therefore conclude that the \fundform\ vanishes. This concludes the proof.
\end{proof}

\subsection{Properties of Bianchi~B developments of initial data}

\label{constr_propertiesdevelopmentinitialdata}

We now show that a development as in Def.~\ref{defi_mghd} of given geometric initial data, \ie one which we obtain through our construction in the previous three subsections, is in fact the \mghd\ of the geometric initial data. The proof of global hyberbolicity works identically as in the case of Bianchi class~A and has been carried out in~\cite[p.~217]{ringstrom_cauchyproblem}.
We only state the proposition:
\begin{prop}
\label{prop_developmentglobhyperbolic}
	Let $(G,{\idmetric},{\idfundform},\mu_0)$ be initial data as in Def.~\ref{defi_initialdatabianchib}, and~$(I\times G,g,\mu)$ a Bianchi~B development of the data.
	Then $\{t\}\times G$ is a Cauchy hypersurface for every $t\in I$.
\end{prop}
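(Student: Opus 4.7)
The plan is to verify the two properties defining a Cauchy hypersurface for $\Sigma_{t_0} \coloneqq \{t_0\} \times G$: achronality, and the fact that every inextendible causal curve in $(I\times G,g)$ meets it. Since by construction $g = -dt^2 + {}^t\overline{g}$ with ${}^t\overline{g}$ Riemannian for every $t$, the slice $\Sigma_{t_0}$ is already a smooth spacelike embedded hypersurface with unit timelike normal $\partial_t$, so these two properties are all that remains.

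Achronality is almost immediate from the product form of the metric. Along any future-directed causal curve $\gamma(s) = (t(s), \gamma_G(s))$ the causal condition $g(\dot\gamma,\dot\gamma)\le 0$ reads $\dot t(s)^2 \ge {}^{t(s)}\overline{g}(\dot\gamma_G,\dot\gamma_G) \ge 0$, so $\dot t$ is nowhere zero and has constant sign; in particular $t\circ\gamma$ is strictly monotone. Two distinct points of $\Sigma_{t_0}$ therefore cannot be joined by a timelike curve, and $\Sigma_{t_0}$ is achronal.

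The substantive step is showing that every inextendible future-directed causal curve $\gamma\colon(\alpha,\beta)\to I\times G$ meets $\Sigma_{t_0}$. By monotonicity I can reparameterize $\gamma$ by $t$, obtaining $\gamma(t)=(t,c(t))$ defined on a maximal interval $(t_-,t_+)\subseteq I$ with $^{t}\overline{g}(\dot c,\dot c)\le 1$. It suffices to prove $(t_-,t_+)=I$, for then $t_0\in(t_-,t_+)$ and $\gamma$ meets $\Sigma_{t_0}$ at the unique time $t_0$. Suppose to the contrary that $t_+<\sup I$. On any compact interval $[t_1,t_+]\subset I$ the smooth family $t\mapsto {}^t\overline{g}$ is uniformly comparable to the fixed left-invariant Riemannian metric ${}^{t_+}\overline{g}$, so the bound $^{t}\overline{g}(\dot c,\dot c)\le 1$ gives a uniform ${}^{t_+}\overline{g}$-speed bound for $c$ on $[t_1,t_+)$. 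Consequently the ${}^{t_+}\overline{g}$-length of $c|_{[t_1,t_+)}$ is finite, and since any left-invariant Riemannian metric on a Lie group is geodesically complete, $c(t)$ converges in $G$ to some $p\in G$ as $t\to t_+^-$. Extending $\gamma$ continuously by $\gamma(t_+)=(t_+,p)\in I\times G$ contradicts inextendibility. The lower endpoint is handled symmetrically, establishing $(t_-,t_+)=I$.

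The main obstacle is precisely this completeness/extension step: converting the causal bound on $\dot c$ into actual convergence of $c(t)$ in $G$. Everything else is essentially formal manipulation of the product metric. The argument requires both smoothness of the family ${}^t\overline g$ on the open interval $I$ (which is built into Def.~\ref{defi_mghd}) and completeness of left-invariant Riemannian metrics on $G$, neither of which is delicate, but it is here that the spatial homogeneity and the construction of the development truly enter.
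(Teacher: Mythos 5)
Your proof is correct and is in spirit the same as the one the paper invokes: the paper does not write out a proof but cites Ringström (\emph{The Cauchy Problem in General Relativity}, p.~217) and notes that the Bianchi~A argument carries over verbatim; that argument likewise uses the orthogonal decomposition $g=-dt^2+{}^t\overline g$, strict monotonicity of $t$ along causal curves, and completeness of left-invariant Riemannian metrics on $G$ to preclude a finite-time endpoint in the spatial direction. Two cosmetic remarks: for the achronality step you should note explicitly that $\dot\gamma\not=0$ (so $\dot t=0$ would force $\dot\gamma_G\not=0$ and then $0\ge{}^t\overline g(\dot\gamma_G,\dot\gamma_G)>0$, a contradiction), and in the last step you should observe that adjoining the limit point gives an extension of $\gamma$ as a (continuous, future-directed) causal curve, which is the standard notion under which inextendibility is defined.
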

For the proof of maximality, we wish to apply the following proposition, which is part of~\cite[Prop.~18.16]{ringstrom_cauchyproblem}:
\begin{prop}
\label{prop_mghdnecessaryproperties}
	Let $(M,g)$ be a connected and time oriented Lorentzian \mf\ and assume that the following holds:
	\begin{itemize}
		\item $(M,g)$ is future geodesically complete and there are real numbers~$\kappa_j$ \st $\kappa_j\rightarrow\infty$ as $j\rightarrow\infty$ and smooth spacelike Cauchy hypersurfaces~$\Sigma_j$ in $(M,g)$ with constant mean curvature~$\kappa_j$.
	\end{itemize}
	Assume $(\tilde M,\tilde g)$ to be a time oriented and connected Lorentzian \mf\ satisfying the timelike convergence condition: $Ric(v,v)\ge0$ for all timelike vectors $v\in TM$. Assume further $i:M\rightarrow\tilde M$ to be a smooth embedding \st $i(M)$ is an open set, and that there is a Cauchy hypersurface~$S$ in $(M,g)$ \st $i(S)$ is a Cauchy hypersurface in $(\tilde M,\tilde g)$. Then $i(M)=\tilde M$.
\end{prop}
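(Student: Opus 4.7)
The plan is to show that $i(M) = \tilde M$ by establishing that the open set $i(M) \subset \tilde M$ is also closed, whereupon connectedness of $\tilde M$ forces equality. Openness is one of the hypotheses, so the substance lies in closedness. I would prove this by contradiction: suppose $\tilde p \in \overline{i(M)} \setminus i(M)$. Since $i(S)$ is a Cauchy hypersurface in $\tilde M$, the point $\tilde p$ lies either in the chronological future or in the chronological past of $i(S)$, and I would treat the two cases separately.

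For boundary points $\tilde p$ to the future of $i(S)$, I would join $\tilde p$ to some $\tilde q \in i(S)$ by a future-directed timelike curve $\tilde\sigma$ in $\tilde M$ and pull it back via the local inverse of $i$ to a curve $\sigma$ in $M$ as long as the lift stays in $i(M)$. Let $s^\star$ be the supremum of parameter values along $\tilde\sigma$ for which the lift is defined. Future geodesic completeness of $M$, together with the standard escape-lemma argument for globally hyperbolic spacetimes, ensures that $\sigma(s)$ admits a well-defined limit in $M$ as $s \to s^\star$; its image under $i$ equals $\tilde\sigma(s^\star)$, and openness of $i(M)$ then extends the lift past $s^\star$, contradicting the definition of $s^\star$. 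Hence the entire curve, and $\tilde p$ in particular, lies in $i(M)$.

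The harder direction concerns $\tilde p$ in the past of $i(S)$, for which future completeness provides no leverage. Here I would invoke a Hawking-type singularity theorem: on a Lorentzian manifold satisfying the timelike convergence condition, every past-directed timelike geodesic issuing from a smooth spacelike Cauchy hypersurface of constant mean curvature $\kappa > 0$ has length bounded by $3/\kappa$. The hypersurfaces $i(\Sigma_j) \subset \tilde M$ inherit the Cauchy property from $i(S)$, since within a globally hyperbolic spacetime any closed acausal spacelike hypersurface homologous to a Cauchy hypersurface is itself Cauchy, and $i$ is an isometric embedding so mean curvatures are preserved. Applying the theorem in $\tilde M$ to $i(\Sigma_j)$ bounds the proper-time extent of $\tilde M$ to the past of $i(\Sigma_j)$ by $3/\kappa_j$, which tends to zero. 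For $j$ sufficiently large, the boundary point $\tilde p$ is sandwiched between $i(S)$ and some $i(\Sigma_j)$, forcing it into a neighborhood of $i(\Sigma_j)$ covered by $i(M)$, again a contradiction.

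The main obstacle will be the past-direction argument, specifically the passage from the CMC foliation of $M$ to a singularity-theorem bound inside $\tilde M$. Two technicalities need care: first, ensuring that every $i(\Sigma_j)$ is genuinely Cauchy in $\tilde M$, not merely acausal, which I would address by the homology argument above together with global hyperbolicity of $\tilde M$ (implicit in the existence of the Cauchy hypersurface $i(S)$); and second, arranging that the would-be boundary point $\tilde p$ is actually trapped to the past of some $i(\Sigma_j)$ for large $j$, which rests on a continuity statement for the CMC foliation and the fact that $\tilde p$ can be approached by points of $i(M)$. Everything else, including the future-direction step and the final extraction of a contradiction, is essentially routine once these two points are in place.
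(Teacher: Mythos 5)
Your outline — show $i(M)$ is closed via connectedness, handle boundary points to the future of $i(S)$ by lifting via completeness and those to the past by Hawking applied to the CMC hypersurfaces — is the right one (the paper cites \cite[Prop.~18.16]{ringstrom_cauchyproblem} rather than proving this). But both branches have genuine gaps. In the future branch you lift an arbitrary timelike curve and invoke "future geodesic completeness together with the escape lemma" to extract a limit for the lift. That inference is not valid: completeness controls geodesics, not general causal curves, and the lift $i^{-1}\circ\tilde\sigma|_{[0,s^\star)}$ can be future-inextendible in $M$ even though $\tilde\sigma([0,s^\star])$ is compact in $\tilde M$, since $i$ is an open embedding but not proper. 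One must instead take the maximizing causal \emph{geodesic} from $i(S)$ to $\tilde p$ (which exists by global hyperbolicity), pull its initial velocity back through $di$ to initial data for a future-directed geodesic in $M$ which exists for all parameter by completeness, push it forward, and use uniqueness of geodesics: it coincides with the maximizer, hence reaches $\tilde p$ inside $i(M)$. Running the identical argument with $\Sigma_j$ in place of $S$ gives $J^+(i(\Sigma_j);\tilde M)\subset i(M)$ for every $j$, a fact your proposal omits but which the past branch requires.

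In the past branch, the step "for $j$ large, $\tilde p$ is sandwiched between $i(S)$ and some $i(\Sigma_j)$" is exactly what has to be proved, and neither the shrinking bound $3/\kappa_j\to0$ by itself nor "forced into a neighborhood of $i(\Sigma_j)$ covered by $i(M)$" supplies the reason — being between two Cauchy surfaces does not place a point in a tubular neighborhood of one of them. The argument that closes it: if $\tilde p\notin i(M)$, then by the future-branch result $\tilde p\notin J^+(i(\Sigma_j))$, hence $\tilde p\in J^-(i(\Sigma_j))\setminus i(\Sigma_j)$ for every $j$. Pick any $\tilde p'\in I^-(\tilde p)$. Concatenating a causal curve from $i(\Sigma_j)$ to $\tilde p$ with a timelike one from $\tilde p$ to $\tilde p'$ gives $\tilde p'\in I^-(i(\Sigma_j))$ with $d(i(\Sigma_j),\tilde p')\ge d(\tilde p,\tilde p')>0$ independently of $j$, while Hawking applied to the Cauchy CMC hypersurface $i(\Sigma_j)$ gives $d(i(\Sigma_j),\tilde p')\le 3/\kappa_j\to 0$, a contradiction. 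The mechanism is the reverse triangle inequality, not the continuity of the CMC foliation you gesture at. A further, smaller issue: your homology argument that $i(\Sigma_j)$ is Cauchy in $\tilde M$ already presupposes its acausality in $\tilde M$, which is not immediate; it is cleaner to show directly that the portion inside $i(M)$ of any inextendible causal curve of $\tilde M$ pulls back to an inextendible causal curve of $M$ and therefore meets $\Sigma_j$.
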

We further adapt~\cite[Prop.~20.10]{ringstrom_cauchyproblem} to our setting:
\begin{lemm}
	Consider a Bianchi~B development~$(M=I\times G,g,\mu)$ of initial data as in Def.~\ref{defi_initialdatabianchib}. Let $c:(s_-,s_+)\rightarrow M$ be a future directed inextendible causal geodesic, and
	\begin{equation}
		f_0(s)=\scalprod{c'(s)}{e_0|_{c(s)}}.
	\end{equation}
	Then either $\theta\equiv0$ and~$f_0$ is a constant, or
	\begin{equation}
		\frac{d}{ds}(f_0\theta)\ge C\theta^2f_0^2,
	\end{equation}
	with $C\coloneqq\min(1/2-\sqrt2/3,\gamma/2)$ if~$\mu>0$, and~$C\coloneqq(2-\sqrt2)/3 $ if~$\mu=0$.
\end{lemm}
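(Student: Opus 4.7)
The plan is to compute $\frac{d}{ds}(f_0\theta)$ directly from the geodesic equation and the Raychaudhuri equation contained in~\eqref{eqns_basic_evolutionbianchib}. Decompose $c'(s) = -f_0(s)\,e_0|_{c(s)} + X(s)$, where $X$ is tangent to the slice $\{t(s)\}\times G$. The causal condition $\scalprod{c'}{c'}\le 0$ translates to $|X|^2\le f_0^2$, with $|\cdot|$ the norm induced by ${}^t\overline g$. Since $c$ is a geodesic and the Levi--Civita connection preserves the metric, $f_0' = \scalprod{c'}{\nabla_{c'}e_0}$. The product form $g = -dt^2+{}^t\overline g$ of the metric constructed in Subsection~\ref{constr_solutionbasictomghd} implies $\nabla_{e_0}e_0=0$, so $\nabla_{c'}e_0 = \nabla_X e_0$, which is spatial. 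Therefore $f_0' = \scalprod{X}{\nabla_X e_0} = \theta_{ij}X^iX^j$, where $\theta_{ij}$ denotes the second fundamental form of the slice as in Remark~\ref{rema_geometricmeaningthetasigmaAB}. Spatial homogeneity of $\theta$ yields $\frac{d\theta}{ds} = -f_0\dot\theta$, and combining these gives
\[
\frac{d}{ds}(f_0\theta) \;=\; \theta\,\theta_{ij}X^iX^j \;-\; f_0^2\,\dot\theta.
\]

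I would next split $\theta_{ij} = \sigma_{ij}+(\theta/3)\delta_{ij}$ into its trace and trace-free parts. A direct calculation from~\eqref{eqns_prebasic_definitionpartone}--\eqref{eqns_prebasic_definitionparttwo}, using $\sigma_{12}=\sigma_{13}=0$ from Subsection~\ref{subsect_liegrinitialdata}, gives $\sum_{i,j}\sigma_{ij}^2 = \tfrac{2}{3}(\sigma_+^2+\tilde\sigma)$, so that the first line of~\eqref{eqns_basic_evolutionbianchib} takes the Raychaudhuri form $-\dot\theta = \tfrac{1}{3}\theta^2 + \sum_{i,j}\sigma_{ij}^2 + \tfrac{1}{2}(3\gamma-2)\mu$. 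Setting $S := \bigl(\sum_{i,j}\sigma_{ij}^2\bigr)^{1/2}$ and applying the Cauchy--Schwarz estimate $|\sigma_{ij}X^iX^j|\le S|X|^2$ together with $|X|^2\le f_0^2$ and non-negativity of $(\theta^2/3)|X|^2$ yields the key inequality
\[
\frac{d}{ds}(f_0\theta) \;\ge\; f_0^2\bigl[\tfrac{\theta^2}{3} + S^2 - \theta S + \tfrac{1}{2}(3\gamma-2)\mu\bigr].
\]

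The case $\theta\equiv 0$ follows from Lemma~\ref{lemm_excludethetazerobasic}: it forces the trivial solution in basic variables, whence $\sigma_{ij}\equiv 0$, $\theta_{ij}\equiv 0$, and $f_0'\equiv 0$, so $f_0$ is constant. Otherwise, it remains to bound the bracket in the displayed inequality from below by $C\theta^2$ under the admissible ranges $S^2\le\tfrac{2}{3}\theta^2$ and $0\le\mu\le\theta^2/3$ supplied by~\eqref{eqn_basic_constraintnonnegative}--\eqref{eqn_basic_mu}. The vacuum case reduces to an elementary minimisation of the quadratic $\tfrac{\theta^2}{3}+S^2-\theta S$ over this interval. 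In the matter case one rewrites $\tfrac{1}{2}(3\gamma-2)\mu = \tfrac{3\gamma}{2}\mu - \mu$ and absorbs the $-\mu$ term into the non-negative Hamiltonian combination $\tfrac{\theta^2}{3}-\mu\ge 0$, isolating a $\tfrac{3\gamma}{2}\mu$ contribution that accounts for the $\gamma/2$ alternative in the stated constant. The main obstacle is this final bookkeeping: tracking which terms are absorbed where, and possibly using a weighted version of the Cauchy--Schwarz step, so that the optimisation reproduces the exact constants $(2-\sqrt 2)/3$ and $\min(1/2-\sqrt{2}/3,\gamma/2)$ rather than a cruder value such as $1/12$ or $1/6$.
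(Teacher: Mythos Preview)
Your setup and the derivation of the identity for $\frac{d}{ds}(f_0\theta)$ match the paper's: decompose $c'=-f_0e_0+X$, use $f_0'=\theta_{ij}X^iX^j$, and combine with the Raychaudhuri equation. The displayed key inequality is correct. The obstacle you flag at the end is real, and in fact your sketch does not close for small $\gamma>0$: after absorbing $-\mu$ via $\tfrac{\theta^2}{3}-\mu\ge\tfrac12 S^2$ as you suggest, the bracket becomes $\tfrac32 S^2-\theta S+\tfrac{3\gamma}{2}\mu$, which is negative at $S=\theta/3$ with $\mu$ small. So this is not only bookkeeping; the two steps where you lose information---the naive Cauchy--Schwarz and discarding $\tfrac{\theta^2}{3}|X|^2$---together destroy positivity.

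The paper repairs this in two ways. First, it rotates in the $e_2e_3$-plane to diagonalise $\sigma_{ij}$ (legitimate since $\sigma_{12}=\sigma_{13}=0$) and exploits trace-freeness: ordering $|\sigma_{11}|\le|\sigma_{22}|\le|\sigma_{33}|$ gives $3\sigma_{33}^2=2\sigma_{33}^2+(\sigma_{11}+\sigma_{22})^2\le 2\sum_i\sigma_{ii}^2$, hence the sharper bound $|\sigma_{ij}X^iX^j|\le\sqrt{2/3}\,S\,|X|^2$ in place of your factor~$1$. Second, it retains $\tfrac{\theta^2}{3}|X|^2$ rather than discarding it, derives two separate lower bounds (one from $\theta^2/3-\mu\ge S^2/2$, the other by expanding $\mu$ via~\eqref{eqn_basic_mu} directly inside $\tfrac13\theta^2+\tfrac12(3\gamma-2)\mu$), and applies them according to whether $S^2$ lies in $[0,\theta^2/6]$, $[\theta^2/6,\theta^2/3]$ or $[\theta^2/3,2\theta^2/3]$. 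The vacuum constant $(2-\sqrt2)/3$ is obtained by the same mechanism and deferred to~\cite[Lemma~20.10]{ringstrom_cauchyproblem}. Your weaker vacuum constant $1/12$ would already suffice for the application in Lemma~\ref{lemm_bianchibdevelopmentproperties}, but for the matter case with $0<\gamma<2/3$ the extra structure is genuinely needed.
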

\begin{proof}
	By construction of the Bianchi~B development in Subsections~\ref{constr_initialdatatoexpnorm}--\ref{constr_solutionbasictomghd}, we know that the trace-free part~$\sigma_{ij}$ of the \fundform~$\theta_{ij}$ of the sub\mf~$\{t\}\times M$ satisfies
	\begin{equation}
		\sigma_{12}=\sigma_{21}=0=\sigma_{13}=\sigma_{31}.
	\end{equation}

	For a fixed time~$s$, we apply a rotation in the~$e_2e_3$-plane to diagonalise~$\sigma_{ij}$ (and simultaneously~$\theta_{ij}$), then define
	\begin{equation}
		f_i(s)=\scalprod{c'(s)}{e_i|_{c(s)}},
	\end{equation}
	for~$i=1,2,3$.

	As $f_0(s)=\scalprod{c'(s)}{e_0|_{c(s)}}$ and~$c$ is a geodesic, we find
	\begin{equation}
		\frac{df_0}{ds}=\scalprod{c'(s)}{\nabla_{c'(s)}e_0}
			=\sum_{i,j}f_i(s)f_j(s)\scalprod{e_i}{\nabla_{e_j}e_0}
			=\sum_{i}f_i^2\theta_{ii}
			=\frac13\theta\sum_{i}f_i^2+\sum_{i}f_i^2\sigma_{ii},
	\end{equation}
	where we applied
	$\nabla_{e_0}e_0=0$.
	The diagonalisation we applied does not affect the basic variables~$(\theta,\sigma_+,\tilde\sigma,\delta,\tilde a,n_+)$, as they are invariant under rotation in the~$e_2e_3$-plane. We can therefore apply Lemma~\ref{lemm_excludethetazerobasic} to see that~$\theta=0$ at some time implies that
	\begin{align}
		0=\tilde\sigma=\frac34(\sigma_{22}-\sigma_{33})^2, \qquad
		0=\sigma_+=\frac32(\sigma_{22}+\sigma_{33}),
	\end{align}
	at all times. In particular $\sigma_{ij}\equiv0$, and from this we see that~$f_0$ is constant.

	The Raychaudhuri \equ\ for an \ogon\ perfect fluid with linear \equ\ of state~\eqref{eqn_lineareqnofstate} is
	\begin{equation}
	\label{eqn_raychaudhuri}
		\dot\theta+\theta^{ij}\theta_{ij}={-}\frac\mu2(3\gamma-2),
	\end{equation}
	and coincides with the evolution equation for~$\theta$ in~\eqref{eqns_prebasic_evolutionbianchib}. Here, the dot~$\dot{ }$ denotes differentiation \wrt\ the time parameter~$t$ of the foliation~$M=I\times G$, in contrast to~$'$ denoting differentiation \wrt\ the parameter~$s$ of the geodesic~$c$.
	The Raychaudhuri \equ\ follows from the~$00$-component of the Ricci \curv\ for a perfect fluid~\eqref{eqn_stressenergyperfectfluid} with linear equation of state~\eqref{eqn_lineareqnofstate}
	and holds independently of the Bianchi class, see~\cite[Eq.~(25.10)]{ringstrom_topologyfuturestabilityuniverse}. The general expression in terms of an \onorm\ tetrad for a perfect fluid can be found in~\cite[Eq.~(82)]{maccallum_cosmmodelsgeometricpointofview}.

	Combining the Raychaudhuri \equ ~\eqref{eqn_raychaudhuri} with the fact that the surfaces have constant mean \curv, it follows that
	\begin{equation}
	\label{eqn_mghdprooffunctionfzero}
		\frac{d}{ds}(f_0\theta)=\frac13\theta^2\sum_{i}f_i^2+\sum_{i}\theta
		f_i^2\sigma_{ii}+\frac13f_0^2\theta^2+f_0^2\sum_i\sigma_{ii}^2+\frac12f_0^2(3\gamma-2)\mu.
	\end{equation}
	We observe that apart from the last term including the energy density~$\mu$ this coincides with the expression for vacuum Bianchi~A developments used in the proof of~\cite[Lemma~20.10]{ringstrom_cauchyproblem}. In particular, in case this last term is non-negative, the identical argument works, yielding that
	\begin{equation}
		\frac{d}{ds}(f_0\theta)\ge\frac{2-\sqrt2}{3}\theta^2f_0^2.
	\end{equation}
	Note that $(2-\sqrt2)/3$ is in particular larger than the constant~$C$ we aim to achieve in case~$\mu>0$.

	To conclude the proof, we have to deal with the case that the last term in equation~\eqref{eqn_mghdprooffunctionfzero} is negative, \ie the case $\mu>0$ and $0\le\gamma<\frac23$.
	Without loss of generality one can assume that $\absval{\sigma_{11}}\le\absval{\sigma_{22}}\le\absval{\sigma_{33}}$, and as~$\sigma_{ij}$ is trace-free, one can estimate
	\begin{equation}
		3\sigma_{33}^2=2\sigma_{33}^2+(\sigma_{11}+\sigma_{22})^2\le 2\sum_i\sigma_{ii}^2.
	\end{equation}
	Consequently,
	\begin{equation}
		\absval{\sum_{i}f_i^2\sigma_{ii}}\le(\frac23)^{1/2}(\sum_i\sigma_{ii}^2)^{1/2}\sum_if_i^2.
	\end{equation}

	We compute that
	\begin{equation}
	\label{eqns_auxiliaryexpressionsumsigmaij}
		\begin{subaligned}
			\sigma_+^2+\tilde\sigma
			={}&\frac94(\sigma_{11})^2+\frac34(\sigma_{22}-\sigma_{33})^2+\frac32(\sigma_{23})^2+\frac32(\sigma_{32})^2\\
			={}&\frac32(\sigma_{11})^2+\frac34(\sigma_{22}+\sigma_{33})^2+\frac34(\sigma_{22}-\sigma_{33})^2+\frac32(\sigma_{23})^2+\frac32(\sigma_{32})^2\\
			={}&\frac32\sigma^{ij}\sigma_{ij},
		\end{subaligned}
	\end{equation}
	and as~$\sigma_{ij}$ is diagonal by assumption, we can use the definition of~$\mu$ in terms of the basic variables, equation~\eqref{eqn_basic_mu} to find
	\begin{equation}
		\frac13\theta^2-\mu
			=\frac13(\sigma_+^2+\tilde\sigma+\tilde a+\tilde n)
			\ge\frac13(\sigma_+^2+\tilde\sigma)
			=\frac12\sum_i\sigma_{ii}^2.
	\end{equation}
	We conclude that the derivative of $f_0\theta$ can be estimated by
	\begin{equation}
	\label{eqn_initialestimatef0thetawithoutgamma}
		\frac{d}{ds}(f_0\theta)\ge\frac13\theta^2\sum_{i}f_i^2+\sum_{i}\theta
		f_i^2\sigma_{ii}+\frac32f_0^2\sum_i\sigma_{ii}^2+\frac32f_0^2\gamma\mu.
	\end{equation}
	On the other hand, we can use the fact that $3\gamma-2<0$ to find
		\begin{align}
		\frac13\theta^2+\frac12(3\gamma-2)\mu ={}&\frac13\theta^2+\frac12(3\gamma-2)\frac13(\theta^2-\sigma_+^2-\tilde\sigma-\tilde a-\tilde n)\\
		\ge{}&\frac12\gamma\theta^2+(\frac13-\frac12\gamma)(\sigma_+^2+\tilde\sigma)\\
		={}&\frac12\gamma\theta^2+(\frac12-\frac34\gamma)\sum_i\sigma_{ii}^2
	\end{align}
	and start out with the estimate
	\begin{equation}
	\label{eqn_initialestimatef0thetawithgamma}
		\frac{d}{ds}(f_0\theta)\ge\frac13\theta^2\sum_{i}f_i^2+\sum_{i}\theta
		f_i^2\sigma_{ii}+\frac12\gamma\theta^2f_0^2+(\frac32-\frac34\gamma)f_0^2\sum_i\sigma_{ii}^2.
	\end{equation}
	We now divide into the three cases $\frac13\theta^2\le\sum_i\sigma_{ii}^2$, $\frac16\theta^2\le\sum_i\sigma_{ii}^2\le\frac13\theta^2$ and $\sum_i\sigma_{ii}^2\le\frac16\theta^2$. In the first two cases, inequality~\eqref{eqn_initialestimatef0thetawithoutgamma} and causality of the curve yields the desired estimate, while in the third case we apply inequality~\eqref{eqn_initialestimatef0thetawithgamma}.
\end{proof}
\begin{lemm}
\label{lemm_bianchibdevelopmentproperties}
	Consider a Bianchi~B development
	of initial data as in Def.~\ref{defi_initialdatabianchib}.
	If $\mu=0$ or~$\mu>0$, $\gamma>0$, then all inextendible causal geodesics are future complete and past incomplete.
	If instead $\mu>0$, $\gamma=0$, then all inextendible causal geodesics are future complete.
\end{lemm}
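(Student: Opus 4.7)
The plan is to deduce both statements from the differential inequality $\frac{d}{ds}(f_0\theta) \ge C\theta^2 f_0^2$ provided by the preceding lemma. Let $c:(s_-, s_+)\to M$ be any future-directed inextendible causal geodesic, and set $f_0(s) = \langle c'(s), e_0|_{c(s)}\rangle$ and $F(s) = f_0(s)\theta(c(s))$. Since $e_0 = \partial_t$ is future-directed unit timelike, $f_0 < 0$, and since the construction of a Bianchi~B development explicitly excludes trivial solutions in basic variables (cf.\ Lemma~\ref{lemm_excludethetazerobasic}), $\theta > 0$ at every time. Hence $F < 0$ on the entire interval and satisfies $F' \ge CF^2$, with $C > 0$ in the two cases where past incompleteness is claimed, and $C = 0$ in the remaining case $\mu > 0$, $\gamma = 0$.

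For past incompleteness (the cases $\mu = 0$ and $\mu > 0,\gamma > 0$), I would recast the Riccati inequality as $\frac{d}{ds}(-1/F) = F'/F^2 \ge C$, which is valid because $F < 0$. Integrating between $s \in (s_-, s_0)$ and a reference value $s_0$ yields $-1/F(s) \le -1/F(s_0) - C(s_0 - s)$, and since the left-hand side is strictly positive this forces $s > s_0 - 1/(C|F(s_0)|)$ for every such $s$. Hence $s_- > -\infty$, which is precisely past incompleteness.

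For future completeness in all three cases, only the monotonicity $F' \ge 0$ is used, so the argument applies uniformly even when $C = 0$. It yields $F(s) \ge F(s_0)$ on $[s_0, s_+)$, hence the pointwise bound $|f_0(s)| \le |F(s_0)|/\theta(c(s))$. Since $dt/ds = -f_0$ and $\theta$ depends only on $t$, integration gives $s_+ - s_0 \ge |F(s_0)|^{-1}\int_{t_0}^{t_+}\theta(t)\,dt$, where $t_+ = \lim_{s\to s_+} t(c(s))$. The change of variables $dt/d\tau = 3/\theta$ from Subsection~\ref{subsect_appendixdeducenormalisedvariablesbianchib} converts the integral to $3(\tau_+ - \tau_0)$, and since $\tau$ is defined on all of $\RR$ with $\tau \to +\infty$ as $t \to +\infty$, the case $t_+ = +\infty$ already forces $s_+ = +\infty$. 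The complementary possibility $t_+ < +\infty$ I would rule out by noting that $\theta$ is then bounded away from zero on the compact interval $[t_0, t_+]$, so $|f_0|$ is bounded; combining the causal bound on the spatial speed with Hopf--Rinow applied to the complete left-invariant metric on $G$ confines $c$ to a relatively compact subset of $M$, contradicting inextendibility. The main technical point requiring care is this final extendibility argument for causal geodesics in the warped-product form~\eqref{eqn_mghdform}, which invokes standard ODE theory together with homogeneity of the slices.
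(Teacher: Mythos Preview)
Your proof is correct and follows the same overall strategy as the paper: both derive past incompleteness and future completeness from the Riccati-type inequality $\frac{d}{ds}(f_0\theta)\ge C(f_0\theta)^2$ of the preceding lemma. The paper, however, does not spell out the geodesic argument but simply verifies the two ingredients needed in \cite[Lemma~20.12]{ringstrom_cauchyproblem}, namely that inequality together with the shear bound $\sigma^{ij}\sigma_{ij}\le\tfrac23\theta^2$, and then cites that reference. Your argument is self-contained, and your future-completeness step differs in one interesting respect: by substituting $\theta\,dt=3\,d\tau$ (equation~\eqref{eqn_evolutiontau}) you bound the affine length from below directly, without ever needing the shear bound the paper records as an input to Ringstr\"om's proof. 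This is a genuine simplification.

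Two small remarks on the final step. First, the case $t_+<\infty$ can be dispatched more quickly than by your compactness argument: Prop.~\ref{prop_developmentglobhyperbolic} says every slice $\{t\}\times G$ is a Cauchy hypersurface, so an inextendible causal curve must have $t$-range equal to all of $I=(t_-,\infty)$, forcing $t_+=\infty$ outright. Second, if you do keep the compactness argument, note that the ``causal bound on the spatial speed'' is with respect to the time-dependent metric ${}^t\bar g$; to invoke Hopf--Rinow on $G$ you need to compare these to a single fixed left-invariant metric, which works because they are all left-invariant and vary continuously over the compact interval $[t_0,t_+]$.
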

\begin{proof}
	In case $\mu=0$ or~$\mu>0$, $\gamma>0$, the previous lemma implies that
	\begin{equation}
		\frac{d}{ds}(f_0\theta)\ge C\theta^2f_0^2
	\end{equation}
	for a positive constant~$C$. Additionally, from equation~\eqref{eqn_basic_mu} one finds
	\begin{equation}
		\theta^2=3\mu+\sigma_+^2+\tilde\sigma+\tilde a+\tilde n\ge\sigma_+^2+\tilde\sigma,
	\end{equation}
	and in combination with the computation carried out in~\eqref{eqns_auxiliaryexpressionsumsigmaij} concludes
	\begin{equation}
		\sigma^{ij}\sigma_{ij}\le\frac23\theta^2,
	\end{equation}
	which is identical to~\cite[Eq.~(20.19)]{ringstrom_cauchyproblem}. These two inequalities are the only ingredients necessary for the proof of~\cite[Lemma~20.12]{ringstrom_cauchyproblem}, which is the Bianchi class~A variant of the present lemma.

	In case $\mu>0$, $\gamma=0$, the constant in the previous lemma is zero. Consequently, $f_0\theta$ is non-decreasing, which is enough to conclude future completeness of inextendible causal geodesics, retracing the arguments in the proof of~\cite[Lemma~20.12]{ringstrom_cauchyproblem}.
\end{proof}
\begin{coro}
\label{coro_mghdforgammapositive}
	Consider a Bianchi~B development
	of initial data as in Def.~\ref{defi_initialdatabianchib}.
	If $\mu=0$ or~$\mu>0$, $\gamma>0$, then this development is isometric to the \mghd\ of the initial data.
\end{coro}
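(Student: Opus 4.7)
The plan is to invoke Prop.~\ref{prop_mghdnecessaryproperties} applied to our Bianchi~B development $(M,g)=(I\times G,g)$ and an isometric embedding $i\colon M\to\tilde M$ into the \mghd\ $(\tilde M,\tilde g)$ of the initial data. Such an embedding exists by the defining property of the \mghd\ and, because both spacetimes are developments of the same data, carries the Cauchy hypersurface $\{0\}\times G$ of $M$ (see Prop.~\ref{prop_developmentglobhyperbolic}) to a Cauchy hypersurface of $\tilde M$. If the remaining three hypotheses of Prop.~\ref{prop_mghdnecessaryproperties} can be verified for $(M,g)$ and $(\tilde M,\tilde g)$, its conclusion $i(M)=\tilde M$ gives the corollary.

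I verify the hypotheses in turn. By Prop.~\ref{prop_developmentglobhyperbolic} every slice $\{t\}\times G$ is a spacelike Cauchy hypersurface of $M$, and from the construction of the spacetime frame in Subsection~\ref{constr_solutionbasictomghd} (cf.\ Remark~\ref{rema_geometricmeaningthetasigmaAB}) its mean curvature with respect to $e_0=\partial_t$ equals the constant~$\theta(t)$. To produce constant mean curvature hypersurfaces with mean curvatures tending to $+\infty$, I appeal to the construction of Subsection~\ref{constr_expnormtosolutionbasic}: since $\theta>0$ satisfies $\theta'=-(1+q)\theta$ with $1+q$ bounded below by a positive constant for $\tau$ sufficiently negative (the constant solution $q\equiv-1$ corresponds to $\gamma=0$, which is excluded by the hypotheses of this corollary), one concludes $\theta(t)\to+\infty$ as $t\to t_-$ and selects any sequence $t_j\to t_-$. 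Future causal geodesic completeness of $(M,g)$ is exactly Lemma~\ref{lemm_bianchibdevelopmentproperties} under the matter assumptions. Finally, for the timelike convergence condition on $(\tilde M,\tilde g)$, Einstein's equations with stress-energy tensor~\eqref{eqn_stressenergyperfectfluid} and linear equation of state~\eqref{eqn_lineareqnofstate} give, for any unit timelike $v\in T\tilde M$,
\begin{equation*}
\widetilde{\Ric}(v,v)=\gamma\tilde\mu(u\cdot v)^2+\tfrac12(\gamma-2)\tilde\mu\ \ge\ \tfrac12(3\gamma-2)\tilde\mu,
\end{equation*}
upon invoking the reverse Cauchy--Schwarz estimate $(u\cdot v)^2\ge1$; this is manifestly non-negative in vacuum and whenever $\gamma\ge2/3$.

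The main obstacle is the inflationary sub-range $\gamma\in(0,2/3)$, where the estimate above does not yield the timelike convergence condition on $\tilde M$ and Prop.~\ref{prop_mghdnecessaryproperties} is not applicable verbatim. The anticipated resolution is either to invoke a variant of that proposition in which timelike convergence is replaced by a weaker condition adapted to CMC foliations---the treatment of Bianchi class~A in~\cite{ringstrom_cauchyproblem} covers the analogous situation---or to argue directly using that $\tilde M$ is itself spatially homogeneous and hence governed by the same ODE system~\eqref{eqns_basic_evolutionbianchib}--\eqref{eqn_basic_evolutionmu}, whose maximal existence interval was already shown in Subsection~\ref{constr_expnormtosolutionbasic} to coincide with $I$, so that no proper extension is possible.
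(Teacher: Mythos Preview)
Your treatment of the vacuum case is correct and coincides with the paper's. For $\mu>0$ and $\gamma\ge2/3$ your approach via the timelike convergence condition is also sound (and in fact slightly more explicit than the paper, which does not separate this subrange).

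The genuine gap is, as you yourself flag, the inflationary range $\gamma\in(0,2/3)$, and neither of your ``anticipated resolutions'' is adequate. The first is merely a hope that some variant of Prop.~\ref{prop_mghdnecessaryproperties} exists; nothing in the paper provides such a variant. The second --- arguing that the \mghd\ $\tilde M$ is itself spatially homogeneous and hence governed by the same ODE --- assumes what is essentially the conclusion: a priori the \mghd\ is only known to be a globally hyperbolic development of the data, and spatial homogeneity of a proper extension beyond $I\times G$ is precisely what one cannot take for granted.

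The paper's resolution for the entire matter case $\mu>0$, $\gamma>0$ is different and quite clean: it abandons Prop.~\ref{prop_mghdnecessaryproperties} altogether and instead uses curvature blow-up. Lemma~\ref{lemm_curvblowupricci} shows that $R_{\alpha\beta}R^{\alpha\beta}=(1+3(\gamma-1)^2)\mu^2$ tends to $+\infty$ as $\tau\to-\infty$ (equivalently $t\to t_-$) whenever $\mu>0$ and $\gamma>0$. Since this is a scalar invariant, any $C^2$ extension of the development to the past as a semi-\riem\ \mf\ is obstructed; combined with future completeness from Lemma~\ref{lemm_bianchibdevelopmentproperties}, this forces the Bianchi~B development to coincide with the \mghd. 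This argument is indifferent to whether $\gamma$ lies above or below $2/3$, which is exactly why it succeeds where the timelike convergence route does not.
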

\begin{proof}
	The case of vacuum~$\mu=0$ is equivalent to~$Ric=0$, thus the timelike convergence condition is satisfied. The convergence properties of the mean \curv ~$\theta$ have been discussed in Subsection~\ref{constr_expnormtosolutionbasic}. With this, the statement is a direct consequence of Lemma~\ref{lemm_bianchibdevelopmentproperties}, Prop.~\ref{prop_developmentglobhyperbolic}, Prop.~\ref{prop_mghdnecessaryproperties} and the definition of the \mghd.

	In the matter case $\mu>0$, $\gamma>0$, Lemma~\ref{lemm_bianchibdevelopmentproperties} yields future completeness of inextendible cau\-sal geodesics. If the development was extendible towards the past as a globally hyperbolic \mf, then it would in particular be extendible as a semi-\riem\ \mf.
	In Lemma~\ref{lemm_curvblowupricci}, we show that the Ricci \curv\ contraction $R_{\alpha\beta}R^{\alpha\beta}$ is unbounded as $\tau\rightarrow{-}\infty$, using the expansion-normalised variables. The statement translates into the setting of the Bianchi~B development~$I\times G$, $I=(t_-,\infty)$, when we translate the condition on the time~$\tau$ into the condition that $t\rightarrow t_-$. By an argument similar to~\cite[Lemma~18.18]{ringstrom_cauchyproblem}, this contradicts extendibility.
\end{proof}
The only case not covered by this last statement is when~$\mu>0$, $\gamma=0$, in which case one can interpret the stress-energy tensor~$T_{\alpha\beta}$ as that of vacuum with a positive cosmological constant $\Lambda>0$.

We have to distinguish between the two situations which we encountered in Subsection~\ref{constr_expnormtosolutionbasic}: $q\equiv{-}1$ and $q>{-}1$. In the first case, $\theta=\theta_0$ is constant and all expansion-normalised variables vanish at all times. One concludes from the construction of Bianchi~B development that the commutators have the form
\begin{equation}
	\liebr{e_0}{e_i}={-}\frac13\theta_0 e_i,\qquad
	\liebr{e_i}{e_j}=0.
\end{equation}
From the second relation, we can conclude that a left-invariant metric on the universal covering group of the \liegr ~$G$ is isometric to Euclidean three-space $(\RR^3,\delta)$, while the first relation implies
\begin{equation}
	{}^tk=\frac{\theta_0}{3}\cdot{}^tg,
\end{equation}
with~${}^tg$ and~${}^tk$ the metric and second fundamental form of $\{t\}\times G$ \wrt\ the development.
Combined, the second fundamental form of $\{t\}\times G$ equals
\begin{equation}
	{}^tk=\frac{\theta_0}{3}\delta.
\end{equation}
One compares this with the spacetime
\begin{equation}
	(\RR\times\RR^3\,,\,g={-}dt^2+e^{2Ht}\delta),
\end{equation}
with $H=\sqrt{\Lambda/3}$, which can be isometrically embedded in the well-known \desitter\ spacetime, see~\cite[eq.~(52)]{schroedinger_expandinguniverses}. Upon comparison, one realises that the~$t=const.$ slices in this spacetime have the same induced metric and \fundform\ as we computed for our case~$\gamma=0$, $q\equiv{-}1$. We conclude that
a Bianchi~B development, which by Prop.~\ref{prop_developmentglobhyperbolic} is a globally hyperbolic development of the given initial data, corresponds to a development of a hypersurface in a quotient of the \desitter\ spacetime.
This concludes the case~$\gamma=0$, $q={-}1$.

In case~$q>{-}1$, we can show that the development we constructed cannot be isometrically embedded into a globally hyperbolic spacetime which extends the development to the past.
\begin{prop}
\label{prop_cosmconstpastinextend}
	Consider a Bianchi~B development~$(M=I\times G,g,\mu)$
	of initial data as in Def.~\ref{defi_initialdatabianchib}.
	If $\mu>0$, $\gamma=0$, and the mean \curv\ of~$\{t\}\times G$ covers the interval~$(\theta_\infty,\infty)$, then the development is past inextendible as a globally hyperbolic spacetime.
\end{prop}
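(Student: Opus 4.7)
The approach adapts the proof of Corollary~\ref{coro_mghdforgammapositive} to the cosmological constant case. With $\gamma=0$, the auxiliary equation~\eqref{eqn_basic_evolutionmu} forces $\mu$ to be constant, and Einstein's equations then give $R_{\alpha\beta}=\mu g_{\alpha\beta}$ on $M$, so $\Ric(v,v)=-\mu|v|^2$ for unit timelike $v$ and the timelike convergence condition fails; Proposition~\ref{prop_mghdnecessaryproperties} therefore does not apply directly. Nonetheless the explicit form $g=-dt^2+{}^t\overline g$ makes $\partial_t$ a geodesic unit normal to every timeslice $\{t\}\times G$ whose expansion is exactly the mean curvature $\theta(t)$, and this structural information takes the place of TCC in the focal-point analysis.

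Suppose for contradiction that $(\tilde M,\tilde g)$ is a proper past extension, with isometric embedding $i:M\hookrightarrow\tilde M$ and a point $\tilde p\in\tilde M\setminus i(M)$ lying in $I^-(i(\Sigma_0))$. Each $i(\Sigma_t)$ with $t\in(t_-,0]$ remains a Cauchy hypersurface of $\tilde M$ by Proposition~\ref{prop_developmentglobhyperbolic}, and a supremum argument, based on the fact that $\tilde p\notin i(M)$ precludes $\tilde p\in i(\Sigma_t)$ for any $t$, forces $\tilde p\in I^-(i(\Sigma_t))$ for every $t\in(t_-,0]$.

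Now consider the past-directed timelike geodesic congruence normal to $i(\Sigma_t)$ in $\tilde M$. Inside $i(M)$ it coincides with the $-\partial_t$ congruence, so its expansion at parameter $s$ equals $-\theta(t-s)$ and diverges to $-\infty$ as $s\to t-t_-$ by the hypothesis that $\theta$ covers $(\theta_\infty,\infty)$. Jacobi-field theory identifies $s=t-t_-$ as a focal point of $i(\Sigma_t)$ along each geodesic of the congruence; hence any length-maximizing past-directed timelike curve from $i(\Sigma_t)$ to $\tilde p$ has length at most $t-t_-$, and $d_{\tilde M}(\tilde p,i(\Sigma_t))\le t-t_-\to 0$. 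Picking maximizers $q_n\in i(\Sigma_{t_n})$ with $t_n\to t_-$ and exploiting compactness of $J^+(\tilde p)\cap J^-(i(\Sigma_0))$ from global hyperbolicity, one extracts a subsequence $q_n\to q_\infty\in\tilde M$ with $d_{\tilde M}(\tilde p,q_\infty)=0$; strong causality together with continuity of the Lorentzian distance then contradicts $\tilde p\notin i(M)$.

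The principal technical obstacle is to justify that the divergence of $\tilde\theta$ observed from within $i(M)$ really produces a focal point of $i(\Sigma_t)$ in $\tilde M$, since the Ricci curvature of $\tilde M$ is uncontrolled outside $i(M)$ and Raychaudhuri's equation cannot be integrated there. The key observation is that the Jacobi fields along the congruence, and hence $\tilde\theta$, up to parameter $s=t-t_-$ are determined entirely by the metric on $i(M)$; smoothness of $\tilde g$ then forces continuous extension of these Jacobi fields to $s=t-t_-$, and the divergence of $\tilde\theta$ from within is equivalent to a vanishing Jacobi field there, yielding a genuine focal point in the extension.
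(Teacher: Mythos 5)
Your proof takes a genuinely different route from the paper's. The paper works with a single length-maximizing causal curve $c$ from $p$ to $\{0\}\times G$, applies the second-variation (index-form) inequality of Ringström's Lemma~18.7 directly, and contradicts the resulting bound $\int\chi^2\,\Ric(c',c')\,ds+\theta(t_n)\le0$ using boundedness of $\Ric$ along $c$ together with $\theta(t_n)\to\infty$. You instead feed the divergence of $\theta$ into a Raychaudhuri/focal-point argument to bound the Lorentzian distance $d_{\tilde M}(\tilde p,i(\Sigma_t))\le t-t_-$ and then try to force a contradiction from $d\to0$. Your identification and resolution of the key obstacle is correct and well done: the $\Sigma_t$-Jacobi fields along the past-directed normal congruence up to parameter $t-t_-$ satisfy an ODE whose coefficients are determined by the metric on $i(M)$ and are continuous on $[0,t-t_-]$ (by smoothness of $\tilde g$), so the Jacobi matrix $A$ and $\dot A$ extend continuously; $\tilde\theta=(\det A)'/\det A\to-\infty$ while $(\det A)'$ stays bounded, so $\det A\to0$ and $s=t-t_-$ really is a focal point in $\tilde M$. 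That is the crux, and it is handled properly.

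The concluding step, however, does not yield a contradiction as written. From $q_n\to q_\infty$, compactness of $J^+(\tilde p)\cap J^-(i(\Sigma_0))$, and continuity of $d$ one obtains a point $q_\infty\in J^+(\tilde p)\cap\partial i(M)$ with $d_{\tilde M}(\tilde p,q_\infty)=0$; but this is perfectly consistent with $\tilde p\notin i(M)$. The degenerating maximizers need not converge in direction, so $q_\infty$ can sit on the null cone of $\tilde p$ rather than equal $\tilde p$, and even if $q_\infty=\tilde p$, then $\tilde p\in\partial i(M)\subset\tilde M\setminus i(M)$ and there is still no contradiction. Strong causality does not close this gap. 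The fix is to choose $\tilde p$ in the \emph{interior} of $\tilde M\setminus i(M)$ (as the paper does for its $p$) and then argue directly: a short future-directed timelike step from $\tilde p$ reaches a point $\tilde p'$ still outside $i(M)$, hence $\tilde p'\in I^-(i(\Sigma_t))$ for all $t$, and the reverse triangle inequality gives $d_{\tilde M}(\tilde p,i(\Sigma_t))\ge d_{\tilde M}(\tilde p,\tilde p')>0$ uniformly in $t$, which is incompatible with your bound $d_{\tilde M}(\tilde p,i(\Sigma_t))\le t-t_-\to0$. This is also a reason the paper's direct index-form estimate is the cleaner route here: it sidesteps the Lorentzian distance function and its limiting behaviour entirely. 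A small additional remark: the claim that each $i(\Sigma_t)$ is Cauchy in $\tilde M$ does not follow from Prop.~\ref{prop_developmentglobhyperbolic}, which concerns $M$; it has to come from the hypothesis on the nature of the extension.
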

\begin{proof}
	Suppose that the opposite holds, \ie there exists a proper globally hyperbolic extension~$(\tilde M,\tilde g,\tilde\mu)$ of the Bianchi~B development~$(M=I\times G,g,\mu)$, $I=(t_-,\infty)$, to the past. Then there exists a point~$p$ in the interior of~$\tilde M\setminus M$.
	As the timeslice~$\{0\}\times G$ is a Cauchy hypersurface by Prop.~\ref{prop_developmentglobhyperbolic}, the set
	\begin{equation}
		J^+(p)\cap J^-(\{0\}\times G)
	\end{equation}
	is compact by~\cite[Lemma~40, p.~423]{oneill_semiriemgeomappl}, where~$J^\pm$ denote the causal future and past. Consequently, there exists a unit speed causal curve
	\begin{equation}
		c:[s_-,s_+]\rightarrow\tilde M
	\end{equation}
	maximising the length between the point~$p$ and the timeslice~$\{0\}\times G$. Denote by~$q$ the endpoint of this curve on~$\{0\}\times G$.

	By construction, the curve also maximises the length between~$q$ and every earlier timeslice~$\{t\}\times G$, $t\in(t_-,0)$ and therefore the causal vector~$c'$ is \ogon\ to every one of these hypersurfaces. We conclude that in~$M$, the curve~$c$ has the form
	\begin{equation}
		c:s\mapsto (t(s),q).
	\end{equation}

	Assume that~$s_0\in(s_-,s_+)$ is the curve parameter such that~$c(s_0)\in
	\partial M\subset\tilde M$ and choose a sequence of times~$s_n\in(s_0,s_+)$ such that~$s_n\searrow s_0$. Set~$t_n$ to be the time satisfying~$c(s_n)\in\{t_n\}\times M$, for every~$n\in\NN$.

	Let now~$E_i$, $i=1,2,3$, be vector fields which are parallelly propagated along~$c$ and such that they form an \onorm\ basis of~$T_{c(0)}(\{0\}\times G)$. Consequently, they are \onorm\ along the whole curve and span the \ogon\ complement to~$c'(s)$, for every $s\in[s_-,s_+]$. Construct a local spacelike hypersurface~$\Sigma_{c(s_-)}$ through the point~$c(s_-)$ which is tangent to this frame, for example by using the exponential map. Consider then a smooth function~$\chi:[s_-,s_+]\rightarrow[0,1]$ with $\chi(s_-)=0$ and $\chi(s)=1$ for all~$s\ge s_0$. According to~\cite[Lemma~18.4]{ringstrom_cauchyproblem}, for every~$i=1,2,3$ and every~$n\in\NN$ there is a piecewise smooth variation~$\boldsymbol x$ of~$c|_{[s_-,s_n]}$, such that the variation vector field of~$\boldsymbol x$ is~$\chi E_i$, \ie there exists a continuous and piecewise smooth curve~$\boldsymbol x:[s_-,s_n]\times({-}\delta,\delta)\rightarrow\tilde M$ such that
	\begin{itemize}
		\item $\boldsymbol x(s,0)=c(s)$ for all $s\in[s_-,s_n]$,
		\item $\boldsymbol x_v(s,0)=\chi(s) E_i$ for all $s\in[s_-,s_n]$,
		\item $\boldsymbol x(s_-,\cdot)$ is contained in~$\Sigma_{c(s_-)}$ and $\boldsymbol x(s_n,\cdot)$ is contained in~$\{t_n\}\times G$.
	\end{itemize}
	Note that the subscript~$v$ denotes derivation with respect to the second component.

	We see that~$\boldsymbol x$ is a variation of~$c$ with fixed endpoint~$\boldsymbol x(s_-,\cdot)=c(s_-)=q$. By assumption, the curve~$c$ maximises the length between~$q$ and the hypersurface~$\{t_n\}\times G$. We can therefore apply~\cite[Lemma~18.7]{ringstrom_cauchyproblem} to find
	\begin{equation}
		0\ge\int_{s_-}^{s_n}\scalprod{R_{c'\,\chi\cdot E_i}\chi\cdot E_i}{c'}ds+\idfundform_{\{t_n\}\times G}[E_i(c(s_n)),E_i(c(s_n))]-\idfundform_\Sigma[0,0],
	\end{equation}
	where we have used that~$E_i'=0$ and the properties of the function~$\chi$. Here~$\idfundform_{\{t_n\}\times G}$ and~$\idfundform_\Sigma$ denote the \fundform\ of~$\{t_n\}\times G$ and~$\Sigma$, respectively.
	Summing over~$i$, we obtain the inequality
	\begin{equation}
	\label{eqn_estimatepastinextendibility}
		\int_{s_-}^{s_n}\chi^2(s)Ric(c'(s),c'(s))ds+\theta(t_n)\le0,
	\end{equation}
	where, as in the construction of a Bianchi~B development, $\theta(t)$ denotes the mean \curv\ of the timeslice~$\{t\}\times M$.

	We are interested in estimate~\eqref{eqn_estimatepastinextendibility} in the limit~$n\rightarrow\infty$.
	Due to the assumption on extendibility, the Ricci \curv\ is bounded along~$c$, consequently
	\begin{equation}
		\int_{s_-}^{s_+}\absval{Ric(c'(s),c'(s))ds}<\infty.
	\end{equation}
	However, the mean \curv~$\theta(t=t_n)$ diverges to~$\infty$ as~$n\rightarrow\infty$ by construction of the development, which is a contradiction.
\end{proof}
We can conclude from the previous proof that in a Bianchi~B development with~$\mu>0$, $\gamma=0$ and~$\theta$ covering~$(\theta_\infty,\infty)$, a causal curve connecting the hypersurfaces~$\{t_1\}\times G$ and~$\{t_2\}\times G$ has length at most~$\absval{t_1-t_2}$. In particular, causal curves have finite length towards the past. This stands in contrast to the \desitter\ spacetime, where causal curves have infinite length towards the past.

With Prop.~\ref{prop_cosmconstpastinextend} at hand, we can now prove that
in the remaining case~$\mu>0$ with $\gamma=0$ but~$q>{-}1$,
the Bianchi~B development is isometric to the \mghd.
\begin{coro}
\label{coro_mghdforgammazero}
	Consider a Bianchi~B development
	of initial data as in Def.~\ref{defi_initialdatabianchib}.
	If~$\mu>0$, $\gamma=0$, and the development does not correspond to a development of a hypersurface in a quotient of the \desitter\ spacetime, it
	is isometric to the \mghd\ of the initial data.
\end{coro}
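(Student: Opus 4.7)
The plan is to reduce to the case $q>-1$ and combine past inextendibility from Prop.~\ref{prop_cosmconstpastinextend} with a future inextendibility argument relying only on future geodesic completeness. The construction in Subsection~\ref{constr_expnormtosolutionbasic} splits the $\gamma=0$, $\mu>0$ situation into exactly two alternatives: either $q\equiv -1$ along the orbit, in which case the development was shown to correspond to a hypersurface in a quotient of the \desitter\ spacetime and is thus excluded by hypothesis; or $q>-1$ at every time, in which case the same discussion yields that $\theta$ is strictly monotone, diverges to $+\infty$ as $t\to t_-$, and converges to some $\theta_\infty\ge 0$ as $t\to\infty$. In particular the mean curvature of the slices $\{t\}\times G$ covers $(\theta_\infty,\infty)$, and past inextendibility as a globally hyperbolic spacetime then follows directly from Prop.~\ref{prop_cosmconstpastinextend}.

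The main obstacle is future inextendibility. For $\gamma=0$, Einstein's equations with $p=-\mu$ force $R_{\alpha\beta}=\mu g_{\alpha\beta}$, so that $\Ric(v,v)=-\mu<0$ on any unit timelike vector $v$; the timelike convergence condition thus fails, and at the same time $\theta$ does not diverge towards the future. Consequently Prop.~\ref{prop_mghdnecessaryproperties} is not applicable, and the curvature blow-up strategy used in Cor.~\ref{coro_mghdforgammapositive} also breaks down here.

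To circumvent this I plan to use only the future geodesic completeness provided by Lemma~\ref{lemm_bianchibdevelopmentproperties}. Suppose for contradiction that the Bianchi~B development $(M,g,\mu)$ admits a strictly larger globally hyperbolic development through an isometric open embedding $i:M\hookrightarrow\tilde M$ such that $i(\{0\}\times G)$ is a Cauchy hypersurface of $\tilde M$. By the past inextendibility already obtained, any point of $\tilde M\setminus i(M)$ must lie to the future of this hypersurface. Standard existence of maximising timelike curves in the globally hyperbolic spacetime $\tilde M$ then produces, for any such $p$, a future-directed timelike geodesic $c$ starting at some point of $i(\{0\}\times G)$, ending at $p$, and defined on a bounded affine interval. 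Since $i(M)$ is open and omits $p$, the geodesic $c$ must leave $i(M)$ at some finite affine parameter, so its pullback through $i^{-1}$ is a future-directed timelike geodesic in $M$ with bounded maximal domain, contradicting Lemma~\ref{lemm_bianchibdevelopmentproperties}. This yields future inextendibility, and combined with the past inextendibility identifies $(M,g,\mu)$ with the \mghd\ of the initial data.
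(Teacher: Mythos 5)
Your proof follows the same route as the paper's: reduce to $q>-1$ by excluding the de~Sitter case, invoke Prop.~\ref{prop_cosmconstpastinextend} for past inextendibility once the mean curvature covers $(\theta_\infty,\infty)$, and invoke Lemma~\ref{lemm_bianchibdevelopmentproperties} for future geodesic completeness. The only difference is that you spell out the final step, which the paper leaves at ``this concludes the proof'': you correctly observe that the timelike convergence condition fails when $\gamma=0$ so Prop.~\ref{prop_mghdnecessaryproperties} is unavailable, and you supply the explicit argument that a timelike maximising geodesic from the Cauchy slice to any new future point would have a finite affine-length pullback to $M$, contradicting future completeness (by uniqueness of geodesics, such a pullback has no extension within $M$). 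This is a faithful elaboration of the paper's argument rather than a different one.
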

\begin{proof}
	Recall the construction we carried out in Subsections~\ref{constr_initialdatatoexpnorm}--\ref{constr_solutionbasictomghd}. We argued in the second part of the construction that there are two different cases which can occur: $q\equiv{-}1$ and~$q>{-}1$. We showed above that the first case results into a development which is part of the \desitter\ spacetime, and is hence excluded by assumption.
	As $q>{-}1$, the mean \curv~$\theta(t)$ covers the interval~$(\theta_\infty,\infty)$, see the end of Subsection~\ref{constr_expnormtosolutionbasic}. As a result, we can apply Prop.~\ref{prop_cosmconstpastinextend} to see that the Bianchi~B development cannot be extended to the past as a globally hyperbolic development.
	We further know from Lemma~\ref{lemm_bianchibdevelopmentproperties} that every timelike geodesic is future complete.
	This concludes the proof.
\end{proof}
We collect the statements we made about Bianchi~B developments in the different cases in the following proposition.
\begin{prop}
\label{prop_mghdincompletedirections}
	Given \ogon\ perfect fluid Bianchi class~B initial data as in Def.~\ref{defi_initialdatabianchib}, consider the \mghd~$(\tilde M,\tilde g)$ to this data. In case the universal covering of this spacetime is not a part of the Minkowski or the \desitter\ spacetime, it is isometric to every Bianchi~B development~$(M=I\times G,g,\mu)$, $I=(t_-,t_+)$, as in Def.~\ref{defi_mghd}. The mean \curv~$\theta$ of $\{t\}\times G$ in~$M$ is positive, monotone, and tends to~$\infty$ at~$t_-$,
	where~${-}\infty<t_-<0$ and~$t_+=\infty$.

	Consider a causal geodesic in~$\tilde M$. The behaviour of any geometric quantity in the incomplete direction of this geodesic is the same as the behaviour of this geometric quantity on~$\{t\}\times G$ as~$t\rightarrow t_-$, while the complete direction corresponds to~$t\rightarrow t_+$.
\end{prop}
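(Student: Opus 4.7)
The plan is to assemble the results already established in Subsections~\ref{constr_initialdatatoexpnorm}--\ref{constr_propertiesdevelopmentinitialdata} into the single statement claimed. First, I would invoke Lemma~\ref{lemm_excludeminkowski} to observe that the exclusion of Minkowski (on the universal cover) guarantees that $\theta(0)=\tr_{\idmetric}\idfundform\neq 0$, which is precisely the hypothesis under which the construction of a Bianchi~B development in Subsections~\ref{constr_initialdatatoexpnorm}--\ref{constr_solutionbasictomghd} produces a spacetime of the form $(I\times G,g,\mu)$ with $I=(t_-,\infty)$. The isometry of this development to the \mghd\ then follows by a direct case split: in the vacuum case $\mu=0$ or in the case $\mu>0,\gamma>0$ apply Corollary~\ref{coro_mghdforgammapositive}, while the remaining case $\mu>0,\gamma=0$ combined with the non-\desitter\ assumption falls under Corollary~\ref{coro_mghdforgammazero}.

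Next, I would extract the properties of the mean \curv~$\theta$ from the analysis in Subsection~\ref{constr_expnormtosolutionbasic}. The non-\desitter\ assumption rules out the degenerate case $q\equiv{-}1$, so $q>{-}1$ holds throughout, and the evolution \equ\ $\theta'={-}(1+q)\theta$ together with the choice of sign $\theta(0)>0$ forces $\theta$ to remain positive, to decay monotonically in~$\tau$, to satisfy $\theta\to\infty$ as $\tau\to{-}\infty$, and to converge to some $\theta_\infty\ge0$ as $\tau\to\infty$. Translating via $dt/d\tau=3/\theta$ yields $t_->{-}\infty$ and $t_+=\infty$, as claimed.

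For the final assertion about causal geodesics, I would first use Lemma~\ref{lemm_bianchibdevelopmentproperties}: in every case covered here, inextendible causal geodesics are future complete, corresponding unambiguously to $t\to\infty$. Past incompleteness (the `incomplete direction') in the case $\mu=0$ or $\mu>0,\gamma>0$ also comes directly from that lemma; in the remaining case $\mu>0,\gamma=0$ past incompleteness is not in Lemma~\ref{lemm_bianchibdevelopmentproperties}, but Proposition~\ref{prop_cosmconstpastinextend} and the divergence of $\theta$ at $t_-$ show that the spacetime cannot be globally hyperbolically extended, and the same curvature-blow-up mechanism used in the proof of Corollary~\ref{coro_mghdforgammapositive} applies to force past incompleteness of causal geodesics. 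Finally, the equivalence of `behaviour along the incomplete direction' and `behaviour as $t\to t_-$' is immediate from the spatial homogeneity built into $g={-}dt^2+{}^t\overline g$: every geometric scalar invariant under isometries is constant on each slice $\{t\}\times G$, hence depends only on~$t$.

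The main obstacle, as far as there is one, is the careful bookkeeping in the case $\mu>0,\gamma=0$: one must combine the future completeness from Lemma~\ref{lemm_bianchibdevelopmentproperties} with past globally hyperbolic inextendibility from Proposition~\ref{prop_cosmconstpastinextend} and with curvature blow-up (which will be established in expansion-normalised variables and translated back in Section~\ref{section_proofmainthms}) in order to argue that causal geodesics are genuinely past incomplete rather than merely that the spacetime is past-inextendible as a globally hyperbolic manifold. Everything else is a matter of citing the construction and the corollaries.
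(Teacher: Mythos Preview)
Your plan follows the paper's proof closely and is correct up to the last paragraph, where there is a genuine gap in the $\mu>0$, $\gamma=0$ case. You propose to obtain past incompleteness of causal geodesics via ``the same curvature-blow-up mechanism used in the proof of Corollary~\ref{coro_mghdforgammapositive}''. This does not work, for two reasons. First, that corollary uses Lemma~\ref{lemm_curvblowupricci}, which explicitly states that $R_{\alpha\beta}R^{\alpha\beta}$ stays \emph{bounded} when $\gamma=0$; and Kretschmann blow-up also fails for the exceptional solutions listed in Theorem~\ref{theo_curvblowupexpansionnormalisedcosmconst}. Second, even where curvature does blow up, this yields $C^2$-inextendibility, not geodesic incompleteness---these are distinct notions, and in the proof of Corollary~\ref{coro_mghdforgammapositive} the blow-up is used only for the former, while past incompleteness there comes directly from Lemma~\ref{lemm_bianchibdevelopmentproperties}. (There is also a circularity risk: the results in Section~\ref{section_proofmainthms} that you appeal to are themselves proved \emph{using} Proposition~\ref{prop_mghdincompletedirections}.)

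The paper's fix is simpler and does not use curvature at all. From the remark immediately after Proposition~\ref{prop_cosmconstpastinextend}: since the metric has the form $-dt^2+{}^t\overline g$, any causal curve connecting $\{t_1\}\times G$ and $\{t_2\}\times G$ has length at most $\absval{t_1-t_2}$. As $t_->{-}\infty$ is finite (which you already established from the $q>{-}1$ analysis), every past-directed causal curve has finite length, hence inextendible causal geodesics are past incomplete. This, combined with the future completeness from Lemma~\ref{lemm_bianchibdevelopmentproperties}, is what the paper invokes when it writes that Proposition~\ref{prop_cosmconstpastinextend} ``in combination with Lemma~\ref{lemm_bianchibdevelopmentproperties} implies that the geodesic in~$M$ is future complete and past incomplete''.
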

\begin{proof}
	If we exclude spacetimes isometric to (quotients of)
	the Minkowski spacetime, we can carry out the construction from Subsections~\ref{constr_initialdatatoexpnorm}--\ref{constr_solutionbasictomghd}, see also Lemma~\ref{lemm_excludeminkowski}. We have seen in Corollaries~\ref{coro_mghdforgammapositive} and~\ref{coro_mghdforgammazero} that the resulting Bianchi~B development is isometric to the \mghd\ of the data, if in case~$\gamma=0$ we exclude \desitter\ spacetime.

	The properties of $t_-,t_+$ and the convergence behaviour of~$\theta$ are a consequence of the construction, see also at the end of Subsection~\ref{constr_expnormtosolutionbasic}. The one case with~$t_-={-}\infty$ mentioned there corresponds to~$\gamma=0$ and~$\Omega=1$, or in other words~$q\equiv{-}1$, which implies \desitter\ spacetime and is therefore excluded by assumption.

	Consider an inextendible causal geodesic in~$\tilde M$. Using the isometry between~$\tilde M$ and the Bianchi~B development~$M$, we obtain an inextendible causal geodesic in~$M$. All geometric quantities are invariant under isometries, and in addition constant on every timeslice~$\{t\}\times G$ due to the metric being invariant under the action of the \liegr~$G$.

	As in the proof of Corollary~\ref{coro_mghdforgammazero}, the case~$q={-}1$ is excluded by assumption, and~$\theta$ covers the interval~$(\theta_\infty,\infty)$. Consequently, Prop.~\ref{prop_cosmconstpastinextend} applies and in combination with
	Lemma~\ref{lemm_bianchibdevelopmentproperties} implies that the geodesic in~$M$ is future complete and past incomplete. The incomplete direction of an inextendible geodesic in both~$\tilde M$ and~$M$ therefore corresponds to~$t\rightarrow t_-$, and the complete direction to~$t\rightarrow t_+$. This concludes the proof.
\end{proof}

\subsection{Solutions with additional symmetry}

\label{constr_highersymmetrysolutions}

In the results of this paper, several types of geometric initial data sets as well as invariant subsets are of interest which have additional properties. In particular, those with local rotational symmetry and those describing plane wave equilibrium solutions appear frequently.
We have defined both notions twice, first in the setting of geometric initial data and then in the setting of expansion-normalised variables. In this subsection, we show that these definitions are consistent provided we transform between geometric initial data sets and expansion-normalised variables the way we described in Subsections~\ref{constr_initialdatatoexpnorm}--\ref{constr_solutionbasictomghd}.

\smallskip

The notion of locally rotationally symmetric geometric initial data was given in Def.~\ref{defi_lrsinitialdata}, where such initial data is defined via the properties of a specific basis of the associated \liealg.
Given more generally some \onorm\ basis~$e_1,e_2,e_3$ of the \liealg~$\ggg$ such that
\begin{equation}
	\liebr{e_2}{e_1}=0=\liebr{e_2}{e_3},
	\qquad
	\liebr{e_1}{e_3}\parallel e_2,
	\qquad
	\idfundform=\diag(\idfundform_{11},\idfundform_{22},\idfundform_{33}=\idfundform_{11}),
\end{equation}
but not necessarily with~$e_2,e_3$ spanning a particular subalgebra, we find that these properties are preserved under a rotation in the~$e_1e_3$-plane. The term local rotational symmetry is thereful meaningful and should be compared to the definition of local rotational symmetry in Bianchi class~A spacetimes in~\cite[Def.~19.16]{ringstrom_cauchyproblem} and in more general spacetimes in~\cite{stewartellis_solseinsteinequsfluidlrs}.

Let us now discuss Definition~\ref{defi_lrsinitialdata} in connection with the different Bianchi types.
For a \liegr\ of type~V or~IV, the form of the structure constants from Table~\ref{table_bianchibclassification} reveals
\begin{equation}
	\liebr{e_1}{e_2}=a_1e_2+\nu_3e_3,\qquad
	\liebr{e_1}{e_3}=a_1e_3,\qquad
	\liebr{e_2}{e_3}=0,
\end{equation}
for a basis chosen as in the classification of \liegr s in Subsection~\ref{subsect_appendixbianchiclassification}. Note that the basis elements~$e_2,e_3$ span~$\ggg_2$, see Remark~\ref{rema_frameforggg2rotationtodiagonalise}. There is no rotation in the~$e_2e_3$-plane such that the commutators of the rotated basis have the form requested for local rotational symmetry. Consequently, no initial data set of Bianchi type~V or~IV can have local rotational symmetry.

Let us therefore assume that the \liegr\ is of type~VI or~VII. When comparing the properties of local rotational symmetry from Def.~\ref{defi_lrsinitialdata} with the decomposition of the structure constants in~\eqref{eqns_commutators}, we see that the vanishing of~$\liebr{e_2}{e_1}$ and~$\liebr{e_2}{e_3}$ is equivalent to~$a_1={-}n_{32}$ and~$n_{33}=0$, while~$\liebr{e_1}{e_3}\parallel e_2$ holds \iif~$a_1=n_{23}$.
In particular, we can consider the linear map~$A_2=\ad_{v_1}|_{\ggg_2}$, $v_1\in\ggg\setminus\ggg_2$, from Lemma~\ref{lemm_invardefibparaminv} and conclude that local rotational symmetry implies~$\det A_2=0$. The image of either~$\ad_{v_1}$ or~$A_2$, which coincides with~$\liebr{\ggg}{\ggg}$, is one-dimensional and spanned by~$e_2$. After a rotation in the~$e_2e_3$ plane, we can assume that~$n$ is diagonal, but lose the properties~$a_1=n_{23}={-}n_{23}$ and~$n_{33}=0$, see Remark~\ref{rema_frameforggg2rotationtodiagonalise}. A vanishing determinant of~$A_2$ implies that~$\binvparam={-}1$, hence the \liegr\ can only be of type~VI$_{{-}1}$.

Alternatively, let~$e_1,\tilde e_2,\tilde e_3$ a basis of the \liealg~$\ggg$ with~$\tilde e_2,\tilde e_3$ spanning~$\ggg_2$ and such that~$a=(a_1,0,0)$, $n=\diag(0,\nu_2,\nu_3)$, \ie a basis as in the classification of \liegr s in Subsection~\ref{subsect_appendixbianchiclassification}. Let us further assume that~$\binvparam={-}1$, \ie the \liegr\ is of type~VI$_{{-}1}$. In this case, we find from the matrix representation of~$A_2$ that
\begin{equation}
	\liebr{\ggg}{\ggg}=\operatorname{span}(\tilde e_2\pm\sqrt{\absval{\frac{\nu_2}{\nu_3}}}\tilde e_3),
\end{equation}
where the sign depends on the sign of~$a_1$. We further find that if we set~$e_2$ and~$e_3$ to be unit vectors in~$\ggg_2$, one spanning~$\liebr{\ggg}{\ggg}$ and one \ogon\ to it, and by a change of sign in~$e_1$ ensure that~$a_1$ is negative, then we recover that the commutator of this new basis satisfy the properties
\begin{equation}
	\liebr{e_2}{e_1}=0=\liebr{e_2}{e_3},
	\qquad
	\liebr{e_1}{e_3}\parallel e_2
\end{equation}
from the definition of local rotational symmetry, Def.~\ref{defi_lrsinitialdata}. Whether geometric initial with an associated \liealg\ of type~VI$_{{-}1}$ is locally rotationally symmetric then depends solely on the two-tensor~$\idfundform$.

Given geometric initial data as in Def.~\ref{defi_lrsinitialdata}, we can deduce from the commutators the corresponding basic variables via expressions~\eqref{eqns_prebasic_definitionpartone},~\eqref{eqns_prebasic_definitionparttwo},~\eqref{eqns_basic_definitionpartone} and~\eqref{eqns_basic_definitionparttwo} and find that local rotational symmetry implies that~$3\sigma_+^2=\tilde\sigma$ and~$\sigma_+n_+=\delta$. Further, the \liegr\ has to be of class~A or of type~VI$_{{-}1}$ due to the above discussion. The expansion-normalised variables result from applying~\eqref{eqn_expansionnormalisation} and~\eqref{eqn_definitionomega}, and one recovers the definition of locally rotationally solutions in expansion-normalised variables, Def.~\ref{defi_lrsexpansionnorm}, as Bianchi class~A corresponds to~$\tilde A=0$.

\smallskip

For the notion of plane wave equilibrium initial data, we turn to Def.~\ref{defi_planewaveinitialdata}.
In terms of structure constants for the four-dimensional spacetime, the property~$\gamma_{1A}^B+\gamma_{1B}^A=-2\idfundform_{AB}$ is equivalent to
\begin{equation}
\label{eqn_defiplanewavestructconst}
  \gamma_{12}^2=\gamma_{02}^2,\qquad \gamma_{13}^3=\gamma_{03}^3,\qquad\gamma_{12}^3+\gamma_{13}^2=\gamma_{02}^3+\gamma_{03}^2.
\end{equation}
This should be compared with the definition given in~\cite{hervikkundurilucietti_homogplanewavestability}, where the special spacelike direction is set to be~$e_1$ instead of~$e_n$. Applying an appropriate permutation of the basis elements to fit with our choice of frame, the definition given there reads
\begin{equation}
  \gamma_{1A}^B=\gamma_{0A}^B.
\end{equation}
In terms of geometric initial data, the \rhs\ is not defined, but its symmetrisation can be replaced by the expression~${-}\idfundform_{AB}$, see equations~\eqref{eqns_commutators} and Remark~\ref{rema_geometricmeaningthetasigmaAB}. This tensor is symmetric by definition, and also replacing the \lhs\ by its symmetrisation one obtains Def.~\ref{defi_planewaveinitialdata}.

Translating structure constants into basic variables via expressions~\eqref{eqns_prebasic_definitionpartone},~\eqref{eqns_prebasic_definitionparttwo},~\eqref{eqns_basic_definitionpartone}, and~\eqref{eqns_basic_definitionparttwo}, and using the constraint equation~\eqref{eqn_prebasic_constraint}, one finds that the relations~\eqref{eqn_defiplanewavestructconst} carry information equivalent to
\begin{equation}
  {-}\sigma_+(\theta+\sigma_+)=\tilde\sigma=\tilde n,\qquad \delta=0,\qquad \mu=0.
\end{equation}
Via~\eqref{eqn_expansionnormalisation} and~\eqref{eqn_definitionomega} one obtains that this is equivalent to expansion-normalised variables satisfying~$Z=0$, see~\eqref{eqn_functionzzeros}. This characterises the set~$\planewave {\bparamk}$, Def.~\ref{defi_planewaveexpansionnorm}, together with the point Taub~1, and
is equivalent to dropping the condition $\Sigma_+>{-}1$ in \equ ~\eqref{eqn_defiplanewave}.

\begin{rema}
\label{rema_taubpointsasinitialdata}
	In terms of initial data, the Kasner parabola~$\kasnerparabola$ corresponds to vacuum Bianchi type~I data. This can easily be deduced from Table~\ref{table_bianchiasubsets} and using the condition for vacuum, $\Omega=0$, in equation~\eqref{eqn_omegageneral}.

	The two special Taub points have the following characterisation:
	\begin{itemize}
		\item The point Taub~1 corresponds to initial data of Bianchi type~I which is of plane wave equilibrium type.
		\item The point Taub~2 corresponds to initial data of Bianchi type~I which is locally rotationally symmetric  and additionally
		the unique eigenvalue of~$\idfundform$ in the rotation plane is smaller than the unique eigenvalue along the rotation axis. In a basis satisfying the conditions for local rotation symmetry from Definition~\ref{defi_lrsinitialdata}, this means that
		${\idfundform}_{11}<{\idfundform}_{22}$ holds.
	\end{itemize}
	This follows from the consistency check we carried out before. Geometric initial data being locally rotationally symmetric is equivalent to the corresponding expansion-normalised variables satisfying Def.~\ref{defi_lrsexpansionnorm}, and geometric initial data of plane wave equilibrium type corresponds to Def.~\ref{defi_planewaveexpansionnorm}, including the possibility that $\Sigma_+={-}1$.
	Intersection with the Kasner parabola~$\kasnerparabola$, which consist of all Bianchi type~I vacuum spacetimes, yields the characterisation: Taub~1 is the unique point on the Kasner parabola which satisfies the plane wave equilibrium point relations. Taub~2 is one of two intersection points between the Kasner parabola~$\kasnerparabola$ and the set of LRS solutions, the one with~$\Sigma_+>0$. This last property translates into $2{\idfundform}_{11}-{\idfundform}_{22}-{\idfundform}_{33}<0$, and as ${\idfundform}_{11}={\idfundform}_{33}$ for LRS geometric initial data, the statement follows.
\end{rema}

\section{Proof of the main theorems}
\label{section_proofmainthms}

In this final section, we give the proofs of the main theorems stated in the introduction. Four of them, Thm~\ref{theo_fullmeasurelimitset}, Thm~\ref{theo_curvblowupexpansionnormalisedmatter}, Thm~\ref{theo_curvblowupexpansionnormalisedvacuum}, and Thm~\ref{theo_curvblowupexpansionnormalisedcosmconst}, are stated in the setting of expansion-normalised variables, and these are proven first. The remaining statements, Thm~\ref{theo_curvblowupinitialdatamatter} and Thm~\ref{theo_curvblowupinitialdatavacuum}, are translated versions of Thm~\ref{theo_curvblowupexpansionnormalisedmatter} and Thm~\ref{theo_curvblowupexpansionnormalisedvacuum}, giving the results in terms of geometric initial data and the corresponding \mghd.

In the previous sections, we have determined the asymptotic behaviour of solutions to the evolution \equ s~\eqref{eqns_evolutionbianchib}--\eqref{eqn_evolutionomega}. We have found possible~$\alpha$-limit sets and determined in detail the solutions which converge to the Kasner parabola~$\kasnerparabola$ or the plane wave equilibrium points~$\planewave {\bparamk}$ as~$\tau\rightarrow{-}\infty$.
Thm~\ref{theo_curvblowupexpansionnormalisedmatter}, Thm~\ref{theo_curvblowupexpansionnormalisedvacuum}, and Thm~\ref{theo_curvblowupexpansionnormalisedcosmconst} state that apart from a short list of exceptional solutions, either the Kretschmann scalar $R_{\alpha\beta\gamma\delta}R^{\alpha\beta\gamma\delta}$ or the contraction of the Ricci tensor with itself $R_{\alpha\beta}R^{\alpha\beta}$ or both become unbounded along solutions as~$\tau\rightarrow{-}\infty$. This implies Strong Cosmic Censorship in the~$C^2$-sense.
We start with a discussion of these two geometric invariants and their form and asymptotic behaviour in terms of expansion-normalised variables, before we are then in a position to give the proof of these three theorems.

We then continue this section with a proof of Thm~\ref{theo_fullmeasurelimitset}, where we show that apart from a `small` subset, all solutions to the evolution \equ s~\eqref{eqns_evolutionbianchib}--\eqref{eqn_evolutionomega} converge to a Kasner point to the right of Taub~2, as $\tau\rightarrow{-}\infty$. For this proof as well, we heavily rely on the results on asymptotic behaviour of solutions which we obtained in the previous sections.

At the end of this section, we conclude with the proofs of the main statements in the initial data perspective, Thm~\ref{theo_curvblowupinitialdatamatter} and Thm~\ref{theo_curvblowupinitialdatavacuum}. The equivalent statements in expansion-normalised variables are given in Thm~\ref{theo_curvblowupexpansionnormalisedmatter} and Thm~\ref{theo_curvblowupexpansionnormalisedvacuum}. Using the transformation between this set of variables and the \mghd\ to given geometric initial data which we constructed in Subsections~\ref{constr_initialdatatoexpnorm}--\ref{constr_solutionbasictomghd}, we re-translate the statements back to the setting of geometric initial data.
Applying the proof to~$\mu>0$, $\gamma=0$, \ie the stress-energy tensor of a positive cosmological constant in vacuum, even justifies the statement given in Remark~\ref{rema_curvblowupcosmconst} as a re-translated version of Thm~\ref{theo_curvblowupexpansionnormalisedcosmconst}.

\smallskip

We start with computing several curvature expressions in terms of expansion-normalised variables.
The Weyl tensor of a four-dimensional \mf\ with metric~$g$ is given by
\begin{equation}
	C_{\alpha\beta\gamma\delta}=R_{\alpha\beta\gamma\delta}-(g_{\alpha[\gamma}R_{\delta]\beta}-g_{\beta[\gamma}R_{\delta]\alpha})+\frac13Sg_{\alpha[\gamma}g_{\delta]\beta},
\end{equation}
see~\cite[Eq.~(3.2.28)]{wald_generalrelativity}.
For a spacetime satisfying Einstein's \equ ~\eqref{eqn_einsteineqn} with stress-energy tensor of a perfect fluid~\eqref{eqn_stressenergyperfectfluid} with linear \equ\ of state~\eqref{eqn_lineareqnofstate}, one computes that the contraction of the Ricci tensor with itself is
\begin{equation}
\label{eqn_contractionricciricci}
	R_{\alpha\beta}R^{\alpha\beta}=\mu^2+3p^2=(1+3(\gamma-1)^2)\mu^2,
\end{equation}
and finds the Kretschmann scalar
\begin{align}
R_{\alpha\beta\gamma\delta}R^{\alpha\beta\gamma\delta}
	={}&C_{\alpha\beta\gamma\delta}C^{\alpha\beta\gamma\delta}+2R_{\alpha\beta}R^{\alpha\beta}-\frac13S^2\\
	={}&C_{\alpha\beta\gamma\delta}C^{\alpha\beta\gamma\delta}+\frac13(4+(3\gamma-2)^2)\mu^2.
\end{align}
The relation between the Weyl tensor and the expansion normalised variables $(\Sigma_+,\tilde\Sigma,\Delta,\tilde A,N_+)$ has been computed in~\cite[App.~B]{hewittwainwright_dynamicalsystemsapproachbianchiorthogonalB}, referring to~\cite{ellismaccallum_classofhomogcosmmodels}:
\begin{equation}
\label{eqn_weyltensorexpansionnormalised}
	C_{\alpha\beta\gamma\delta}C^{\alpha\beta\gamma\delta}=\theta^4(\tilde E_{AB}\tilde E^{AB}+\frac23E_+^2)-\theta^4(\tilde H_{AB}\tilde H^{AB}+\frac23H_+^2),
\end{equation}
where the electrical components are
\begin{align}
	\tilde E_{AB}\tilde E^{AB}={}&\frac2{27}[\tilde\Sigma(2\Sigma_+-1)^2+4(\tilde A\Sigma_++\tilde N(\tilde A+3\tilde N-2\tilde\Sigma))\\
		& +4\Delta^2-4N_+\Delta(2\Sigma_+-1)+4\bparamk\tilde A\tilde N],\\
	E_+={}&\frac13(\tilde \Sigma -\Sigma_+(1+\Sigma_+)-2\tilde N),
\end{align}
and the magnetic ones are
\begin{align}
	\tilde H_{AB}\tilde H^{AB}={}&\frac2{27}[9\Sigma_+^2\tilde N+6\tilde\Sigma\tilde N+\tilde\Sigma\tilde A+6\Delta^2+12\Sigma_+N_+\Delta+4\bparamk\tilde A\tilde\Sigma],\\
	H_+={}&{-}\Delta.
\end{align}
\begin{lemm}
\label{lemm_curvblowupricci}
	Consider a solution to \equs ~\eqref{eqns_evolutionbianchib}--\eqref{eqn_evolutionomega}. If $\Omega>0$ and $\gamma>0$, then
	\begin{equation}
		\lim_{\tau\rightarrow{-}\infty}R_{\alpha\beta}R^{\alpha\beta}=\infty.
	\end{equation}
	If $\Omega>0$, $\gamma=0$, or if $\Omega=0$, then $R_{\alpha\beta}R^{\alpha\beta}$ remains bounded as $\tau\rightarrow{-}\infty$.
\end{lemm}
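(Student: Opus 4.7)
\medskip

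The plan is to reduce the contraction $R_{\alpha\beta}R^{\alpha\beta}$ to an elementary function of $\mu$ alone and then to track $\mu$ along the orbit in the dimensionless time $\tau$. The identity $R_{\alpha\beta}R^{\alpha\beta}=(1+3(\gamma-1)^2)\mu^2$ was already established in equation~\eqref{eqn_contractionricciricci}, and the coefficient $1+3(\gamma-1)^2$ is a bounded constant for $\gamma\in[0,2]$. Hence the whole statement hinges on the asymptotic behaviour of~$\mu$ as $\tau\to{-}\infty$.

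First, I would express~$\mu$ in terms of the expansion-normalised variables via the definition~\eqref{eqn_definitionomega}, which gives $\mu=\Omega\theta^2/3$, with $\theta$ the mean \curv\ (rate of expansion) introduced in the second part of the construction in Subsection~\ref{constr_expnormtosolutionbasic}. Combining the evolution \equ\ of~$\Omega$, \equ~\eqref{eqn_evolutionomega}, with \equ~\eqref{eqn_evolutiontheta} for~$\theta$, one gets
\begin{equation}
	(\Omega\theta^2)'=(2q-(3\gamma-2))\Omega\theta^2-2(1+q)\Omega\theta^2={-}3\gamma\,\Omega\theta^2,
\end{equation}
so $\mu'={-}3\gamma\mu$ and therefore $\mu(\tau)=\mu(\tau_0)e^{-3\gamma(\tau-\tau_0)}$ for any reference time~$\tau_0$. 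This is the decisive formula.

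From this explicit form the three cases are immediate. If $\Omega>0$ and $\gamma>0$, then the invariance of the set $\Omega>0$ under the evolution, together with the positivity of~$\theta$ ensured by the construction in Subsection~\ref{constr_expnormtosolutionbasic}, gives $\mu(\tau_0)>0$, so $\mu(\tau)\to\infty$ as $\tau\to{-}\infty$ and consequently $R_{\alpha\beta}R^{\alpha\beta}\to\infty$. If $\Omega>0$ and $\gamma=0$, then $\mu$ is constant along the orbit and hence $R_{\alpha\beta}R^{\alpha\beta}$ is constant, in particular bounded. Finally, if $\Omega=0$, then the set $\Omega=0$ being invariant under the evolution yields $\mu\equiv0$ along the orbit, so $R_{\alpha\beta}R^{\alpha\beta}\equiv0$.

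There is no real obstacle here; the only subtlety worth spelling out is the justification that $\mu=\Omega\theta^2/3$ makes sense along the whole orbit, which requires invoking Lemma~\ref{lemm_excludethetazerobasic} (or equivalently the discussion in Subsection~\ref{constr_expnormtosolutionbasic}) to guarantee $\theta>0$ for all~$\tau$ and hence the well-posedness of the translation between the basic and the expansion-normalised variables. Everything else reduces to the one-line differential equation $\mu'={-}3\gamma\mu$.
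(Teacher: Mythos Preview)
Your proof is correct and essentially identical to the paper's. The paper computes the derivative of $\Omega^2\theta^4=9\mu^2$ rather than of $\Omega\theta^2=3\mu$, arriving at $\mu^2(\tau)=\mu^2(\tau_0)e^{-6\gamma(\tau-\tau_0)}$ instead of your $\mu(\tau)=\mu(\tau_0)e^{-3\gamma(\tau-\tau_0)}$, but this is merely a cosmetic difference; the key observation that the $q$-dependence cancels in the combined evolution of $\Omega$ and $\theta$ is the same.
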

The central idea of proof has already been given in~\cite{ringstrom_asymptbianchiaspacetimes}, where the Bianchi~A case was discussed. It only relies upon \equ ~\eqref{eqn_contractionricciricci}, which follows immediately from the assumption on the stress-energy tensor and is independent of the Bianchi class of the \liegr.
Note that only the evolution equations for~$\theta$ and~$\Omega$ come into play. In particular, this statement can be obtained without knowledge on the detailed asymptotic properties of the individual variables. 
\begin{proof}
	The contraction of the Ricci tensor with itself is given by~\eqref{eqn_contractionricciricci}.
	The density~$\mu$ satisfies \equ ~\eqref{eqn_definitionomega}, and the evolution equations~\eqref{eqn_evolutiontheta} and~\eqref{eqn_evolutionomega} for~$\theta$ and~$\Omega$ yield
	\begin{align}
		9\mu^2(\tau)={}&\Omega^2\theta^4(\tau)\\
			={}&\Omega(\tau_0)^2\theta(\tau_0)^4\exp(\int_{\tau_0}^\tau -4-4q +4q -2(3\gamma-2)ds)\\
			={}&\Omega(\tau_0)^2\theta(\tau_0)^4\exp({-}6\gamma(\tau-\tau_0)).
	\end{align}
	As $\Omega>0$ is an invariant set by \equ ~\eqref{eqn_evolutionomega}, and $\theta\not=0$ by construction of the expansion-normalised coordinates in Subsections~\ref{constr_initialdatatoexpnorm}--\ref{constr_solutionbasictomghd}, the statement follows.
\end{proof}
\begin{lemm}
\label{lemm_curvblowupvacuum}
	Assume either vacuum or inflationary matter with $\gamma=0$, \ie $\Omega=0$ or $\Omega>0$, $\gamma=0$, and consider a solution to \equs ~\eqref{eqns_evolutionbianchib}--\eqref{eqn_evolutionomega}. Assume that the solution has an $\alpha$-limit point $(\Sigma_+,1-\Sigma_+^2,0,0,0)$ with $\Sigma_+\notin\{{-}1,1/2\}$. Then
	\begin{equation}
		\limsup_{\tau\rightarrow{-}\infty}\absval{\kretschmann}=\infty.
	\end{equation}
\end{lemm}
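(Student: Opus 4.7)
The plan is to work with the Weyl decomposition
$R_{\alpha\beta\gamma\delta}R^{\alpha\beta\gamma\delta}
=C_{\alpha\beta\gamma\delta}C^{\alpha\beta\gamma\delta}
+\tfrac13(4+(3\gamma-2)^2)\mu^2$
and reduce the problem to showing that the Weyl-squared term diverges, while checking that the Ricci contribution stays bounded: in vacuum $\mu\equiv0$, and when $\gamma=0$ the computation carried out in the proof of Lemma~\ref{lemm_curvblowupricci} gives $9\mu^2(\tau)=\Omega(\tau_0)^2\theta(\tau_0)^4$, which is constant along the orbit. Hence it suffices to prove
$C_{\alpha\beta\gamma\delta}C^{\alpha\beta\gamma\delta}\to+\infty$ as $\tau\to{-}\infty$.

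First I would upgrade the existence of one $\alpha$-limit point to convergence of the whole orbit. Since $(\slimit,1-\slimit^2,0,0,0)$ with $\slimit\ne{-}1$ lies in $\kasnerparabola\setminus\taubone$, Prop.~\ref{prop_alphalimitsets_vacuum_inflat} (applicable because $\gamma=0$ lies in the inflationary range $[0,2/3)$, and of course to vacuum) forces the $\alpha$-limit set to be entirely contained in $\kasnerparabola$. Prop.~\ref{prop_convergencetokasner} then yields $(\Sigma_+,\tilde\Sigma,\Delta,\tilde A,N_+)(\tau)\to(\slimit,1-\slimit^2,0,0,0)$ as $\tau\to{-}\infty$; the constant orbit sitting at the limit point is a trivial sub-case.

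Next I would evaluate the bracket in~\eqref{eqn_weyltensorexpansionnormalised} at this limit. All of $\tilde H_{AB}\tilde H^{AB}$ and $H_+$ vanish there, while a short simplification using $\tilde\Sigma=1-\slimit^2$ gives
\begin{equation}
\tilde E_{AB}\tilde E^{AB}+\tfrac23E_+^2
=\tfrac{4}{27}(2\slimit-1)^2(1+\slimit),
\end{equation}
which is strictly positive precisely because $\slimit\notin\{{-}1,1/2\}$. By continuity of the (polynomial) expressions and the convergence of the orbit, the bracket
$\tilde E_{AB}\tilde E^{AB}+\tfrac23E_+^2-\tilde H_{AB}\tilde H^{AB}-\tfrac23H_+^2$
is bounded below by a positive constant for all $\tau$ sufficiently negative.

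Finally, from~\eqref{eqn_evolutiontheta} and $q\to2$ one has $(\ln\theta)'={-}(1+q)\le{-}(3-\eps)$ for $\tau\le\tau_\eps$, so integration gives $\theta(\tau)\ge c\,e^{(3-\eps)|\tau|}\to\infty$ as $\tau\to{-}\infty$. The factor $\theta^4$ in~\eqref{eqn_weyltensorexpansionnormalised} therefore drives the Weyl scalar, and hence the Kretschmann scalar, to $+\infty$. No real obstacle is anticipated: the whole argument hinges on the algebraic fact that the positive-definite electric Weyl invariant fails to vanish at every Kasner point other than Taub~1 and Taub~2, which is exactly why these two points are excluded from the hypothesis.
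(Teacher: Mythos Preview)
Your proposal is correct and follows the same approach as the paper: reduce to the Weyl contraction via the decomposition, compute the bracket $\tilde E_{AB}\tilde E^{AB}+\tfrac23E_+^2-\tilde H_{AB}\tilde H^{AB}-\tfrac23H_+^2=\tfrac{4}{27}(2\slimit-1)^2(1+\slimit)$ at the Kasner point, and use $\theta^4\to\infty$. The only difference is that the paper does not upgrade to full convergence: since the conclusion is a $\limsup$, a sequence $\tau_n\to-\infty$ along which the orbit approaches the given $\alpha$-limit point already suffices, so your appeal to Prop.~\ref{prop_alphalimitsets_vacuum_inflat} and Prop.~\ref{prop_convergencetokasner} is valid but not needed.
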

\begin{proof}
	Vacuum is equivalent to vanishing Ricci curvature, consequently
	\begin{equation}
		\kretschmann=C_{\alpha\beta\gamma\delta}C^{\alpha\beta\gamma\delta}
	\end{equation}
	in this case,
	while for $\Omega>0$, $\gamma=0$, the argument from the previous proof implies that
	\begin{equation}
		\kretschmann-C_{\alpha\beta\gamma\delta}C^{\alpha\beta\gamma\delta}
		=2R_{\alpha\beta}R^{\alpha\beta}-\frac13S^2=\frac13(4+(3\gamma-2)^2)\mu^2
	\end{equation}
	is a constant.

	Direct computation shows that on the Kasner parabola~$\kasnerparabola$ one finds
	\begin{equation}
		\tilde E_{AB}\tilde E^{AB}+\frac23E_+^2-\tilde H_{AB}\tilde H^{AB}-\frac23H_+^2=\frac4{27}(2\Sigma_+-1)^2(\Sigma_++1),
	\end{equation}
	which is non-zero \iif\ $\Sigma_+\notin\{{-}1,1/2\}$. The statement then follows from \equ ~\eqref{eqn_weyltensorexpansionnormalised} and the fact that $\theta\rightarrow\infty$ as $\tau\rightarrow{-}\infty$, due to \equ ~\eqref{eqn_evolutiontheta}.
\end{proof}
\begin{lemm}
\label{lemm_expressionforweyltensorvanishes}
	The expression
	\begin{equation}
		\tilde E_{AB}\tilde E^{AB}+\frac23E_+^2-\tilde H_{AB}\tilde H^{AB}-\frac23H_+^2
	\end{equation}
	vanishes for the Kasner points Taub~1 and~2, for the plane wave equilibrium points~$\planewave {\bparamk}$ and for the point $\Sigma_+=\tilde\Sigma=\Delta=\tilde A=N_+=0$.
\end{lemm}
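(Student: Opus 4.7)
The plan is to verify the claim on each of the four listed loci by direct substitution into the formulas for $E_+$, $H_+$, $\tilde E_{AB}\tilde E^{AB}$ and $\tilde H_{AB}\tilde H^{AB}$ recalled just above, together with the identity $\tilde N=\frac13(N_+^2-\bparamk\tilde A)$ from \eqref{eqn_definitiontilden}. Three of the four cases are essentially immediate, and the one substantive computation is on the plane wave equilibrium arc~$\planewave{\bparamk}$.

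For the point $\Sigma_+=\tilde\Sigma=\Delta=\tilde A=N_+=0$ every summand in all four Weyl quantities contains at least one factor that vanishes, so all four terms are zero. For the two Taub points $(\Sigma_+,\tilde\Sigma,\Delta,\tilde A,N_+)=({-}1,0,0,0,0)$ and $(1/2,3/4,0,0,0)$ one has $\tilde N=0$, $\Delta=0$, $\tilde A=0$, $N_+=0$, which makes $\tilde H_{AB}\tilde H^{AB}$ and $H_+$ vanish termwise and reduces $\tilde E_{AB}\tilde E^{AB}$ to $\frac{2}{27}\tilde\Sigma(2\Sigma_+-1)^2$ and $E_+$ to $\frac13(\tilde\Sigma-\Sigma_+(1+\Sigma_+))$. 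For Taub~1 the factor $\tilde\Sigma$ kills the first, and $\Sigma_+={-}1$ together with $\tilde\Sigma=0$ kills the second; for Taub~2 the factor $(2\Sigma_+-1)^2$ kills the first and the arithmetic identity $3/4=(1/2)(3/2)$ kills the second.

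The main case is $\planewave{\bparamk}$. Using $\tilde\Sigma={-}\Sigma_+(1+\Sigma_+)$, $\tilde A=(1+\Sigma_+)^2$, $\Delta=0$ and $N_+^2=(1+\Sigma_+)(\bparamk(1+\Sigma_+)-3\Sigma_+)$, I first show
\begin{equation}
	\tilde N=\tfrac13\bigl(N_+^2-\bparamk\tilde A\bigr)={-}\Sigma_+(1+\Sigma_+)=\tilde\Sigma,
\end{equation}
which instantly yields $E_+=\frac13(\tilde\Sigma-\Sigma_+(1+\Sigma_+)-2\tilde N)=0$ and $H_+={-}\Delta=0$. Next, using $\Delta=0$ and $\tilde N=\tilde\Sigma$ I compute $\tilde A+3\tilde N-2\tilde\Sigma=(1+\Sigma_+)^2-\Sigma_+(1+\Sigma_+)=1+\Sigma_+$, so the potentially $\bparamk$-dependent parts of $\tilde E_{AB}\tilde E^{AB}$ and $\tilde H_{AB}\tilde H^{AB}$ coincide, namely $4\bparamk\tilde A\tilde N=4\bparamk\tilde A\tilde\Sigma$, and hence cancel in the difference. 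After pulling out the common factor $\Sigma_+(1+\Sigma_+)$, the remaining polynomial identity to check is
\begin{equation}
	-(2\Sigma_+-1)^2+9\Sigma_+^2-6\Sigma_+(1+\Sigma_+)+(1+\Sigma_+)^2=0,
\end{equation}
which follows by straightforward expansion. The mild obstacle is keeping track of signs and of the $\bparamk$-dependent contribution; once one observes that the $\bparamk$-terms match in the electric and magnetic parts, the rest reduces to the elementary polynomial identity above.
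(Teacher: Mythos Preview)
Your proof is correct and follows the same approach as the paper, which simply says ``This follows from direct computation'' and refers back to the Kasner computation in the preceding lemma. You have actually carried out the plane wave case in detail (correctly observing $\tilde N=\tilde\Sigma$, that $E_+=H_+=0$, that the $\bparamk$-terms cancel, and reducing the rest to the polynomial identity you display), whereas the paper leaves this as an exercise for the reader.
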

\begin{proof}
	This follows from direct computation, which we in the case of the Taub points already carried out in the previous proof, and where the set~$\planewave {\bparamk}$ is defined in Def.~\ref{defi_planewaveexpansionnorm}.
\end{proof}
\begin{rema}
	On the Taub points~1 and~2, on the plane wave equilibrium points~$\planewave {\bparamk}$, and on the point $\Sigma_+=\tilde\Sigma=\Delta=\tilde A=N_+=0$, the contraction of the Weyl tensor with itself, expression~\eqref{eqn_weyltensorexpansionnormalised}, vanishes due to Lemma~\ref{lemm_expressionforweyltensorvanishes}. Together with the result from Lemma~\ref{lemm_curvblowupricci} and equation~\eqref{eqn_contractionricciricci}, this implies that both the Kretschmann scalar $R_{\alpha\beta\gamma\delta}R^{\alpha\beta\gamma\delta}$ and the contraction of the Ricci tensor with itself $R_{\alpha\beta}R^{\alpha\beta}$ remain bounded for the constant solutions
	in the points Taub~1 and Taub~2, the plane wave equilibrium points and the point $\Sigma_+=\tilde\Sigma=\Delta=\tilde A=N_+=0$.
\end{rema}
We now give the proofs of the theorems stated in expansion-normalised variables.
\begin{proof}[Proof of Thm~\ref{theo_curvblowupexpansionnormalisedmatter}]
	This statement is an immediate consequence of Lemma~\ref{lemm_curvblowupricci}.
\end{proof}
\begin{proof}[Proof of Thm~\ref{theo_curvblowupexpansionnormalisedvacuum}]
	Prop.~\ref{prop_alphalimitsets_vacuum_inflat} restricts the possible $\alpha$-limit sets of solutions in vacuum~$\Omega=0$ to the Kasner parabola~$\kasnerparabola$ and the plane wave equilibrium points~$\planewave {\bparamk}$.
	Non-constant solutions with $\alpha$-limit set in~$\planewave {\bparamk}$ are immediately excluded by the same statement.
	One knows further from Prop.~\ref{prop_convergencetokasner} that solutions with $\alpha$-limit set in~$\kasnerparabola$ have a unique $\alpha$-limit point, \ie they converge as $\tau\rightarrow{-}\infty$.
	We now discuss the different possible locations of limit points for solutions to equations~\eqref{eqns_evolutionbianchib}--\eqref{eqn_evolutionomega} and check in which cases the Kretschmann scalar possibly remains bounded.

	Lemma~\ref{lemm_curvblowupvacuum} states that the Kretschmann scalar becomes unbounded upon convergence to every Kasner points but the points Taub~1 and~2.
	The only solution converging to the point Taub~1 is the constant one, see Prop.~\ref{prop_taubone}.
	All solutions converging to Taub~2 are characterised in Thm~\ref{theo_taubtwocharacterisationoforbits}.
	This concludes the proof.
\end{proof}
\begin{proof}[Proof of Thm~\ref{theo_curvblowupexpansionnormalisedcosmconst}]
	The proof proceeds similarly to the previous one.
	Prop.~\ref{prop_alphalimitsets_vacuum_inflat} restricts the possible $\alpha$-limit sets of solutions with~$\Omega>0$, $\gamma=0$, to the Kasner parabola~$\kasnerparabola$, the plane wave equilibrium points~$\planewave {\bparamk}$, and the point $\Sigma_+=\tilde\Sigma=\Delta=\tilde A=N_+=0$.
	Non-constant solutions converging to the point $\Sigma_+=\tilde\Sigma=\Delta=\tilde A=N_+=0$ are immediately excluded by the same proposition, which also states that solutions whose $\alpha$-limit set intersects both~$\kasnerparabola\setminus\taubone$ and~$\planewave {\bparamk}$ do not exist.
	One knows further from Prop.~\ref{prop_convergencetokasner} and Prop.~\ref{prop_convergencetoplanewave} that solutions with $\alpha$-limit set in either~$\kasnerparabola$ or~$\planewave {\bparamk}$ have a unique $\alpha$-limit point, \ie they converge as $\tau\rightarrow{-}\infty$.
	We now discuss the different possible locations of limit points and check in which cases the Kretschmann scalar possibly remains bounded.

	The constant solution in the point $\Sigma_+=\tilde\Sigma=\Delta=\tilde A=N_+=0$ satisfies~$\Omega=1$. Due to the previous remark, the Kretschmann scalar
	remains bounded.

	Let us now turn to solutions converging to a plane wave equilibrium point, \ie solutions contained in~$\setconvplanewave$.
	We apply Thm~\ref{theo_centreunstablegloballyplanewave} and find that the case~${-}(3\gamma-2)/4<\slimit<0$ is excluded by~$\gamma=0$. Every solution in~$\setconvplanewave$ is contained in the union of sub\mf s~$\submfsplanewave_m$ or~$\submfsplanewavezero_m$, and the limiting value~$\slimit$ and the parameter~${\bparamk}$ have to satisfy relation~\eqref{eqn_inequalityplanewave}. In Remark~\ref{rema_planewaveconvergencebianchiseparated}, we list the possible values depending on the Bianchi type. Setting the union of sub\mf s~$\{\submfsplanewavetheorem_m\}$ to include all~$\submfsplanewave_m$ and~$\submfsplanewavezero_m$ gives the statement.

	The arguments regarding solutions converging to a point on the Kasner parabola~$\kasnerparabola$ are identical to the ones given in the previous proof. Note that only the non-constant solutions, \ie locally rotationally symmetric solutions of Bianchi type~I, II or~VI$_{{-}1}$ can satisfy~$\Omega>0$. This concludes the proof.
\end{proof}
For the case of vacuum, our statement Thm~\ref{theo_curvblowupexpansionnormalisedvacuum} makes precise and proves a claim by~\cite[p.~165]{wainwrightellis_dynamsystemsincosm} saying that all solutions except those contained in the unstable manifold of the point Taub~2 have an initial curvature singularity.

\begin{proof}[Proof of Thm~\ref{theo_fullmeasurelimitset}]
	The three sets describing solutions of Bianchi type~VI$_{\binvparam}$, VII$_{\binvparam}$ and~IV are open sets of~$\RR^5$. Restricted to the constraint \equ s~\eqref{eqn_constraintgeneralone} and~\eqref{eqn_constraintgeneraltwo}, they form open subsets of the set defined by these equations. Consequently, these three Bianchi sets are of dimension four. Restricting to vacuum~$\Omega=0$ yields sets of dimension three. The set describing solutions of Bianchi type~V is a two-dimensional closed subset of~$\RR^5$ and contained in the set defined by the constraint \equ s~\eqref{eqn_constraintgeneralone} and~\eqref{eqn_constraintgeneraltwo}. Restricted to vacuum~$\Omega=0$, this Bianchi type yields a set of dimension~one. We prove the theorem by showing that all solutions with convergence behaviour different from the one in the statement are contained in countable unions of smooth sub\mf s of positive codimension.

	Independently of the Bianchi type, the $\alpha$-limit set of a solution to \equ s~\eqref{eqns_evolutionbianchib}--\eqref{eqn_evolutionomega} with either $\Omega=0$ or $\Omega>0$, $\gamma\in\left[0,2/3\right)$ is given in Prop.~\ref{prop_alphalimitsets_vacuum_inflat}: We exclude constant solutions as they are contained in two smooth curve arcs together with one additional point, namely the Kasner parabola~$\kasnerparabola$, the plane wave equilibrium points~$\planewave \bparamk$ and the point Taub~1.
	In Bianchi type~V, we additionally notice that the intersection between these two arcs and the subset defining this Bianchi type consists of two points.

	Prop.~\ref{prop_taubone} states that only the constant solution has Taub~1 as an $\alpha$-limit point, so in particular we exclude solutions with Taub~1 as an $\alpha$-limit point. We conclude from Prop.~\ref{prop_alphalimitsets_vacuum_inflat} that the $\alpha$-limit set is contained in either the Kasner parabola~$\kasnerparabola$ without the point Taub~1 or in the plane wave equilibrium points~$\planewave {\bparamk}$, and also see that latter set can only occur for inflationary matter, \ie for $\Omega>0$, $\gamma\in\left[0,2/3\right)$.

	In Prop.~\ref{prop_convergencetokasner} and Prop.~\ref{prop_convergencetoplanewave}, we further showed that the $\alpha$-limit sets consist of exactly one point meaning that solutions converge as $\tau\rightarrow{-}\infty$. All that remains to show now is that the set of solutions converging to a plane wave equilibrium point~$\planewave {\bparamk}$ or a point on the Kasner parabola~$\kasnerparabola$ with ${-}1<\Sigma_+\le1/2$, \resp the complement of this set of solutions, has the necessary properties.

	\smallskip

	We start with solutions converging to a point on the Kasner parabola~$\kasnerparabola$.
	According to Thm~\ref{theo_taubtwocharacterisationoforbits}, the only Bianchi class~B solutions converging to the point Taub~2 are those of Bianchi type~VI$_{{-}1}$ which are locally rotationally symmetric. They are characterised by
	\begin{equation}
		{\bparamk}={-}1,\qquad \tilde A>0,\qquad 3\Sigma_+^2=\tilde\Sigma,\qquad \Sigma_+N_+=\Delta,
	\end{equation}
	and are solutions of Bianchi type~VI$_{{-}1}$. In particular, no Bianchi type~VII$_{\binvparam}$, type~IV  or type~V solution converges to the point Taub~2. The set is a~$C^1$ sub\mf\ of dimension three, and its restriction to vacuum of dimension~two. In particular, it forms a subset of positive codimension in the set of all Bianchi type~VI$_{{-}1}$ solutions, both in case of vacuum and inflationary matter.

	The solutions converging to a Kasner point to the left of Taub~2 are given in Thm~\ref{theo_leftoftaubtworesult}.
	In case of Bianchi type~VI$_{\binvparam}$,
	solutions either satisfy $\tilde A>0$, $\Delta=0=N_+$, $3\Sigma_+^2+{\bparamk}\tilde\Sigma=0$ (Bianchi type VI$_{\binvparam}$ with $n^\alpha{}_\alpha=0$, see Table~\ref{table_bianchihighersymmetry}), or have to be contained in a countable union of $C^1$ sub\mf s satisfying $\tilde A>0$ and~$\Delta$, $N_+$ not both vanishing identically.
	These sub\mf s are contained either in the set of all non-vacuum solutions or the set of all vacuum solutions, and in the respective set have codimension at least~one.
	Solutions of Bianchi type~VII$_{\binvparam}$ cannot converge to a Kasner point to the left of Taub~2 due to~$\bparamk>0$. In the remaining two cases, Bianchi type~IV and~V, the restriction~${\bparamk}=0$ implies that solutions can only converge to the Kasner point with~$\slimit=0$.

	For non-constant solutions converging to a plane wave equilibrium point, Thm~\ref{theo_centreunstablegloballyplanewave} states that they converge to the arc $\planewave {\bparamk}\cap\{\Sigma_+\le{-}(3\gamma-2)/4\}$, which includes the point~$\planewave {\bparamk}\cap\{\Sigma_+=0\}$ due to the assumption on~$\gamma$. All solutions are contained in the countable union of $C^1$ sub\mf s~$\submfsplanewave_m$ or~$\submfsplanewavezero$ whose dimension is at most two. This is enough to identify them as subsets of positive codimension in the Bianchi type~VI$_{\binvparam}$,~VII$_{\binvparam}$ and~IV. For Bianchi type~V solutions, we have argued in Remark~\ref{rema_planewaveconvergencebianchiseparated} that the solution is contained in the subset~$\Sigma_+=\tilde\Sigma=\Delta=N_+=0$, which is a subset of positive codimension in non-vacuum.

	\smallskip

	Collecting the different partial results, we conclude the following: In the case of Bianchi type~VI$_{\binvparam}$ or VII$_{\binvparam}$, if a solution does not converge to a Kasner point to the right of Taub~2, then it is contained in a countable union of $C^1$ sub\mf s of positive codimension, both in case of vacuum and inflationary matter.
	Therefore, this set of exceptions has the right properties: its complement is of full measure and a countable intersection of open and dense sets.
	We argue similarly in case of Bianchi type~IV, with the additional possibility of convergence to the Kasner point with~$\slimit=0$. For the case of Bianchi type~V, convergence to Kasner point to the right of Taub~2 is not possible, as all such Bianchi solutions satisfy~$\Sigma_+=0$ by definition. We have shown that all non-constant solutions converge to the point on the Kasner parabola~$\kasnerparabola$ satisfying~$\slimit=0$.
\end{proof}

We conclude this section with the proof of the two remaining theorems, which are stated in the setting of geometric initial data. As large portions of the proofs are identical, we combine them into one.
\begin{proof}
[Proofs of Thm~\ref{theo_curvblowupinitialdatamatter} and Thm~\ref{theo_curvblowupinitialdatavacuum}]
	We consider initial data which is of Bianchi class~B. This implies in particular that
	the Minkowski spacetime or quotients of this space cannot occur as development of the data.
	This excludes the case~$\theta=\tr_\idmetric\idfundform=0$, see Lemma~\ref{lemm_excludeminkowski}, and we can therefore consider a development of the data~$(I\times G,g,\mu)$ as in Def.~\ref{defi_mghd}, \ie one constructed as in Subsections~\ref{constr_initialdatatoexpnorm}--\ref{constr_solutionbasictomghd}. According to Corollary~\ref{coro_mghdforgammapositive}, this development is isometric to the \mghd\ of the given initial data. We further know from Prop.~\ref{prop_mghdincompletedirections} that the behaviour of any geometric quantity in the incomplete direction of causal geodesics in~$M$ corresponds to the behaviour of this geometric quantity on~$\{t\}\times G$ as~$t\rightarrow t_-$.

	We now switch to the point of view of expansion-normalised variables, as explained in Subsection~\ref{constr_initialdatatoexpnorm}. The behaviour of any geometric quantity in the incomplete direction of causal geodesics in~$M$ consequently corresponds to the behaviour of this geometric quantity as~$\tau\rightarrow{-}\infty$, see also the construction of the development, specifically the end of Subsection~\ref{constr_expnormtosolutionbasic}.

	The result in the matter case~$\mu_0>0$, $\gamma>0$, which is the setting of Thm~\ref{theo_curvblowupinitialdatamatter} is an immediate consequence of Thm~\ref{theo_curvblowupexpansionnormalisedmatter}, as the definition of the density parameter in equation~\eqref{eqn_definitionomega} yields~$\Omega>0$ at~$\tau=0$, and this property is conserved by the evolution equation~\eqref{eqn_evolutionomega}.

	Consider now the vacuum case~$\mu_0=0$, \ie the setting of Thm~\ref{theo_curvblowupinitialdatavacuum}.
	In the setting of expansion-normalised variables this corresponds to~$\Omega=0$, and we have listed all exceptions to \curv\ blow-up as $\tau\rightarrow{-}\infty$ in Thm~\ref{theo_curvblowupexpansionnormalisedvacuum}. To prove the theorem for geometric initial data, we have to carry over the individual exceptions to the geometric initial data perspective. As we only consider initial data with a \liegr\ of Bianchi class~B, a number of these exceptions cannot occur by assumption, namely those assuming Bianchi type~I and~II. Note also that the point Taub~1 is a constant Bianchi type~I solution.

	We have argued in Subsection~\ref{constr_highersymmetrysolutions} that the definitions of local rotational symmetry in terms of geometric initial data, Def.~\ref{defi_lrsinitialdata}, and in terms of expansion-normalised variables, Def.~\ref{defi_lrsexpansionnorm}, carry equivalent information. This proves the statement.
\end{proof}
\begin{rema}
	We can adapt the proof to justify the statement made in Remark~\ref{rema_curvblowupcosmconst}. Assuming~$\mu>0$ and~$\gamma=0$, we switch to expansion-normalised variables. The \desitter\
	spacetime or quotients of this space cannot occur as development of the data due to the assumption on the Bianchi class. It therefore follows from Prop.~\ref{prop_mghdincompletedirections} that the spacetime constructed as in Subsections~\ref{constr_initialdatatoexpnorm}--\ref{constr_solutionbasictomghd} is in fact the \mghd\ of the given geometric initial data, otherwise the arguments are identical to those given in the previous proof. The exceptions to \curv\ blow-up are LRS Bianchi~VI$_{{-}1}$ solutions and solutions converging to plane wave equilibrium points, see Thm~\ref{theo_curvblowupexpansionnormalisedcosmconst}.
\end{rema}

\appendix

\section{Additional properties of expansion-normalised variables}
\label{section_appendixbianchivariables}

\subsection{The linearised evolution equations on the Kasner parabola}
\label{subsect_appendixlinearisedevolutionkasner}

By the linearised evolution equations in the extended five-dimensional state space we mean the linear approximation of the evolution \equ s~\eqref{eqns_evolutionbianchib}. For points on the Kasner parabola~$\kasnerparabola$, this is the vector field $\RR^5\rightarrow\RR^5$ given by the matrix
\begin{equation}
	\begin{pmatrix}
	3(2-\gamma)\Sigma_+^2 & \frac32(2-\gamma)\Sigma_+ & 0 & \frac12(3\gamma-2)(\frac {\bparamk}3-1)\Sigma_++\frac23{\bparamk} & 0\\
	6(2-\gamma)\Sigma_+\tilde\Sigma & 3(2-\gamma)\tilde\Sigma & 0 & (3\gamma-2)(\frac {\bparamk}3-1)\tilde \Sigma-4\Sigma_+ & 0 \\
	0 & 0 & 2\Sigma_++2 & 0 & 2\tilde\Sigma \\
	0 & 0 & 0 & 2(2+2\Sigma_+) & 0 \\
	0 & 0 & 6 & 0 & 2+2\Sigma_+
	\end{pmatrix}.
\end{equation}
Note that~$\gamma$ and~${\bparamk}$ are constants, and we did not replace~$\tilde\Sigma$ by $1-\Sigma_+^2$ to make it more readable.

As there appear to be typos in the eigenvalues in both~\cite[Sect.~4.4]{hewittwainwright_dynamicalsystemsapproachbianchiorthogonalB} and~\cite[Sect.~7.2.3]{wainwrightellis_dynamsystemsincosm},
we give here the corrected eigenvalues and state the corresponding eigenvectors.
\begin{itemize}
	\item The eigenspace to eigenvalue~$0$ is tangential to the Kasner parabola, it is spanned by
	\begin{equation}
		(1\,,\,{-}2\Sigma_+\,,\,0\,,\,0\,,\,0).
	\end{equation}
	\item The eigenspace to eigenvalue $2(1+\Sigma_++\sqrt{3(1-\Sigma_+^2)})$ lies in the $\Delta N_+$-plane and is spanned by
	\begin{equation}
		(0\,,\,0\,,\,+\frac13\sqrt{3(1-\Sigma_+^2)}\,,\,0\,,\,1).
	\end{equation}
	\item The eigenspace to eigenvalue $2(1+\Sigma_+-\sqrt{3(1-\Sigma_+^2)})$ lies in the $\Delta N_+$-plane and is spanned by
	\begin{equation}
		(0\,,\,0\,,\,{-}\frac13\sqrt{3(1-\Sigma_+^2)}\,,\,0\,,\,1).
	\end{equation}
	\item The eigenspace to eigenvalue $4(1+\Sigma_+)$ is spanned by
	\begin{equation}
		(\frac16({\bparamk}\Sigma_+-3\Sigma_++{\bparamk})\,,\,\frac13(3\Sigma_+^2-{\bparamk}\Sigma_+^2+{\bparamk}-3-3\Sigma_+)\,,\,0\,,\,1+\Sigma_+\,,\,0).
	\end{equation}
	\item The eigenspace to eigenvalue $3(2-\gamma)$ is spanned by
	\begin{equation}
		(\Sigma_+\,,\,2(1-\Sigma_+^2)\,,\,0\,,\,0\,,\,0)
	\end{equation}
\end{itemize}
Whenever two eigenvalues coincide, the eigenspace consists of all linear combinations of the respective eigenvectors.

\subsection{The linearised evolution equations on the plane wave equilibrium points}
\label{subsect_appendixlinearisedevolutionplanewave}

For points on the plane wave equilibrium points~$\planewave {\bparamk}$, linearising the evolution equations~\eqref{eqns_evolutionbianchib} gives raise to the vector field $\RR^5\rightarrow\RR^5$ given by the matrix
\begin{align}
&\left(
\begin{matrix}
	{-}2\Sigma_+-2+3(2-\gamma)\Sigma_+^2 & \frac32(2-\gamma)\Sigma_+ & \dots \\
	{-}6(2-\gamma)\Sigma_+^2(\Sigma_++1)-4(\Sigma_++1)^2 & -4\Sigma_+-4-3(2-\gamma)\Sigma_+(\Sigma_++1) & \dots \\
	0 & {-}2\absval{N_+} & \dots \\
	(6(2-\gamma)\Sigma_++4)(\Sigma_++1)^2 & 3(2-\gamma)(\Sigma_++1)^2 & \dots \\
	{-}(3(2-\gamma)\Sigma_++2)\absval{N_+} & {-}\frac32(2-\gamma)\absval{N_+} & \dots
\end{matrix}\right.
\\
&\left.
\begin{matrix}
	\dots & 0 & \frac12(3\gamma-2)(\frac {\bparamk}3-1)\Sigma_++\frac23{\bparamk} & \frac13(3\gamma-2)\Sigma_+\absval{N_+}+\frac43\absval{N_+}\\
	\dots & 4\absval{N_+} & {-}(3\gamma-2)(\frac {\bparamk}3-1)\Sigma_+(\Sigma_++1)-4\Sigma_+ & {-}\frac23(3\gamma-2)\Sigma_+(\Sigma_++1)\absval{N_+} \\
	\dots & {-}2\Sigma_+-2 & {-}\frac23{\bparamk}\absval{N_+} & {-}\frac43N_+^2 \\
	\dots & 0 & (3\gamma-2)(\frac {\bparamk}3-1)(\Sigma_++1)^2 & \frac23(3\gamma-2)(\Sigma_++1)^2\absval{N_+} \\
	\dots & 6 & {-}\frac12(3\gamma-2)(\frac {\bparamk}3-1)\absval{N_+} & {-}\frac13(3\gamma-2)N_+^2
\end{matrix}
\right),
\end{align}
where $N_+^2=(1+\Sigma_+)({\bparamk}(1+\Sigma_+)-3\Sigma_+)$ as in \equ ~\eqref{eqn_defiplanewave}, by definition of the plane wave equilibrium points.
The eigenvalues to this vector field are
\begin{equation}
	0,\qquad {-}4(1+\Sigma_+),\qquad {-}4\Sigma_+-(3\gamma-2),\qquad {-}2(1+\Sigma_+)\pm2iN_+.
\end{equation}
We do not make use of the explicit form of corresponding eigenvectors in any of the proofs, and therefore only give those with a relatively short form.
\begin{itemize}
	\item The eigenspace to eigenvalue~$0$ is tangential to the plane wave equilibrium points, it is spanned by
	\begin{equation}
		({-}2\absval{N_+}\,,\,(4\Sigma_++2)\absval{N_+}\,,\,0\,,\,-4\absval{N_+}(\Sigma_++1)\,,\,\frac{2N_+^2}{\Sigma_++1}-3).
	\end{equation}
	\item The eigenspace to eigenvalue ${-}4(1+\Sigma_+)$ is spanned by
	\begin{align}
		&\left((1-\Sigma_+)N_+^2\,,\,(\Sigma_++1)(2\Sigma_+N_+^2-N_+^2-2{\bparamk}\Sigma_+-5\Sigma_++2{\bparamk}+1),\right.\\
		&\quad\left.(1-\Sigma_+)(1+\Sigma_+)\absval{N_+}\,,\,{-}(\Sigma_++1)^2(\Sigma_++2N_+^2+1)\,,\,\absval{N_+}(2\Sigma_++N_+^2-1)\right).
	\end{align}
\end{itemize}
The eigenspace to eigenvalue ${-}4\Sigma_+-(3\gamma-2)$ has an explicit expression which is lengthy but can be computed using mathematical software.
One can further check that this vector is \ogon\ to the gradient of equation~\eqref{eqn_constraintgeneralone}, as we claim in Section~\ref{section_asymptoticsplanewave}.

Direct computation shows that the eigenvector to eigenvalue ${-}4(1+\Sigma_+)$ is not \ogon\ to the gradient~\eqref{eqn_gradientconstraintequ} of the constraint \equ ~\eqref{eqn_constraintgeneralone}. Wherever the hypersurface defined by this constraint is non-singular, this means that this eigenvector is transverse to the constraint hypersurface, see Remark~\ref{rema_singularconstraintequ}.

\section{Results from dynamical systems theory}
\label{section_appendixdynamsystheo}

In this section, we state a theorem about the qualitative behaviour of solutions to a differential equation close to a sub\mf\ of equilibrium points. It is part of a more general discussion of singular perturbation theory given by~\cite{fenichel_geomsingperturbtheoryode}. We describe the setting and state the result in the form necessary for our work, then explain the relation to the original statement.

Let~$M$ be a~$C^{r+1}$ \mf, $1\le r<\infty$, and let~$Y$ be a~$C^r$ vector field on~$M$. The solution to
\begin{equation}
	\frac{d}{dt}\phi_t(m)=Y(\phi_t(m)),\qquad
	\phi_0(m)=m,
\end{equation}
with~$t$ in an interval chosen to be maximal, is the flow corresponding to the vector field. The family~$\{\phi_t\}$ defines a~$1$-parameter family of \diffeo s of~$M$. In the setting we are interested in, the interval
is~$\RR$, and we assume this from now on.
\begin{defi}
\label{defi_maxinvariantsets}
	For an open set $U\subset M$, the maximal positively invariant set~$A^+(U)$, the maximal negatively invariant set~$A^-(U)$, and the maximal invariant set~$I(U)$ are defined by
	\begin{align}
		m\in A^+(U) &\qquad \Leftrightarrow \qquad \overline{\{\phi_t(m):t\in[0,\infty\)\}}\subset U,\\
		m\in A^-(U) &\qquad \Leftrightarrow \qquad \overline{\{\phi_t(m):t\in\({-}\infty,0]\}}\subset U,\\
		m\in I(U) &\qquad \Leftrightarrow \qquad \overline{\{\phi_t(m):t\in\RR\}}\subset U,
	\end{align}
	where the bar denotes closure.
\end{defi}
Consider a point~$m\in M$ which is an equilibrium point of~$Y$, \ie a fixed point of the flow. The vector field~$Y$
induces a linear mapping
\begin{equation}
	TY(m):T_mM\rightarrow T_mM.
\end{equation}
Let~$\E$ be a~$C^r$ sub\mf\ consisting entirely of equilibrium points of~$Y$, and~$K\subset\E$ a compact subset such that the numbers of non-vanishing eigenvalues of~$TY(m)$
situated in the left half-plane, on the imaginary axis, and in the right half-plane, respectively, are constant for all points~$m\in K$. For every~$m\in K$, denote by~$E_m{}^s$, $E_m{}^c$ and~$E_m{}^u$ the invariant subspaces of~$T_mM$ associated with the eigenvalues of~$TY(m)$ in the left half-plane, on the imaginary axis, and in the right half-plane, respectively.
\begin{defi}
\label{defi_centremf}
	A $C^1$ \mf~$\C^s$ is called a centre-stable \mf\ for~$Y$ near~$K$ if
	\begin{itemize}
		\item $K\subset\C^s$;
		\item $\C^s$ is locally invariant under the flow of~$Y$, \ie there is an open neighborhood~$V$ of~$\C^s$ such that $m\in\C^s$, $t_0,t_1\ge 0$ and $\{\phi_t(m):t\in[{-}t_0,t_1]\}\subset V$ implies $\{\phi_t(m):t\in[{-}t_0,t_1]\}\subset\C^s$;
		\item for all $m\in K$, $\C^s$ is tangent to $E_m{}^s\oplus E_m{}^c$ at~$m$.
	\end{itemize}
	Centre-unstable and centre \mf s~$\C^u$ and~$\C$ are defined the same way, with $E_m{}^s\oplus E_m{}^c$ replaced by $E_m{}^c\oplus E_m{}^u$ and $E_m{}^c$, respectively.
\end{defi}
The following theorem states that these \mf s exist, at least locally, around the compact set~$K$, and that they contain the different maximal invariant sets.
\begin{theo}
\label{theo_centremftheory}
	Let~$M$ be a~$C^{r+1}$ \mf, $1\le r<\infty$, let~$Y$ be a~$C^r$ vector field on~$M$. Let~$\E$ be a~$C^r$ sub\mf\ consisting entirely of equilibrium points of~$Y$, and $K\subset\E$ a compact subset such that the number of non-vanishing eigenvalues of~$TY|_K$ situated in the left half-plane, on the imaginary axis, and in the right half-plane, respectively, is constant. Then there is a~$C^r$ centre-stable \mf~$\C^s$, a~$C^r$ centre-unstable \mf~$\C^u$, and a~$C^r$ centre \mf~$\C$ for~$Y$ near~$K$. Furthermore, there is a neighborhood~$U$ of~$K$ such that
	\begin{equation}
		A^+(U)\subset \C^s,\qquad A^-(U)\subset\C^u,\qquad I(U)\subset\C.
	\end{equation}
	In addition, the following uniqueness properties hold:
	\begin{itemize}
		\item Let~$\D^s$ be any~$C^r$ \mf\ which is locally invariant relative to~$U$ and tangent to~$\C^s$ at a point $m\in A^+(U)$. Then~$\C^s$ and~$\D^s$ have contact of order~$r$ at~$m$.
		\item Let~$\D^u$ be any~$C^r$ \mf\ which is locally invariant relative to~$U$ and tangent to~$\C^u$ at a point $m\in A^-(U)$. Then~$\C^u$ and~$\D^u$ have contact of order~$r$ at~$m$.
		\item Let~$\D$ be any~$C^r$ \mf\ which is locally invariant relative to~$U$ and tangent to~$\C$ at a point $m\in I(U)$. Then~$\C$ and~$\D$ have contact of order~$r$ at~$m$.
	\end{itemize}
\end{theo}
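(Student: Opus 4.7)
The plan is to realise this as an instance of Fenichel's invariant manifold theorem for partially hyperbolic invariant manifolds, applied in the non-uniformly hyperbolic setting where $\E$ itself contributes to the centre directions. First I would reduce to a local, tubular problem. Since $K$ is compact and lies in the $C^r$ submanifold $\E$ of equilibria, I can extend $\E$ slightly in $M$ and choose a $C^r$ tubular neighbourhood with a splitting $TM|_\E = T\E \oplus N$. Along $K$ the linearisation $TY$ vanishes on $T\E$ (moving along equilibria does not change $Y$), so $T\E \subset E^c$. The constant-eigenvalue-count hypothesis then lifts, by continuity of the spectral projectors, to a continuous splitting $N|_K = N^{s} \oplus N^{cc} \oplus N^{u}$ corresponding to eigenvalues of $TY$ in the open left half-plane, on the imaginary axis, and in the open right half-plane respectively, so that $E^s_m = N^s_m$, $E^c_m = T_m\E \oplus N^{cc}_m$ and $E^u_m = N^u_m$. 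In tubular coordinates $(x, y^s, y^c, y^u)$ one can then write $\dot x = f(x,y)$, $\dot y = A(x)y + g(x,y)$ with $f(x,0) = g(x,0) = 0$, $Dg(x,0) = 0$, and $A(x)$ block-diagonal into the three invariant factors over $K$.

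Next I would build $\C^u$ by a Lyapunov--Perron fixed-point argument. After multiplying $f,g$ outside a small neighbourhood of $K$ by a $C^r$ cut-off to obtain a globally defined modified vector field that agrees with $Y$ near $K$ and has trivial dynamics far away, I look for $\C^u$ as the graph of a $C^r$ map $h^u : (x, y^c, y^u) \mapsto y^s$ defined on a neighbourhood of the zero section of $T\E \oplus N^{cc} \oplus N^u$ over $K$. Invariance of the graph under the modified flow is equivalent to $h^u$ being a fixed point of an integral operator built from the variation-of-parameters formula with the $N^s$-component integrated from $-\infty$. Compactness of $K$ together with the spectral gap between $N^s$ and $N^{cc} \oplus T\E \oplus N^u$ makes this operator a contraction on a suitable closed ball in the space of $C^r$ maps with small prescribed Lipschitz norm; the fixed point is $\C^u$, and the required tangency to $E^c_m \oplus E^u_m$ at every $m \in K$ is automatic because $h^u$ and $Dh^u$ vanish identically on $K$. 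The manifold $\C^s$ is obtained by the symmetric construction running time backwards and integrating the $N^u$-component from $+\infty$, and $\C$ by treating both $N^s$ and $N^u$ as non-centre directions.

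The inclusions $A^+(U) \subset \C^s$, $A^-(U) \subset \C^u$, $I(U) \subset \C$ follow once $U$ is chosen so small that the cut-off is trivial on $U$: by the exponential estimates any component of a trajectory transverse to $\C^u$ in the $N^s$-direction must blow up as $t \to -\infty$, so a trajectory that remains in $U$ for all negative times is forced onto the graph $y^s = h^u(x, y^c, y^u)$. The uniqueness of order of contact comes from the formal Taylor expansion at an invariant point: if $\D^u$ is another $C^r$ locally invariant manifold tangent to $\C^u$ at $m \in A^-(U)$, then both manifolds satisfy the same invariance equation along the backward orbit of $m$, and repeated exponential contraction in $N^s$ along this orbit propagates equality of jets up to order $r$ back to $m$; the symmetric arguments handle $\D^s$ and $\D$. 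The hard part is that the classical stable manifold theorem does not apply directly because the centre direction is non-trivial, so uniform hyperbolicity transverse to $\E$ fails; replacing hyperbolicity by the spectral gap and executing the cut-off plus Lyapunov--Perron contraction in the correct $C^r$ function space is the crux, and this is also why one only obtains $C^r$ manifolds for each finite $r$ (with the neighbourhood $U$ possibly shrinking as $r$ grows) rather than $C^\infty$ ones.
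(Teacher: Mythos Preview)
The paper does not actually prove this theorem: it simply records that the statement is a special case of Theorem~9.1~(i) and~(iv) in Fenichel's paper on geometric singular perturbation theory, obtained by taking the family $X^\epsilon$ there to be constant in~$\epsilon$ and discarding the $\epsilon$-direction in the definition of the invariant manifolds. So the paper's ``proof'' is a one-line citation plus a remark on how to specialise.

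Your proposal, by contrast, sketches an honest proof from first principles via tubular reduction, a spectral splitting of the normal bundle over~$K$, a cut-off, and a Lyapunov--Perron contraction in a $C^r$ graph space, followed by exponential-dichotomy arguments for the inclusions $A^\pm(U)\subset\C^{s,u}$ and the jet-uniqueness statements. This is essentially the machinery underlying Fenichel's result (and the standard approach to normally hyperbolic/centre manifolds near a manifold of equilibria), so in spirit you are reproducing the content of the cited reference rather than taking a different route. The outline is sound at the level of a sketch; the genuinely delicate points you correctly flag---getting $C^r$ regularity of the fixed point (which typically requires either a fibre-contraction argument or working in a scale of weighted spaces), and the shrinking of~$U$ with~$r$---are exactly the places where a full write-up would need care. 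For the purposes of this paper, though, the intended argument is simply ``quote Fenichel and specialise,'' not a self-contained proof.
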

The statement appears as Thm~9.1~(i) and~(iv) in~\cite{fenichel_geomsingperturbtheoryode}, where one considers not only one vector field~$Y$, but a family $X^\epsilon$, $\epsilon\in({-}\epsilon_0,\epsilon_0)$. The theorem stated here is a direct consequence of restricting to a vector field which remains unchanged in~$\epsilon$, \ie $X^\epsilon=X^0=Y$, and ignoring the~$\epsilon$-direction in the definition of centre-stable, centre-unstable, and centre \mf s.

\bibliographystyle{amsalpha}
\bibliography{researchbib}
\vfill

\end{document}